
\documentclass{elsarticle}
\usepackage{amsmath,amsfonts,amsthm}
\usepackage{amssymb,latexsym}

\theoremstyle{plain}
\newtheorem{theorem}{Theorem}[section]
\newtheorem{lemma}[theorem]{Lemma}
\newtheorem{proposition}[theorem]{Proposition}

\theoremstyle{definition}
\newtheorem{mydefn}[theorem]{Definition}
\newtheorem{example}[theorem]{Example}
\newtheorem{observation}[theorem]{Observation}
\newtheorem{convention}[theorem]{Convention}

\usepackage{mathrsfs}
\usepackage{amsfonts}
\usepackage{graphicx}
\usepackage{slashed}

\begin{document}

\author[nuaa]{Yan Zhang}

\author[nuaa]{Zhaohui Zhu\corref{cor}}

\author[nanshen]{Jinjin Zhang}

\author[nuaa]{Yong Zhou}

\cortext[cor]{Corresponding author. Email: zhaohui.nuaa@gmail.com, bnj4892856@jlonline.com (Zhaohui Zhu).}

\address[nuaa]{College of Computer Science, Nanjing University of Aeronautics and Astronautics, Nanjing, P.R. China, 210016}
\address[nanshen]{College of Information Science, Nanjing Audit University, \\Nanjing, P.R. China, 211815}

\title{A Process Calculus with Logical Operators\tnoteref{t1}}
\tnotetext[t1]{This work received financial support of the National Natural Science of China(No. 60973045) and Fok Ying-Tung Education Foundation.}
\date{\today}
\begin{abstract}

In order to combine operational and logical styles of specifications in one unified framework, the notion of logic labelled transition systems (Logic LTS, for short) has been presented and explored by L\"{u}ttgen and Vogler in [TCS 373(1-2):19-40; Inform. \& Comput. 208:845-867].
In contrast with usual LTS, two logical constructors $\wedge$ and $\vee$ over Logic LTSs are introduced to describe logical combinations of specifications.
Hitherto such framework has been dealt with in considerable depth, however, process algebraic style way has not yet been involved and the axiomatization of constructors over Logic LTSs is absent.
This paper tries to develop L\"{u}ttgen and Vogler's work along this direction.
We will present a process calculus for Logic LTSs (CLL, for short).
The language CLL is explored in detail from two different but equivalent views.
Based on behavioral view, the notion of ready simulation is adopted to formalize the refinement relation, and the behavioral theory is developed.
Based on proof-theoretic view, a sound and ground-complete axiomatic system for CLL is provided, which captures operators in CLL through (in)equational laws.
\end{abstract}

\begin{keyword}
  Logic operators \sep Process calculus \sep Ready simulation \sep Axiomatic system \sep Logic labelled transition system
\end{keyword}

\maketitle
%

\section{Introduction}

    Over the past three decades, a lot of approaches have been proposed to formally specify and reason about reactive systems. Process algebra \cite{Handbook} and temporal logic \cite{Pnueli} are two popular paradigms of them.

    In process-algebraic paradigm, a system specification and its implementation usually are formulated in the same notation, and the underlying semantics are often given operationally.
    The notion of refinement is adopted to capture the correctness of implementation.

    In contrast, the temporal-logic paradigm adopts the language of temporal logics to formulate specifications abstractly, and implementations are described in terms of operational notations \cite{Pnueli}.
    Usually, model checking technique is used to establish that the system satisfies its specification \cite{ModelChecking}.

    Process-algebraic paradigm supports compositional reasoning (i.e., refinement of one part of a system does not depend on others), which is one of the most significant advantages of it.
    The merit of temporal-logic paradigm lies in its support for  abstract specifications, where relevant operational details may not be concerned.
    Traditionally, process-algebraic and temporal-logic formalisms are not mixed.

    In order to take advantage of both paradigms when designing systems, some theories for heterogeneous specifications have been proposed, e.g., \cite{Cleaveland00, Cleaveland02, Graf86,Kurshan94,Olderog}, which uniformly integrate refinement-based and temporal-logic specification styles.
    Among them,  Cleaveland and L{\"u}ttgen present a semantic framework for heterogenous system design based on B{\"u}chi automata and labelled transition systems augmented with an unimplementability predicate, and adopt Nicola and Hennessy's must-testing preorder \cite{Nicola83} to describe refinement relation \cite{Cleaveland00, Cleaveland02}.
    However, such refinement relation is not a precongruence \cite{Cleaveland00}.
    Hence, it does not support compositional reasoning.
    Moreover, in such framework, the operator conjunction lacks the desired property that $r$ is an implementation of the specification $p \wedge q$ if and only if $r$ implements both $p$ and $q$ \cite{Cleaveland02}.

    Recently, L{\"u}ttgen and Vogler propose the notion of Logic LTS, which combines operational and logical styles of specification in one unified framework \cite{Luttgen07,Luttgen10}.
    Roughly speaking, a Logic LTS is a labelled transition system with an inconsistency predicate on states.
    A few of constructors over Logic LTSs are introduced, which include operational constructors, such as CSP-style parallel composition and hiding, and logic constructors conjunction and disjunction.
    This framework allows one to freely mix operational operators and logic operators, while most early heterogenous specifications couple them loosely and do not allow for mixed specification \cite{Boudol92,Dam}.
    Moreover, the drawbacks in \cite{Cleaveland00,Cleaveland02} mentioned above have been remedied by using ready-tree semantics \cite{Luttgen07}.
    In order to support compositional reasoning when introducing the parallel constructor over Logic LTSs, a kind of modified ready simulation is adopted to describe the refinement relation \cite{Luttgen10}.
    Some standard temporal logic operators, such as \textquotedblleft always\textquotedblright and \textquotedblleft unless\textquotedblright, are also integrated into this framework \cite{Luttgen11}.

    \begin{figure}
    \begin{center}
        \includegraphics[scale=.6]{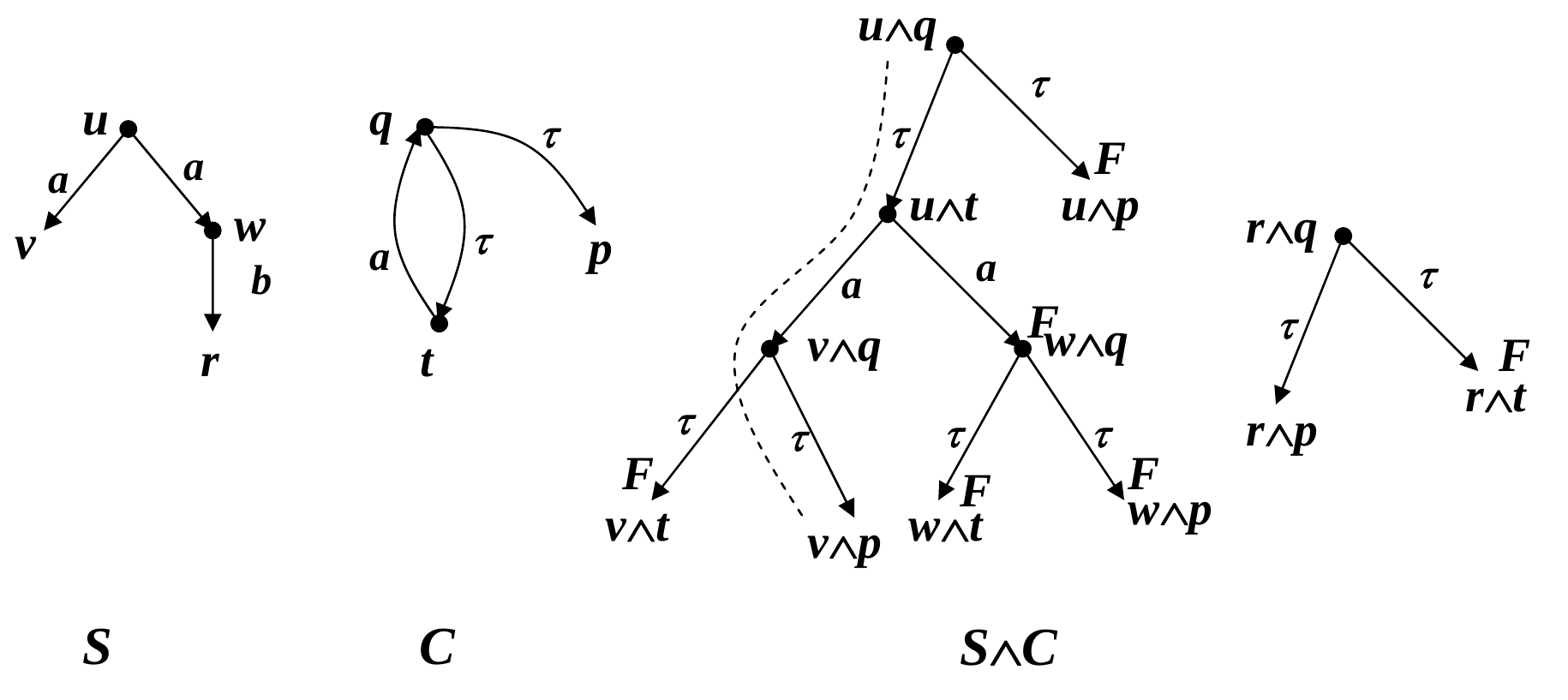}
    \end{center}
    \caption{Conjunction}\label{F:CONJUNCTION}
    \end{figure}

    Conjunction is a distinctive constructor in the framework of Logic LTSs, we give a simple example to illustrate how it works in heterogenous system design. In Figure~\ref{F:CONJUNCTION}, $S$ represents a specification which involves actions $a$ and $b$, and $C$ represents an additional constraint which requires that the action $b$ cannot occur in the implementation.
    Then, the dashed line in the Logic LTS $S \wedge C$ represents a specification which refines $S$ and satisfies the constraint $C$.
    Briefly speaking, if a state in $S$ has $b$-derivative, then the constructor conjunction turns this state into inconsistent state, i.e., unimplementable state.
    Thus, such state cannot be reached at run-time.
    The reader may refer to \cite{Luttgen10} for the concrete construction of $S \wedge C$.

    Up to now, Logic LTSs have been explored deeply, however, term-based way has not yet been involved, and the axiomatization of constructors is absent.
    This paper intends to develop L\"{u}ttgen and Vogler's work along the direction suggested by them in \cite{Luttgen10}.
    We present a process calculus for Logic LTSs (CLL, for short). In addition to prefix $\alpha.()$, external choice $\Box$ and parallel operator $\parallel_A$, CLL contains logical operators $\wedge$ and $\vee$ over process terms, which correspond to the constructors conjunction and disjunction over Logic LTSs respectively.
    Follows \cite{Luttgen10}, a variant of the usual notion of ready simulation is adopted to formalize the refinement relation.
    Moreover, an axiomatic system $AX_{CLL}$ is provided to characterize the operators in CLL in terms of (in)equational laws, and the soundness and ground-completeness w.r.t ready simulation are established.

    The remainder of this paper is organized as follows. The next section recalls some related notions.
    Section~3 introduces SOS rules of CLL.
    In section~4, based on reducing technique, the existence and uniqueness of stable transition model for CLL is demonstrated, and a few of  basic properties of the LTS associated with CLL are explored.
    Section~5  develops a behavioral theory of CLL.
    Section~6 provides an axiomatic system for CLL, and the soundness and ground-completeness are showed.
    Finally, a brief conclusion and discussion are given in Section~7.

\section{Preliminaries}

\subsection{Logic LTS}

In this subsection, we introduce some useful notations and recall the definition of Logic LTS.

Let $Act$ be the set of visible action names ranged over by letters $a$, $b$, etc., and let $Act_{\tau}$ denote $Act \cup \{\tau\}$ ranged over by $\alpha$ and $\beta$, where $\tau$ represents invisible actions.
A labelled transition system with inconsistency predicate is a quadruple $(P,Act_{\tau},\longrightarrow,F)$, where $P$ is a set of states, $\longrightarrow \subseteq P\times Act_{\tau}\times P$ is the transition relation and $F\subseteq P$ is the set of all inconsistent states.

As usual, we write $p \stackrel{\alpha}{\longrightarrow} q$ if $(p,\alpha,q)\in \longrightarrow$.
$q$ is called an $\alpha$-derivative of $p$ if $p \stackrel{\alpha}{\longrightarrow}q$.
We write $p \stackrel{\alpha}{\longrightarrow}$ (or, $p \not \stackrel{\alpha}{\longrightarrow}$) if $\exists q\in P.p\stackrel{\alpha}{\longrightarrow}q$ ($\nexists q\in P.p  \stackrel{\alpha}{\longrightarrow}q$, respectively).
$\mathcal{I}(p)$ stands for the ready set $\{\alpha \in Act_{\tau}|p \stackrel{\alpha}{\longrightarrow}\}$ of $p$.
A state $p$ is said to be stable if it cannot engage in any $\tau$-transition, i.e., $p \not\stackrel{\tau}{\longrightarrow}$.
Some useful decorated transition relations are listed below.

$p \stackrel{\alpha}{\longrightarrow}_F q$ iff $p \stackrel{\alpha}{\longrightarrow} q$ and $p,q\notin F$.

$p \stackrel{\epsilon}{\Longrightarrow}q$ iff $p (\stackrel{\tau}{\longrightarrow})^* q$, where $(\stackrel{\tau}{\longrightarrow})^* $ is the transitive reflexive closure of $\stackrel{\tau}{\longrightarrow}$.

$p \stackrel{\alpha}{\Longrightarrow}q$ iff $\exists r,s\in P.p \stackrel{\epsilon}{\Longrightarrow} r \stackrel{\alpha}{\longrightarrow}s \stackrel{\epsilon}{\Longrightarrow} q$.

$p \stackrel{\epsilon}{\Longrightarrow}_F q$ (or, $p \stackrel{\alpha}{\Longrightarrow}_F q$) iff  $p \stackrel{\epsilon}{\Longrightarrow}q$ ($p \stackrel{\alpha}{\Longrightarrow}q$, respectively) and all states along the sequence, including $p$ and $q$, are not in $F$.

$p \stackrel{\epsilon}{\Longrightarrow}_F|q$ (or, $p \stackrel{\alpha}{\Longrightarrow}_F|q$) iff $p \stackrel{\epsilon}{\Longrightarrow}_F q$ ($p \stackrel{\alpha}{\Longrightarrow}_F q$, respectively) and $q$ is stable.

\begin{mydefn}[Logic LTS \cite{Luttgen10}]\label{D:LLTS}
  An LTS $(P,Act_{\tau},\longrightarrow,F)$ is said to be a Logic LTS if, for each $p \in P$,

\noindent\textbf{(LTS1) }$p \in F$ if $\exists\alpha\in \mathcal{I}(p)\forall q\in P(p \stackrel{\alpha}{\longrightarrow}q \;\text{implies}\; q\in F)$,

\noindent\textbf{(LTS2)} $p \in F$ if $\nexists q\in P.p \stackrel{\epsilon}{\Longrightarrow}_F|q$.
\end{mydefn}

The condition (LTS1) formalizes the backward propagation of inconsistencies, and (LTS2) captures the intuition that divergence (i.e., infinite sequences of $\tau$-transitions) should be viewed as catastrophic.
For more motivation behind (LTS1) and (LTS2), the reader may refer to \cite{Luttgen10}.

\begin{mydefn}[$\tau$-pure  \cite{Luttgen10}]
An LTS $(P,Act_{\tau},\longrightarrow,F)$ is said to be {$\tau$}-pure if, for each $p \in P$, $p\stackrel{\tau}{\longrightarrow}$ implies $\nexists a\in Act.\;p\stackrel{a}{\longrightarrow}$.
\end{mydefn}

Hence, for any state $s$ in a $\tau$-pure LTS, either ${\mathcal I}(s)=\{\tau\}$ or ${\mathcal I}(s)\subseteq Act$.
Following \cite{Luttgen10}, this paper will focus on $\tau$-pure Logic LTSs.

\subsection{Transition System Specification}
Structural Operational Semantics (SOS) is proposed by G.~Plotkin in \cite{Plotkin81}, which is a logical method of giving operational semantics.
The basic idea behind SOS is to describe the behavior of a program in terms of the behavior of its components.
Thus, SOS provides a syntax oriented view on operational semantics.
Transition System Specifications (TSS's), as presented by Groote and Vaandrager in \cite{Groote92}, are formalizations of SOS.
This subsection recalls basic concepts related to TSS.
Further information on this issue may be found in \cite{Aceto01,Bol96,Groote92}.

 Let $V$ be an infinite set of variables and $\Sigma$  a signature. The set of $\Sigma $-terms over $V$, denoted by $T(\Sigma ,V)$, is the least set such that (I) $V \subseteq T(\Sigma ,V)$ and (II) if $f\in \Sigma$ and $t_1,\dots,t_n \in T(\Sigma ,V)$, then $f(t_1,\dots,t_n) \in T(\Sigma,V)$, where $n$ is the arity of $f$. $\mathbb{T} (\Sigma )$ is used to abbreviate $T(\Sigma ,V)$, $T(\Sigma ,\emptyset )$ is abbreviated by $T(\Sigma )$, elements in $T(\Sigma )$ are called closed or ground terms.

    A substitution $\sigma $ is a mapping from $V$ to $\mathbb{T}(\Sigma )$.  As usual, a substitution $\sigma $ is lifted to a mapping $\mathbb{T} (\Sigma ) \rightarrow \mathbb{T} (\Sigma )$ by $\sigma (f(t_1,...,t_n))\triangleq f(\sigma(t_1),\dots ,\sigma (t_n))$ for n-arity $f\in \Sigma$ and $t_1,\dots ,t_n \in {\mathbb T}(\Sigma )$. A substitution is said to be closed if it maps all variables to ground terms.

    A TSS is a quadruple $\mathcal{P}=(\Sigma,A,\mathbb{P},R)$, where $\Sigma$ is a signature, $A$ is a set of labels, $\mathbb P$ is a set of predicate symbols and $R$ is a set of rules.
    Positive literals are all expressions of the form $t \stackrel{a}{\longrightarrow} t'$ or $tP$, while negative literals are all expressions of the form $t \not\stackrel{a}{\longrightarrow}$ or $t\neg P$, where $t,t'\in \mathbb{T}(\Sigma )$, $a\in A$ and $P\in {\mathbb P}$.
    A literal is a positive or negative literal, and $\varphi$, $\psi$, $\chi$ are used to range over literals.
    A literal is said to be ground or closed if all terms occurring in it are ground.
    A rule $r\in R$ has the form like $\frac{H}{C}$, where $H$, the premises of the rule $r$, denoted as $prem(r)$, is a set of literals, and $C$, the conclusion of the rule $r$, denoted as $conc(r)$, is a positive literal. Furthermore, we write $pprem(r)$ for the set of positive premises of $r$ and $nprem(r)$ for the set of negative premises of $r$. A rule $r$ is said to be positive if $nprem(r)= \emptyset$. A TSS is said to be  positive if it has only positive rules. A rule is said to be an axiom if its set of premises is empty. An axiom $\frac{\emptyset}{t \stackrel{a}{\longrightarrow} t'}$ is often written as $t \stackrel{a}{\longrightarrow} t'$.
    Given a substitution $\sigma$ and a rule $r \in R$, $\sigma(r)$ is the rule obtained from $r$ by replacing each variable in $r$ by its $\sigma$-image, that is, $\sigma(r) \triangleq \frac{\{\sigma(\varphi)|\varphi \in prem(r)\}}{\sigma(conc(r))}$.
    Moreover, if $\sigma$ is closed then $\sigma(r)$ is said to be a ground instance of $r$.

\begin{mydefn}[Proof in Positive TSS]
    Let $\mathcal{P}=(\Sigma,A,\mathbb{P},R)$ be a positive TSS. A proof of a closed positive literal $\psi$ from $\mathcal{P}$ is a well-founded, upwardly branching tree, whose nodes are labelled by closed literals, such that\\
    --- the root is labelled with $\psi$,\\
    --- if $\chi$ is the label of a node $q$ and $\{\chi_i:i\in I\}$ is the set of labels of the nodes directly above $q$, then there is a rule $\{\varphi_i:i \in I\} \slash \varphi$ in $R$ and a closed substitution $\sigma$ such that $\chi=\sigma(\varphi)$ and $\chi_i=\sigma(\varphi_i)$ for each $i \in I$.

    If a proof of $\psi$ from $\mathcal{P}$ exists, then $\psi$ is said to be provable from $\mathcal{P}$, in symbols $\mathcal{P}\vdash \psi$.
\end{mydefn}

\section{Syntax and SOS Rules of CLL}

The process terms in CLL are defined by BNF below:
\[ t::= 0\;|\perp\;|\;(\alpha.t) \;|\; (t\Box t)\;|\;(t\wedge t)\;|\;(t\vee t)\;|\;(t\parallel_A t) \] where $\alpha\in Act_\tau$ and  $A\subseteq Act $.
We denote $T(\Sigma_{CLL})$ as the set of all process terms. We shall always use $t_1 \equiv t_2$ to mean that the expressions $t_1$ and $t_2$ are syntactically identical.

As usual, 0 is a process that can do nothing.
The prefix $\alpha.t$ has a single capability, expressed by $\alpha$; the process $t$ cannot proceed until $\alpha$ has been exercised.
$\Box$ is an external choice operator.
$\parallel_A $ is a CSP-style parallel operator, $t_1\parallel_A t_2$ represents a process that behaves as $t_1$ in parallel with $t_2$ under the synchronization set $A$.
$\bot$ represents an inconsistent process with empty behavior.
$\vee$ and $\wedge$ are logical operators, which are intended for describing logical combinations of processes.

We now provide SOS rules to specify the behavior of process terms formally.
These rules reflect L\"{u}ttgen and Vogler's constructions in \cite{Luttgen10} in process algebraic style.
Unless noted, let $a \in Act$, $\alpha \in Act_{\tau}$ and $A \subseteq Act$.
All SOS rules are divided into two parts: Table~\ref{Ta:OPERATIONAL_RULES} consists of operational rules $Ra_i(1\leq i\leq 15)$, and Table~\ref{Ta:PREDICATIVE_RULES} contains predicate rules $Rp_i(1\leq i \leq 13)$.

\begin{table}[ht]
\begin{center}
\fbox{$
    \begin{array}{ll}
    \displaystyle (Ra_1)\frac{-}{\alpha.x_1 \stackrel{\alpha}{\longrightarrow} x_1}  &
    \displaystyle (Ra_2)\frac{x_1 \stackrel{a}{\longrightarrow} y_1, x_2 \not \stackrel{\tau}{\longrightarrow}}{x_1 \Box x_2 \stackrel{a}{\longrightarrow} y_1}\\
    \displaystyle (Ra_3)\frac{x_1 \not\stackrel{\tau}{\longrightarrow} , x_2 \stackrel{a}{\longrightarrow} y_2 }{x_1 \Box x_2 \stackrel{a}{\longrightarrow} y_2}&
    \displaystyle (Ra_4)\frac{x_1 \stackrel{\tau}{\longrightarrow} y_1}{x_1 \Box x_2 \stackrel{\tau}{\longrightarrow} y_1 \Box x_2}\\
    \displaystyle (Ra_5)\frac{x_2 \stackrel{\tau}{\longrightarrow} y_2}{x_1 \Box x_2 \stackrel{\tau}{\longrightarrow} x_1 \Box y_2}&
    \displaystyle (Ra_6)\frac{x_1 \stackrel{a}{\longrightarrow} y_1, x_2 \stackrel{a}{\longrightarrow}y_2}{x_1 \wedge x_2 \stackrel{a}{\longrightarrow} y_1 \wedge y_2}\\
    \displaystyle (Ra_7)\frac{x_1 \stackrel{\tau}{\longrightarrow} y_1}{x_1 \wedge x_2 \stackrel{\tau}{\longrightarrow} y_1 \wedge x_2} &
    \displaystyle (Ra_8)\frac{x_2 \stackrel{\tau}{\longrightarrow} y_2}{x_1 \wedge x_2 \stackrel{\tau}{\longrightarrow} x_1 \wedge y_2} \\
    \displaystyle (Ra_9)\frac{-}{x_1 \vee x_2 \stackrel{\tau}{\longrightarrow} x_1} &
    \displaystyle (Ra_{10})\frac{-}{x_1 \vee x_2 \stackrel{\tau}{\longrightarrow} x_2} \\
    \displaystyle (Ra_{11})\frac{x_1 \stackrel{\tau}{\longrightarrow} y_1}{x_1 \parallel_A x_2 \stackrel{\tau}{\longrightarrow} y_1\parallel_A x_2} &
    \displaystyle (Ra_{12})\frac{x_2 \stackrel{\tau}{\longrightarrow} y_2}{x_1 \parallel_A x_2 \stackrel{\tau}{\longrightarrow} x_1 \parallel_A y_2} \\
    \displaystyle (Ra_{13})\frac{x_1 \stackrel{a}{\longrightarrow} y_1 , x_2 \not \stackrel{\tau}{\longrightarrow} , a\notin A}{x_1 \parallel_A x_2 \stackrel{a}{\longrightarrow} y_1 \parallel_A x_2}&
    \displaystyle (Ra_{14})\frac{x_1 \not\stackrel{\tau}{\longrightarrow} , x_2 \stackrel{a}{\longrightarrow} y_2 , a\notin A}{x_1 \parallel_A x_2 \stackrel{a}{\longrightarrow} x_1 \parallel_A y_2}\\
    \displaystyle (Ra_{15})\frac{x_1 \stackrel{a}{\longrightarrow} y_1, x_2 \stackrel{a}{\longrightarrow}y_2, a \in A}{x_1\parallel_A x_2 \stackrel{a}{\longrightarrow} y_1 \parallel y_2}& \;
    \end{array}
$}
\caption{Operational Rules\label{Ta:OPERATIONAL_RULES}}
\end{center}
\end{table}

Negative premises in rules $Ra_2$, $Ra_3$, $Ra_{13}$ and $Ra_{14}$ give $\tau$-transition precedence over visible transitions, which guarantees that the transition model of CLL is $\tau$-pure (see, Section~4).
Rules $Ra_9$ and $Ra_{10}$ illustrate that the operational aspect of $t_1\vee t_2$ is same as internal choice in usual process calculus.
The rule $Ra_6$ reflects that conjunction operator is a synchronous product for visible transitions.
The rules about other operators are usual.

\begin{table}[ht]
\begin{center}
\fbox{$
    \begin{array}{ll}
    \displaystyle (Rp_1)\frac{-}{\bot F}&
    \displaystyle (Rp_2)\frac{x_1 F}{\alpha .x_1 F}\\
    \displaystyle (Rp_3)\frac{x_1 F, x_2 F}{x_1\vee x_2 F}&
    \displaystyle (Rp_4)\frac{x_1 F}{x_1\Box x_2 F}\\
    \displaystyle (Rp_5)\frac{x_2 F}{x_1\Box x_2 F}&
    \displaystyle (Rp_6)\frac{x_1 F}{x_1\parallel_A x_2 F}\\
    \displaystyle (Rp_7)\frac{x_2 F}{x_1\parallel_A x_2 F}&
    \displaystyle (Rp_8)\frac{x_1 F}{x_1\wedge x_2 F}\\
    \displaystyle (Rp_9)\frac{x_2 F}{x_1\wedge x_2 F}&
    \displaystyle (Rp_{10})\frac{x_1 \stackrel{a}{\longrightarrow} y_1, x_2 \not\stackrel{a}{\longrightarrow}, x_1 \wedge x_2 \not\stackrel{\tau}{\longrightarrow}}{x_1 \wedge x_2 F}\\
    \displaystyle (Rp_{11})\frac{x_1 \not\stackrel{a}{\longrightarrow} , x_2 \stackrel{a}{\longrightarrow} y_2, x_1 \wedge x_2 \not\stackrel{\tau}{\longrightarrow}}{x_1 \wedge x_2 F}&
    \displaystyle (Rp_{12})\frac{x_1 \wedge x_2 \stackrel{\alpha}{\longrightarrow} y, y \neg F}{x_1 \wedge x_2 \overline{F}_{\alpha}} \\
    \displaystyle (Rp_{13})\frac{x_1 \wedge x_2 \stackrel{\alpha}{\longrightarrow} z, x_1 \wedge x_2  \neg \overline{F}_{\alpha}}{x_1 \wedge x_2 F} &\;
    \end{array}
$}
\caption{Predicate Rules}\label{Ta:PREDICATIVE_RULES}
\end{center}
\end{table}

Predicate rules in Table~\ref{Ta:PREDICATIVE_RULES} specify the properties of the predicate $F$.
In CLL, although both $0$ and $\bot$ have empty behavior, they represent different processes.
The rule $Rp_1$ says that $\bot$ is inconsistent.
Thus, the process $\bot$ cannot be implemented.
While $0$ is consistent(see Lemma~\ref{L:F_NORMAL}(5)), which is an implementable process.
The rule $Rp_3$ reflects that if both two disjunctive parts are inconsistent then so is the disjunction.
Rules $Rp_4-Rp_9$ describe the system design strategy that if one part is inconsistent, then so is the whole composition.
The rules $Rp_{10}$ and $Rp_{11}$ reveal that the conjunction is inconsistent if its conjuncts have different ready sets.
The rule below formalizes (LTS1) in Def.~\ref{D:LLTS} for process terms of the form $x_1 \wedge x_2$.
\[ \frac{x_1 \wedge x_2 \stackrel{\alpha}{\longrightarrow}, \forall y(x_1\wedge x_2 \stackrel{\alpha}{\longrightarrow}y \Rightarrow yF)}{x_1 \wedge x_2F}\]
However, the universal quantifier occurs in it explicitly.
We adopt the method presented in \cite{Verhoef95} to eliminate universal quantifier.
For this purpose, we introduce auxiliary predicate symbol $\overline{F}_{\alpha}$ for each $\alpha \in Act_{\tau}$ and rules $Rp_{12}$ and $Rp_{13}$.
Intuitively, by the rule $Rp_{12}$, $t_1 \wedge t_2 \overline{F}_{\alpha}$ states that $t_1 \wedge t_2$ has a consistent $\alpha$-derivative.
$Rp_{12}$ and $Rp_{13}$ together say that a conjunction is inconsistent if it can engage in $\alpha$-transition and all its $\alpha$-derivatives are inconsistent. A formal result concerning this will be given in Section~4 (see, Lemma~\ref{L:LLTS_I}).

Summarily, the TSS for CLL is $\mathcal{P}_{CLL} =(\Sigma_{CLL},Act_{\tau},\mathbb{P}_{CLL},R_{CLL})$, where ---$\Sigma_{CLL} = \{ \Box,\wedge,\vee,0,\bot\}\cup\{\alpha.()|\alpha \in Act_\tau\}\cup\{\parallel_A|A\subseteq Act\}$,\\
---$\mathbb{P}_{CLL}=\{F\} \cup \{\overline{F}_\alpha|\alpha \in Act_\tau\}$, and \\
---$R_{CLL}=\{Ra_1, \dots, Ra_{15}\}\cup\{Rp_1, \dots ,Rp_{13}\}$.

\section{Operational Semantics of CLL}
A natural and simple method of describing the operational nature of processes is in terms of LTSs. Given a TSS, an important problem is how to associate LTSs with process terms. For positive TSS, the answer is straightforward.
However, this problem is non-trivial for TSS containing negative premises. This section will consider the well-definedness of the TSS ${\mathcal P}_{CLL}$ (i.e., existence and uniqueness of the LTS induced by ${\mathcal P}_{CLL}$), and explore basic properties of the induced transition model.
\subsection{Basic Notations}

Let $A$ be a set of labels and $\mathbb{P}$   a set of predicate symbols. A transition model $M$ is a subset of $Tr(\Sigma ,A)\cup Pred(\Sigma ,\mathbb{P} )$, where $Tr(\Sigma ,A)=T(\Sigma)\times A\times T(\Sigma )$ and $Pred(\Sigma ,\mathbb{P} )=T(\Sigma )\times \mathbb{P}$, elements $(t,a,t^{\prime })$ and $(t,P)$ in $M$ are written as $t \stackrel{a}{\longrightarrow} t'$ and $tP$ respectively.

     A positive closed literal $\psi$ holds in $M$ or $\psi$ is valid in $M$, in symbols $M\models \psi$, if $\psi \in M$. A negative closed literal  $t \not \stackrel{a}{\longrightarrow}$ (or, $t \neg P$) holds in $M$, in symbols $M \models t \not \stackrel{a}{\longrightarrow}$ ($M \models t \neg P$, respectively), if there is no $t'$ such that $t \stackrel{a}{\longrightarrow} t' \in M$($tP \notin M$, respectively). For a set of closed literals $\Psi$, we write $M \models \Psi$ iff $M \models \psi$ for each $\psi \in \Psi$.

\begin{mydefn}
Let $\mathcal{P}=(\Sigma,A,\mathbb{P},R)$ be a TSS and $M$ a transition model.
    $M$ is said to be a model of $\mathcal{P}$ if, for each $r\in R$ and $\sigma:V \longrightarrow T(\Sigma)$ such that $M \models prem(\sigma(r))$, we have $M \models conc(\sigma(r))$.
    $M$ is said to be supported by $\mathcal{P}$ if, for each $\psi \in M$, there exists $r \in R$ and $\sigma:V \longrightarrow T(\Sigma)$ such that $M \models prem(\sigma(r))$ and $conc(\sigma(r))=\psi$.
    $M$ is said to be a supported model of $\mathcal{P}$ if $M$ is supported by $\mathcal{P}$ and $M$ is a model of $\mathcal{P}$.
\end{mydefn}

It is well known that every positive TSS $\mathcal P$ has a least transition model, which exactly consists of provable transitions of $\mathcal P$ \cite{Bol96}.
However, since it is not immediately clear what can be considered a \textquotedblleft proof\textquotedblright for a negative formula, it is much less trivial to associate an LTS with a TSS containing negative premises \cite{Groote93}.
The first generic answer to this question is formulated in \cite{Bloom95,Groote93}, in which the above notion of supported model is introduced.
However, this notion doesn't always work well. Several alternatives have been proposed, and a good overview on this issue is provided in \cite{Glabbeek04}.
In the following, we recall the notions of stratification and stable transition model, which play an important role in this field.

\begin{mydefn}[Stratification \cite{Aceto01,Bol96}]\label{D:STRATIFICATION}
     Let $\mathcal{P}=(\Sigma,A,\mathbb{P},R)$ be a TSS and $\alpha$ an ordinal number. A function $S:Tr(\Sigma,A)\cup Pred(\Sigma, \mathbb{P}) \longrightarrow \alpha$ is said to be a stratification of $\mathcal{P}$ if, for every rule $r \in R$ and every substitution $\sigma :V \longrightarrow T(\Sigma)$, the following conditions hold.\\
    (1) $S(\psi)\leq S(conc( \sigma(r)))$ for each $\psi \in pprem(\sigma (r))$,\\
    (2) $S(tP)<S(conc(\sigma(r)))$ for each $t \neg P \in nprem(\sigma(r))$, and\\
    (3) $S(t \stackrel{a}{\longrightarrow} t') < S(conc(\sigma(r)))$ for each $t' \in T(\Sigma)$ and $t \not\stackrel{a}{\longrightarrow} \in nprem(\sigma(r))$.

    A TSS is said to be stratified iff there exists a stratification function for it.
\end{mydefn}

In the following, given a set $R$ of rules, we denote $R_{ground}$ the set of all ground instances of rules in $R$. Similarly, given a rule $r$, $r_{ground}$ is the set of all ground instances of it.

\begin{mydefn}[Stable Transition Model  \cite{Bol96,Gelfond88}]\label{D:STABLE}
    Let $\mathcal{P}=(\Sigma,A,\mathbb{P},R)$ be a TSS and $M$ a transition model. $M$ is said to be a stable transition model for $\mathcal P$ if
    \[M=M_{Strip({\mathcal P},M)},\]
    where
    $Strip({\mathcal P},M)$ is the TSS $(\Sigma,A,\mathbb{P},Strip(R,M))$ with
    \[Strip(R,M)\triangleq\left\{\frac{pprem(r)}{conc(r)}|\quad r\in R_{ground}\;\text{and}\;M\models nprem(r)\right\},
    \]
    and $M_{Strip({\mathcal P},M)}$ is the least transition model of the positive TSS $Strip({\mathcal P},M)$.
\end{mydefn}

The above notion provides a reasonable semantics for TSS's with negative premises.
TSS's that do not have a unique stable model should be ruled out and considered pathological \cite{Bol96}.
As is well known, stable models are supported models, and each stratified TSS $\mathcal P$ has a unique stable model \cite{Bol96}, moreover, such stable model does not depend on particular stratification function \cite{Groote93}. Unfortunately, the TSS ${\mathcal P}_{CLL}$ in Section~3 cannot be stratified.

\begin{observation}
 ${\mathcal P}_{CLL}$ cannot be stratified.
\end{observation}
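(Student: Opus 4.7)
The plan is to exhibit a mutual strict dependency between the predicates $F$ and $\overline{F}_\alpha$ on a single ground conjunction, produced by the interplay of rules $Rp_{12}$ and $Rp_{13}$, so that any candidate stratification function is forced into a strictly circular inequality.

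First I would fix any $\alpha \in Act_\tau$ and any ground term of the shape $t \equiv t_1 \wedge t_2$ (for concreteness, $t \equiv 0 \wedge 0$ is the cleanest choice). I would then write down the ground instance of $Rp_{12}$ obtained from the substitution $\sigma(x_1) = t_1$, $\sigma(x_2) = t_2$, $\sigma(y) = t$; its negative premise $t \neg F$ together with clause~(2) of Definition~\ref{D:STRATIFICATION} yields the strict inequality $S(tF) < S(t \overline{F}_\alpha)$. Next I would write down any ground instance of $Rp_{13}$ with $\sigma(x_1) = t_1$ and $\sigma(x_2) = t_2$; its negative premise $t \neg \overline{F}_\alpha$ and the same clause~(2) yield $S(t \overline{F}_\alpha) < S(tF)$. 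Composing the two gives $S(tF) < S(tF)$, which is impossible because ordinals are well-ordered, so no stratification function can exist.

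The single subtle point, and in my view the only real content of the argument, is that Definition~\ref{D:STRATIFICATION} quantifies over every rule and every substitution into $T(\Sigma_{CLL})$, independently of whether the positive premises are satisfiable in any intended model. Accordingly, the variable $y$ of $Rp_{12}$ may be instantiated by $t$ itself, producing the first strict inequality even though $t$ has no $\alpha$-derivative equal to $t$ in the eventual LTS. Once this is pointed out, the argument reduces to a one-line chain of two strict inequalities, and I foresee no genuine obstacle; clause~(3) and the remaining rules play no role at all.
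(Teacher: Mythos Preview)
Your proposal is correct and follows essentially the same approach as the paper: both arguments derive the contradictory pair of strict inequalities $S((t_1\wedge t_2)F) < S((t_1\wedge t_2)\overline{F}_\alpha)$ from $Rp_{12}$ and $S((t_1\wedge t_2)\overline{F}_\alpha) < S((t_1\wedge t_2)F)$ from $Rp_{13}$. Your explicit remark that Definition~\ref{D:STRATIFICATION} quantifies over \emph{all} closed substitutions (so that $y$ in $Rp_{12}$ may be instantiated by $t_1\wedge t_2$ itself, regardless of whether any actual transition exists) is exactly the point the paper's terse proof leaves implicit, and it is worth spelling out.
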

\begin{proof}
    Assume that $S$ is a stratification of ${\mathcal P}_{CLL}$. Let $t_1,t_2 \in T(\Sigma_{CLL})$. By Def.~\ref{D:STRATIFICATION}, the rule $Rp_{13}$ entails $S(t_1\wedge t_2\overline{F}_{\alpha})<S(t_1\wedge t_2 F)$. However, the rule $Rp_{12}$ requires that $S(t_1\wedge t_2\overline{F}_{\alpha})>S(t_1\wedge t_2 F)$ holds. Hence, the stratification of ${\mathcal P}_{CLL}$ does not exist.
\end{proof}

In \cite{Bol96}, the notion of positive after reduction (also called complete in \cite{Glabbeek04}) is introduced as a criterion for well-definedness of the semantics and is shown to be more general than stratification.
We shall adopt such reducing technique to obtain a stable transition model of ${\mathcal P}_{CLL}$.
Related concepts and results are recalled below.

\begin{mydefn}\label{D:REDUCTION}
    Let $\mathcal{P}=(\Sigma,A,\mathbb{P},R)$ be a TSS and $M_{true}$, $M_{pos}$ be transition models. The reduction TSS of $\mathcal P$ w.r.t $M_{true}$ and $M_{pos}$ is defined as
    \[Reduce({\mathcal P},M_{true},M_{pos})\triangleq(\Sigma,A,{\mathbb P},Reduce(R,M_{true},M_{pos})),\]
    where $Reduce(R,M_{true},M_{pos})$ is the set of all rules $r'$ such that there exists $r\in R_{ground}$ satisfying
    \begin{enumerate}
      \item $M_{true}\models nprem(r)$,
      \item $M_{pos}\models pprem(r)$, and
      \item $r'=\frac{\{\psi \in pprem(r)|M_{true}\not\models\psi\}\cup\{\psi\in nprem(r)| M_{pos} \not \models \psi\}} {conc(r)}$.
    \end{enumerate}
    In such case, we will say that $r'$ originates from $r$. Clearly,
    $Reduce({\mathcal P},M_{true},M_{pos})$ is ground, that is, its all rules are ground.
\end{mydefn}

\begin{mydefn}
    Let $\mathcal{P}=(\Sigma,A,\mathbb{P},R)$ be a ground TSS.\\
    ---$True({\mathcal P})\triangleq(\Sigma,A,\mathbb{P},True(R))$, where $True(R)\triangleq\{r\in R|nprem(r)=\emptyset\}$.\\
    ---$Pos({\mathcal P})\triangleq(\Sigma,A,\mathbb{P},Pos(R))$, where
    \[Pos(R)\triangleq\left\{\frac{pprem(r)}{conc(r)}| r\in R\right\}.
    \]
\end{mydefn}

\begin{mydefn}
    Let $\mathcal{P}=(\Sigma,A,\mathbb{P},R)$ be a TSS. For every ordinal number $\alpha$, the $\alpha$-reduction of $\mathcal P$, in symbols $Red^{\alpha}(\mathcal P)$, is recursively defined below\\
    ---$Red^0({\mathcal P})\triangleq(\Sigma,A,\mathbb{P},R_{ground})$, \\
    ---$Red^{\alpha}({\mathcal P})\triangleq Reduce({\mathcal P},\bigcup_{\beta<\alpha}M_{True(Red^{\beta}({\mathcal P}))},\bigcap_{\beta<\alpha}M_{Pos(Red^{\beta}({\mathcal P}))})$.
\end{mydefn}
In the above, $M_{True(Red^{\beta}({\mathcal P}))}$ (or, $M_{Pos(Red^{\beta}({\mathcal P}))}$) is the least transition
    model of the positive TSS $True(Red^{\beta}({\mathcal P}))$ ($Pos(Red^{\beta}({\mathcal P}))$, respectively).
A useful result about reducing technique is cited below.

\begin{theorem}\label{C:stable}
    Let $\mathcal{P}=(\Sigma,A,\mathbb{P},R)$ be a TSS. Suppose that $S:Tr(\Sigma,A)\cup Pred(\Sigma, \mathbb{P}) \longrightarrow \alpha$ is a stratification of $Red^{\beta}({\mathcal P})$ for some ordinal $\alpha$ and $\beta$. Then, the stable transition model of $Red^{\beta}({\mathcal P})$ is the unique stable transition model of $\mathcal P$.
\end{theorem}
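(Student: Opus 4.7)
The plan is to reduce the claim to the classical fact that every stratified TSS admits a unique stable transition model, by showing that stability is preserved across the reduction operator. Concretely, I first invoke the classical result (Bol--Groote / Gelfond--Lifschitz) that the stratification $S$ guarantees the existence and uniqueness of a stable transition model $M^{*}$ for $Red^{\beta}(\mathcal{P})$. What remains is the preservation statement: for every ordinal $\gamma$, a transition model is stable for $\mathcal{P}$ if and only if it is stable for $Red^{\gamma}(\mathcal{P})$. Applied to $\gamma = \beta$, this identifies the stable models of $\mathcal{P}$ with those of $Red^{\beta}(\mathcal{P})$, and so $M^{*}$ is the unique stable transition model of $\mathcal{P}$.

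The heart of the argument is an \emph{enclosing} lemma, proved by transfinite induction on $\gamma$: every stable transition model $M$ of $\mathcal{P}$ satisfies
\[
M_{True(Red^{\gamma}(\mathcal{P}))} \;\subseteq\; M \;\subseteq\; M_{Pos(Red^{\gamma}(\mathcal{P}))}.
\]
The lower bound comes from the fact that rules in $True(Red^{\gamma}(\mathcal{P}))$ have all their negative premises already certified (in the sense of Def.~\ref{D:REDUCTION}) and hence must fire in any stable model; the upper bound reflects that $Pos(Red^{\gamma}(\mathcal{P}))$ optimistically discards negative premises, so no literal outside it can ever be supported in any stable model of $\mathcal{P}$. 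The successor step uses that each reduction round only tightens the two bounds monotonically, and the limit step takes unions for $True$ and intersections for $Pos$ as in the definition of $Red^{\gamma}$.

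With the enclosing lemma in hand, the preservation statement follows from the key identity
\[
Strip(\mathcal{P}, M) \;\text{and}\; Strip(Red^{\gamma}(\mathcal{P}), M) \;\text{have the same least transition model}
\]
whenever $M$ lies between the two bounds above. Indeed, the reduction procedure only (i) deletes ground instances of rules whose negative premises cannot be satisfied under $M_{True}$ (and therefore cannot be satisfied under any such $M$), and (ii) trims premises already known to hold, which does not affect the least model produced by stripping against $M$. Hence $M = M_{Strip(\mathcal{P},M)}$ iff $M = M_{Strip(Red^{\gamma}(\mathcal{P}),M)}$, giving the equivalence of the two notions of stability.

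The main obstacle I anticipate is the careful bookkeeping at limit ordinals in the transfinite induction: one must check that taking $\bigcup_{\beta<\alpha}M_{True(Red^{\beta})}$ and $\bigcap_{\beta<\alpha}M_{Pos(Red^{\beta})}$ genuinely commutes with the least-model operator for the relevant positive TSSs, so that no stable model is lost or spuriously created in passing to $Red^{\alpha}$. Once this continuity-style property is in place, the preservation lemma and the classical stratification result combine straightforwardly to yield the theorem.
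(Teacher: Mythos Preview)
The paper does not prove this theorem at all: its entire proof is a citation to Theorem~6.1 and Corollary~6.5 of Bol and Groote~\cite{Bol96}. Your proposal goes considerably further by actually sketching the argument, and the route you outline---an enclosing lemma $M_{True(Red^{\gamma}(\mathcal{P}))}\subseteq M\subseteq M_{Pos(Red^{\gamma}(\mathcal{P}))}$ for every stable model $M$ of $\mathcal{P}$, proved by transfinite induction on $\gamma$, combined with the observation that stripping against any such $M$ is insensitive to the reduction step---is essentially the Bol--Groote proof of their Theorem~6.1. Your anticipated difficulty at limit ordinals is real but handled in their paper by the monotonicity of the $True$ and $Pos$ bounds across reduction rounds. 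So your sketch is correct and faithful to the source the paper relies on; for the purposes of this paper you could simply cite~\cite{Bol96} as the authors do, but if you want a self-contained writeup your outline is the right one.
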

\begin{proof}
  See Theorem 6.1 and Corollary 6.5 in \cite{Bol96}.
\end{proof}

\subsection{Stable Transition Model of ${\mathcal P}_{CLL}$}

    This subsection will adopt the reducing technique recalled above to show that ${\mathcal P}_{CLL}$ has a unique stable model. It turns out that 1-reduction is enough to realize our aim.

    Firstly, we reduce ${\mathcal P}_{CLL}$. By the definition of $\alpha$-reduction, we have
    \[Red^0({\mathcal P}_{CLL})=(\Sigma_{CLL},Act_{\tau},{\mathbb P}_{CLL},{R_{CLL}}_{ground})\]
    and
    \[Red^1({\mathcal P}_{CLL})=(\Sigma_{CLL},Act_{\tau},{\mathbb P}_{CLL},Reduce(R_{CLL},M_T,M_P)),\]
    where

\noindent    \textbf{(1)} $M_T$ and $M_P$ are the least models of the positive TSS's  $True(Red^0({\mathcal P}_{CLL}))$ and $Pos(Red^0({\mathcal P}_{CLL}))$ below respectively,
\[True(Red^0({\mathcal P}_{CLL}))= \underset{i\in I_1}\bigcup {Ra_i}_{ground} \cup \underset{i\in I_2}\bigcup  {Rp_i}_{ground},\]
where $I_1 = \{1,4,5,6,7,8,9,10,11,12,15\}$ and $I_2 = \{1,2,3,4,5,6,7,8,9\}$, and
\begin{multline*}
    Pos(Red^0({\mathcal P}_{CLL}))= \\True(Red^0({\mathcal P}_{CLL}))\cup
    \underset{i \in K_1}\bigcup \left(\frac{pprem(Ra_i)}{conc(Ra_i)}\right)_{ground}\cup
        \underset{i \in K_2}\bigcup \left(\frac{pprem(Rp_i)}{conc(Rp_i)}\right)_{ground},
\end{multline*}
where $K_1 =\{ 2,3,13,14\}$ and $K_2 = \{ 10,11,12,13\}$.

\noindent  \textbf{(2)} $Reduce(R_{CLL},M_T,M_P)$ is the set of all rules $r'$ such that there exists $r\in {R_{CLL}}_{ground}$ satisfying
    \begin{enumerate}
      \item [2.1.] $M_T\models nprem(r)$,
      \item [2.2.] $M_P\models pprem(r)$, and
      \item [2.3.] $r'=\frac{\{\psi \in pprem(r)|M_T\not\models\psi\}\cup\{\psi\in nprem(r)|M_P\not\models\psi\}}{conc(r)}$.
    \end{enumerate}

Next, we will show that $Red^1({\mathcal P}_{CLL})$ is stratified.
For this purpose, a few preliminary definitions are needed.
As usual, the degree of process terms is defined below.
\begin{enumerate}
  \item $|0|=|\bot|\triangleq1$,
  \item $|t_1 \odot t_2|\triangleq|t_1|+|t_2|+1$ for each $\odot \in \{\wedge,\Box,\vee,\parallel_A\}$,
  \item $|\alpha .t|\triangleq|t|+1$ for each $\alpha \in Act_{\tau}$.
\end{enumerate}

Further, the function $S_{{\mathcal P}_{CLL}}$ from  $Tr(\Sigma_{CLL},Act_{\tau})\;\cup\; Pred(\Sigma_{CLL},{\mathbb P}_{CLL})$ to $\mathbb{N}$ is defined as
\begin{enumerate}
  \item $S_{{\mathcal P}_{CLL}}(t\stackrel{\alpha}{\longrightarrow} t')\triangleq|t|$,
  \item $S_{{\mathcal P}_{CLL}}(tF)\triangleq|t|\times 2+1$, and
  \item $S_{{\mathcal P}_{CLL}}(t{\overline F}_{\alpha})\triangleq|t|\times 2$ for each $\alpha \in Act_{\tau}$.
\end{enumerate}

In order to show that $S_{{\mathcal P}_{CLL}}$ is a stratification of $Red^1({\mathcal P}_{CLL})$, we shall first give the result below.

\begin{lemma}\label{L:POS}
    If $t\stackrel{\alpha}{\longrightarrow} t'\in M_P$ then $|t|>|t'|$.
\end{lemma}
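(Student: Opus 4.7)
The plan is to prove the lemma by induction on the depth of the proof tree that witnesses $t\stackrel{\alpha}{\longrightarrow} t'\in M_P$. Since $M_P$ is the least transition model of the positive TSS $Pos(Red^0({\mathcal P}_{CLL}))$, every member of $M_P$ has such a (finite, well-founded) derivation whose rules are ground instances of the positive versions of $Ra_1,\dots,Ra_{15}$ (that is, the operational rules with their negative premises stripped). So I would do case analysis on the last rule used in the derivation.

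The base case is $Ra_1$: the conclusion is $\alpha.x_1 \stackrel{\alpha}{\longrightarrow} x_1$, and by definition $|\alpha.x_1| = |x_1|+1 > |x_1|$. For the inductive cases, the argument is uniform. For the ``forwarding'' rules $Ra_2, Ra_3, Ra_9, Ra_{10}$, which discard a subterm, the conclusion has a subterm of the original on the right-hand side, so $|t| > |t'|$ follows from the fact that $|t_1 \odot t_2| = |t_1|+|t_2|+1 > |t_i|$ for $\odot \in \{\Box, \vee\}$ (combined in the $\Box$ case with the inductive hypothesis $|x_i| > |y_i|$). For the ``congruence'' rules $Ra_4, Ra_5, Ra_7, Ra_8, Ra_{11}, Ra_{12}$, one subterm is rewritten and the other is preserved; the inductive hypothesis gives $|x_i|>|y_i|$, and monotonicity of $|\cdot|$ in each argument of $\Box, \wedge, \parallel_A$ yields the required strict inequality. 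For the synchronous rules $Ra_6, Ra_{13}, Ra_{14}, Ra_{15}$, both premises are positive transitions (the visible-action side conditions are either absent or were dropped during stripping), so two applications of the inductive hypothesis give $|x_1|>|y_1|$ and $|x_2|>|y_2|$, and again strict monotonicity of $|\cdot|$ in each component of $\wedge$ and $\parallel_A$ closes the case.

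The only step that needs a moment of care is the bookkeeping about which TSS we are really working in: $Pos(Red^0({\mathcal P}_{CLL}))$ is obtained by ground-instantiating $R_{CLL}$ and then simply deleting negative premises, so the positive premises of each instance of an operational rule still only mention transitions of \emph{proper} subterms of the source of the conclusion. This structural property is what makes the induction go through, and it is the only thing one really has to verify; once it is in hand, each of the fifteen cases is a one-line comparison of degrees. I do not expect any genuine obstacle — the lemma is essentially the observation that every operational rule of CLL strictly decreases the degree from source to target.
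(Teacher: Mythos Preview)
Your approach is the same as the paper's: induction on the depth of the proof tree in $Pos(Red^0(\mathcal P_{CLL}))$, with case analysis on the last rule applied. One small slip to fix: $Ra_{13}$ and $Ra_{14}$ each have only \emph{one} positive transition premise (the other premise is negative and is removed in $Pos$), so after stripping they behave like your ``congruence'' rules---one component moves, the other is unchanged---rather than like the genuinely synchronous rules $Ra_6$ and $Ra_{15}$; the degree comparison still goes through, just via $|x_1\parallel_A x_2|>|y_1\parallel_A x_2|$ from a single use of the induction hypothesis.
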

\begin{proof}
Let $t\stackrel{\alpha}{\longrightarrow} t' \in M_P$.
 We proceed by induction on the depth of the proof tree of $Pos(Red^0({\mathcal P}_{CLL})) \vdash t\stackrel{\alpha}{\longrightarrow} t'$.

For the induction basis, the only rule applied in the proof tree is an axiom.
Clearly, such axiom originates from $Ra_1$, $Ra_9$ or $Ra_{10}$.
It is a simple matter to check one by one.
For instance, if $t\stackrel{\alpha}{\longrightarrow} t'$ is inferred by the ground rule $\frac{-}{\beta.t_1\stackrel{\beta}{\longrightarrow}t_1}$in ${Ra_1}_{ground}$, then $t \equiv \beta.t_1$, $\alpha = \beta$ and $t' \equiv t_1$. Obviously, $|\alpha.t_1|>|t_1|$.

The induction step proceeds by distinguishing fifteen cases based on the format of the last rule applied in the proof tree.
It is routine, and we deal with two cases as examples and leave the remainder to the reader.\\

\noindent Case 1 $\frac{t_1 \stackrel{a}{\longrightarrow} t_1', t_2 \stackrel{a}{\longrightarrow}t_2'}{t_1 \wedge t_2 \stackrel{a}{\longrightarrow} t_1' \wedge t_2'}$.

Hence, we get $t \equiv t_1 \wedge t_2$, $\alpha = a \in Act$, $t' \equiv t_1'\wedge t_2'$, $t_1\stackrel{a}{\longrightarrow} t_1' \in M_P$ and $t_2\stackrel{a}{\longrightarrow} t_2' \in M_P$. By induction hypothesis (IH, for short), $|t_1|>|t_1'|$ and $|t_2|>|t_2'|$.
Then, it immediately follows that $|t_1 \wedge t_2|>|t_1' \wedge t_2'|$.\\

\noindent Case 2 $\frac{t_1 \stackrel{a}{\longrightarrow} t_1'}{t_1 \square t_2 \stackrel{a}{\longrightarrow} t_1'}$.

Clearly, this rule originates from one in ${Ra_2'}_{ground}$, where $Ra_2'=\frac{x_1 \stackrel{a}{\longrightarrow} y_1}{x_1 \Box x_2 \stackrel{a}{\longrightarrow} y_1}$.
Thus, we have $\alpha = a \in Act$, $t \equiv t_1 \Box t_2$, $t' \equiv t_1'$ and $t_1\stackrel{a}{\longrightarrow} t_1' \in M_P$.
By IH, we get $|t_1|>|t_1'|$.
So, $|t_1 \Box t_2|>|t_1'|$.
\end{proof}

\begin{lemma}\label{L:Stratification}
    $S_{{\mathcal P}_{CLL}}$ is a stratification of ${Red^1}({\mathcal P}_{CLL})$.
\end{lemma}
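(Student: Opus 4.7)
The plan is to verify, rule by rule, the three conditions in Def.~\ref{D:STRATIFICATION} for every rule $r' \in Reduce(R_{CLL},M_T,M_P)$. Since $Red^1({\mathcal P}_{CLL})$ is already ground, a closed substitution plays no role, and it suffices to check that, for each $r'$ and each surviving premise, the $S_{{\mathcal P}_{CLL}}$-value is strictly smaller (for a negative premise) or no greater (for a positive premise) than that of the conclusion. The check is organised by the rule $r \in R_{CLL}$ from which $r'$ originates.

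For the operational rules $Ra_1,\dots,Ra_{15}$, the conclusion has the form $t \stackrel{\alpha}{\longrightarrow} t'$, so $S(conc(r')) = |t|$. Inspecting the rules shows that every positive premise $t_i \stackrel{\beta}{\longrightarrow} t_i'$ and every negative premise $t_i \not\stackrel{\beta}{\longrightarrow}$ refers to a proper subterm $t_i$ of $t$, whence $S(t_i \stackrel{\beta}{\longrightarrow} t_i') = |t_i| < |t|$ for any $t_i'$. This gives conditions (1) and (3); condition (2) is vacuous as these rules have no negative predicate premises. For predicate rules $Rp_1$–$Rp_9$ the conclusion is $tF$ with weight $2|t|+1$, and each positive premise $x_iF$ refers to a proper subterm, yielding $2|x_i|+1 < 2|t|+1$. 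In $Rp_{10}$ and $Rp_{11}$ the conclusion is again $tF$ with $t\equiv x_1\wedge x_2$, and the negative transition premises involve either a proper subterm of $t$ or $t$ itself; in either case $S(t\stackrel{\beta}{\longrightarrow}y) \le |t| < 2|t|+1$. Rule $Rp_{13}$ concludes $tF$ with weight $2|t|+1$ and its negative predicate premise $t\neg\overline F_{\alpha}$ has $S$-value $2|t|$, so condition (2) is immediate.

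The decisive case is $Rp_{12}$. Its conclusion is $t\overline F_{\alpha}$ with $t\equiv x_1\wedge x_2$ and weight $2|t|$, while the negative predicate premise $y\neg F$ carries weight $2|y|+1$. A purely syntactic bound on $|y|$ is not available, because the rule $Ra_6$ allows an $\alpha$-derivative of a conjunction to be itself a conjunction built from $\alpha$-derivatives of its parts, so $y$ need not be a subterm of $t$. Here Lemma~\ref{L:POS} is essential: since $r'$ originates from a ground instance $r$ satisfying $M_P \models pprem(r)$, in particular $t \stackrel{\alpha}{\longrightarrow} y \in M_P$, and Lemma~\ref{L:POS} gives $|y| < |t|$. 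Hence $2|y|+1 \le 2|t|-1 < 2|t|$, which delivers condition (2); the surviving positive premise $t \stackrel{\alpha}{\longrightarrow} y$ contributes $|t| \le 2|t|$, trivially satisfying condition (1).

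The main obstacle is exactly the mutual dependency between $\overline F_{\alpha}$ and $F$ introduced by $Rp_{12}$ and $Rp_{13}$, which is the reason plain stratification of ${\mathcal P}_{CLL}$ fails. The weight function is designed so that $\overline F_{\alpha}$ sits one level below $F$ on the same term, which handles $Rp_{13}$ automatically; the remaining gap in $Rp_{12}$ is closed by the semantic bound from Lemma~\ref{L:POS} on $M_P$, made available precisely because reduction has already filtered the rule through $M_P$.
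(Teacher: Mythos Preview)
Your proof is correct and follows essentially the same approach as the paper: a rule-by-rule verification of the stratification conditions, with the critical case being $Rp_{12}$, where Lemma~\ref{L:POS} (applicable because reduction guarantees $M_P \models t \stackrel{\alpha}{\longrightarrow} y$) yields $|y|<|t|$ and hence $2|y|+1 < 2|t|$. Your exposition is in fact more explicit than the paper's, which only treats three representative cases and leaves the remaining rules to the reader.
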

\begin{proof}

    We want to prove that, for each rule in $ Reduce({R_{CLL}}, M_T, M_P)$, $S_{{\mathcal P}_{CLL}}$ satisfies the items (1), (2) and (3) in Def.~\ref{D:STRATIFICATION}.
    Let $r \in Reduce({R_{CLL}}, M_T, M_P)$.
    We distinguish cases based on the origin of $r$. Here, we only consider three cases as illustrations, the remainder is similar and omitted.\\

\noindent Case 1 $r$ originates from a rule in ${Ra_2}_{ground}$.

    Then, $prem(r) \subseteq \{t_1 \stackrel{a}{\longrightarrow} t, t_2 \not \stackrel{\tau}{\longrightarrow}\}$ and $conc(r)=t_1 \Box t_2 \stackrel{a}{\longrightarrow} t$ for some $t_1,t_2,t \in T(\Sigma_{CLL})$ and $a \in Act$.
    Clearly, it suffices to consider the case where $prem(r) = \{t_1 \stackrel{a}{\longrightarrow} t, t_2 \not \stackrel{\tau}{\longrightarrow}\}$.
    It immediately follows from the definition of $S_{{\mathcal P}_{CLL}}$ that $ S_{{\mathcal P}_{CLL}}(t_1 \Box t_2 \stackrel{a}{\longrightarrow} t)> S_{{\mathcal P}_{CLL}}(t_1 \stackrel{a}{\longrightarrow} t )$ and, for any $s \in T(\Sigma_{CLL})$, $ S_{{\mathcal P}_{CLL}}(t_1 \Box t_2 \stackrel{a}{\longrightarrow} t)> S_{{\mathcal P}_{CLL}}(t_2 \stackrel{\tau}{\longrightarrow} s )$, as desired.\\

\noindent Case 2 $r$ originates from a rule in ${Rp_{12}}_{ground}$.

    Similar to Case 1, it is enough to deal with the case where $r$ has the form $\frac{t_1 \wedge t_2 \stackrel{\alpha}{\longrightarrow} t_3, t_3 \neg F}{t_1 \wedge t_2 \overline{F}_{\alpha}}$ with $M_P\models t_1 \wedge t_2 \stackrel{\alpha}{\longrightarrow} t_3$.
    By Lemma~\ref{L:POS}, $|t_1 \wedge t_2 |>|t_3|$.
    Then, by the definition of $S_{{\mathcal P}_{CLL}}$, we have $S_{{\mathcal P}_{CLL}}(t_1 \wedge t_2 \overline{F}_{\alpha}) = 2\times |t_1 \wedge t_2| > 2\times |t_3|+1 = S_{{\mathcal P}_{CLL}}(t_3F)$
    and $S_{{\mathcal P}_{CLL}}({t_1 \wedge t_2}{\overline{F}}_{\alpha}) > S_{{\mathcal P}_{CLL}}(t_1 \wedge t_2 \stackrel{\alpha}{\longrightarrow}t_3)$.   \\

\noindent Case 3 $r$ originates from a rule in ${Rp_{13}}_{ground}$.

    Similarly, we treat the rule with the form $\frac{t_1 \wedge t_2\stackrel{\alpha}{\longrightarrow} t_3, t_1 \wedge t_2 \neg \overline{F}_{\alpha}}{t_1 \wedge t_2 F}$.
    It immediately follows that $S_{{\mathcal P}_{CLL}}({t_1 \wedge t_2}F) > S_{{\mathcal P}_{CLL}}(t_1 \wedge t_2 \stackrel{\alpha}{\longrightarrow}t_3)$ and $S_{{\mathcal P}_{CLL}}({t_1 \wedge t_2}F) > S_{{\mathcal P}_{CLL}}(t_1 \wedge t_2 {\overline{F}}_{\alpha})$, as desired.
\end{proof}

We now arrive at the main result of this subsection.

\begin{theorem}
    ${\mathcal P}_{CLL}$ has a unique stable transition model.
\end{theorem}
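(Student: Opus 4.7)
The plan is to obtain the theorem as a short corollary of the machinery already assembled: Lemma~\ref{L:Stratification} exhibits an explicit stratification $S_{{\mathcal P}_{CLL}}$ for $Red^{1}({\mathcal P}_{CLL})$, and Theorem~\ref{C:stable} lifts a stratification of some finite reduct back to uniqueness of the stable model of the original TSS. So essentially all the technical work is already done, and the proof of the final theorem reduces to invoking these two results in sequence.

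Concretely, first I would set $\beta = 1$ and $\alpha = \omega$ and appeal to Lemma~\ref{L:Stratification}, which states that $S_{{\mathcal P}_{CLL}}: Tr(\Sigma_{CLL},Act_{\tau})\cup Pred(\Sigma_{CLL},{\mathbb P}_{CLL}) \longrightarrow {\mathbb N}$ is a stratification of $Red^{1}({\mathcal P}_{CLL})$. This directly verifies the hypothesis of Theorem~\ref{C:stable} with $\mathcal{P}={\mathcal P}_{CLL}$, $\beta=1$, and $\alpha = \omega$. Then, by the cited theorem, the stable transition model of $Red^{1}({\mathcal P}_{CLL})$ exists, is unique, and coincides with the unique stable transition model of ${\mathcal P}_{CLL}$ itself.

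There is no real obstacle remaining in the present theorem, since all the difficulty has been pushed into Lemma~\ref{L:Stratification} and, beneath it, Lemma~\ref{L:POS}. The only thing I would still double-check before closing the proof is that the preceding lemma has been established for $Red^{1}({\mathcal P}_{CLL})$ (and not merely for a positive fragment), so that the hypotheses of Theorem~\ref{C:stable} are literally met; a one-line remark to that effect is sufficient. I would also briefly note that, since the existence part is absorbed into Theorem~\ref{C:stable}, no separate construction of a candidate model is required here — the stable model is precisely $M_{Strip({\mathcal P}_{CLL},M)}$ for the unique $M$ produced by the reduction procedure, and this will be referred to implicitly in later sections when we write the induced LTS.
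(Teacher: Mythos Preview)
Your proposal is correct and matches the paper's proof exactly: the paper's proof is the single line ``Follows from Theorem~\ref{C:stable} and Lemma~\ref{L:Stratification},'' which is precisely the combination you describe (with $\beta=1$). Your additional remarks about checking that the stratification is for $Red^{1}({\mathcal P}_{CLL})$ itself and that existence is subsumed in Theorem~\ref{C:stable} are accurate but not needed in the write-up.
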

\begin{proof}
Follows from Theorem~\ref{C:stable} and Lemma~\ref{L:Stratification}.
\end{proof}

\noindent\textbf{Notation}
Henceforward the unique stable transition model of ${\mathcal P}_{CLL}$ is denoted by $M_{CLL}$.  \\

The LTS associated with CLL is defined below.

\begin{mydefn}
    The LTS associated with CLL, in symbols $LTS(CLL)$, is the quadruple
    $(T(\Sigma_{CLL}),Act_{\tau},\longrightarrow_{CLL},F_{CLL})$, where \\
    --- $t \stackrel{\alpha}{\longrightarrow}_{CLL} t'$ iff $t\stackrel{\alpha}{\longrightarrow} t' \in M_{CLL}$,\\
    --- $t\in F_{CLL}$ iff $tF \in M_{CLL}$.
\end{mydefn}

Since $M_{CLL}$ is a stable transition model, which exactly consists of provable transitions of the positive TSS $Strip({\mathcal P}_{CLL},M_{CLL})$, the result below follows.

\begin{theorem}
    Let $t,t_1,t_2 \in T(\Sigma_{CLL})$ and $\alpha \in Act_{\tau}$.
  \begin{enumerate}
     \item The following are equivalent:

            1.1. $t_1 \stackrel{\alpha}{\longrightarrow}_{CLL}t_2$,

            1.2. $t_1 \stackrel{\alpha}{\longrightarrow}t_2 \in M_{CLL}$,

            1.3. $t_1 \stackrel{\alpha}{\longrightarrow}t_2 \in M_{Strip({\mathcal P}_{CLL},M_{CLL})}$,

            1.4. $Strip({\mathcal P}_{CLL},M_{CLL})\vdash t_1 \stackrel{\alpha}{\longrightarrow}t_2$.

     \item The following are equivalent:

            2.1. $t\in F_{CLL}$,

            2.2. $tF \in M_{CLL}$,

            2.3. $tF \in M_{Strip({\mathcal P}_{CLL},M_{CLL})}$,

            2.4. $Strip({\mathcal P}_{CLL},M_{CLL})\vdash tF$.

   \end{enumerate}
\end{theorem}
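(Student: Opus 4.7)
The plan is to show that all four conditions in each item are essentially restatements of one another, chained together by the definitions already in place plus one classical fact about positive TSSs. I would prove both items in parallel, since they have the same structure: one merely replaces the transition literal $t_1 \stackrel{\alpha}{\longrightarrow} t_2$ by the predicate literal $tF$.

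First, the equivalence $1.1 \Leftrightarrow 1.2$ (and correspondingly $2.1 \Leftrightarrow 2.2$) is immediate from the definition of $LTS(CLL)$ just given: the relation $\longrightarrow_{CLL}$ and the set $F_{CLL}$ are defined by literal membership in $M_{CLL}$, so no content is needed here. Second, the equivalence $1.2 \Leftrightarrow 1.3$ (and $2.2 \Leftrightarrow 2.3$) is precisely the defining property of a stable transition model established in the previous theorem: $M_{CLL}$ is, by Definition~\ref{D:STABLE}, equal to $M_{Strip({\mathcal P}_{CLL}, M_{CLL})}$, and the uniqueness was the content of the theorem just proved. So this step reduces to citing the fact and unfolding the equality of transition models on both kinds of literals.

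Third, the equivalence $1.3 \Leftrightarrow 1.4$ (and $2.3 \Leftrightarrow 2.4$) appeals to the classical fact, already recalled in the excerpt (and attributed to \cite{Bol96}), that every positive TSS has a least transition model which consists exactly of the closed literals provable in it. Since $Strip({\mathcal P}_{CLL}, M_{CLL})$ is positive by construction (all negative premises have been stripped away using $M_{CLL}$ as an oracle), the literal $t_1 \stackrel{\alpha}{\longrightarrow} t_2$ (respectively $tF$) lies in $M_{Strip({\mathcal P}_{CLL}, M_{CLL})}$ if and only if it is provable from $Strip({\mathcal P}_{CLL}, M_{CLL})$.

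Assembling these three links yields the transitive chain $1.1 \Leftrightarrow 1.2 \Leftrightarrow 1.3 \Leftrightarrow 1.4$ and likewise for item (2). I do not anticipate any genuine obstacle: the theorem is in essence an unfolding of the definition of $LTS(CLL)$, of the stable model, and of the proof-theoretic characterization of the least model of a positive TSS. The only care needed is to verify that the cited result about least models of positive TSSs applies uniformly to both the transition literals and the predicate literals, but this is exactly its standard formulation since both $Tr(\Sigma, A)$ and $Pred(\Sigma, \mathbb{P})$ are included in the notion of transition model used throughout Section~4.
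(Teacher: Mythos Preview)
Your proposal is correct and matches the paper's approach exactly: the paper's proof consists of the single word ``Straightforward,'' and your unfolding of the three definitional links ($LTS(CLL)$, stable model, least model of a positive TSS) is precisely what that word stands for.
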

\begin{proof}
  Straightforward.
\end{proof}

The above theorem is trivial but useful. It provides a way to prove the properties of $LTS(CLL)$ and $M_{CLL}$. That is, we can demonstrate some conclusions by proceeding induction on the depth of proof trees in the positive TSS $Strip({\mathcal P}_{CLL},M_{CLL})$.  In the remainder of this paper, we will apply this theorem without any reference.

\subsection{Basic Properties of $LTS({CLL})$}

This subsection will provide a number of simple properties of $LTS({CLL})$. In particular, we will show that $LTS({CLL})$ is a Logic LTS and it is $\tau$-pure.
Some useful notations are listed below.

$t \stackrel{\alpha}{\longrightarrow}_F s$ iff $t \stackrel{\alpha}{\longrightarrow}_{CLL} s$ with $t,s\notin F_{CLL}$.

$t \stackrel{\epsilon}{\Longrightarrow}_{CLL}s$ iff $t (\stackrel{\tau}{\longrightarrow}_{CLL})^* s$.

$t \stackrel{\alpha}{\Longrightarrow}_{CLL}s $ iff $\exists r,p\in T(\Sigma_{CLL}).t \stackrel{\epsilon}{\Longrightarrow}_{CLL} r \stackrel{\alpha}{\longrightarrow}_{CLL}p \stackrel{\epsilon}{\Longrightarrow}_{CLL} s$.

$t \stackrel{\epsilon}{\Longrightarrow}_F s$ (or, $t \stackrel{\alpha}{\Longrightarrow}_F s$) iff  $t \stackrel{\epsilon}{\Longrightarrow}_{CLL}s$ ($t \stackrel{\alpha}{\Longrightarrow}_{CLL}s$, respectively) and all process terms  occurring in this sequence, including $t$ and $s$, are not in $F_{CLL}$.

$t \stackrel{\epsilon}{\Longrightarrow}_F|s$ (or, $t \stackrel{\alpha}{\Longrightarrow}_F|s$) iff $t \stackrel{\epsilon}{\Longrightarrow}_F s$ ($t \stackrel{\alpha}{\Longrightarrow}_F s$, respectively) and $s$ is stable.

A few simple properties of transition relation $\longrightarrow_{CLL}$ are listed in the next three lemmas, which will be frequently used in subsequent sections.

\begin{lemma}\label{L:Basic_I}
Let $t,t_1,t_2\in T(\Sigma_{CLL})$ and $\alpha,\beta \in Act_{\tau}$. Then
    \begin{enumerate}
      \item $\alpha.t \stackrel{\beta}{\longrightarrow}_{{CLL}} r$ iff $\alpha = \beta$ and $t \equiv r$. Hence, $\alpha.t \not \stackrel{\beta}{\longrightarrow}_{{CLL}}$ for any $\beta  \neq \alpha$.
      \item $t_1 \vee t_2 \stackrel{\alpha}{\longrightarrow}_{{CLL}} t$ iff $\alpha = \tau$, and either $t \equiv t_1$ or $t \equiv t_2$.
      \item $0 \not\stackrel{\alpha}{\longrightarrow}_{{CLL}}$ and $\bot \not\stackrel{\alpha}{\longrightarrow}_{{CLL}}$.
      \item If $t \stackrel{\alpha}{\longrightarrow}_{{CLL}}s$ then $|t|>|s|$. Hence, there is no infinite transition sequence in $LTS(CLL)$.
    \end{enumerate}
\end{lemma}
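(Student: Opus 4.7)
The plan is to exploit the characterization of $M_{CLL}$ as the least transition model of the positive TSS $Strip(\mathcal{P}_{CLL}, M_{CLL})$, so that every $t \stackrel{\alpha}{\longrightarrow}_{CLL} s$ admits a well-founded proof tree whose leaves are ground instances of axioms. Parts (1)--(3) will then be dispatched by inspection of Table~\ref{Ta:OPERATIONAL_RULES}: for each family of source terms in question, only a limited number of operational rules have conclusions whose source matches the prescribed outermost constructor.

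For (1), the \emph{if} direction is immediate from the axiom $Ra_1$; for the \emph{only if} direction, among $Ra_1$--$Ra_{15}$ only $Ra_1$ has a conclusion of the form $\alpha.x_1 \stackrel{\beta}{\longrightarrow} y$, forcing $\beta=\alpha$ and $y\equiv x_1$. Analogously for (2), the only rules whose conclusion has source $x_1\vee x_2$ are the axioms $Ra_9$ and $Ra_{10}$, yielding $\alpha=\tau$ and $t\in\{t_1,t_2\}$; the converse is given by the axioms themselves. For (3), inspecting the tables shows that no operational rule has a conclusion whose source term is $0$ or $\bot$, so neither $0F'$ nor $\bot F'$ rules produce transitions, and thus $0\not\stackrel{\alpha}{\longrightarrow}_{CLL}$ and $\bot\not\stackrel{\alpha}{\longrightarrow}_{CLL}$.

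For (4), I would run induction on the depth of the proof tree of $t \stackrel{\alpha}{\longrightarrow} s$ in $Strip(\mathcal{P}_{CLL}, M_{CLL})$, mirroring the scheme used for Lemma~\ref{L:POS}. The base case handles the three axiom schemas $Ra_1$, $Ra_9$, $Ra_{10}$, where the derivative is a proper syntactic subterm of the source and thus of strictly smaller degree. The inductive step distinguishes cases according to the last rule applied among $Ra_2$--$Ra_8$ and $Ra_{11}$--$Ra_{15}$: in every such rule the derivative is either a direct derivative $y_i$ obtained from a premise $x_i \stackrel{\beta}{\longrightarrow} y_i$ (by IH, $|y_i|<|x_i|$), or it is built by substituting $y_i$ into the source term in place of $x_i$ (for which the strict decrease propagates because the degree function is strictly monotone under every operator of $\Sigma_{CLL}$). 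In the binary-synchronisation case $Ra_6$ (and $Ra_{15}$) the inductive hypothesis applies to both premises, and strict monotonicity again delivers $|t|>|s|$. The second half of the clause --- nonexistence of an infinite transition sequence --- then follows immediately, since $|\cdot|$ takes values in $\mathbb{N}$ and cannot decrease forever.

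The only potential obstacle is organisational: one must verify for each of (1)--(3) that \emph{every} operational rule whose conclusion could conceivably unify with the prescribed source term has been accounted for, and for (4) one must check that every rule respects the strict-decrease property. Both checks are purely mechanical and transparent from Table~\ref{Ta:OPERATIONAL_RULES}, so no genuine difficulty is expected.
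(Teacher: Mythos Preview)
Your proposal is correct and, for parts (1)--(3), essentially identical to the paper's argument: both rely on the fact that $M_{CLL}$ is a supported model (equivalently, the least model of $Strip(\mathcal{P}_{CLL},M_{CLL})$) and inspect which rules in Table~\ref{Ta:OPERATIONAL_RULES} have a conclusion matching the given source constructor.

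For part (4) there is a small but noteworthy difference. You propose redoing the induction on proof-tree depth in $Strip(\mathcal{P}_{CLL},M_{CLL})$, mirroring Lemma~\ref{L:POS}. The paper instead dispatches (4) in one line by observing that $M_{CLL}\subseteq M_P$ and invoking Lemma~\ref{L:POS} directly. The inclusion holds because every rule in $Strip(R_{CLL},M_{CLL})$ is obtained from a ground instance of a rule in $R_{CLL}$ by dropping its negative premises, and hence belongs to $Pos(Red^0(\mathcal{P}_{CLL}))$; monotonicity of least models then gives $M_{CLL}=M_{Strip(\mathcal{P}_{CLL},M_{CLL})}\subseteq M_P$. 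Your approach is perfectly valid and self-contained, but the paper's shortcut avoids repeating an induction whose content has already been established for the larger model $M_P$.
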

\begin{proof}
  \noindent \textbf{(1)} (Left implies Right) Assume $\alpha.t \stackrel{\beta}{\longrightarrow}_{{CLL}} t'$. Then, \[Strip({\mathcal P}_{CLL},M_{{CLL}})\vdash \alpha.t \stackrel{\beta}{\longrightarrow} t'.\]
  Further, since the axiom $\alpha.t \stackrel{\alpha}{\longrightarrow}t$ is the only rule whose conclusion has the head $\alpha.t$, we get
   $\alpha = \beta$ and $t \equiv t'$.\\

  \noindent (Right implies Left) Since $\frac{-}{\alpha.t \stackrel{\alpha}{\longrightarrow} t}$ is an axiom in $Strip(R_{CLL}, M_{CLL})$, we immediately get $\alpha.t \stackrel{\alpha}{\longrightarrow} t \in M_{CLL}$. Hence, $\alpha.t \stackrel{\alpha}{\longrightarrow}_{CLL} t$. \\

  \noindent \textbf{(2)} Similar to (1). \\

  \noindent \textbf{(3)} Assume that $0 \stackrel{\alpha}{\longrightarrow}_{{CLL}}t'$ for some $t'\in T(\Sigma_{CLL})$ and $\alpha \in Act_{\tau}$.
  Since $M_{CLL}$ is a supported transition model, there exists a rule $r \in {R_{CLL}}_{ground}$ such that $M_{CLL}\models prem(r)$ and $conc(r) \equiv 0 \stackrel{\alpha}{\longrightarrow}t'$. However, there is no such ground rule, a contradiction.
  Similarly, $\bot \not\stackrel{\alpha}{\longrightarrow}_{{CLL}}$ holds for each $\alpha \in Act_{\tau}$.\\

  \noindent \textbf{(4)} By Lemma~\ref{L:POS}, it follows from the fact that
   $M_{CLL} \subseteq M_P$.
\end{proof}

The properties in the above lemma hold for any kind of transitions (visible or invisible). The next lemma contains some simple properties which hold only for visible transitions.

\begin{lemma}\label{L:Basic_II}
Let $t_1,t_2\in T(\Sigma_{CLL})$ and $a,b \in Act$. Then
    \begin{enumerate}
      \item  $t_1 \Box t_2 \stackrel{a}{\longrightarrow}_{{CLL}} t_3$ iff  either $t_1 \stackrel{a}{\longrightarrow}_{{CLL}} t_3$ and $t_2 \not\stackrel{\tau}{\longrightarrow}_{CLL}$, or $t_2 \stackrel{a}{\longrightarrow}_{{CLL}} t_3$ and $t_1 \not\stackrel{\tau}{\longrightarrow}_{CLL}$.
          In particular, $a.t_1 \Box b.t_2 \stackrel{\alpha}{\longrightarrow}_{{CLL}} t_3$ iff  either $\alpha = a$ and $t_3 \equiv t_1$, or $\alpha = b$ and $t_3 \equiv t_2$.
      \item $t_1\wedge t_2 \stackrel{a}{\longrightarrow}_{CLL}t_3$ iff $t_1 \stackrel{a}{\longrightarrow}_{CLL} t_1'$, $t_2 \stackrel{a}{\longrightarrow}_{CLL}t_2'$ and $t_3\equiv t_1'\wedge t_2'$ for some $t_1',t_2'$.
    \end{enumerate}
\end{lemma}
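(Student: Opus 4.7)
The overall strategy is rule inspection via the fact, recorded in the theorem just preceding, that $M_{CLL}$ is a supported model of $\mathcal{P}_{CLL}$: every literal of $M_{CLL}$ must be justified by some ground instance of a rule in $R_{CLL}$ whose premises hold in $M_{CLL}$, and conversely, whenever the premises of a ground instance hold in $M_{CLL}$ so does its conclusion. Since both clauses concern visible-action transitions ($a \in Act$), only the operational rules whose conclusion has label in $Act$ and the appropriate outermost operator are relevant; the predicate rules and the $\tau$-rules never contribute.

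For clause (1), the plan is to observe that the only rules in $R_{CLL}$ whose conclusion has the shape $x_1 \Box x_2 \stackrel{a}{\longrightarrow} y$ with $a \in Act$ are $Ra_2$ and $Ra_3$ ($Ra_4$ and $Ra_5$ produce only $\tau$-labels). For the forward direction, suppose $t_1 \Box t_2 \stackrel{a}{\longrightarrow}_{CLL} t_3$; by supportedness some ground instance of $Ra_2$ or $Ra_3$ fires in $M_{CLL}$, and reading off the positive and negative premises yields the two disjuncts on the right. For the converse, if $t_1 \stackrel{a}{\longrightarrow}_{CLL} t_3$ and $t_2 \not\stackrel{\tau}{\longrightarrow}_{CLL}$, then the premises of the corresponding ground instance of $Ra_2$ hold in $M_{CLL}$, hence so does its conclusion (modelhood); symmetrically for $Ra_3$. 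The ``in particular'' specialisation follows by applying (1) together with Lemma~\ref{L:Basic_I}(1), which forces $a.t_1 \not\stackrel{\tau}{\longrightarrow}_{CLL}$ and $b.t_2 \not\stackrel{\tau}{\longrightarrow}_{CLL}$ (so both side conditions are automatic), and again identifies the $a$- and $b$-derivatives of $a.t_1$ and $b.t_2$ uniquely; the case $\alpha = \tau$ is handled by noting that no rule can produce a $\tau$-transition from $a.t_1 \Box b.t_2$, because the only candidates $Ra_4, Ra_5$ require one of $a.t_1, b.t_2$ to make a $\tau$-move.

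For clause (2), the plan is identical in spirit: among the rules for $\wedge$, only $Ra_6$ yields a conclusion with visible label ($Ra_7$, $Ra_8$ produce $\tau$, and the predicate rules $Rp_8$--$Rp_{13}$ produce $F$ or $\overline{F}_\alpha$, not transitions). Forward: if $t_1 \wedge t_2 \stackrel{a}{\longrightarrow}_{CLL} t_3$, supportedness forces a ground instance of $Ra_6$, so $t_3$ must be of the form $t_1' \wedge t_2'$ with $t_1 \stackrel{a}{\longrightarrow}_{CLL} t_1'$ and $t_2 \stackrel{a}{\longrightarrow}_{CLL} t_2'$. Backward: given these two $a$-transitions, the corresponding ground instance of $Ra_6$ has all its premises in $M_{CLL}$, so modelhood of $M_{CLL}$ gives $t_1 \wedge t_2 \stackrel{a}{\longrightarrow} t_1' \wedge t_2' \in M_{CLL}$.

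There is no genuine obstacle here; the only point demanding care is pedantically checking that the enumeration of ``rules that can possibly conclude $t_1 \Box t_2 \stackrel{a}{\longrightarrow} t_3$'' (resp.\ $t_1 \wedge t_2 \stackrel{a}{\longrightarrow} t_3$) really is $\{Ra_2, Ra_3\}$ (resp.\ $\{Ra_6\}$) — this relies on the signature being free and on the syntactic shape of the conclusion of each SOS rule. Once this is pinned down, both directions of both clauses are immediate from the supported-model characterisation.
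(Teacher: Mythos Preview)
Your proposal is correct and follows essentially the same approach as the paper: both argue by inspecting which SOS rules can possibly yield a conclusion of the given shape, and then read off the premises. The only cosmetic difference is that the paper phrases the forward direction as ``the last rule applied in the proof tree of $Strip(\mathcal{P}_{CLL},M_{CLL}) \vdash \cdots$'' and the backward direction as ``the rule belongs to $Strip(R_{CLL},M_{CLL})$'', whereas you invoke supportedness and modelhood of $M_{CLL}$ directly; these are equivalent formulations (the paper itself notes that stable models are supported models, and Theorem~4.13 records the proof-tree characterisation).
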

\begin{proof}
  \noindent \textbf{(1)} (Left implies Right) Suppose $t_1 \Box t_2 \stackrel{a}{\longrightarrow}_{{CLL}} t_3$.
  Clearly, the last rule applied in the proof tree of  $Strip({\mathcal P}_{CLL},M_{{CLL}})\vdash t_1 \Box t_2 \stackrel{a}{\longrightarrow} t_3$ is of the form
  \[\text{either}\frac{t_1 \stackrel{a}{\longrightarrow} s}{t_1 \Box t_2 \stackrel{a}{\longrightarrow} s}\;\text{with}\; M_{{CLL}} \models t_2 \not\stackrel{\tau}{\longrightarrow}
  \;\text{or}\; \frac{t_2 \stackrel{a}{\longrightarrow} s}{t_1 \Box t_2 \stackrel{a}{\longrightarrow} s} \; \text{with} \; M_{{CLL}} \models t_1 \not\stackrel{\tau}{\longrightarrow}.\]
  Then, we get either $t_1 \stackrel{a}{\longrightarrow}_{CLL} s$, $t_2 \not\stackrel{\tau}{\longrightarrow}_{CLL}$ and $t_3 \equiv s$, or $t_2 \stackrel{a}{\longrightarrow}_{CLL} s$, $t_1 \not\stackrel{\tau}{\longrightarrow}_{CLL}$ and $t_3 \equiv s$.\\

  \noindent (Right implies Left) W.l.o.g, suppose $t_1 \stackrel{a}{\longrightarrow}_{{CLL}} t_3$ and $t_2 \not\stackrel{\tau}{\longrightarrow}_{CLL}$.
  Since $M_{CLL}\models t_2 \not\stackrel{\tau}{\longrightarrow}$, we get $ \frac{t_1 \stackrel{a}{\longrightarrow} t_3}{t_1 \Box t_2 \stackrel{a}{\longrightarrow} t_3} \in Strip(R_{CLL},M_{CLL})$.
  So, $t_1 \Box t_2 \stackrel{a}{\longrightarrow} t_3 \in M_{CLL}$ immediately follows from $t_1 \stackrel{a}{\longrightarrow} t_3 \in M_{CLL}$.\\

  \noindent \textbf{(2)} We shall prove that the left implies the right, the converse is trivial and omitted.
   Suppose $t_1\wedge t_2 \stackrel{a}{\longrightarrow}_{CLL} t_3$.
   Since $a \in Act$, the last rule applied in the proof tree of $Strip({\mathcal P}_{CLL},M_{CLL}) \vdash t_1\wedge t_2 \stackrel{a}{\longrightarrow} t_3 $ is of the form $\frac{t_1 \stackrel{a}{\longrightarrow} t_1', t_2 \stackrel{a}{\longrightarrow} t_2'}{t_1\wedge t_2 \stackrel{a}{\longrightarrow} t_1'\wedge t_2'}$ for some $t_1'$, $t_2'$.
   Thus, it follows that $t_3 \equiv t_1'\wedge t_2'$, $t_1 \stackrel{a}{\longrightarrow} t_1' \in M_{CLL}$ and $t_2 \stackrel{a}{\longrightarrow} t_2' \in M_{CLL}$.
\end{proof}

The lemma below shows that the operators $\parallel_A$, $\wedge$ and $\Box$ are static combinators w.r.t $\tau$-transitions, in other words, the structure that they represent is undisturbed by $\tau$-transitions.

\begin{lemma}\label{L:STABILIZATION}
Let $\odot \in \{\parallel_A,\wedge,\Box\}$ and $t_1,t_2\in T(\Sigma_{CLL})$.
Then, $t_1\odot t_2 \stackrel{\tau}{\longrightarrow}_{CLL}t_3$ iff there exists $s$ such that either $t_1 \stackrel{\tau}{\longrightarrow}_{CLL}s$ and $t_3\equiv s\odot t_2$, or $t_2 \stackrel{\tau}{\longrightarrow}_{CLL}s$ and $t_3\equiv t_1\odot s$.
Hence, $t_1$ and $t_2$ are stable iff $t_1 \odot t_2$ is stable.
\end{lemma}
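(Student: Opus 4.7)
The plan is to handle the biconditional by case analysis on the operator symbol $\odot$ and, within each case, by inspection of which SOS rules can conclude a $\tau$-transition from a term of the form $t_1\odot t_2$. Since $M_{CLL}$ is a stable (in particular, supported) transition model, every transition in $\longrightarrow_{CLL}$ is the conclusion of a ground instance of some rule in $Strip(\mathcal{P}_{CLL},M_{CLL})$ whose premises are validated by $M_{CLL}$. I will therefore apply the same \emph{last rule applied in the proof tree} technique already used in Lemmas~\ref{L:Basic_I} and~\ref{L:Basic_II}.

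For the forward direction, assume $t_1\odot t_2\stackrel{\tau}{\longrightarrow}_{CLL}t_3$. Inspecting Table~\ref{Ta:OPERATIONAL_RULES}, the only rules whose conclusion has head of the form $x_1\odot x_2$ and label $\tau$ are: $Ra_4,Ra_5$ when $\odot=\Box$; $Ra_7,Ra_8$ when $\odot=\wedge$; and $Ra_{11},Ra_{12}$ when $\odot=\parallel_A$. Rules $Ra_2,Ra_3,Ra_6,Ra_{13},Ra_{14},Ra_{15}$ are excluded because their label is a visible $a\in Act$, and $Ra_1,Ra_9,Ra_{10}$ are excluded because the head of their conclusion is not of the required shape. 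In every one of the six admissible cases the conclusion has precisely the form $t_1\odot t_2\stackrel{\tau}{\longrightarrow}s\odot t_2$ or $t_1\odot t_2\stackrel{\tau}{\longrightarrow}t_1\odot s$, with the corresponding premise $t_i\stackrel{\tau}{\longrightarrow}s$ holding in $M_{CLL}$, i.e. $t_i\stackrel{\tau}{\longrightarrow}_{CLL}s$. This yields the required $s$.

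For the reverse direction, suppose without loss of generality that $t_1\stackrel{\tau}{\longrightarrow}_{CLL}s$ and $t_3\equiv s\odot t_2$. Then $t_1\stackrel{\tau}{\longrightarrow}s\in M_{CLL}$, and by choosing the appropriate rule among $Ra_4$, $Ra_7$, $Ra_{11}$ (depending on $\odot$) we obtain a ground instance all of whose premises are validated by $M_{CLL}$ (these rules are purely positive, so no negative premises need to be checked). Since $M_{CLL}$ is a model of $\mathcal{P}_{CLL}$, its conclusion $t_1\odot t_2\stackrel{\tau}{\longrightarrow}s\odot t_2$ also belongs to $M_{CLL}$, giving $t_1\odot t_2\stackrel{\tau}{\longrightarrow}_{CLL}t_3$. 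The symmetric argument handles the case $t_2\stackrel{\tau}{\longrightarrow}_{CLL}s$ with $t_3\equiv t_1\odot s$ via $Ra_5$, $Ra_8$, or $Ra_{12}$.

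The \emph{hence} clause follows by contraposition from the established biconditional: $t_1\odot t_2$ is stable iff it has no $\tau$-derivative, iff (by the biconditional just proved) neither $t_1$ nor $t_2$ has a $\tau$-derivative, iff both $t_1$ and $t_2$ are stable. I do not anticipate any genuine obstacle; the only mildly delicate point is making sure the enumeration of rules whose conclusion can be $t_1\odot t_2\stackrel{\tau}{\longrightarrow}t_3$ is complete for each $\odot$, which is a finite inspection of Table~\ref{Ta:OPERATIONAL_RULES}.
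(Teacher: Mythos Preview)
Your proposal is correct and follows essentially the same approach as the paper: case analysis on the last rule applied in the proof tree of $Strip(\mathcal{P}_{CLL},M_{CLL})$ for the forward direction, and direct application of the relevant positive rules ($Ra_4/Ra_5$, $Ra_7/Ra_8$, $Ra_{11}/Ra_{12}$) for the reverse direction. Your write-up is in fact more explicit than the paper's own proof, which simply states these two steps without enumerating the rules.
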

\begin{proof}
(Left implies Right) Proceed by distinguishing cases based on the form of the last rule applied in the proof tree of $Strip({\mathcal P}_{CLL},M_{CLL})\vdash t_1\odot t_2 \stackrel{\tau}{\longrightarrow}_{CLL}t_3$.\\

\noindent (Right implies Left) It immediately follows from the fact that both $\frac{t_1 \stackrel{\tau}{\longrightarrow}t_1'}{t_1\odot t_2 \stackrel{\tau}{\longrightarrow} t_1' \odot t_2}$ and $\frac{t_2 \stackrel{\tau}{\longrightarrow}t_2'}{t_1\odot t_2 \stackrel{\tau}{\longrightarrow} t_1 \odot t_2'}$ are rules in $Strip(R_{CLL},M_{CLL})$.
\end{proof}

The next lemma provides some basic properties of $F_{CLL}$.

\begin{lemma}\label{L:F_NORMAL}
  Let $t_1,t_2\in T(\Sigma_{CLL})$. Then
  \begin{enumerate}
    \item  $t_1,t_2 \in F_{CLL}$ iff $t_1 \vee t_2 \in F_{CLL}$.
    \item $\alpha.t_1 \in F_{CLL}$ iff $t_1 \in F_{CLL}$.
    \item Either $t_1 \in F_{CLL}$ or $t_2 \in F_{CLL}$ iff $t_1 \odot t_2 \in F_{CLL}$  for each $\odot \in \{\Box, \parallel_A\}$.
    \item Either $t_1 \in F_{CLL}$ or $t_2 \in F_{CLL}$ implies $t_1 \wedge t_2 \in F_{CLL}$.
    \item $0 \notin F_{{CLL}}$ and $\bot \in F_{CLL}$.
  \end{enumerate}
\end{lemma}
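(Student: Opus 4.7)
My plan is to handle all five items uniformly by exploiting the two faces of $M_{CLL}$ as the unique stable transition model: on the one hand, $tF \in M_{CLL}$ iff $Strip(\mathcal{P}_{CLL},M_{CLL}) \vdash tF$, so every inclusion $t \in F_{CLL}$ is witnessed by a proof tree whose root rule has a ground instance with conclusion $tF$; on the other hand, since $M_{CLL}$ is a model of $\mathcal{P}_{CLL}$, applying any such rule whose premises are satisfied in $M_{CLL}$ immediately yields membership in $F_{CLL}$. Thus for each ``$\Leftarrow$'' (or ``implies'') direction I simply invoke the appropriate rule among $Rp_1$--$Rp_9$, and for each ``$\Rightarrow$'' direction I inspect the last rule of the proof tree and read off its positive premises.

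Concretely, for (1) the rule $Rp_3$ gives the ``$\Rightarrow$''; conversely, the only rule whose conclusion has head $x_1 \vee x_2\, F$ is $Rp_3$, so any supporting derivation of $t_1 \vee t_2 \in F_{CLL}$ forces $t_1 F$ and $t_2 F$ to be in $M_{CLL}$. Item (2) is analogous, using $Rp_2$ and noting that it is the unique rule producing a conclusion of the form $\alpha.x_1\, F$. For (3), the ``$\Leftarrow$'' follows from $Rp_4$--$Rp_7$; for the ``$\Rightarrow$'' I observe that the only rules whose conclusion matches $t_1 \Box t_2\, F$ (respectively $t_1 \parallel_A t_2\, F$) are $Rp_4, Rp_5$ (respectively $Rp_6, Rp_7$), each of which demands the inconsistency of exactly one operand. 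Item (4) only asserts one implication, which is immediate from $Rp_8$ and $Rp_9$; note that the converse genuinely fails because $Rp_{10}$, $Rp_{11}$ and $Rp_{13}$ can also produce $t_1 \wedge t_2\, F$.

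For (5), $\bot \in F_{CLL}$ is the axiom $Rp_1$. The claim $0 \notin F_{CLL}$ is where one must be slightly careful: I would argue it by scanning the list of rules $Rp_1$--$Rp_{13}$ and observing that no rule has a conclusion of the form $0F$ (every conclusion is headed by $\bot$, $\alpha.x_1$, or a binary operator). Hence there is no ground rule in $Strip(R_{CLL},M_{CLL})$ producing $0F$, and since $M_{CLL}$ is supported, $0F \notin M_{CLL}$.

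I do not expect a genuine obstacle here; the work is essentially a careful case analysis of the predicate rules, and the only real temptation to avoid is confusing $0$ with $\bot$ in item (5) or conflating the unidirectional claim (4) with an ``iff''. Since Lemmas~\ref{L:Basic_I}--\ref{L:STABILIZATION} have already established the corresponding structural facts for $\longrightarrow_{CLL}$ from the same ``support + last-rule'' template, the present proof is the exact predicate analogue and will be short.
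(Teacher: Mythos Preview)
Your proposal is correct and follows exactly the paper's approach: use that $M_{CLL}$ is a supported model (so every $tF\in M_{CLL}$ is the conclusion of some ground instance of a rule whose premises hold in $M_{CLL}$) together with the proof-tree characterization via $Strip(\mathcal{P}_{CLL},M_{CLL})$, and then do a straightforward case analysis on the unique rules whose conclusions have the appropriate head. One small slip: in item~(3) you have the arrow labels reversed (applying $Rp_4$--$Rp_7$ gives the ``$\Rightarrow$'' direction, and inspecting the last rule gives ``$\Leftarrow$''), but the mathematical content is right.
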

\begin{proof}
We prove items (1) and (5), the others are similar and omitted.

\noindent \textbf{(1)} Assume that $t_1,t_2 \in F_{CLL}$. So, both $t_1 F$ and $t_2 F$ are in $M_{CLL}$. Further, it follows from $ \frac{t_1F,t_2F}{t_1 \vee t_2 F}\in Strip(R_{CLL},M_{CLL})$ that $t_1 \vee t_2 \in F_{CLL}$.

Conversely, assume that $t_1 \vee t_2 \in F_{CLL}$. Then, the last rule applied in the proof tree of $Strip({\mathcal P}_{CLL},M_{CLL}) \vdash t_1 \vee t_2F$ is $ \frac{t_1F,t_2F}{t_1 \vee t_2 F}$. So, we get $t_1 \in F_{CLL}$ and $t_2  \in F_{CLL}$.\\

\noindent \textbf{(5)} $\bot \in F_{CLL}$ immediately follows from $ \frac{-}{\bot F}\in Strip(R_{CLL},M_{CLL}) $. Next, we prove $0 \notin F_{CLL}$. Assume that $0F \in M_{{CLL}}$. Since $M_{{CLL}}$ is a supported transition model, there exists a rule $r \in Strip(R_{CLL},M_{CLL})$ such that $M_{{CLL}} \models prem(r)$ and $conc(r)=0F$. However, there is no rule which has the conclusion $0F$, a contradiction.
\end{proof}

An immediate consequence of the above lemma is that the operators $\alpha.()$, $\vee$, $\Box$ and $\parallel_A$ preserve consistency. More precisely, $\alpha.s$ and $s \odot t$ are not in $F_{CLL}$ if $s,t\notin F_{CLL}$, where $\odot$ is any binary operator except $\wedge$. Similar to conjunction in usual logic systems, the operator $\wedge$ doesn't preserve consistency. That is, the converse of (4) in the above lemma fails. For instance, consider the process terms $a.0$ and $b.0$. Clearly, by (2) and (5) in the above lemma, $a.0 \notin F_{CLL}$ and $b.0 \notin F_{CLL}$.
However, by Lemma~\ref{L:Basic_I}(1) and \ref{L:STABILIZATION}, we have $M_{CLL} \models \{b.0 \not\stackrel{a}{\longrightarrow}, a.0\wedge b.0 \not\stackrel{\tau}{\longrightarrow}\}$, hence $\frac{a.0 \stackrel{a}{\longrightarrow}0}{a.0 \wedge b.0 F} \in Strip(R_{CLL},M_{CLL})$.
Then, it is easy to see that $a.0\wedge b.0 \in F_{CLL}$.
In fact, the above lemma doesn't completely capture the nature of $F_{CLL}$, and further properties will be revealed in the subsequent.

\begin{lemma}\label{L:CON_LLTS}
    Let $t_1,t_2\in T(\Sigma_{CLL})$ and $\alpha \in Act_{\tau}$. If $M_{CLL}\models t_1\wedge t_2 \neg\overline{F}_{\alpha}$ then
    $t_3\in F_{CLL}$ for each $t_3$ such that $t_1\wedge t_2 \stackrel{\alpha}{\longrightarrow}_{CLL}t_3$.
\end{lemma}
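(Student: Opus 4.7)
The plan is to prove the contrapositive via a direct application of rule $Rp_{12}$ combined with the characterization of $M_{CLL}$ as the least model of the positive TSS $Strip(\mathcal{P}_{CLL}, M_{CLL})$. Assume toward contradiction that there is some $t_3$ with $t_1 \wedge t_2 \stackrel{\alpha}{\longrightarrow}_{CLL} t_3$ and $t_3 \notin F_{CLL}$. The goal is to conclude $M_{CLL} \models t_1 \wedge t_2\, \overline{F}_\alpha$, contradicting the hypothesis.

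First, I would unfold the definitions: $t_1 \wedge t_2 \stackrel{\alpha}{\longrightarrow}_{CLL} t_3$ means $t_1 \wedge t_2 \stackrel{\alpha}{\longrightarrow} t_3 \in M_{CLL}$, and $t_3 \notin F_{CLL}$ means $t_3 F \notin M_{CLL}$, i.e., $M_{CLL} \models t_3 \neg F$. Next, consider the ground instance of $Rp_{12}$ obtained by instantiating $x_1 := t_1$, $x_2 := t_2$ and $y := t_3$. Since its unique negative premise $t_3 \neg F$ is satisfied by $M_{CLL}$, the stripped rule
\[
\frac{t_1 \wedge t_2 \stackrel{\alpha}{\longrightarrow} t_3}{t_1 \wedge t_2\, \overline{F}_\alpha}
\]
belongs to $Strip(R_{CLL}, M_{CLL})$. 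Combining this rule with the fact that $t_1 \wedge t_2 \stackrel{\alpha}{\longrightarrow} t_3$ is provable from $Strip(\mathcal{P}_{CLL}, M_{CLL})$ (using the equivalence established after the definition of $LTS(CLL)$), we obtain a proof of $t_1 \wedge t_2\, \overline{F}_\alpha$ from $Strip(\mathcal{P}_{CLL}, M_{CLL})$, hence $t_1 \wedge t_2\, \overline{F}_\alpha \in M_{CLL}$. This contradicts $M_{CLL} \models t_1 \wedge t_2\, \neg \overline{F}_\alpha$.

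There is really no major obstacle here: the lemma is almost a bookkeeping consequence of how $\overline{F}_\alpha$ was designed (via $Rp_{12}$) to witness the existence of a consistent $\alpha$-derivative of a conjunction. The only thing to be careful about is to invoke the stripping operation correctly, making sure that the negative premise $y \neg F$ of $Rp_{12}$ is treated properly when showing the corresponding positive rule lies in $Strip(R_{CLL}, M_{CLL})$. Everything else is a direct unfolding of definitions.
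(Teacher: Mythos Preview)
Your proof is correct and follows essentially the same approach as the paper: assume a consistent $\alpha$-derivative exists, use $M_{CLL}\models t_3\neg F$ to place the stripped instance of $Rp_{12}$ in $Strip(R_{CLL},M_{CLL})$, and derive $t_1\wedge t_2\,\overline{F}_\alpha\in M_{CLL}$ from the premise $t_1\wedge t_2\stackrel{\alpha}{\longrightarrow}t_3\in M_{CLL}$, contradicting the hypothesis. The paper's proof is just a slightly more compressed version of yours.
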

\begin{proof}
   Assume  that $t_1\wedge t_2 \stackrel{\alpha}{\longrightarrow}_{CLL}t_3$ with $t_3 \notin F_{CLL}$.
   So, $t_3F \notin M_{CLL}$, which implies $ \frac{t_1\wedge t_2 \stackrel{\alpha}{\longrightarrow}t_3}{t_1\wedge t_2 \overline{F}_{\alpha}}\in Strip(R_{CLL},M_{CLL})$. Further, $t_1\wedge t_2 \overline{F}_{\alpha} \in M_{CLL}$ comes from $t_1\wedge t_2 \stackrel{\alpha}{\longrightarrow} t_3 \in M_{CLL}$, which contradicts $M_{CLL}\models t_1\wedge t_2 \neg\overline{F}_{\alpha}$.
\end{proof}

As mentioned in Section~2, the notion of $\tau$-pure is introduced in \cite{Luttgen07, Luttgen10}, which is a technique constraint for Logic LTSs. Follows \cite{Luttgen07, Luttgen10}, this paper insists such constraint. SOS rules in Table \ref{Ta:OPERATIONAL_RULES} have reflected it, while the result below will formally show that the LTS associated with CLL is indeed $\tau$-pure.

\begin{theorem}\label{L:TAU_PURE}
    $LTS({CLL})$ is $\tau$-pure.
\end{theorem}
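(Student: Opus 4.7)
The plan is to proceed by structural induction on $t \in T(\Sigma_{CLL})$, showing that whenever $t \stackrel{\tau}{\longrightarrow}_{CLL}$ holds, we must have $t \not\stackrel{a}{\longrightarrow}_{CLL}$ for every $a \in Act$. The base cases $t \equiv 0$ and $t \equiv \bot$ are vacuous by Lemma~\ref{L:Basic_I}(3). The prefix case $t \equiv \alpha.t_1$ follows immediately from Lemma~\ref{L:Basic_I}(1): such a term fires exactly one label, $\alpha$, so it cannot simultaneously be labelled by $\tau$ and by some visible $a$. The disjunction case $t \equiv t_1 \vee t_2$ is also vacuous, because by Lemma~\ref{L:Basic_I}(2) its only transitions are $\tau$-transitions.

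The substantive cases are the static combinators $\Box$, $\wedge$ and $\parallel_A$. In all three, Lemma~\ref{L:STABILIZATION} tells us that a $\tau$-transition of $t_1 \odot t_2$ must originate from a $\tau$-transition of some operand; without loss of generality assume $t_1 \stackrel{\tau}{\longrightarrow}_{CLL}$. Now suppose for contradiction $t_1 \odot t_2 \stackrel{a}{\longrightarrow}_{CLL} t_3$ for some $a \in Act$. For $\Box$ (Lemma~\ref{L:Basic_II}(1)) and for $\parallel_A$ with $a \notin A$ (the stripped forms of rules $Ra_{13}$ and $Ra_{14}$), the only applicable conclusions carry a negative premise demanding $\tau$-stability of the "inactive" operand; so the active operand must be $t_1$, forcing $t_1 \stackrel{a}{\longrightarrow}_{CLL}$, which together with $t_1 \stackrel{\tau}{\longrightarrow}_{CLL}$ contradicts the induction hypothesis applied to $t_1$. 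For $\wedge$ (Lemma~\ref{L:Basic_II}(2)) and for $\parallel_A$ with $a \in A$ (rule $Ra_{15}$), a visible $a$-transition synchronises both operands, so in particular $t_1 \stackrel{a}{\longrightarrow}_{CLL}$, again contradicting the induction hypothesis.

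The only step that is not pure rule inspection is the parallel composition case, because no lemma analogous to Lemma~\ref{L:Basic_II} has been stated for $\parallel_A$. I would therefore first record, by the same unfolding of $Strip({\mathcal P}_{CLL}, M_{CLL})$ used in Lemma~\ref{L:Basic_II}, the small auxiliary fact that every visible transition $t_1 \parallel_A t_2 \stackrel{a}{\longrightarrow}_{CLL} t_3$ arises from the stripped form of one of $Ra_{13}$, $Ra_{14}$, $Ra_{15}$, thereby splitting the analysis into the sub-cases $a \notin A$ and $a \in A$ treated above. This is the only place where extra bookkeeping is needed; I do not anticipate a genuine obstacle, since the common phenomenon driving the whole proof is uniform: every operational rule producing a visible transition of a composite term either insists on $\tau$-stability of the silent operand or insists on synchronous visible behaviour from every operand, and the induction hypothesis disposes of both possibilities once Lemma~\ref{L:STABILIZATION} has localised the $\tau$-transition.
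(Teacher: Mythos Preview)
Your proposal is correct and follows essentially the same approach as the paper: structural induction on $t$, with the constants, prefix and disjunction handled via Lemma~\ref{L:Basic_I}, the static combinators $\Box$, $\wedge$, $\parallel_A$ handled by first localising the $\tau$-transition via Lemma~\ref{L:STABILIZATION} and then using Lemma~\ref{L:Basic_II} (or, for $\parallel_A$, direct inspection of the stripped rules $Ra_{13}$--$Ra_{15}$) together with the induction hypothesis. Your anticipation that $\parallel_A$ requires a small ad~hoc rule analysis is exactly what the paper does as well.
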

\begin{proof}
It suffices to prove that for each $t\in T(\Sigma_{CLL})$,
  \[t \stackrel{\tau}{\longrightarrow}_{{CLL}}  \;\text{implies}\; \nexists a\in Act.\;t\stackrel{a}{\longrightarrow}_{{CLL}}.\]
We prove it by induction on the structure of $t$.

\noindent $\bullet$ $t \equiv 0$ or $t \equiv \bot$.

        By Lemma~\ref{L:Basic_I}(3), it holds trivially.

  \noindent $\bullet$ $t \equiv \alpha.t_1$.

          Assume that $t \stackrel{\tau}{\longrightarrow}_{{CLL}}$ and $t \stackrel{a}{\longrightarrow}_{{CLL}}$ for some $a\in Act$.
         So, by Lemma~\ref{L:Basic_I}(1), we have $\alpha = \tau = a$, a contradiction.

  \noindent $\bullet$ $t \equiv t_1 \vee t_2$.

        Immediately follows from Lemma~\ref{L:Basic_I}(2).

  \noindent $\bullet$ $t \equiv t_1 \Box t_2$.

        Assume that $t_1 \Box t_2 \stackrel{\tau}{\longrightarrow}_{{CLL}}$.
        Then, by Lemma~\ref{L:STABILIZATION}, we get $t_i \stackrel{\tau}{\longrightarrow}_{CLL}$ for some $i\in \{1,2\}$. Further, by IH and Lemma~\ref{L:Basic_II}(1), it follows that $t_1 \Box t_2 \not\stackrel{a}{\longrightarrow}_{CLL}$ for each $a\in Act$.

  \noindent $\bullet$ $t \equiv t_1 \wedge t_2$.

        Similar to $t_1 \Box t_2$, but using Lemma~\ref{L:Basic_II}(2) instead of Lemma~\ref{L:Basic_II}(1).

  \noindent $\bullet$ $t \equiv t_1 \parallel_A t_2$.

        Assume that $t_1 \parallel_A t_2 \stackrel{\tau}{\longrightarrow}_{{CLL}}$.
         Then, by Lemma~\ref{L:STABILIZATION}, we get $t_i \stackrel{\tau}{\longrightarrow}_{CLL}$ for some $i\in \{1,2\}$.
         By IH, we have $t_i \not\stackrel{a}{\longrightarrow}_{CLL}$ for each $a \in Act$.
         Assume that $t_1 \parallel_A t_2 \stackrel{a}{\longrightarrow} s \in M_{{CLL}}$ for some $a\in Act$ and $s \in T(\Sigma_{CLL})$.
         Then, the last rule applied in the proof tree of $Strip({\mathcal P}_{CLL},M_{{CLL}}) \vdash t_1 \parallel_A t_2 \stackrel{a}{\longrightarrow}s $ has one of formats below.
         \begin{enumerate}
           \item $\frac{t_1 \stackrel{a}{\longrightarrow} r}{t_1 \parallel_A t_2 \stackrel{a}{\longrightarrow} r \parallel_A t_2}$ with $a\notin A$ and $M_{{CLL}} \models t_2 \not \stackrel{\tau}{\longrightarrow}$,
           \item $\frac{t_2 \stackrel{a}{\longrightarrow} r}{t_1 \parallel_A t_2 \stackrel{a}{\longrightarrow} t_1 \parallel_A r}$ with $a\notin A$ and $M_{{CLL}} \models t_1 \not \stackrel{\tau}{\longrightarrow}$,
           \item $\frac{t_1 \stackrel{a}{\longrightarrow} r,t_2 \stackrel{a}{\longrightarrow} q}{t_1 \parallel_A t_2 \stackrel{a}{\longrightarrow} r \parallel_A q }$ with $a\in A$.
         \end{enumerate}
        It is trivial to check that each case  above leads to a contradiction, as desired.
\end{proof}

In the following, we shall prove that $LTS(CLL)$ is a Logic LTS. We proceed by proving that both (LTS1) and (LTS2) hold in $LTS(CLL)$.

\begin{lemma}\label{L:LLTS_I}
   $LTS(CLL)$ satisfies (LTS1).
\end{lemma}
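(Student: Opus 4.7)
The plan is to establish (LTS1) by structural induction on $t \in T(\Sigma_{CLL})$, under the hypothesis that there exists $\alpha \in \mathcal{I}(t)$ with every $\alpha$-derivative of $t$ lying in $F_{CLL}$; the goal is to show $t \in F_{CLL}$. The base cases are immediate: $t \equiv 0$ has no transitions by Lemma~\ref{L:Basic_I}(3), so the hypothesis is vacuous, and $t \equiv \bot \in F_{CLL}$ by Lemma~\ref{L:F_NORMAL}(5).

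For the three easier inductive cases, the SOS rules do most of the work. If $t \equiv \alpha.t_1$, Lemma~\ref{L:Basic_I}(1) forces $t_1$ to be the unique $\alpha$-derivative, so the hypothesis yields $t_1 \in F_{CLL}$ and Lemma~\ref{L:F_NORMAL}(2) closes the case. If $t \equiv t_1 \vee t_2$, Lemma~\ref{L:Basic_I}(2) forces $\alpha = \tau$ with derivatives exactly $\{t_1, t_2\}$, both in $F_{CLL}$, and Lemma~\ref{L:F_NORMAL}(1) concludes. The conjunction case $t \equiv t_1 \wedge t_2$ is self-contained and does not require the inductive hypothesis: since no $\alpha$-derivative of $t_1 \wedge t_2$ is consistent, no ground instance of $Rp_{12}$ can fire, so $M_{CLL} \not\models t_1 \wedge t_2\, \overline{F}_\alpha$; combined with the existence of at least one $\alpha$-derivative, the appropriate ground instance of $Rp_{13}$ belongs to $Strip(R_{CLL}, M_{CLL})$ and yields $t_1 \wedge t_2\, F \in M_{CLL}$.

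The essential work lies in $t \equiv t_1 \Box t_2$ and $t \equiv t_1 \parallel_A t_2$, handled by contradiction assuming $t_1, t_2 \notin F_{CLL}$ (else Lemma~\ref{L:F_NORMAL}(3) already concludes). For $\Box$ with $\alpha = \tau$, Lemma~\ref{L:STABILIZATION} gives some $t_i \stackrel{\tau}{\longrightarrow}_{CLL}$, and each $\tau$-derivative $s$ of $t_i$ produces an inconsistent derivative $s \Box t_{3-i}$ (resp.\ $t_{3-i} \Box s$) of the whole term whose other factor is consistent; Lemma~\ref{L:F_NORMAL}(3) then forces $s \in F_{CLL}$, and the inductive hypothesis on $t_i$ returns $t_i \in F_{CLL}$, contradicting the assumption. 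For $\alpha = a \in Act$, Lemma~\ref{L:Basic_II}(1) localizes the transitions to one side and the same scheme applies. The $\tau$ and non-synchronized visible ($a \notin A$) subcases of $\parallel_A$ mirror these arguments via Lemma~\ref{L:STABILIZATION} and rules $Ra_{13}, Ra_{14}$.

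The main obstacle is the synchronized visible subcase $\alpha = a \in A$ of $\parallel_A$: every $a$-derivative has the form $t_1' \parallel_A t_2'$ with $t_i \stackrel{a}{\longrightarrow}_{CLL} t_i'$, and Lemma~\ref{L:F_NORMAL}(3) merely tells us that within each such pair at least one component is in $F_{CLL}$, which does not a priori pin the inconsistency to a fixed side. The key combinatorial observation is that if there existed a consistent $a$-derivative $t_1^{\star}$ of $t_1$ together with a consistent $a$-derivative $t_2^{\star}$ of $t_2$, then $t_1^{\star} \parallel_A t_2^{\star}$ would be an $a$-derivative of $t_1 \parallel_A t_2$ (hence in $F_{CLL}$) yet consistent by Lemma~\ref{L:F_NORMAL}(3), a contradiction; therefore all $a$-derivatives of at least one of $t_1, t_2$ must lie in $F_{CLL}$, and the inductive hypothesis applied to that side delivers $t_1 \in F_{CLL}$ or $t_2 \in F_{CLL}$, contradicting the working assumption and completing the induction.
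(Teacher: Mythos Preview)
Your proposal is correct and follows essentially the same structural-induction strategy as the paper's proof, with the same case analysis and the same key uses of Lemmas~\ref{L:Basic_I}, \ref{L:Basic_II}, \ref{L:STABILIZATION}, and \ref{L:F_NORMAL}. The only cosmetic differences are that you frame the $\Box$ and $\parallel_A$ cases as a uniform contradiction argument (assuming both components consistent), whereas the paper sometimes splits off the ``one component already inconsistent'' subcase first, and your phrasing of the synchronized $a \in A$ subcase is the contrapositive of the paper's; the logical content is identical.
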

\begin{proof}
 It is enough to show that for each $t\in T(\Sigma_{CLL})$
    \[\exists \alpha \in {\mathcal I}(t)\;\forall s (t \stackrel{\alpha}{\longrightarrow}_{{CLL}} s \;\text{implies}\; s \in F_{{CLL}})\; \text{implies}\; t \in F_{{CLL}}. \]
    We prove it by induction on the structure of $t$.

  \noindent $\bullet$ $t \equiv 0$, $\bot$, $\beta.t_1$ or $t_1 \vee t_2$.

        Immediately follows from Lemma~\ref{L:F_NORMAL}(1)(2) and Lemma~\ref{L:Basic_I}(1)(2)(3).
        Notice that, for $t\equiv 0$ or $\bot$, it holds trivially due to ${\mathcal I}(t)= \emptyset$.

  \noindent $\bullet$ $t \equiv t_1 \Box t_2$.

        Assume that $\forall t'(t \stackrel{\alpha}{\longrightarrow}_{{CLL}} t'\;\text{implies}\; t' \in F_{{CLL}})$ for some $\alpha \in {\mathcal I}(t)$. Since $\alpha \in {\mathcal I}(t)$, we get $t_1\Box t_2 \stackrel{\alpha}{\longrightarrow}_{{CLL}}t'$ for some $t'$.
        Consider two cases below.\\

       \noindent Case 1 $\alpha \in Act$.

        By Lemma~\ref{L:Basic_II}(1), we have
        \[\text{either}\; \alpha \in {\mathcal I}(t_2)\; \text{and}\; t_1 \not\stackrel{\tau}{\longrightarrow}_{CLL}, \;\text{or}\; \alpha \in {\mathcal I}(t_1) \;\text{and}\; t_2 \not\stackrel{\tau}{\longrightarrow}_{CLL}.\]
        W.l.o.g, we consider the first alternative.
        In such case, it follows from Lemma~\ref{L:Basic_II}(1) that
        \[\forall t_2'(t_2 \stackrel{\alpha}{\longrightarrow}_{{CLL}} t_2'\;\text{implies}\; t_1 \Box t_2 \stackrel{\alpha}{\longrightarrow}_{CLL}t_2').\]
        Thus, by the assumption, it holds that
        \[\forall t_2'(t_2 \stackrel{\alpha}{\longrightarrow}_{{CLL}} t_2'\;\text{implies}\; t_2' \in F_{CLL}).\]
        So, by IH, $t_2 \in F_{CLL}$.
        Then, $t_1 \Box t_2  \in F_{{CLL}}$ immediately follows from Lemma~\ref{L:F_NORMAL}(3).\\

       \noindent Case 2 $\alpha = \tau$.

        So, by Lemma~\ref{L:STABILIZATION}, it follows that either $t_1 \stackrel{\tau}{\longrightarrow}_{CLL}$ or $t_2 \stackrel{\tau}{\longrightarrow}_{CLL}$.
        W.l.o.g, we consider the former.
        If $t_2 \in F_{{CLL}}$ then $t_1\Box t_2 \in F_{{CLL}}$ comes from Lemma~\ref{L:F_NORMAL}(3) at once.
        In the following, we deal with the case $t_2 \notin F_{{CLL}}$.
        Firstly, it follows from Lemma~\ref{L:STABILIZATION} that
        \[\forall t_1' (t_1  \stackrel{\tau}{\longrightarrow}_{CLL} t_1' \;\text{implies}\; t_1 \Box t_2 \stackrel{\tau}{\longrightarrow}_{CLL} t_1'\Box t_2).\]
        Then, by the assumption, we get
        \[\forall t_1' (t_1  \stackrel{\tau}{\longrightarrow}_{CLL} t_1' \;\text{implies}\; t_1' \Box t_2 \in F_{CLL}).\]
        Further, by $t_2 \notin F_{CLL}$ and Lemma~\ref{L:F_NORMAL}(3), it holds that
         \[\forall t_1' (t_1  \stackrel{\tau}{\longrightarrow}_{CLL} t_1' \;\text{implies}\; t_1' \in F_{CLL}).\]
        So, by IH, $t_1 \in F_{{CLL}}$. Therefore, by Lemma~\ref{L:F_NORMAL}(3), it follows that $t_1\Box t_2 \in F_{{CLL}}$, as desired.

  \noindent $\bullet$ $t \equiv t_1 \wedge t_2$.

        Assume that $\alpha \in {\mathcal I}(t)$ and $s \in F_{CLL}$ for each $s$ such that $t \stackrel{\alpha}{\longrightarrow}_{CLL}s$.
        Thus, $ t_1\wedge t_2 \stackrel{\alpha}{\longrightarrow} t_3 \in M_{{CLL}}$ for some $t_3$.
        If $t_1 \wedge t_2 \overline{F}_{\alpha} \notin M_{{CLL}}$(i.e., $M_{{CLL}} \models t_1 \wedge t_2 \neg \overline{F}_{\alpha} $) then $ \frac{t_1\wedge t_2 \stackrel{\alpha}{\longrightarrow} t_3 }{t_1\wedge t_2F}\in Strip(R_{CLL},M_{{CLL}}) $, which, with the helping of $ t_1\wedge t_2 \stackrel{\alpha}{\longrightarrow} t_3 \in M_{{CLL}}$, implies $t_1\wedge t_2F \in M_{{CLL}}$. Therefore, in order to complete the proof, it is enough to show that $t_1 \wedge t_2 \overline{F}_{\alpha} \notin M_{{CLL}}$.

        Suppose that $t_1 \wedge t_2 \overline{F}_{\alpha} \in M_{{CLL}}$. Clearly, the last rule applied in the proof tree of $Strip({\mathcal P}_{CLL},M_{{CLL}})\vdash t_1 \wedge t_2 \overline{F}_{\alpha}$  has the form below
        \[\frac{t_1 \wedge t_2 \stackrel{\alpha}{\longrightarrow} t_4}{t_1 \wedge t_2 \overline{F}_{\alpha}} \; \text{with} \;  M_{{CLL}} \models t_4 \neg F.\]
        So, $t_1 \wedge t_2 \stackrel{\alpha}{\longrightarrow} t_4 \in M_{{CLL}}$. Then, by the assumption, we get $t_4 \in F_{{CLL}}$, which contradicts $M_{{CLL}} \models t_4 \neg F$.

  \noindent $\bullet$ $t \equiv t_1 \parallel_A t_2$.

        Assume that $\forall s(t\stackrel{\alpha}{\longrightarrow}_{CLL}s\;
        \text{implies}\;s\in F_{CLL})$ for some $\alpha \in {\mathcal I}(t)$.
        Since $\alpha \in {\mathcal I}(t)$, we have $t_1\parallel_A t_2 \stackrel{\alpha}{\longrightarrow}_{{CLL}}t'$ for some $t'$.
         If $\alpha = \tau$ then the proof is similar to one of $t_1 \Box t_2 $, we omit it.
         In the following, we consider the case where $\alpha \in Act$. In such case, the last rule applied in the proof tree of $Strip({\mathcal P}_{CLL},M_{{CLL}}) \vdash t_1 \parallel_A t_2 \stackrel{\alpha}{\longrightarrow} t'$ has one of the following three formats.\\

        \noindent Case 1 $\frac{t_1 \stackrel{\alpha}{\longrightarrow} t_1'}{t_1 \parallel_A t_2 \stackrel{\alpha}{\longrightarrow} t_1' \parallel_A t_2} (\alpha\notin A)$ with $M_{{CLL}} \models t_2 \not \stackrel{\tau}{\longrightarrow}$.

        Then, $ t' \equiv t_1' \parallel_A t_2$ and $t_1 \stackrel{\alpha}{\longrightarrow} t_1' \in M_{{CLL}}$.
        If $t_2 \in F_{{CLL}}$ then, by Lemma~\ref{L:F_NORMAL}(3), $t_1\parallel_A t_2 \in F_{{CLL}}$, as desired.
        Next, we consider another case where $t_2 \notin F_{{CLL}}$.
        Since $M_{{CLL}} \models t_2 \not \stackrel{\tau}{\longrightarrow}$ and $\alpha \notin A$, we get
        \[\frac{t_1 \stackrel{\alpha}{\longrightarrow} t_1''}{t_1 \parallel_A t_2 \stackrel{\alpha}{\longrightarrow} t_1'' \parallel_A t_2} \in Strip(R_{CLL},M_{{CLL}})\;\text{for each}\; t_1''.\]
        Then, it follows that
        \[\forall t_1'' (t_1  \stackrel{\alpha}{\longrightarrow}_{CLL} t_1'' \;\text{implies}\; t_1 \parallel_A t_2 \stackrel{\alpha}{\longrightarrow}_{CLL} t_1''\parallel_A t_2).\]
        Moreover, by the assumption, we have
         \[\forall t_1'' (t_1  \stackrel{\alpha}{\longrightarrow}_{CLL} t_1'' \;\text{implies}\; t_1''\parallel_A t_2 \in F_{CLL}).\]
         Further, since $t_2\notin F_{CLL}$, by Lemma~\ref{L:F_NORMAL}(3), it holds that
         \[\forall t_1'' (t_1  \stackrel{\alpha}{\longrightarrow}_{CLL} t_1'' \;\text{implies}\; t_1'' \in F_{CLL}).\]
         So, by IH, $t_1 \in F_{CLL}$.
         Then, by Lemma~\ref{L:F_NORMAL}(3), $t_1 \parallel_A t_2 \in F_{{CLL}}$ immediately follows.\\

       \noindent Case 2 $\frac{t_2 \stackrel{\alpha}{\longrightarrow} t_2'}{t_1 \parallel_A t_2 \stackrel{\alpha}{\longrightarrow} t_1\parallel_A t_2'}(\alpha\notin A)$ with $M_{{CLL}} \models t_1 \not \stackrel{\tau}{\longrightarrow}$.

        Similar to Case 1.\\

       \noindent Case 3 $\frac{t_1 \stackrel{\alpha}{\longrightarrow} t_1',t_2 \stackrel{\alpha}{\longrightarrow} t_2'}{t_1 \parallel_A t_2 \stackrel{\alpha}{\longrightarrow} t_1' \parallel_A t_2'}( \alpha\in A) $.

        In such case, we get $t_1 \stackrel{\alpha}{\longrightarrow}_{CLL} t_1'$ and $t_2 \stackrel{\alpha}{\longrightarrow}_{CLL} t_2'$. Thus, $\alpha \in {\mathcal I}(t_1)$ and $\alpha \in {\mathcal I}(t_2)$. In order to complete the proof, it is enough to show that either $t_1 \in F_{CLL}$ or $t_2 \in F_{CLL}$.
        Assume that $t_1 \notin F_{{CLL}}$ and $t_2 \notin F_{{CLL}}$.
        Then, by IH, we get $t_1 \stackrel{\alpha}{\longrightarrow} s_1 \in  M_{{CLL}}$ with $s_1 \notin F_{{CLL}}$ and $t_2 \stackrel{\alpha}{\longrightarrow} s_2 \in  M_{{CLL}}$ with $s_2 \notin F_{{CLL}}$ for some $s_1$ and $s_2$.
        Since $\alpha \in A$, it follows that \[\frac{t_1 \stackrel{\alpha}{\longrightarrow} s_1,t_2 \stackrel{\alpha}{\longrightarrow} s_2}{t_1 \parallel_A t_2 \stackrel{\alpha}{\longrightarrow} s_1 \parallel_A s_2 } \in Strip(R_{CLL},M_{{CLL}}).\]
         Hence, $t_1 \parallel_A t_2 \stackrel{\alpha}{\longrightarrow}_{CLL} s_1 \parallel_A s_2 $.
         Further, by assumption, we have $s_1 \parallel_A s_2 \in F_{{CLL}}$.
         By Lemma~\ref{L:F_NORMAL}(3), this contradicts $s_1 \notin F_{{CLL}}$ and $s_2 \notin F_{{CLL}}$.
\end{proof}

In fact, (LTS1) can be strengthened so as to provide a complete character of non-stable processes in the set $F_{CLL}$ in terms of $\tau$-transitions.
Formally, we have the result below.

\begin{lemma}\label{L:F_TAU_I}
 If $\tau \in {\mathcal I}(t)$ then
  \[t \in F_{CLL} \;\text{iff}\;\; \forall s(t\stackrel{\tau}{\longrightarrow}_{CLL}s\;\text{implies}\; s\in F_{CLL}).\]
\end{lemma}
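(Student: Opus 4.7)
The proof splits into two implications; the right-to-left direction is essentially free, since it is exactly the instance $\alpha = \tau$ of the hypothesis of Lemma~\ref{L:LLTS_I}: if $\tau \in \mathcal{I}(t)$ and every $\tau$-derivative of $t$ lies in $F_{CLL}$, then (LTS1) yields $t \in F_{CLL}$. My plan is therefore to concentrate on the nontrivial left-to-right direction.

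For the left-to-right implication I will use induction on the structure of $t$, assuming both $\tau \in \mathcal{I}(t)$ and $t \in F_{CLL}$, and aiming to show that every $s$ with $t \stackrel{\tau}{\longrightarrow}_{CLL} s$ belongs to $F_{CLL}$. The base cases $t \equiv 0$ and $t \equiv \bot$ are vacuous by Lemma~\ref{L:Basic_I}(3); the cases $t \equiv \alpha.t_1$ and $t \equiv t_1 \vee t_2$ fall out immediately from Lemmas~\ref{L:Basic_I}(1,2) and~\ref{L:F_NORMAL}(1,2), since in both cases the $\tau$-derivatives are explicit and their membership in $F_{CLL}$ is forced by the structural $F$-rules. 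For $t \equiv t_1 \Box t_2$ and $t \equiv t_1 \parallel_A t_2$, I will invoke Lemma~\ref{L:STABILIZATION} to classify $\tau$-derivatives as either $t_1' \odot t_2$ or $t_1 \odot t_2'$, then use Lemma~\ref{L:F_NORMAL}(3) to pinpoint which conjunct of $t$ is inconsistent; the ``static component'' derivative is handled by Lemma~\ref{L:F_NORMAL}(3), and the ``moving component'' derivative is handled by the inductive hypothesis applied to the inconsistent conjunct (either it has a $\tau$-transition, and IH applies, or it is stable, and that form of derivative does not arise).

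The real work is the conjunction case $t \equiv t_1 \wedge t_2$. Here I will argue via the rule actually used at the root of a proof of $t_1 \wedge t_2 \, F$ in $Strip(\mathcal{P}_{CLL}, M_{CLL})$. The crucial observation is that since $\tau \in \mathcal{I}(t_1 \wedge t_2)$, Theorem~\ref{L:TAU_PURE} gives $\mathcal{I}(t_1 \wedge t_2) = \{\tau\}$, so $t_1 \wedge t_2$ admits no visible transition; this immediately rules out the stripped instances of $Rp_{10}$ and $Rp_{11}$ (which require $t_1 \wedge t_2 \not\stackrel{\tau}{\longrightarrow}$), and it forces any application of $Rp_{13}$ to use $\alpha = \tau$. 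The $Rp_{13}$ subcase is then exactly what we want by Lemma~\ref{L:CON_LLTS}. For the $Rp_8$ subcase ($t_1 \in F_{CLL}$), the $\tau$-derivatives split via Lemma~\ref{L:STABILIZATION} into those of the form $t_1 \wedge t_2'$, handled by Lemma~\ref{L:F_NORMAL}(4), and those of the form $t_1' \wedge t_2$, handled by combining the structural IH applied to $t_1$ (only needed when $\tau \in \mathcal{I}(t_1)$) with Lemma~\ref{L:F_NORMAL}(4); the $Rp_9$ subcase is symmetric.

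I expect the main obstacle to be precisely the conjunction case, because conjunction is the only operator whose inconsistency can arise in ways not recorded by Lemma~\ref{L:F_NORMAL}; the case analysis has to be driven by the supporting rule in $Strip(\mathcal{P}_{CLL}, M_{CLL})$ rather than by a simple structural characterization of $F_{CLL}$. The decisive leverage is $\tau$-purity, which collapses what would otherwise be a tangle of subcases into just the three genuinely relevant supporting rules ($Rp_8$, $Rp_9$, and $Rp_{13}$ with $\alpha = \tau$).
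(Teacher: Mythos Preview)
Your proposal is correct and follows exactly the approach the paper indicates: the right-to-left direction via Lemma~\ref{L:LLTS_I}, and the left-to-right direction by structural induction on $t$. The paper in fact leaves the structural induction to the reader, so your case analysis (in particular the conjunction case, handled by analyzing the supporting rule in $Strip(\mathcal{P}_{CLL}, M_{CLL})$ and using $\tau$-purity to eliminate $Rp_{10}$, $Rp_{11}$ and force $\alpha=\tau$ in $Rp_{13}$) is a correct and complete fill-in of what the paper omits.
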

\begin{proof}
It immediately follows from Lemma~\ref{L:LLTS_I} that the right implies the left.
Another implication can be proved by induction on the structure of $t$, we leave it to the reader.
\end{proof}

As an immediate consequence, we also have, for each $t \in T(\Sigma_{CLL})$,
\[\forall t'(t \stackrel{\epsilon}{\Longrightarrow}_{{CLL}}|t' \;\text{implies}\; t' \in F_{{CLL}}) \;\text{iff}\; t \in F_{{CLL}}.\]
Although this result characterizes all processes in $F_{CLL}$, it isn't more interesting than Lemma~\ref{L:F_TAU_I}.
In fact, for stable process terms, this result is trivial.
For non-stable process terms, since there is no infinite $\tau$-transition sequence in $M_{CLL}$, it isn't difficult to see that this result is implied by Lemma~\ref{L:F_TAU_I}.

\begin{lemma}\label{L:LLTS_II}
    $LTS(CLL)$ satisfies (LTS2).
\end{lemma}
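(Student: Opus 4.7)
The plan is to prove the contrapositive: if $t \notin F_{CLL}$ then there exists a stable $s\notin F_{CLL}$ with $t \stackrel{\epsilon}{\Longrightarrow}_F|s$, which is precisely the negation of the hypothesis of (LTS2). The two ingredients already available are Lemma~\ref{L:F_TAU_I} (a consistent non-stable term must have at least one consistent $\tau$-successor) and Lemma~\ref{L:Basic_I}(4) (there is no infinite transition sequence in $LTS(CLL)$, since $|\cdot|$ strictly decreases along each transition).

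Concretely, starting from $t_0 \equiv t$ with $t_0 \notin F_{CLL}$, I would construct a sequence $t_0 \stackrel{\tau}{\longrightarrow}_{CLL} t_1 \stackrel{\tau}{\longrightarrow}_{CLL} t_2 \stackrel{\tau}{\longrightarrow}_{CLL} \cdots$ of consistent terms as follows. At stage $i$, if $t_i$ is stable we stop. Otherwise $\tau \in \mathcal{I}(t_i)$, and since $t_i \notin F_{CLL}$ the contrapositive of Lemma~\ref{L:F_TAU_I} yields some $\tau$-successor $t_{i+1}$ with $t_{i+1}\notin F_{CLL}$; pick any such $t_{i+1}$ and continue. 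By Lemma~\ref{L:Basic_I}(4), $|t_0|>|t_1|>|t_2|>\cdots$, so the sequence must terminate at some stable $t_n\notin F_{CLL}$. By construction every term along $t_0 \stackrel{\tau}{\longrightarrow}_{CLL} \cdots \stackrel{\tau}{\longrightarrow}_{CLL} t_n$ is consistent and $t_n$ is stable, so $t \stackrel{\epsilon}{\Longrightarrow}_F|t_n$, establishing the desired contrapositive.

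There is no real obstacle here; the proof is a straightforward combination of the two earlier lemmas. The only thing worth checking carefully is that Lemma~\ref{L:F_TAU_I} is applied in its contrapositive form correctly: from $t_i\notin F_{CLL}$ and $\tau\in \mathcal{I}(t_i)$ we negate the universal statement "every $\tau$-successor is in $F_{CLL}$", which yields the existence of some $\tau$-successor outside $F_{CLL}$—exactly what the construction needs. Since $LTS(CLL)$ has no infinite transition sequences, we do not need König-style arguments or any well-foundedness assumption beyond what Lemma~\ref{L:Basic_I}(4) already provides.
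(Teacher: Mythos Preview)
Your proposal is correct and essentially matches the paper's proof: the paper also proves the contrapositive, by induction on the degree $|t|$, using Lemma~\ref{L:Basic_I}(4) for the decrease and the existence of a consistent $\tau$-successor for the inductive step. The only cosmetic difference is that the paper cites Lemma~\ref{L:LLTS_I} (the (LTS1) property) rather than Lemma~\ref{L:F_TAU_I} to obtain that successor, and phrases the argument as a formal induction rather than an iterative construction; the content is the same.
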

\begin{proof}
    It suffices to show that, for each $t$, if $t\notin F_{CLL}$ then $t \stackrel{\epsilon}{\Longrightarrow}_F|t'$ for some $t'$.
   We prove it by induction on the degree of $t$.
   Assume that it holds for all $s$ with $|s|<|t|$.

   If $t$ is stable, then $t \stackrel{\epsilon}{\Longrightarrow}_F|t$ follows from $t\notin F_{CLL}$.
   Otherwise, since $t\notin F_{CLL}$ and $\tau \in {\mathcal I}(t)$, by Lemma~\ref{L:LLTS_I}, we have $t \stackrel{\tau}{\longrightarrow}_F s$ for some $s$.
   By Lemma~\ref{L:Basic_I}(4), we get $|s|<|t|$. Then, by IH, we obtain  $s \stackrel{\epsilon}{\Longrightarrow}_F|t'$ for some $t'$. Hence, $t \stackrel{\epsilon}{\Longrightarrow}_F|t'$.
\end{proof}

It is now a short step to

\begin{theorem}\label{L:LLTS}
    $LTS({CLL})$ is a Logic LTS.
\end{theorem}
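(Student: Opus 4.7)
The plan is to observe that this theorem is essentially a repackaging of the two preceding lemmas. By Definition~\ref{D:LLTS}, an LTS $(P,Act_{\tau},\longrightarrow,F)$ is a Logic LTS exactly when every state satisfies both (LTS1) and (LTS2). For $LTS(CLL)=(T(\Sigma_{CLL}),Act_{\tau},\longrightarrow_{CLL},F_{CLL})$, condition (LTS1) is precisely the content of Lemma~\ref{L:LLTS_I}, and condition (LTS2) is precisely the content of Lemma~\ref{L:LLTS_II}. So I would just cite these two results and conclude.

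Concretely, I would write one short paragraph: let $t\in T(\Sigma_{CLL})$; by Lemma~\ref{L:LLTS_I}, if there exists $\alpha\in\mathcal{I}(t)$ such that every $\alpha$-derivative of $t$ lies in $F_{CLL}$, then $t\in F_{CLL}$, which gives (LTS1); by Lemma~\ref{L:LLTS_II}, if no stable $F_{CLL}$-free $\tau$-path leaves $t$, then $t\in F_{CLL}$, which gives (LTS2). Hence $LTS(CLL)$ is a Logic LTS.

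There is really no obstacle to overcome at this point, since all the substantive work was done earlier: the structural induction for (LTS1) in Lemma~\ref{L:LLTS_I} (which handled the delicate $t_1\wedge t_2$ case via the auxiliary predicates $\overline{F}_{\alpha}$ and the reduction argument of Section~4), and the degree induction for (LTS2) in Lemma~\ref{L:LLTS_II} (which relied on Lemma~\ref{L:Basic_I}(4) to rule out infinite $\tau$-sequences). The theorem itself is a one-line corollary of these two lemmas, so the proof proposal need only invoke them.
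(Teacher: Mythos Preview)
Your proposal is correct and matches the paper's own proof exactly: the paper simply writes ``Immediately follows from Lemma~\ref{L:LLTS_I} and \ref{L:LLTS_II}.'' Your additional commentary about what each lemma established is accurate but not needed for the proof itself.
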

\begin{proof}
    Immediately follows from Lemma~\ref{L:LLTS_I} and \ref{L:LLTS_II}.
\end{proof}

In the next section, the relation $\stackrel{\epsilon}{\Longrightarrow}_F|$ will play an important role in developing the behavioral theory of CLL. The remainder of this section is devoted to basic properties of it.

Lemma~\ref{L:STABILIZATION} asserts that, for operators $\parallel_A$, $\wedge$ and $\Box$, the structure that they represent is preserved under $\tau$-transitions.
A possible conjecture is that such property may be generalized to the circumstance where the stability and consistency of process terms are involved, that is, these operators are also static combinators w.r.t the transition relation  $\stackrel{\epsilon}{\Longrightarrow}_F|$.
It turns out that this conjecture almost holds except that we need to add a moderate condition when considering the operator $\wedge$. Formally, we have the result below.

\begin{lemma}\label{L:TAU_I}
  Let $\odot \in \{\Box,\parallel_A,\wedge\}$ and $t_1,t_2 \in T(\Sigma_{CLL})$.
    \begin{enumerate}
      \item If $t_1 \stackrel{\epsilon}{\Longrightarrow}_F| t_1'$ and $t_2 \stackrel{\epsilon}{\Longrightarrow}_F|t_2'$, then $t_1 \odot t_2 \stackrel{\epsilon}{\Longrightarrow}_F| t_1' \odot t_2'$  for $\odot \in \{\Box,\parallel_A\}$,
           moreover, $t_1\wedge t_2 \stackrel{\epsilon}{\Longrightarrow}_F| t_1' \wedge t_2'$ if $t_1' \wedge t_2' \notin F_{{CLL}}$.
      \item If $t_1 \odot t_2 \stackrel{\epsilon}{\Longrightarrow}_F| t_3$ then $t_1 \stackrel{\epsilon}{\Longrightarrow}_F| t_1'$, $t_2 \stackrel{\epsilon}{\Longrightarrow}_F|t_2'$ and $t_3 \equiv t_1' \odot t_2'$ for some $t_1', t_2' \in T(\Sigma_{CLL})$.
    \end{enumerate}
\end{lemma}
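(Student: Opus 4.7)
The plan is to handle the two parts separately. Part (1) builds a composite $\tau$-sequence by interleaving the given sequences for $t_1$ and $t_2$, while Part (2) peels off a single $\tau$-transition and inducts.

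For Part (1), fix $\tau$-paths $t_1 = r_0 \stackrel{\tau}{\longrightarrow}_{CLL} r_1 \stackrel{\tau}{\longrightarrow}_{CLL} \cdots \stackrel{\tau}{\longrightarrow}_{CLL} r_n = t_1'$ and $t_2 = q_0 \stackrel{\tau}{\longrightarrow}_{CLL} \cdots \stackrel{\tau}{\longrightarrow}_{CLL} q_m = t_2'$ that witness the $\stackrel{\epsilon}{\Longrightarrow}_F|$ assumptions. Applying Lemma~\ref{L:STABILIZATION} iteratively produces the interleaved sequence
\[
t_1 \odot t_2 \stackrel{\tau}{\longrightarrow}_{CLL} r_1 \odot q_0 \stackrel{\tau}{\longrightarrow}_{CLL} \cdots \stackrel{\tau}{\longrightarrow}_{CLL} r_n \odot q_0 \stackrel{\tau}{\longrightarrow}_{CLL} \cdots \stackrel{\tau}{\longrightarrow}_{CLL} r_n \odot q_m = t_1' \odot t_2',
\]
whose endpoint is stable by the same lemma. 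For $\odot \in \{\Box, \parallel_A\}$, consistency of each intermediate term $r_i \odot q_j$ follows at once from the contrapositive of Lemma~\ref{L:F_NORMAL}(3). For $\odot = \wedge$, I would argue by reverse induction along the interleaved sequence starting from $r_n \wedge q_m = t_1' \wedge t_2' \notin F_{CLL}$: at each backward step, a transition of the form $u' \wedge v \stackrel{\tau}{\longrightarrow}_{CLL} u \wedge v$ (or the symmetric variant) is available, so $\tau \in \mathcal{I}(u' \wedge v)$; supposing $u' \wedge v \in F_{CLL}$ would invoke the left-to-right direction of Lemma~\ref{L:F_TAU_I} and force $u \wedge v \in F_{CLL}$, contradicting the inductive hypothesis.

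For Part (2), I would induct on the length of the $\tau$-sequence witnessing $t_1 \odot t_2 \stackrel{\epsilon}{\Longrightarrow}_F| t_3$. In the base case $t_3 \equiv t_1 \odot t_2$, stability of both components follows from Lemma~\ref{L:STABILIZATION}, and consistency from the contrapositives of Lemma~\ref{L:F_NORMAL}(3) (for $\Box$, $\parallel_A$) and Lemma~\ref{L:F_NORMAL}(4) (for $\wedge$), so taking $t_1' \equiv t_1$ and $t_2' \equiv t_2$ works. In the inductive step, Lemma~\ref{L:STABILIZATION} decomposes the first transition $t_1 \odot t_2 \stackrel{\tau}{\longrightarrow}_{CLL} s$ as either $s \equiv s_1 \odot t_2$ with $t_1 \stackrel{\tau}{\longrightarrow}_{CLL} s_1$ or the symmetric form; since $s \notin F_{CLL}$, the contrapositive of Lemma~\ref{L:F_NORMAL}(3)(4) yields consistency of $s_1$ (or $s_2$), and the induction hypothesis applied to the shorter sequence $s \stackrel{\epsilon}{\Longrightarrow}_F| t_3$ provides the decomposition $t_3 \equiv t_1' \odot t_2'$ together with $s_1 \stackrel{\epsilon}{\Longrightarrow}_F| t_1'$ and $t_2 \stackrel{\epsilon}{\Longrightarrow}_F| t_2'$; prepending the initial transition then yields $t_1 \stackrel{\epsilon}{\Longrightarrow}_F| t_1'$.

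The main obstacle is the $\wedge$-case of Part (1): because conjunction does not preserve consistency (as the paper already notes with $a.0 \wedge b.0 \in F_{CLL}$), there is no direct argument showing that intermediate $r_i \wedge q_j$ avoid $F_{CLL}$. The backward propagation of inconsistency supplied by Lemma~\ref{L:F_TAU_I} is the essential tool, and it makes transparent why the extra hypothesis $t_1' \wedge t_2' \notin F_{CLL}$ is imposed precisely for the conjunctive case.
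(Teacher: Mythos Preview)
Your proposal is correct and matches the paper's argument closely: Part~(1) is handled identically via the interleaved $\tau$-sequence with Lemma~\ref{L:F_NORMAL}(3) for $\Box,\parallel_A$ and backward consistency propagation via Lemma~\ref{L:F_TAU_I} for $\wedge$, while your inductive treatment of Part~(2) is just a step-by-step unrolling of the paper's one-shot decomposition using Lemma~\ref{L:STABILIZATION} and Lemma~\ref{L:F_NORMAL}(3)(4). The only minor omission is that in the inductive step of Part~(2) you should also note $t_1\notin F_{CLL}$ (from $t_1\odot t_2\notin F_{CLL}$) to justify $t_1\stackrel{\tau}{\longrightarrow}_F s_1$, but this follows by the same contrapositive you already invoke.
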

\begin{proof}
\noindent\textbf{(1)} We consider the case where $\odot = \wedge$, the others can be treated similarly and omitted.

Suppose $t_1 \stackrel{\epsilon}{\Longrightarrow}_F| t_1'$, $t_2 \stackrel{\epsilon}{\Longrightarrow}_F|t_2'$ and $t_1' \wedge t_2' \notin F_{{CLL}}$.
So, we have $t_1\equiv S_0 \stackrel{\tau}{\longrightarrow}_F,\dots,\stackrel{\tau}{\longrightarrow}_F S_n\equiv t_1'$ for some $S_0,\dots,S_n$ with $n \geq 0$, and $t_2\equiv T_0 \stackrel{\tau}{\longrightarrow}_F,\dots,\stackrel{\tau}{\longrightarrow}_F T_m\equiv t_2'$ for some $T_0,\dots,T_m$ with $m \geq0$.
    By Lemma~\ref{L:STABILIZATION}, it is easy to see that
    \begin{multline*}
        t_1 \wedge t_2 \equiv S_0 \wedge T_0 \stackrel{\tau}{\longrightarrow}_{{CLL}}S_1 \wedge T_0, \dots, \stackrel{\tau}{\longrightarrow}_{{CLL}} S_n\wedge T_0 \\
    \stackrel{\tau}{\longrightarrow}_{{CLL}} S_n \wedge T_1 \stackrel{\tau}{\longrightarrow}_{{CLL}} S_n \wedge T_2,\dots, \stackrel{\tau}{\longrightarrow}_{{CLL}} S_n\wedge T_m \equiv t_1'\wedge t_2'. \tag{\ref{L:TAU_I}.1}
    \end{multline*}
    If $n=m=0$ then both $t_1$ and $t_2$ are stable and $t_i\equiv t_i'$ for $i=1,2$.
    So, $t_1\wedge t_2 \stackrel{\epsilon}{\Longrightarrow}_F|t_1'\wedge t_2'$ holds trivially.
    Next, we consider the case either $n \neq 0$ or $m \neq 0$.
    Then, by Lemma~\ref{L:STABILIZATION}, $\tau \in {\mathcal I}(p)$ for each process term $p$ except $S_n \wedge T_m$ along the sequence (\ref{L:TAU_I}.1).
    Further, since $t_1'\wedge t_2' \notin F_{CLL}$, by Lemma~\ref{L:F_TAU_I}, it is easy to know that  $p \notin F_{{CLL}}$ for each $p$ occurring in (\ref{L:TAU_I}.1).
    Moreover, since $t_1'$ and $t_2'$ are stable, by Lemma~\ref{L:STABILIZATION}, so is $t_1'\wedge t_2'$.
    Therefore, $t_1\wedge t_2 \stackrel{\epsilon}{\Longrightarrow}_F| t_1' \wedge t_2'$.\\

\noindent \textbf{(2)}
    We shall treat the case where $\odot = \wedge$, the remaining proofs are similar and omitted.

    Suppose $t_1 \wedge t_2 \stackrel{\epsilon}{\Longrightarrow}_F| t_3$. Then, for some $n \geq 0$ and $S_i(i\leq n)$, we have
    \[t_1 \wedge t_2 \equiv S_0 \stackrel{\tau}{\longrightarrow}_F S_1,\dots,\stackrel{\tau}{\longrightarrow}_F S_n \equiv t_3 \]
     In the following, we consider the non-trivial case $n>0$. In such case, by Lemma~\ref{L:STABILIZATION}, there exist $t_{1i},t_{2i}(i\leq n)$ such that
     \begin{enumerate}
       \item $S_i \equiv t_{1i} \wedge t_{2i}$  for each $i\leq n$ ,
       \item either $t_{1i} \stackrel{\tau}{\longrightarrow}_{CLL} t_{1i+1}$ and $t_{2i+1}\equiv t_{2i}$, or  $t_{2i} \stackrel{\tau}{\longrightarrow}_{CLL} t_{2i+1}$ and $t_{1i+1}\equiv t_{1i}$  for each $i< n$.
     \end{enumerate}

     For each $i\leq n$, since $S_i\notin F_{CLL}$, by Lemma~\ref{L:F_NORMAL}(4), we have $t_{1i},t_{2i} \notin F_{CLL}$. Moreover, by Lemma~\ref{L:STABILIZATION}, since $t_3 \equiv t_{1n} \wedge t_{2n}$ is stable, so are $t_{1n}$ and $t_{2n}$. Finally, we can obtain a $\tau$-transition sequence from $t_1$ to $t_{1n}$ by removing duplicate process terms in the sequence $\{t_{1i}\}_{i\leq n}$ \footnote{Notice that this $\tau$-transition sequence is empty if $t_1 \equiv t_{1n}$, i.e., $t_1$ is stable.}.
     So, $t_1 \stackrel{\epsilon}{\Longrightarrow}_F| t_{1n}$.
     Similarly, $t_2 \stackrel{\epsilon}{\Longrightarrow}_F| t_{2n}$ holds.
\end{proof}

By the way, if the consistency is ignored, similar to the result above, it is not hard to show that, for each $\odot \in \{\Box,\parallel_A,\wedge\}$,

\begin{enumerate}
  \item   If $t_1 \stackrel{\epsilon}{\Longrightarrow}_{{CLL}}| t_1'$ and $t_2 \stackrel{\epsilon}{\Longrightarrow}_{{CLL}} |t_2'$, then $t_1 \odot t_2 \stackrel{\epsilon}{\Longrightarrow}_{{CLL}}| t_1' \odot t_2'$.
  \item  If $t_1 \odot t_2 \stackrel{\epsilon}{\Longrightarrow}_{{CLL}}| t_3$ then $t_1 \stackrel{\epsilon}{\Longrightarrow}_{{CLL}}| t_1'$, $t_2 \stackrel{\epsilon}{\Longrightarrow}_{{CLL}} |t_2'$ and $t_3 \equiv t_1' \odot t_2'$ for some $t_1', t_2' \in T(\Sigma_{CLL})$.
\end{enumerate}

We conclude this section with a useful result, which will be used in Section~5 when we deal with distributive laws.

\begin{lemma}\label{L:DIS}
    Let $\odot\in\{\Box,\wedge,\parallel_A\}$. If $t_1 \odot( t_2 \vee t_3) \stackrel{\epsilon}{\Longrightarrow}_F|t_4$  then there is $t_1'$ and a sequence $t_1 \odot( t_2 \vee t_3) \equiv T_0 \stackrel{\tau}{\longrightarrow}_F,\dots,\stackrel{\tau}{\longrightarrow}_F T_n \equiv t_4$ ($n\geq 1$) such that $t_1\stackrel{\epsilon}{\Longrightarrow}_F t_1'$, $T_j\equiv t_1' \odot( t_2 \vee t_3)$ and $T_{j+1}\equiv t_1' \odot t_k $ for some $j<n$ and $k\in \{2,3\}$.
\end{lemma}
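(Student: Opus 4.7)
The plan is to track the fate of the disjunct $t_2\vee t_3$ along the given $\tau$-sequence and pinpoint the unique step at which it is resolved. First I note that $t_2\vee t_3\stackrel{\tau}{\longrightarrow}_{CLL}$ by rules $Ra_9,Ra_{10}$, so by Lemma~\ref{L:STABILIZATION} the initial term $t_1\odot(t_2\vee t_3)$ is not stable; since $t_4$ is stable, the sequence must contain at least one step, giving $n\geq1$.

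Next, applying Lemma~\ref{L:STABILIZATION} inductively along the chain $T_0\stackrel{\tau}{\longrightarrow}_F\cdots\stackrel{\tau}{\longrightarrow}_F T_n$, I would decompose each $T_i\equiv U_i\odot V_i$ with $U_0\equiv t_1$ and $V_0\equiv t_2\vee t_3$, noting that every step is of exactly one of two shapes: either $U_i\stackrel{\tau}{\longrightarrow}_{CLL}U_{i+1}$ with $V_{i+1}\equiv V_i$, or $V_i\stackrel{\tau}{\longrightarrow}_{CLL}V_{i+1}$ with $U_{i+1}\equiv U_i$. Since $V_n$ must be stable (again by Lemma~\ref{L:STABILIZATION}) whereas $V_0\equiv t_2\vee t_3$ is not, there is a least index $j<n$ such that $V_j\equiv t_2\vee t_3$ but $V_{j+1}\not\equiv t_2\vee t_3$. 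The $j$-th step must then occur in the second component, so $t_2\vee t_3\stackrel{\tau}{\longrightarrow}_{CLL}V_{j+1}$, and Lemma~\ref{L:Basic_I}(2) forces $V_{j+1}\equiv t_k$ for some $k\in\{2,3\}$, with $U_{j+1}\equiv U_j$.

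For every $i<j$ the second component is unchanged (it equals $t_2\vee t_3$ throughout by minimality of $j$), so each such step advances the first component: $t_1\equiv U_0\stackrel{\tau}{\longrightarrow}_{CLL}U_1\stackrel{\tau}{\longrightarrow}_{CLL}\cdots\stackrel{\tau}{\longrightarrow}_{CLL}U_j$. To upgrade this to $\stackrel{\epsilon}{\Longrightarrow}_F$, I would use Lemma~\ref{L:F_NORMAL}(3) for $\odot\in\{\Box,\parallel_A\}$ and Lemma~\ref{L:F_NORMAL}(4) for $\odot=\wedge$: in all three cases $U_i\in F_{CLL}$ would force $T_i\equiv U_i\odot(t_2\vee t_3)\in F_{CLL}$, contradicting $T_i\notin F_{CLL}$. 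Therefore $U_i\notin F_{CLL}$ for every $i\leq j$, and setting $t_1'\triangleq U_j$ we get $t_1\stackrel{\epsilon}{\Longrightarrow}_F t_1'$ together with $T_j\equiv t_1'\odot(t_2\vee t_3)$ and $T_{j+1}\equiv t_1'\odot t_k$, which is exactly the claim.

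There is no significant obstacle here; the argument is purely a case analysis on where each $\tau$-step lives. The one point requiring minor care is the second-component rigidity: I have to justify that once $V$ moves to $t_k$ it cannot drift back to $t_2\vee t_3$, but this is automatic because the first index $j$ is chosen minimally, so the behaviour after step $j$ plays no role whatsoever in isolating the resolution step.
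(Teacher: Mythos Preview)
Your argument is correct. The paper proves the lemma by induction on the length $n$ of the $\tau$-chain: in the inductive step it inspects the first transition, distinguishing whether it acts on $t_1$ (Case~1, where IH is invoked on $t_1'\odot(t_2\vee t_3)$) or on $t_2\vee t_3$ (Case~2, where $j=0$ works immediately). You instead give a direct, non-inductive argument: decompose every $T_i$ as $U_i\odot V_i$ in one go via Lemma~\ref{L:STABILIZATION}, then locate the resolution step by taking the least $j$ with $V_{j+1}\not\equiv t_2\vee t_3$. Both arguments rest on the same ingredients (Lemma~\ref{L:STABILIZATION}, Lemma~\ref{L:Basic_I}(2), and Lemma~\ref{L:F_NORMAL}(3)(4) for consistency of the $U_i$), and the paper's induction is really just your minimal-$j$ search unrolled step by step. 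Your version is slightly cleaner in that it isolates the resolution step globally rather than reconstructing the chain recursively; the paper's version is perhaps more explicit about why the produced sequence is still an $F$-avoiding path (though in your case this is immediate since you reuse the original sequence unchanged).
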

\begin{proof}
   We consider the case where $\odot = \wedge$, the others are handled in a similar way.
   Let $t_1 \wedge ( t_2 \vee t_3) (\stackrel{\tau}{\longrightarrow}_F)^n|t_4(n\geq 0)$.
   Since $t_2 \vee t_3 \stackrel{\tau}{\longrightarrow}_{CLL}$, we get $t_1 \wedge ( t_2 \vee t_3) \stackrel{\tau}{\longrightarrow}_{CLL}$.
   So, $n\geq 1$. We proceed by induction on $n$.

   For the inductive basis $n=1$, since $t_4$ is stable, by Lemma~\ref{L:STABILIZATION} and \ref{L:Basic_I}(2), the last rule applied in the proof tree of $Strip({\mathcal P}_{CLL},M_{CLL}) \vdash t_1 \wedge ( t_2 \vee t_3) \stackrel{\tau}{\longrightarrow}t_4$ is
   \[\text{either}\; \frac{t_2 \vee t_3 \stackrel{\tau}{\longrightarrow}t_2}{t_1 \wedge ( t_2 \vee t_3) \stackrel{\tau}{\longrightarrow}t_1 \wedge t_2}
    \;\text{or}\; \frac{t_2 \vee t_3 \stackrel{\tau}{\longrightarrow}t_3}{t_1 \wedge ( t_2 \vee t_3) \stackrel{\tau}{\longrightarrow}t_1 \wedge t_3}.\]
   W.l.o.g, we consider the first alternative.
   In such case, $t_4 \equiv t_1\wedge t_2$ and we get the sequence $T_0 \equiv t_1 \wedge( t_2 \vee t_3) \stackrel{\tau}{\longrightarrow}_F|T_1 \equiv t_1\wedge t_2$.
   Hence, by Lemma~\ref{L:STABILIZATION}, $t_1$ is stable.
   Moreover, since $t_1 \wedge ( t_2 \vee t_3) \notin F_{CLL}$, by Lemma~\ref{L:F_NORMAL}(4), we  have $t_1 \notin F_{CLL}$.
   So, $t_1 \stackrel{\epsilon}{\Longrightarrow}_F|t_1$.  Then, $t_1' \triangleq t_1$ and $j\triangleq0$ are what we need.

   For the inductive step, assume $t_1 \wedge ( t_2 \vee t_3) (\stackrel{\tau}{\longrightarrow}_F)^{k+1}|t_4$.
   So, $t_1 \wedge ( t_2 \vee t_3)\stackrel{\tau}{\longrightarrow}_F t_5 (\stackrel{\tau}{\longrightarrow}_F)^{k}|t_4$ for some $t_5$.
   We distinguish two cases based on the form of the last rule applied in the proof tree of $Strip({\mathcal P}_{CLL},M_{CLL}) \vdash t_1 \wedge ( t_2 \vee t_3) \stackrel{\tau}{\longrightarrow}t_5$.\\

   \noindent Case 1 $\frac{t_1 \stackrel{\tau}{\longrightarrow}t_1'}{t_1 \wedge ( t_2 \vee t_3) \stackrel{\tau}{\longrightarrow}t_1'\wedge ( t_2 \vee t_3) }$.

    So, $t_1 \stackrel{\tau}{\longrightarrow}t_1' \in M_{CLL}$ and $t_5 \equiv t_1'\wedge ( t_2 \vee t_3)$.
    By IH, there exists a sequence
   \[t_1' \wedge ( t_2 \vee t_3) \equiv T_0 \stackrel{\tau}{\longrightarrow}_F,\dots,\stackrel{\tau}{\longrightarrow}_F T_m \equiv t_4 (m \geq 1)\]
    and for some $t_1''$ and $j<m$, we have
    \begin{enumerate}
      \item  $t_1' \stackrel{\epsilon}{\Longrightarrow}_F t_1''$,
      \item $T_j\equiv t_1'' \wedge ( t_2 \vee t_3)$ and
      \item either $T_{j+1}\equiv t_1'' \wedge t_2 $ or $T_{j+1}\equiv t_1'' \wedge  t_3$.
    \end{enumerate}
   Moreover, by Lemma~\ref{L:F_NORMAL}(4), $t_1\notin F_{CLL}$ and $t_1' \notin F_{CLL}$ follow from $t_1 \wedge (t_2 \vee t_3) \notin F_{CLL}$ and $t_1' \wedge (t_2 \vee t_3) \notin F_{CLL}$, respectively.
   Then, we have the sequence below
   \[t_1 \wedge ( t_2 \vee t_3) \stackrel{\tau}{\longrightarrow}_F t_1' \wedge ( t_2 \vee t_3) \equiv T_0 \stackrel{\tau}{\longrightarrow}_F,\dots,\stackrel{\tau}{\longrightarrow}_F T_m \equiv t_4\]
   and $t_1 \stackrel{\tau}{\longrightarrow}_F  t_1' \stackrel{\epsilon}{\Longrightarrow}_F t_1''$, $T_j\equiv t_1'' \wedge ( t_2 \vee t_3)$ and either $T_{j+1}\equiv t_1'' \wedge t_2 $ or $T_{j+1}\equiv t_1'' \wedge  t_3$, as desired.\\

   \noindent  Case 2 $\frac{t_2 \vee t_3 \stackrel{\tau}{\longrightarrow}t_6}{t_1 \wedge ( t_2 \vee t_3) \stackrel{\tau}{\longrightarrow}t_1\wedge t_6}$.

    So, $t_2 \vee t_3 \stackrel{\tau}{\longrightarrow}t_6 \in M_{CLL}$ and $t_5 \equiv t_1\wedge t_6$.
    Moreover, by Lemma~\ref{L:Basic_I}(2), either $t_6\equiv t_2$ or $t_6 \equiv t_3$. Clearly, we have the sequence $t_1 \wedge (t_2 \vee t_3) \stackrel{\tau}{\longrightarrow}_F t_1 \wedge t_6 (\stackrel{\tau}{\longrightarrow}_F)^{k}|t_4$, and, $t_1' \triangleq t_1$ and $j\triangleq0$ are what we need.
\end{proof}

\section{Behavioral Theory of CLL}

This section will develop the behavioral theory of CLL. Follows \cite{Luttgen10}, the notion of ready simulation below is adopted to formalize the refinement relation among process terms, which is a modified version of the usual notion of ready simulation (see, e.g., \cite{Glabbeek01}).
We will only care about the properties of ready simulation which are needed in establishing the soundness of the axiomatic system $AX_{CLL}$ in the next section.

\begin{mydefn}[Ready simulation \cite{Luttgen10}]\label{D:READYSIMULATION_TERMS}
A relation ${\mathcal R} \subseteq T(\Sigma_{CLL})\times T(\Sigma_{CLL})$ is a stable ready simulation relation, if for any $(t,s) \in {\mathcal R}$ and $a \in Act $\\
\textbf{(RS1)} $t$, $s$ stable;\\
\textbf{(RS2)} $t \notin F_{{CLL}}$ implies $s \notin F_{{CLL}}$;\\
\textbf{(RS3)} $t \stackrel{a}{\Longrightarrow}_F|t'$ implies $\exists s'.s \stackrel{a}{\Longrightarrow}_F|s'\; \textrm{and}\;(t',s') \in {\mathcal R}$;\\
\textbf{(RS4)} $t\notin F_{{CLL}}$ implies ${\mathcal I}(t)={\mathcal I}(s)$.
\end{mydefn}

 We say that $t$ is stable ready simulated by $s$, in symbols $t \underset{\thicksim}{\sqsubset}_{RS} s$, if there exists a stable ready simulation relation $\mathcal R$ with $(t,s) \in {\mathcal R}$.
 Further, $t$ is said to be ready simulated by $s$, written $t\sqsubseteq_{RS}s$, if \[\forall t'(t\stackrel{\epsilon}{\Longrightarrow}_F| t' \;\text{implies}\; \exists s'(s \stackrel{\epsilon}{\Longrightarrow}_F| s'\; \text{and}\;t' \underset{\thicksim}{\sqsubset}_{RS} s')).\]
 It is easy to see that both $\underset{\thicksim}{\sqsubset}_{RS}$ and $\sqsubseteq_{RS}$ are pre-order (i.e., reflexive and transitive). The equivalence relations induced by them are denoted by $\approx_{RS}$ and $=_{RS}$, respectively.
 That is, \[\approx_{RS} \;\triangleq\; \underset{\thicksim}{\sqsubset}_{RS} \cap (\underset{\thicksim}{\sqsubset}_{RS})^{-1}\; \text{and}\; =_{RS} \;\triangleq\; \sqsubseteq_{RS} \cap (\sqsubseteq_{RS})^{-1}.\] The identity relation over stable process terms is denoted by $Id_S$.
 It is obvious that both $Id_S$ and $\underset{\thicksim}{\sqsubset}_{RS}$ are stable ready simulation relations.

 The remainder of this section is taken up with proving (in)equational laws concerning $=_{RS}$ ($\sqsubseteq_{RS}$, respectively).
 These laws capture inherent properties of composition operators, e.g., commutativity, associativity and zero element.
 A number of laws obtained by L\"{u}ttgen and Vogler in \cite{Luttgen10}, which are needed in the next section, will also be rephrased in process-algebraic style.
 In particular, we will show that the ready simulation is a precongruence, that is, it is preserved by all algebraic contexts.

We begin with the laws about $\bot$ and $\tau$ occurring in prefix construction.

\begin{proposition}[Prefix]\label{S:PREFIX}\hfill
    \begin{enumerate}
      \item $a.\bot =_{RS} \bot$.
      \item $\tau.t =_{RS} t$.
    \end{enumerate}
\end{proposition}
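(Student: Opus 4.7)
The plan is to reduce both laws to straightforward case analyses on stability and $F_{CLL}$-membership, using the definition of $\sqsubseteq_{RS}$ in the form that requires matching every $\stackrel{\epsilon}{\Longrightarrow}_F|$-sequence on one side with one on the other up to $\underset{\thicksim}{\sqsubset}_{RS}$. I would lean on three facts already available: Lemma~\ref{L:Basic_I}(1)(3) describes exactly which transitions prefixed and atomic terms can perform; Lemma~\ref{L:F_NORMAL}(2)(5) tells us $\bot,a.\bot\in F_{CLL}$ and $\alpha.t\in F_{CLL}\Leftrightarrow t\in F_{CLL}$; and reflexivity of $\underset{\thicksim}{\sqsubset}_{RS}$ on stable terms, which follows from the observation that $Id_S$ is a stable ready simulation.

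For (1), I would first note that $\bot$ is stable with no transitions at all (Lemma~\ref{L:Basic_I}(3)), and $a.\bot$ is stable because by Lemma~\ref{L:Basic_I}(1) its only outgoing transition is $a.\bot\stackrel{a}{\longrightarrow}_{CLL}\bot$, which is not a $\tau$-transition. Lemma~\ref{L:F_NORMAL}(5) gives $\bot\in F_{CLL}$ and then Lemma~\ref{L:F_NORMAL}(2) gives $a.\bot\in F_{CLL}$. Since $\stackrel{\epsilon}{\Longrightarrow}_F|$ requires the source to be outside $F_{CLL}$, neither $a.\bot\stackrel{\epsilon}{\Longrightarrow}_F|t'$ nor $\bot\stackrel{\epsilon}{\Longrightarrow}_F|t'$ can hold for any $t'$. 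Hence the universally-quantified premise in the definition of $\sqsubseteq_{RS}$ is vacuous in both directions and $a.\bot=_{RS}\bot$ follows.

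For (2), I would split on whether $t\in F_{CLL}$. If $t\in F_{CLL}$, then by Lemma~\ref{L:F_NORMAL}(2) also $\tau.t\in F_{CLL}$, so once again no $\stackrel{\epsilon}{\Longrightarrow}_F|$-sequence starts from either side and the two inclusions hold vacuously. If $t\notin F_{CLL}$, then $\tau.t\notin F_{CLL}$, and Lemma~\ref{L:Basic_I}(1) guarantees that the unique outgoing transition of $\tau.t$ is $\tau.t\stackrel{\tau}{\longrightarrow}_{CLL}t$. For $\tau.t\sqsubseteq_{RS}t$, every $\tau.t\stackrel{\epsilon}{\Longrightarrow}_F|t'$ must begin with this $\tau$-step and so strips to a sequence $t\stackrel{\epsilon}{\Longrightarrow}_F|t'$; picking $s'\triangleq t'$ and using reflexivity of $\underset{\thicksim}{\sqsubset}_{RS}$ at the stable term $t'$ closes the matching condition. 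For $t\sqsubseteq_{RS}\tau.t$, given $t\stackrel{\epsilon}{\Longrightarrow}_F|t'$ I would prepend the transition $\tau.t\stackrel{\tau}{\longrightarrow}_{CLL}t$ (legal because both $\tau.t$ and $t$ are consistent in this subcase) to obtain $\tau.t\stackrel{\epsilon}{\Longrightarrow}_F|t'$, and again take $s'\triangleq t'$.

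I do not anticipate any real obstacle: both claims are \emph{vacuous--or--trivial} in character, and the only point that needs care is making sure, in the non-vacuous subcase of (2), that every process term along the constructed $\stackrel{\epsilon}{\Longrightarrow}_F|$-sequences is consistent. That is immediate here because $\tau.t\notin F_{CLL}$, $t\notin F_{CLL}$, and the interior consistency of the original sequence $t\stackrel{\epsilon}{\Longrightarrow}_F|t'$ is inherited unchanged.
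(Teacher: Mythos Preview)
Your proposal is correct and follows essentially the same approach as the paper: both parts reduce to observing that terms in $F_{CLL}$ admit no $\stackrel{\epsilon}{\Longrightarrow}_F|$-sequences, and that the unique $\tau$-transition $\tau.t\stackrel{\tau}{\longrightarrow}_{CLL}t$ lets one strip or prepend a step to match sequences. The only cosmetic difference is that you perform an explicit case split on whether $t\in F_{CLL}$ in part~(2), whereas the paper leaves this implicit (the hypothesis $\tau.t\stackrel{\epsilon}{\Longrightarrow}_F|t'$ or $t\stackrel{\epsilon}{\Longrightarrow}_F|t'$ already forces $t\notin F_{CLL}$).
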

\begin{proof}
\noindent \textbf{(1) }  Since $\bot  \in F_{{CLL}}$, by Lemma~\ref{L:F_NORMAL}(2), we have $a.\bot  \in F_{{CLL}}$. So, $a.\bot =_{RS} \bot$ holds trivially.\\

\noindent \textbf{(2)} Firstly, we prove $\tau.t \sqsubseteq_{RS} t$. Suppose $\tau.t \stackrel{\epsilon}{\Longrightarrow}_F| t'$. So, $\tau.t \stackrel{\tau}{\longrightarrow}_F t \stackrel{\epsilon}{\Longrightarrow}_F| t'$.
Then, we have $t \stackrel{\epsilon}{\Longrightarrow}_F| t'$ and $t'\underset{\thicksim}{\sqsubset}_{RS} t'$.\\

Secondly, we prove $t \sqsubseteq_{RS} \tau.t$.
Suppose $t \stackrel{\epsilon}{\Longrightarrow}_F| t'$.
Then, $\tau.t \stackrel{\tau}{\longrightarrow}_{{CLL}} t \stackrel{\epsilon}{\Longrightarrow}_F| t'$.
Moreover, by Lemma~\ref{L:F_NORMAL}(2), $\tau.t \notin F_{{CLL}}$ follows from  $t \notin F_{{CLL}}$.
Thus, $\tau.t \stackrel{\epsilon}{\Longrightarrow}_F| t'$ and $t'\underset{\thicksim}{\sqsubset}_{RS} t'$, as desired.
\end{proof}

The second set of (in)equations focuses on the properties of the combinator $\vee$.

\begin{proposition}[Disjunction]\label{S:DISJUNCTION}\hfill
    \begin{enumerate}
      \item $t_1 \vee t_2 =_{RS} t_2 \vee t_1$.
      \item $(t_1 \vee t_2) \vee t_3 =_{RS} t_1 \vee (t_2 \vee t_3)$.
      \item $t \vee t =_{RS} t$.
      \item $t \vee \bot =_{RS} t$.
      \item $t_1 \sqsubseteq_{RS} t_1 \vee t_2$.
    \end{enumerate}
\end{proposition}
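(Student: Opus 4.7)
The uniform idea is to characterize $\stackrel{\epsilon}{\Longrightarrow}_F|$ on a disjunction in terms of its components, and then use reflexivity of stable ready simulation (since $Id_S$ is a stable ready simulation by the remarks after Definition~\ref{D:READYSIMULATION_TERMS}) to fill in the $\underset{\thicksim}{\sqsubset}_{RS}$-witness. Concretely, I would first establish the following observation: for any $s_1, s_2, u \in T(\Sigma_{CLL})$,
\[
s_1 \vee s_2 \stackrel{\epsilon}{\Longrightarrow}_F| u \iff s_1 \vee s_2 \notin F_{CLL} \text{ and } \bigl(s_1 \stackrel{\epsilon}{\Longrightarrow}_F| u \text{ or } s_2 \stackrel{\epsilon}{\Longrightarrow}_F| u\bigr).
\]
The forward direction follows from Lemma~\ref{L:Basic_I}(2): $s_1 \vee s_2$ is never stable, so any $\stackrel{\epsilon}{\Longrightarrow}_F|$-path out of it must begin with a $\tau$-move into $s_1$ or $s_2$, and the ``$_F$'' annotation forces that target to be outside $F_{CLL}$; after which the remainder of the path is itself a $\stackrel{\epsilon}{\Longrightarrow}_F|$ from that target. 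The converse is immediate by prepending the axiom $s_1 \vee s_2 \stackrel{\tau}{\longrightarrow} s_i$, using Lemma~\ref{L:F_NORMAL}(1) and $s_i \notin F_{CLL}$ to certify consistency of the new state $s_1 \vee s_2$.

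With this characterization in hand, each item reduces to symbolic manipulation on the right-hand disjunction. For (1) commutativity, the predicate on the right is visibly symmetric in $s_1, s_2$, so $t_1 \vee t_2 \stackrel{\epsilon}{\Longrightarrow}_F| t'$ matches $t_2 \vee t_1 \stackrel{\epsilon}{\Longrightarrow}_F| t'$ verbatim, with $t' \underset{\thicksim}{\sqsubset}_{RS} t'$ by reflexivity. For (2) associativity, I would apply the observation twice (once on the outer $\vee$ and once on the nested one) on each side; after invoking Lemma~\ref{L:F_NORMAL}(1), both $(t_1 \vee t_2) \vee t_3 \in F_{CLL}$ and $t_1 \vee (t_2 \vee t_3) \in F_{CLL}$ amount to $t_1, t_2, t_3 \in F_{CLL}$, and the ``witness component'' disjunction $t_1 \stackrel{\epsilon}{\Longrightarrow}_F| t'$ or $t_2 \stackrel{\epsilon}{\Longrightarrow}_F| t'$ or $t_3 \stackrel{\epsilon}{\Longrightarrow}_F| t'$ is the same on both sides; note that when routing through the two-step $\tau$-path, consistency of the intermediate state $t_i \vee t_j$ is automatic since it contains a component outside $F_{CLL}$. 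For (3) idempotency, the observation collapses the right side to $t \stackrel{\epsilon}{\Longrightarrow}_F| t'$ in both directions (again using Lemma~\ref{L:F_NORMAL}(1): $t \vee t \in F_{CLL}$ iff $t \in F_{CLL}$).

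For (4), $\bot \in F_{CLL}$ by Lemma~\ref{L:F_NORMAL}(5), so the ``$s_2 \stackrel{\epsilon}{\Longrightarrow}_F| u$'' branch of the observation for $t \vee \bot$ vacuously fails (a $\stackrel{\epsilon}{\Longrightarrow}_F|$-sequence cannot start at $\bot$); and $t \vee \bot \in F_{CLL}$ iff $t \in F_{CLL}$. Hence $t \vee \bot \stackrel{\epsilon}{\Longrightarrow}_F| t' \Leftrightarrow t \stackrel{\epsilon}{\Longrightarrow}_F| t'$, giving both refinements at once. Finally, (5) is the half of an inequation already supplied by the observation: whenever $t_1 \stackrel{\epsilon}{\Longrightarrow}_F| t'$, we have $t_1 \notin F_{CLL}$, hence $t_1 \vee t_2 \notin F_{CLL}$, so prepending $t_1 \vee t_2 \stackrel{\tau}{\longrightarrow}_F t_1$ produces the required $t_1 \vee t_2 \stackrel{\epsilon}{\Longrightarrow}_F| t'$ with $t' \underset{\thicksim}{\sqsubset}_{RS} t'$.

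The only subtlety I anticipate is bookkeeping the consistency side-conditions: in each case one must check that every intermediate process along the constructed $\stackrel{\epsilon}{\Longrightarrow}_F|$ path lies outside $F_{CLL}$. This is invariably handled by Lemma~\ref{L:F_NORMAL}(1) (for $\vee$), together with the fact that $\underset{\thicksim}{\sqsubset}_{RS}$-reflexivity only requires a stable partner. So no induction on term structure or on proof-tree depth is needed here; the work is a careful application of the characterization above together with Lemma~\ref{L:F_NORMAL}(1)(5) and Lemma~\ref{L:Basic_I}(2).
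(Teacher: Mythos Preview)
Your proposal is correct and follows essentially the same approach as the paper's proof: both unpack $\stackrel{\epsilon}{\Longrightarrow}_F|$ from a disjunction via Lemma~\ref{L:Basic_I}(2), transfer consistency via Lemma~\ref{L:F_NORMAL}(1)(5), and finish with reflexivity of $\underset{\thicksim}{\sqsubset}_{RS}$. The only difference is organizational---you factor out the characterization of $s_1 \vee s_2 \stackrel{\epsilon}{\Longrightarrow}_F| u$ once and reuse it, whereas the paper inlines this argument separately for each item.
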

\begin{proof}
\noindent \textbf{(1)} It is enough to prove $t_1 \vee t_2 \sqsubseteq_{RS} t_2 \vee t_1$.
    Suppose $t_1 \vee t_2 \stackrel{\epsilon}{\Longrightarrow}_F| t'$.
    It follows from Lemma~\ref{L:Basic_I}(2) that $t_1 \vee t_2 \stackrel{\tau}{\longrightarrow}_F t_i \stackrel{\epsilon}{\Longrightarrow}_F| t'$ for some $i\in \{1,2\}$.
    W.l.o.g, assume that $i=1$.
    Since $t_1 \vee t_2 \notin F_{{CLL}}$, by Lemma~\ref{L:F_NORMAL}(1), we get $t_2 \vee t_1 \notin F_{{CLL}}$.
    Further, it follows from $t_2 \vee t_1 \stackrel{\tau}{\longrightarrow}_{{CLL}}t_1 \stackrel{\epsilon}{\Longrightarrow}_F| t'$ that $t_2 \vee t_1 \stackrel{\epsilon}{\Longrightarrow}_F| t'$ and $t'\underset{\thicksim}{\sqsubset}_{RS} t'$. Thus,  $t_1 \vee t_2 \sqsubseteq_{RS} t_2 \vee t_1$.\\

\noindent \textbf{(2)}, \textbf{(3)} Similar to (1).\\

\noindent \textbf{(4)} Firstly, we prove $t \vee \bot \sqsubseteq_{RS} t$.
Suppose $t \vee \bot \stackrel{\epsilon}{\Longrightarrow}_F| t'$.
By Lemma~\ref{L:Basic_I}(2), we have $t \vee \bot \stackrel{\tau}{\longrightarrow}_F t'' \stackrel{\epsilon}{\Longrightarrow}_F| t'$ with either $t''\equiv t$ or $t'' \equiv \bot$. Further, since $\bot \in F_{{CLL}}$ and $t'' \notin F_{CLL}$, we get $t'' \equiv t$.
    Thus, $t \stackrel{\epsilon}{\Longrightarrow}_F| t'$ and $t'\underset{\thicksim}{\sqsubset}_{RS} t'$.

Secondly, we prove $t \sqsubseteq_{RS} t \vee \bot$.
    Suppose  $t \stackrel{\epsilon}{\Longrightarrow}_F| t'$.
    By Lemma~\ref{L:F_NORMAL}(1), $t \vee \bot \notin F_{{CLL}}$ follows from $t \notin F_{{CLL}}$.
    Further, since $t \vee \bot \stackrel{\tau}{\longrightarrow}_{{CLL}} t \stackrel{\epsilon}{\Longrightarrow}_F| t'$,
    we obtain $t \vee \bot \stackrel{\epsilon}{\Longrightarrow}_F| t'$ and $t'\underset{\thicksim}{\sqsubset}_{RS} t'$.\\

\noindent \textbf{(5)} Straightforward.
\end{proof}

In the following, we will consider (in)equational laws about the operators $\Box$, $\parallel_A$ and $\wedge$.
For convenience, we adopt the convention below:

\begin{convention}\label{C:STABLE_TERM}
  When treating (in)equations $E(t_1,\dots,t_n)\approx_{RS}E'(t_1,\dots,t_n)$ (or, $E(t_1,\dots,t_n) \underset{\thicksim}{\sqsubset}_{RS} E'(t_1,\dots,t_n)$) in Prop. \ref{L:EC2}, \ref{L:CON_ID_I}, \ref{S:CONJUNCTION} and \ref{S:PARALLEL}, we assume that $t_i$ ranges over stable process terms for $1 \leq i \leq n$.
\end{convention}

The next group of equations is concerned with the operator $\Box$.

\begin{proposition}[External Choice]\hfill\label{L:EC2}
    \begin{enumerate}
      \item $t_1 \Box t_2 \approx_{RS} t_2 \Box t_1$.
      \item $t_1 \Box t_2 =_{RS} t_2 \Box t_1$.
      \item $(t_1 \Box t_2) \Box t_3 \approx_{RS} t_1 \Box (t_2 \Box t_3)$.
      \item $(t_1 \Box t_2) \Box t_3 =_{RS} t_1 \Box (t_2 \Box t_3)$.
      \item $t \Box t \approx_{RS} t$.
      \item $t \Box t =_{RS} t$.
      \item $t \Box \bot =_{RS} \bot$.
      \item $t \Box 0 \approx_{RS} t$.
      \item $t \Box 0 =_{RS} t$.
    \end{enumerate}
\end{proposition}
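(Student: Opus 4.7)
The nine claims split into three patterns. Items 1, 3, 5, 8 are $\approx_{RS}$-statements under Convention~\ref{C:STABLE_TERM} (all arguments stable); items 2, 4, 6, 9 are their $=_{RS}$-counterparts without stability; item 7 is the inconsistent case. My plan is to handle the three patterns in that order, so that each subsequent group reuses the previous one.

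For the stable $\approx_{RS}$-statements, I would exhibit, in each case, a symmetric candidate relation of the form $\mathcal R \triangleq \{(L,R),(R,L)\}\cup Id_S$, where $(L,R)$ is the pair in question, and then verify (RS1)--(RS4). Stability (RS1) is immediate from Lemma~\ref{L:STABILIZATION} and the hypothesis that each $t_i$ is stable. Consistency agreement (RS2) reduces to Lemma~\ref{L:F_NORMAL}(3), using also Lemma~\ref{L:F_NORMAL}(5) for item 8 (to discard the $0$ summand) and the fact that $\bot\in F_{CLL}$ plays no role inside the stable cases. For the ready-set condition (RS4), stability of the operands together with $\tau$-purity (Theorem~\ref{L:TAU_PURE}) gives $\mathcal I(t_1 \Box t_2)\subseteq Act$; and by Lemma~\ref{L:Basic_II}(1) this set coincides with $\mathcal I(t_1)\cup\mathcal I(t_2)$. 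The transition-matching clause (RS3) is the main work: if $L\stackrel{a}{\Longrightarrow}_F| t'$, then because $L$ is stable the leading $\stackrel{\epsilon}{\Longrightarrow}_F$-portion is empty, so Lemma~\ref{L:Basic_II}(1) extracts a component transition $t_i\stackrel{a}{\longrightarrow}_F s$; the same component transition lets $R$ perform $R\stackrel{a}{\longrightarrow}_F s$, after which both continuations share the tail $s\stackrel{\epsilon}{\Longrightarrow}_F| t'$ and the pair $(t',t')$ lies in $Id_S$.

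For the unstable $=_{RS}$-statements (2, 4, 6, 9), I would reduce each to its stable counterpart via Lemma~\ref{L:TAU_I}. For instance, to prove $t_1\Box t_2\sqsubseteq_{RS} t_2\Box t_1$, take any $t_1\Box t_2 \stackrel{\epsilon}{\Longrightarrow}_F| t'$. By Lemma~\ref{L:TAU_I}(2), $t'\equiv t_1'\Box t_2'$ with $t_i\stackrel{\epsilon}{\Longrightarrow}_F| t_i'$; Lemma~\ref{L:TAU_I}(1) then gives $t_2\Box t_1 \stackrel{\epsilon}{\Longrightarrow}_F| t_2'\Box t_1'$; and the already-proved item (1), applied to the stable operands $t_1',t_2'$, yields $t_1'\Box t_2' \,\underset{\thicksim}{\sqsubset}_{RS}\, t_2'\Box t_1'$, which is exactly what $\sqsubseteq_{RS}$ demands. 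Items 4, 6, 9 are entirely parallel; for item 9 one additionally observes that if $t\stackrel{\epsilon}{\Longrightarrow}_F| t'$ then $t\Box 0 \stackrel{\epsilon}{\Longrightarrow}_F| t'\Box 0$ and applies item 8.

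For item 7, both $t\Box\bot$ and $\bot$ lie in $F_{CLL}$ (by Lemma~\ref{L:F_NORMAL}(3) and (5)), so neither admits any $\stackrel{\epsilon}{\Longrightarrow}_F|$-successor; hence $t\Box\bot\sqsubseteq_{RS}\bot$ and $\bot\sqsubseteq_{RS}t\Box\bot$ hold vacuously. The step I expect to require the most care is the associativity proof for item 3: three composite rules $Ra_2$--$Ra_5$ come into play, and the transitions of $(t_1\Box t_2)\Box t_3$ and $t_1\Box(t_2\Box t_3)$ must be matched up by a case split on which summand fires. But stability of $t_1,t_2,t_3$ trivialises the negative premises of Lemma~\ref{L:Basic_II}(1), so each visible transition on either side is shown to come from some $t_i\stackrel{a}{\longrightarrow}_{CLL}s$ and can be replayed on the other side with the same target $s$, reducing (RS3) to membership in $Id_S$.
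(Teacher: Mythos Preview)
Your proposal is correct and follows essentially the same approach as the paper: it proves the $\approx_{RS}$-items by exhibiting a relation $\mathcal R$ built from the pair in question together with $Id_S$ and checking (RS1)--(RS4) via Lemmas~\ref{L:STABILIZATION}, \ref{L:F_NORMAL}(3), and \ref{L:Basic_II}(1); it derives the $=_{RS}$-items from their stable counterparts using Lemma~\ref{L:TAU_I}; and it handles item~7 by observing that both sides lie in $F_{CLL}$. The only cosmetic difference is that the paper proves one direction of each $\approx_{RS}$-statement and invokes symmetry, whereas you load both directions into $\mathcal R$ at once.
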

\begin{proof}
    We will prove (1), (2) and (7) one by one, the other parts of the proposition are handled in a similar way and omitted.

   \noindent \textbf{(1)}
    It is enough to prove $t_1 \Box t_2 \underset{\thicksim}{\sqsubset}_{RS} t_2 \Box t_1$. Put
    \[{\mathcal R}\triangleq\{(t_1 \Box t_2 , t_2 \Box t_1)\} \cup Id_S .\]

    We want to show that the relation $\mathcal R$ is a stable ready simulation relation. It suffices to prove that each pair in $\mathcal R$ satisfies (RS1)-(RS4). It is trivial for pairs in $Id_S$. In the following, we consider the pair $(t_1 \Box t_2 , t_2 \Box t_1)$. It immediately follows from Convention \ref{C:STABLE_TERM}, Lemma~\ref{L:STABILIZATION}, \ref{L:F_NORMAL}(3) and \ref{L:Basic_II}(1) that this pair satisfies (RS1), (RS2) and (RS4).

    \textbf{(RS3)} Suppose $t_1 \Box t_2 \stackrel{a}{\Longrightarrow}_F| t'$.
    Since $t_1 \Box t_2$ is stable, $t_1 \Box t_2 \stackrel{a}{\longrightarrow}_F t''\stackrel{\epsilon}{\Longrightarrow}_F| t'$ for some $t''$.
    Then, by Lemma~\ref{L:Basic_II}(1),  we get $t_2 \Box t_1 \stackrel{a}{\longrightarrow}_{CLL} t''$.
    Moreover, by Lemma~\ref{L:F_NORMAL}(3), $t_2 \Box t_1 \notin F_{CLL}$ comes from $t_1 \Box t_2 \notin F_{CLL}$.
    So, $t_2 \Box t_1 \stackrel{a}{\longrightarrow}_F t''\stackrel{\epsilon}{\Longrightarrow}_F| t'$ and $(t',t')\in {\mathcal R}$.\\

    \noindent\textbf{(2)} We shall prove $t_1 \Box t_2 \sqsubseteq_{RS} t_2 \Box t_1$.
    Suppose $t_1 \Box t_2 \stackrel{\epsilon}{\Longrightarrow}_F| t'$.
    It follows from Lemma~\ref{L:TAU_I}(2) that
    \[t_1 \stackrel{\epsilon}{\Longrightarrow}_F| t_1',  t_2 \stackrel{\epsilon}{\Longrightarrow}_F| t_2'\;\text{and}\; t'\equiv t_1' \Box t_2'\; \text{for some}\; t_1',t_2'.\]
    So, by Lemma~\ref{L:TAU_I}(1), we get $t_2 \Box t_1 \stackrel{\epsilon}{\Longrightarrow}_F| t_2' \Box t_1'$. Further, by item (1) in this lemma, we obtain $t_1 \Box t_2 \sqsubseteq_{RS} t_2 \Box t_1$.\\

    \noindent \textbf{(7)} Since $\bot \in F_{{CLL}}$, by Lemma~\ref{L:F_NORMAL}(3), we have  $t \Box \bot \in F_{{CLL}}$ for each $t\in T(\Sigma_{CLL})$. Then, $t \Box \bot =_{RS} \bot$ holds trivially.
\end{proof}

In order to treat (in)equations concerning the operator $\wedge$, the next two lemmas are needed.

\begin{lemma}\label{L:RS_CON}
If $t_1 \underset{\thicksim}{\sqsubset}_{RS} t_2$, $t_1 \underset{\thicksim}{\sqsubset}_{RS} t_3$ and $t_1 \notin F_{{CLL}}$, then $t_2 \wedge t_3 \notin F_{{CLL}}$.
\end{lemma}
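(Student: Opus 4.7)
The plan is to prove the contrapositive by strong induction on $|t_1|$, using Convention~\ref{C:STABLE_TERM} so that $t_1, t_2, t_3$ may be assumed stable. From the hypotheses, RS2 gives $t_2, t_3 \notin F_{CLL}$, RS4 gives $\mathcal{I}(t_1) = \mathcal{I}(t_2) = \mathcal{I}(t_3)$, and $\tau$-purity plus stability of $t_1$ forces these ready sets into $Act$. Lemmas~\ref{L:STABILIZATION} and~\ref{L:Basic_II}(2) then give that $t_2 \wedge t_3$ is stable with $\mathcal{I}(t_2 \wedge t_3) = \mathcal{I}(t_1)$.

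The strategy is to argue that no rule in $Strip(\mathcal{P}_{CLL}, M_{CLL})$ can conclude $t_2 \wedge t_3 F$. A rule-by-rule case analysis on rules whose conclusion has the head $x_1 \wedge x_2 F$ leaves only $Rp_8, Rp_9, Rp_{10}, Rp_{11}, Rp_{13}$. Consistency of $t_2, t_3$ rules out $Rp_8$ and $Rp_9$; equality of ready sets rules out $Rp_{10}$ and $Rp_{11}$. The substantive work is therefore to block $Rp_{13}$, that is, to show $t_2 \wedge t_3 \,\overline{F}_a \in M_{CLL}$ for every $a \in \mathcal{I}(t_2 \wedge t_3)$.

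Fix such an $a \in \mathcal{I}(t_1)$. The contrapositive of Lemma~\ref{L:LLTS_I} (applied to $t_1 \notin F_{CLL}$) yields $t_1 \stackrel{a}{\longrightarrow}_F t_1^*$ for some $t_1^*$, and Lemma~\ref{L:LLTS_II} applied to $t_1^*$ gives a stable $t_1'$ with $t_1^* \stackrel{\epsilon}{\Longrightarrow}_F| t_1'$; together $t_1 \stackrel{a}{\Longrightarrow}_F| t_1'$. Applying RS3 twice produces $t_2 \stackrel{a}{\Longrightarrow}_F| t_2'$ and $t_3 \stackrel{a}{\Longrightarrow}_F| t_3'$ with $t_1' \underset{\thicksim}{\sqsubset}_{RS} t_2'$ and $t_1' \underset{\thicksim}{\sqsubset}_{RS} t_3'$. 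Since transitions strictly decrease the degree (Lemma~\ref{L:Basic_I}(4)), $|t_1'| < |t_1|$, so the induction hypothesis delivers $t_2' \wedge t_3' \notin F_{CLL}$. Decomposing $t_2 \stackrel{a}{\Longrightarrow}_F| t_2'$ as $t_2 \stackrel{a}{\longrightarrow}_F t_2^* \stackrel{\epsilon}{\Longrightarrow}_F| t_2'$ (and similarly for $t_3$), rule $Ra_6$ gives $t_2 \wedge t_3 \stackrel{a}{\longrightarrow} t_2^* \wedge t_3^*$, and Lemma~\ref{L:TAU_I}(1) lifts the consistency of $t_2' \wedge t_3'$ to $t_2^* \wedge t_3^* \stackrel{\epsilon}{\Longrightarrow}_F| t_2' \wedge t_3'$, whence $t_2^* \wedge t_3^* \notin F_{CLL}$. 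Rule $Rp_{12}$ then yields $t_2 \wedge t_3\, \overline{F}_a \in M_{CLL}$, as required.

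The only genuine obstacle is ensuring the induction closes with $t_1$ as the measure: since the simulations relate $t_1$ to $t_2, t_3$ whose sizes are unconstrained, one must be careful to shrink on $t_1$'s side and only afterwards transport the conclusion across the ready simulation via RS3 and Lemma~\ref{L:TAU_I}(1). The remaining bookkeeping—checking that the intermediate states $t_2^*, t_3^*$ really do witness $\overline{F}_a$—is routine once the inductive triple $(t_1', t_2', t_3')$ has been produced.
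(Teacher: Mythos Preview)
Your proof is correct and follows essentially the same approach as the paper: induction on $|t_1|$, case analysis on the rules that could conclude $t_2 \wedge t_3\,F$, with the substantive case (backward propagation via $Rp_{13}$) handled by producing a consistent $a$-derivative through RS3 and the induction hypothesis, then lifting via Lemma~\ref{L:TAU_I}(1). One cosmetic remark: you need not invoke Convention~\ref{C:STABLE_TERM} here, since stability of $t_1,t_2,t_3$ already follows from (RS1) in the definition of $\underset{\thicksim}{\sqsubset}_{RS}$; the Convention applies only to the propositions listed there.
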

\begin{proof}
    We prove it by induction on the degree of $t_1$.
    Suppose that it holds for all $s$ such that $|s|<|t_1|$.
    Assume that $t_1 \notin F_{{CLL}}$, $t_1 \underset{\thicksim}{\sqsubset}_{RS} t_2$ and $t_1 \underset{\thicksim}{\sqsubset}_{RS} t_3$.
    Thus, it follows at once that $t_2 \notin F_{{CLL}}$, $t_3 \notin F_{{CLL}}$ and ${\mathcal I}(t_1)={\mathcal I}(t_2)={\mathcal I}(t_3)$.
    Assume that $t_2 \wedge t_3 \in F_{{CLL}}$.
    We distinguish cases based on the form of the last rule applied in the proof tree of $Strip({\mathcal P}_{CLL},M_{{CLL}}) \vdash t_2 \wedge t_3F$.\\

\noindent Case 1 $\frac{t_2F}{t_2 \wedge t_3  F}$.

            Then, $ t_2F \in M_{{CLL}}$, which contradicts $t_2 \notin F_{{CLL}}$.\\

\noindent Case 2 $\frac{t_3F}{t_2 \wedge t_3  F}$.

            Analogous to Case 1.\\

\noindent Case 3 $\frac{t_2 \stackrel{a}{\longrightarrow} t_2'}{t_2 \wedge t_3  F}$ with $M_{{CLL}} \models \{t_3 \not\stackrel{a}{\longrightarrow} , t_2 \wedge t_3 \not\stackrel{\tau}{\longrightarrow}\}$.

            So, $ t_2 \stackrel{a}{\longrightarrow}_{CLL} t_2'$ and $ t_3 \not\stackrel{a}{\longrightarrow}_{CLL}$, which contradicts ${\mathcal I}(t_2)={\mathcal I}(t_3)$.\\

\noindent Case 4  $\frac{t_3 \stackrel{a}{\longrightarrow} t_3'}{t_2 \wedge t_3  F}$ with $M_{{CLL}} \models \{t_2 \not\stackrel{a}{\longrightarrow} , t_2 \wedge t_3 \not\stackrel{\tau}{\longrightarrow}\}$.

          Similar to Case 3.\\

\noindent Case 5 $\frac{t_2 \wedge t_3 \stackrel{\alpha}{\longrightarrow} t_4}{t_2 \wedge t_3  F}$ with $M_{{CLL}} \models t_2 \wedge t_3 \neg \overline{F}_{\alpha}$.

        Thus, $t_2 \wedge t_3 \stackrel{\alpha}{\longrightarrow} t_4\in M_{{CLL}}$.
        Since both $t_2$ and $t_3$ are stable, by Lemma~\ref{L:STABILIZATION}, so is $t_2 \wedge t_3$.
        Hence, $\alpha \neq \tau$.
        Then, by Lemma~\ref{L:Basic_II}(2), $\alpha \in {\mathcal I}(t_1)$ due to ${\mathcal I}(t_1)={\mathcal I}(t_2)={\mathcal I}(t_3)$.
        Further, by Lemma~\ref{L:LLTS_I}, it follows from $t_1\notin F_{CLL}$ that $t_1 \stackrel{\alpha}{\longrightarrow}_F t_1'$ for some $t_1'$.
        Then, by Lemma~\ref{L:LLTS_II}, we have $t_1' \stackrel{\epsilon}{\Longrightarrow}_F| t_1''$ for some $t_1''$.
        So, $t_1 \stackrel{\alpha}{\Longrightarrow}_F| t_1''$.
        Hence, it immediately follows from   $t_1 \underset{\thicksim}{\sqsubset}_{RS} t_2$ and $t_1 \underset{\thicksim}{\sqsubset}_{RS} t_3$ that
        \[t_2 \stackrel{\alpha}{\Longrightarrow}_F| t_2''\;\text{and}\; t_1'' \underset{\thicksim}{\sqsubset}_{RS} t_2''\;\text{for some}\; t_2'', \tag{\ref{L:RS_CON}.1}\]
        \[t_3 \stackrel{\alpha}{\Longrightarrow}_F| t_3''\;\text{and}\; t_1'' \underset{\thicksim}{\sqsubset}_{RS} t_3''\;\text{for some}\; t_3''. \tag{\ref{L:RS_CON}.2}\]
        Since $t_2$ and $t_3$ are stable, it follows from (\ref{L:RS_CON}.1) and (\ref{L:RS_CON}.2) that, for $i=2,3$, $t_i \stackrel{\alpha}{\longrightarrow}_F t_i' \stackrel{\epsilon}{\Longrightarrow}_F|t_i''$ for some $t_i'$.
        So, by Lemma~\ref{L:Basic_II}(2), $t_2 \wedge t_3  \stackrel{\alpha}{\longrightarrow}_{CLL} t_2'\wedge t_3' $.
        Further, by Lemma~\ref{L:CON_LLTS} and $M_{{CLL}} \models t_2 \wedge t_3 \neg \overline{F}_{\alpha}$, we have
        \[t_2' \wedge t_3' \in F_{{CLL}}.\tag{\ref{L:RS_CON}.3}\]
        On the other hand, since $t_1'' \notin F_{{CLL}}$ and $|t_1''|<|t_1|$, by IH, (\ref{L:RS_CON}.1) and (\ref{L:RS_CON}.2), we obtain $t_2'' \wedge t_3'' \notin F_{{CLL}}$.
        Further,  by Lemma~\ref{L:TAU_I}(1), it follows from  $t_2' \stackrel{\epsilon}{\Longrightarrow}_F|t_2''$ and $t_3' \stackrel{\epsilon}{\Longrightarrow}_F|t_3''$  that $t_2' \wedge t_3' \stackrel{\epsilon}{\Longrightarrow}_F|t_2'' \wedge t_3''$, which contradicts (\ref{L:RS_CON}.3).
\end{proof}

\begin{lemma}\label{L:CON_ID_I}\hfill
     \begin{enumerate}
      \item $t_1  \wedge t_2 \underset{\thicksim}{\sqsubset}_{RS} t_i$ for $i=1,2$.
      \item $t_1 \wedge t_2 \sqsubseteq_{RS} t_i$ for $i=1,2$.
      \item If $t_1 \underset{\thicksim}{\sqsubset}_{RS} t_2 $ and $t_1 \underset{\thicksim}{\sqsubset}_{RS} t_3$, then $t_1  \underset{\thicksim}{\sqsubset}_{RS} t_2 \wedge t_3$.
      \item If $t_1 \sqsubseteq_{RS} t_2 $ and $t_1 \sqsubseteq_{RS} t_3$, then $t_1  \sqsubseteq_{RS} t_2 \wedge t_3$.
    \end{enumerate}
\end{lemma}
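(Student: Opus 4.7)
\medskip
\noindent\emph{Proof plan.}
The four items come in two pairs: (1)/(3) are the stable versions, (2)/(4) are the lifted versions, and the plan is to prove (1) and (3) by exhibiting explicit stable ready simulation relations, and then deduce (2) and (4) from them together with the convention that $t_i$ is stable.

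For (1) (say $i=1$), the natural candidate is
\[
\mathcal{R} \;\triangleq\; \{(s_1 \wedge s_2,\, s_1) : s_1,\, s_2 \text{ stable process terms}\} \cup Id_S,
\]
so that the pair $(t_1 \wedge t_2, t_1)$ lies in $\mathcal{R}$ by the convention. Verifying (RS1) is immediate from Lemma~\ref{L:STABILIZATION}, and (RS2) from Lemma~\ref{L:F_NORMAL}(4). For (RS4), observe that when $s_1 \wedge s_2 \notin F_{CLL}$ is stable, rules $Rp_{10}$ and $Rp_{11}$ force $\mathcal{I}(s_1) = \mathcal{I}(s_2)$, and then Lemma~\ref{L:Basic_II}(2) gives $\mathcal{I}(s_1 \wedge s_2) = \mathcal{I}(s_1)$. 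For (RS3), given $s_1 \wedge s_2 \stackrel{a}{\Longrightarrow}_F| t'$, split the sequence as $s_1 \wedge s_2 \stackrel{a}{\longrightarrow}_F r \stackrel{\epsilon}{\Longrightarrow}_F| t'$; Lemma~\ref{L:Basic_II}(2) decomposes $r$ as $s_1' \wedge s_2'$ with $s_1 \stackrel{a}{\longrightarrow}_F s_1'$, and Lemma~\ref{L:TAU_I}(2) further decomposes $t' \equiv s_1'' \wedge s_2''$ with $s_1' \stackrel{\epsilon}{\Longrightarrow}_F| s_1''$; this yields $s_1 \stackrel{a}{\Longrightarrow}_F| s_1''$ and $(t', s_1'') \in \mathcal{R}$. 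The case $i=2$ is symmetric.

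Item (2) is then a one-line corollary: under the convention $t_1 \wedge t_2$ is stable, so $t_1 \wedge t_2 \stackrel{\epsilon}{\Longrightarrow}_F| t'$ forces $t' \equiv t_1 \wedge t_2$, and matching it with $t_i \stackrel{\epsilon}{\Longrightarrow}_F| t_i$ (possible because $t_1 \wedge t_2 \notin F_{CLL}$ implies $t_i \notin F_{CLL}$ by Lemma~\ref{L:F_NORMAL}(4)) reduces (2) to (1).

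For (3), the plan is to show that
\[
\mathcal{S} \;\triangleq\; \{(s,\, s_2 \wedge s_3) : s \underset{\thicksim}{\sqsubset}_{RS} s_2,\; s \underset{\thicksim}{\sqsubset}_{RS} s_3,\; s_2,\, s_3 \text{ stable}\} \cup \underset{\thicksim}{\sqsubset}_{RS}
\]
is a stable ready simulation relation containing $(t_1, t_2 \wedge t_3)$. Clauses (RS1), (RS2), (RS4) use Lemma~\ref{L:STABILIZATION}, Lemma~\ref{L:RS_CON} (this is the crucial input, supplying $s_2 \wedge s_3 \notin F_{CLL}$ whenever $s \notin F_{CLL}$), and Lemma~\ref{L:Basic_II}(2) combined with the equality of ready sets inherited from the two ready simulations. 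The essential clause is (RS3): given $s \stackrel{a}{\Longrightarrow}_F| s'$, apply (RS3) for $s \underset{\thicksim}{\sqsubset}_{RS} s_i$ ($i=2,3$) to extract matching sequences $s_i \stackrel{a}{\longrightarrow}_F s_i'' \stackrel{\epsilon}{\Longrightarrow}_F| s_i'$ with $s' \underset{\thicksim}{\sqsubset}_{RS} s_i'$; combine them via Lemma~\ref{L:Basic_II}(2) into $s_2 \wedge s_3 \stackrel{a}{\longrightarrow}_{CLL} s_2'' \wedge s_3''$, use Lemma~\ref{L:RS_CON} on the pair of simulations from $s'$ to ensure $s_2' \wedge s_3' \notin F_{CLL}$, and finally invoke Lemma~\ref{L:TAU_I}(1) to assemble $s_2 \wedge s_3 \stackrel{a}{\Longrightarrow}_F| s_2' \wedge s_3'$, with $(s', s_2' \wedge s_3') \in \mathcal{S}$ by construction. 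Item (4) then follows from (3): under the convention, $t_i \sqsubseteq_{RS} t_j$ (for stable $t_i, t_j$) collapses to $t_i \underset{\thicksim}{\sqsubset}_{RS} t_j$, and Lemma~\ref{L:RS_CON} ensures the required $t_2 \wedge t_3 \stackrel{\epsilon}{\Longrightarrow}_F| t_2 \wedge t_3$.

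The main obstacle is clause (RS3) of (3): merging two separately witnessed $\stackrel{a}{\Longrightarrow}_F|$ sequences into a single one for the conjunction requires the simultaneous use of three earlier results (the decomposition Lemma~\ref{L:Basic_II}(2), the consistency-preservation Lemma~\ref{L:RS_CON}, and the synchronization of $\tau$-traces Lemma~\ref{L:TAU_I}(1)), and without Lemma~\ref{L:RS_CON} the intermediate state $s_2' \wedge s_3'$ could be inconsistent and break the $F$-free weak transition.
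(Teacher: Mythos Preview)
Your proposal is correct and follows essentially the same strategy as the paper: the same candidate relations for (1) and (3) (the paper omits the harmless $\cup\, Id_S$ and $\cup\, \underset{\thicksim}{\sqsubset}_{RS}$), the same use of Lemmas~\ref{L:STABILIZATION}, \ref{L:F_NORMAL}(4), \ref{L:Basic_II}(2), \ref{L:RS_CON}, and \ref{L:TAU_I} in the verification of (RS1)--(RS4), and the same reduction of (2)/(4) to (1)/(3). The only cosmetic difference is that for (2) and (4) the paper phrases the argument via Lemma~\ref{L:TAU_I}(2) rather than directly invoking stability from the convention, but under Convention~\ref{C:STABLE_TERM} these are the same.
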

\begin{proof}
\textbf{(1)} We will prove $t_1  \wedge t_2 \underset{\thicksim}{\sqsubset}_{RS} t_1$. The proof of $t_1  \wedge t_2 \underset{\thicksim}{\sqsubset}_{RS} t_2$ is similar and omitted.
    Put
     \[ {\mathcal R}\triangleq\{(s \wedge t , s)| \;s\;\text{and}\;t \;\text{are stable}\}.\]
     By Convention~\ref{C:STABLE_TERM}, it is enough to prove that $\mathcal R$ is a stable ready simulation relation. Let $(t_1 \wedge t_2 , t_1) \in {\mathcal R}$.
    By Lemma~\ref{L:STABILIZATION} and \ref{L:F_NORMAL}(4), it immediately follows that the pair $(t_1 \wedge t_2 , t_1)$ satisfies (RS1) and (RS2). We will prove (RS3) and (RS4) below.

    \textbf{(RS3)} Suppose $t_1 \wedge t_2 \stackrel{a}{\Longrightarrow}_F| t_{12}$. Since $t_1 \wedge t_2$ is stable, we get $t_1 \wedge t_2 \stackrel{a}{\longrightarrow}_F t_{12}' \stackrel{\epsilon}{\Longrightarrow}_F|t_{12}$ for some $t_{12}'$.
    So, it follows from Lemma~\ref{L:F_NORMAL}(4) and \ref{L:Basic_II}(2) that
    \[t_1  \stackrel{a}{\longrightarrow}_F t_1', t_2 \stackrel{a}{\longrightarrow}_F  t_2'\;\text{and}\; t_{12}' \equiv t_1' \wedge t_2'\;\text{for some}\;t_1',t_2'.\]
    Since $t_1' \wedge t_2' \stackrel{\epsilon}{\Longrightarrow}_F|t_{12}$, by Lemma~\ref{L:TAU_I}(2), we obtain $t_1' \stackrel{\epsilon}{\Longrightarrow}_F|t_1''$, $t_2' \stackrel{\epsilon}{\Longrightarrow}_F|t_2''$ and  $t_{12} \equiv t_1'' \wedge t_2''$ for some $t_1'',t_2''$. Hence, $t_1 \stackrel{a}{\longrightarrow}_F t_1' \stackrel{\epsilon}{\Longrightarrow}_F|t_1''$ and $(t_{12},t_1'') \in {\mathcal R}$.

    \textbf{(RS4)} Suppose $t_1 \wedge t_2 \notin F_{{CLL}}$. We want to prove ${\mathcal I}(t_1 \wedge t_2) = {\mathcal I}(t_1)$.
    It is straightforward that ${\mathcal I}(t_1 \wedge t_2) \subseteq {\mathcal I}(t_1)$.
    Next, we show ${\mathcal I}(t_1 \wedge t_2) \supseteq {\mathcal I}(t_1)$.
    Let $a \in {\mathcal I}(t_1)$.
    Thus, $t_1 \stackrel{a}{\longrightarrow}_{{CLL}}t_1'$ for some $t_1'$.
    Assume that $a \notin {\mathcal I}(t_2)$. So, $ \frac{t_1 \stackrel{a}{\longrightarrow}t_1'}{t_1 \wedge t_2F} \in Strip(R_{CLL},M_{{CLL}})$ comes from $M_{{CLL}} \models \{t_1 \wedge t_2 \not\stackrel{\tau}{\longrightarrow},t_2 \not\stackrel{a}{\longrightarrow}\}$.
    Then, $t_1 \wedge t_2F \in M_{{CLL}}$ immediately follows, a contradiction. \\

\noindent \textbf{(2)}
  Suppose $t_1 \wedge t_2 \stackrel{\epsilon}{\Longrightarrow}_F| t_{12}$. By Lemma~\ref{L:TAU_I}(2), there exist $t_1',t_2'$ such that  $t_1 \stackrel{\epsilon}{\Longrightarrow}_F| t_1'$, $t_2 \stackrel{\epsilon}{\Longrightarrow}_F| t_2'$ and $t_{12} \equiv t_1' \wedge t_2'$.
  By item (1) in this lemma, we have $ t_1'  \wedge t_2' \underset{\thicksim}{\sqsubset}_{RS} t_1'$. Hence, $t_1 \wedge t_2 \sqsubseteq_{RS} t_1$ holds. Similarly, $t_1 \wedge t_2 \sqsubseteq_{RS} t_2$ holds.\\

\noindent \textbf{(3)} Put
        \[{\mathcal R}\triangleq\{(s, r \wedge t )| s \underset{\thicksim}{\sqsubset}_{RS} r\;\text{and}\; s \underset{\thicksim}{\sqsubset}_{RS} t\}.\]
We intend to show that $\mathcal R$ is a stable ready simulation relation. Let $(t_1, t_2 \wedge t_3 ) \in {\mathcal R}$.
    By Lemma~\ref{L:STABILIZATION} and \ref{L:RS_CON}, it is easy to see that $(t_1, t_2 \wedge t_3 )$ satisfies both (RS1) and (RS2).

   \textbf{(RS3)} Suppose $t_1 \stackrel{a}{\Longrightarrow}_F| t_1'$. Since $t_1$ is stable, $t_1 \stackrel{a}{\longrightarrow}_F t_1'' \stackrel{\epsilon}{\Longrightarrow}_F| t_1'$ for some $t_1''$.
   It follows from $t_1 \underset{\thicksim}{\sqsubset}_{RS} t_2$ that $t_2 \stackrel{a}{\Longrightarrow}_F| t_2'$ and $t_1' \underset{\thicksim}{\sqsubset}_{RS} t_2'$ for some $t_2'$.
    Similarly, $t_3 \stackrel{a}{\Longrightarrow}_F| t_3'$ and $t_1' \underset{\thicksim}{\sqsubset}_{RS} t_3'$ for some $t_3'$.
     Further, by Lemma~\ref{L:RS_CON}, we get $t_2'\wedge t_3' \notin F_{{CLL}}$ because of $t_1' \notin F_{{CLL}}$.
   Moreover, since $t_2$ and $t_3$ are stable, $t_2 \stackrel{a}{\longrightarrow}_F t_2'' \stackrel{\epsilon}{\Longrightarrow}_F| t_2'$ and $t_3 \stackrel{a}{\longrightarrow}_F t_3'' \stackrel{\epsilon}{\Longrightarrow}_F| t_3'$ for some $t_2'',t_3''$.
   Then, it follows from Lemma~\ref{L:TAU_I}(1) that \[t_2'' \wedge t_3'' \stackrel{\epsilon}{\Longrightarrow}_F| t_2' \wedge t_3'.\tag{\ref{L:CON_ID_I}.3.1}\]
   Since  $t_2 \stackrel{a}{\longrightarrow}_{{CLL}} t_2''$ and $t_3 \stackrel{a}{\longrightarrow}_{{CLL}} t_3''$, by Lemma~\ref{L:Basic_II}(2), we get \[t_2 \wedge t_3 \stackrel{a}{\longrightarrow}_{{CLL}} t_2'' \wedge t_3''.\tag{\ref{L:CON_ID_I}.3.2}\]
    By Lemma~\ref{L:RS_CON}, $t_2 \wedge t_3 \notin F_{{CLL}}$ due to $t_1 \notin F_{{CLL}}$.
    Then, by (\ref{L:CON_ID_I}.3.1) and (\ref{L:CON_ID_I}.3.2), we obtain $t_2 \wedge t_3 \stackrel{a}{\longrightarrow}_F t_2''\wedge t_3'' \stackrel{\epsilon}{\Longrightarrow}_F| t_2' \wedge t_3'$ and $(t_1', t_2' \wedge t_3') \in {\mathcal R}$ .

   \textbf{(RS4)} Suppose $t_1 \notin F_{{CLL}}$.
    It follows from $t_1 \underset{\thicksim}{\sqsubset}_{RS} t_2$ and $t_1 \underset{\thicksim}{\sqsubset}_{RS} t_3$ that ${\mathcal I}(t_1) = {\mathcal I}(t_2) = {\mathcal I}(t_3)$.
    Further, since $t_1$, $t_2$ and $t_3$ are stable, we get ${\mathcal I}(t_i)\subseteq Act$ for $i=1,2,3$.
    Hence, by Lemma~\ref{L:Basic_II}(2), we have  ${\mathcal I}(t_1) = {\mathcal I}(t_2 \wedge t_3)$.\\

\noindent \textbf{(4)} Suppose $t_1 \stackrel{\epsilon}{\Longrightarrow}_F| t_1'$.
    It is enough to find $s$ such that $t_1' \underset{\thicksim}{\sqsubset}_{RS} s$ and $t_2\wedge t_3 \stackrel{\epsilon}{\Longrightarrow}_F|s$.
    Clearly, it follows from $t_1 \sqsubseteq_{RS} t_2 $ that $t_2 \stackrel{\epsilon}{\Longrightarrow}_F| t_2'$ and $t_1' \underset{\thicksim}{\sqsubset}_{RS} t_2'$ for some $t_2'$. Similarly, $t_3 \stackrel{\epsilon}{\Longrightarrow}_F| t_3'$ and $t_1' \underset{\thicksim}{\sqsubset}_{RS} t_3'$ for some $t_3'$.
    We will check that $t_2' \wedge t_3'$ is exactly what we need.
    By item~(3) in this lemma, we get $t_1'  \underset{\thicksim}{\sqsubset}_{RS} t_2' \wedge t_3'$.
    Further, $ t_2' \wedge t_3' \notin F_{{CLL}}$ comes from $t_1' \notin F_{{CLL}}$. Then,  by Lemma~\ref{L:TAU_I}(1), we obtain $t_2 \wedge t_3 \stackrel{\epsilon}{\Longrightarrow}_F| t_2' \wedge t_3'$, as desired.
\end{proof}

As an immediate consequence of items (2) and (4) in the above lemma, the property below is given, which has been obtained in \cite{Luttgen10}.
\[t_1 \sqsubseteq_{RS} t_2 \wedge t_3 \;\text{iff}\;t_1 \sqsubseteq_{RS} t_2\;\text{and}\; t_1 \sqsubseteq_{RS} t_3 .\]

As pointed out by L\"{u}ttgen and Vogler in \cite{Luttgen07,Luttgen10}, this is a fundamental property of ready simulation in the presence of logic operators. Intuitively, it says that $t_1$ is an implementation  of the specification $t_2\wedge t_3$ if and only if $t_1$ implements both $t_2$ and $t_3$.

We now are ready to prove some basic properties of the operator $\wedge$.

\begin{proposition}[Conjunction]\label{S:CONJUNCTION}\hfill
    \begin{enumerate}
      \item $t_1  \wedge t_2 \thickapprox_{RS} t_2 \wedge t_1$.
      \item $t_1  \wedge t_2 =_{RS} t_2 \wedge t_1$.
      \item $(t_1 \wedge t_2) \wedge t_3 \thickapprox_{RS} t_1 \wedge (t_2 \wedge t_3)$.
      \item $(t_1 \wedge t_2) \wedge t_3 =_{RS} t_1 \wedge (t_2 \wedge t_3)$.
      \item $t \wedge t\thickapprox_{RS} t$.
      \item $t \wedge t =_{RS} t$.
      \item $t \wedge \bot =_{RS} \bot$.
    \end{enumerate}
\end{proposition}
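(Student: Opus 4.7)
The plan is to dispose of parts (1)--(6) by a uniform, symmetric application of Lemma~\ref{L:CON_ID_I}, mirroring the structure of the proof of Proposition~\ref{L:EC2}, and to handle part (7) as a degenerate case via Lemma~\ref{L:F_NORMAL}. By Convention~\ref{C:STABLE_TERM}, in parts (1), (3), (5) all the $t_i$ (and $t$) are stable, so by Lemma~\ref{L:STABILIZATION} every conjunction appearing in these equations is stable, and (RS1) is automatic throughout.

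For the $\approx_{RS}$ laws, I will argue as follows. For commutativity (1), Lemma~\ref{L:CON_ID_I}(1) gives $t_1\wedge t_2 \underset{\thicksim}{\sqsubset}_{RS} t_1$ and $t_1\wedge t_2 \underset{\thicksim}{\sqsubset}_{RS} t_2$; Lemma~\ref{L:CON_ID_I}(3) then yields $t_1\wedge t_2 \underset{\thicksim}{\sqsubset}_{RS} t_2\wedge t_1$, and the reverse direction is identical by symmetry. For associativity (3), apply Lemma~\ref{L:CON_ID_I}(1) twice to obtain $(t_1\wedge t_2)\wedge t_3 \underset{\thicksim}{\sqsubset}_{RS} t_i$ for each $i\in\{1,2,3\}$; then a first application of Lemma~\ref{L:CON_ID_I}(3) delivers $(t_1\wedge t_2)\wedge t_3 \underset{\thicksim}{\sqsubset}_{RS} t_2\wedge t_3$ (noting $t_2\wedge t_3$ is stable), and a second application combines this with $(t_1\wedge t_2)\wedge t_3 \underset{\thicksim}{\sqsubset}_{RS} t_1$ to give $(t_1\wedge t_2)\wedge t_3 \underset{\thicksim}{\sqsubset}_{RS} t_1\wedge(t_2\wedge t_3)$; the reverse direction is analogous. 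Idempotency (5) follows immediately from reflexivity of $\underset{\thicksim}{\sqsubset}_{RS}$ together with Lemma~\ref{L:CON_ID_I}(1) and (3).

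For the $=_{RS}$ laws (2), (4), (6), I will repeat exactly the same argument schema but replace each use of Lemma~\ref{L:CON_ID_I}(1) with Lemma~\ref{L:CON_ID_I}(2), and each use of Lemma~\ref{L:CON_ID_I}(3) with Lemma~\ref{L:CON_ID_I}(4). No extra stability bookkeeping is needed here, since items (2) and (4) of Lemma~\ref{L:CON_ID_I} are phrased directly in terms of $\sqsubseteq_{RS}$ without the stability restriction. Finally, for (7), Lemma~\ref{L:F_NORMAL}(5) gives $\bot\in F_{CLL}$ and hence Lemma~\ref{L:F_NORMAL}(4) gives $t\wedge\bot\in F_{CLL}$. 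Because $\stackrel{\epsilon}{\Longrightarrow}_F\!|$ cannot depart from an inconsistent state, the premise of the definition of $\sqsubseteq_{RS}$ is vacuously false for both $t\wedge\bot$ and $\bot$, so $t\wedge\bot =_{RS} \bot$ holds trivially.

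I do not foresee any genuine obstacle; the argument is bookkeeping. The only point requiring small care is ensuring, during the associativity steps, that each intermediate conjunction serving as the ``pivot'' in an application of Lemma~\ref{L:CON_ID_I}(3) is itself stable, which is handled by Lemma~\ref{L:STABILIZATION} together with the stability of the atoms from Convention~\ref{C:STABLE_TERM}.
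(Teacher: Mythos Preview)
Your proposal is correct and follows essentially the same route as the paper: parts (1)--(6) are derived by symmetric applications of Lemma~\ref{L:CON_ID_I}(1)(3) (respectively (2)(4)), and part (7) is dispatched via Lemma~\ref{L:F_NORMAL}(4)(5). Your remarks about stability bookkeeping are a slight elaboration, but the core argument is identical.
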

\begin{proof}
  \noindent \textbf{(1)} By Lemma~\ref{L:CON_ID_I}(1), $t_1  \wedge t_2 \underset{\thicksim}{\sqsubset}_{RS} t_1$ and $t_1  \wedge t_2 \underset{\thicksim}{\sqsubset}_{RS} t_2$.
  Then, by Lemma~\ref{L:CON_ID_I}(3), $t_1  \wedge t_2 \underset{\thicksim}{\sqsubset}_{RS} t_2 \wedge t_1$. Similarly, $t_2  \wedge t_1 \underset{\thicksim}{\sqsubset}_{RS} t_1 \wedge t_2$.\\

\noindent \textbf{(2)} Similar to (1), it follows from Lemma~\ref{L:CON_ID_I}(2) and \ref{L:CON_ID_I}(4).\\

\noindent \textbf{(3)} By Lemma~\ref{L:CON_ID_I}(1), $(t_1  \wedge t_2) \wedge t_3 \underset{\thicksim}{\sqsubset}_{RS} t_i $ for $i=1,2,3$.
Then, it immediately follows from Lemma~\ref{L:CON_ID_I}(3) that $(t_1  \wedge t_2) \wedge t_3 \underset{\thicksim}{\sqsubset}_{RS} t_1 \wedge (t_2 \wedge t_3)$.
Similarly, $t_1  \wedge (t_2 \wedge t_3) \underset{\thicksim}{\sqsubset}_{RS} (t_1 \wedge t_2) \wedge t_3$. \\

\noindent \textbf{(4)}
      Similar to (3), but using Lemma~\ref{L:CON_ID_I}(2)(4) instead of Lemma~\ref{L:CON_ID_I}(1)(3).\\

\noindent \textbf{(5)} Immediately follows from Lemma~\ref{L:CON_ID_I}(1)(3).\\

\noindent \textbf{(6)} Immediate consequence of Lemma~\ref{L:CON_ID_I}(2)(4).\\

\noindent \textbf{(7)} By Lemma~\ref{L:F_NORMAL}(4), $t \wedge \bot \in F_{{CLL}}$ follows from $\bot \in F_{{CLL}}$. Then, it holds trivially that $t \wedge \bot =_{RS} \bot$.
\end{proof}

The next lemma records some simple properties of the operator $\parallel_A$.

\begin{proposition}[Parallel]\label{S:PARALLEL}\hfill
    \begin{enumerate}
      \item $t_1 \parallel_A t_2 \thickapprox_{RS} t_2 \parallel_A t_1$.
      \item $t_1 \parallel_A t_2 =_{RS} t_2 \parallel_A t_1$.
      \item $t \parallel_A \bot =_{RS} \bot$.
      \item $0 \parallel_A 0 =_{RS} 0$.
    \end{enumerate}
\end{proposition}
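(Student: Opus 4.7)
The proof splits into four parts, largely paralleling the strategies already deployed for $\Box$ in Proposition~\ref{L:EC2}, with rules $Ra_{11}$--$Ra_{15}$ playing the role of $Ra_{2}$--$Ra_{5}$. For part~(1), since by Convention~\ref{C:STABLE_TERM} the terms $t_1,t_2$ are stable, I would propose the candidate relation
\[
\mathcal{R} \triangleq \{(s_1 \parallel_A s_2,\, s_2 \parallel_A s_1) : s_1,s_2 \text{ stable}\} \cup Id_S
\]
and verify that $\mathcal R$ is a stable ready simulation. The requirement is trivial on $Id_S$, so I focus on a generic pair $(s_1 \parallel_A s_2,\, s_2 \parallel_A s_1)$: (RS1) follows from Lemma~\ref{L:STABILIZATION}, (RS2) from Lemma~\ref{L:F_NORMAL}(3), and (RS4) from the observation that rules $Ra_{13}$--$Ra_{15}$ are literally symmetric in their two operands, so Lemma~\ref{L:Basic_II} and its parallel analogue give $\mathcal{I}(s_1\parallel_A s_2)=\mathcal{I}(s_2\parallel_A s_1)$. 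The real work is (RS3): given $s_1 \parallel_A s_2 \stackrel{a}{\Longrightarrow}_F| t'$, I would write it as $s_1 \parallel_A s_2 \stackrel{a}{\longrightarrow}_F t'' \stackrel{\epsilon}{\Longrightarrow}_F| t'$ and case-split on which of $Ra_{13}$, $Ra_{14}$, $Ra_{15}$ produced the $a$-transition; the mirror rule (with operands swapped) yields $s_2 \parallel_A s_1 \stackrel{a}{\longrightarrow}_F t'''$, and then Lemma~\ref{L:TAU_I}(2) applied to $t'' \stackrel{\epsilon}{\Longrightarrow}_F| t'$ combined with Lemma~\ref{L:TAU_I}(1) on the swapped components produces a matching $s_2 \parallel_A s_1 \stackrel{a}{\Longrightarrow}_F| t^\star$ with $(t',t^\star) \in \mathcal{R}$.

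Part~(2) then follows the template of Proposition~\ref{L:EC2}(2), with $t_1,t_2$ arbitrary. I would prove $t_1\parallel_A t_2 \sqsubseteq_{RS} t_2 \parallel_A t_1$ by supposing $t_1 \parallel_A t_2 \stackrel{\epsilon}{\Longrightarrow}_F| t'$, using Lemma~\ref{L:TAU_I}(2) to obtain stable $t_1',t_2'$ with $t_i \stackrel{\epsilon}{\Longrightarrow}_F| t_i'$ and $t' \equiv t_1' \parallel_A t_2'$, using Lemma~\ref{L:TAU_I}(1) to derive $t_2 \parallel_A t_1 \stackrel{\epsilon}{\Longrightarrow}_F| t_2' \parallel_A t_1'$, and finally invoking part~(1) on the stable terms $t_1',t_2'$ to conclude $t_1' \parallel_A t_2' \underset{\thicksim}{\sqsubset}_{RS} t_2' \parallel_A t_1'$. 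The reverse direction is symmetric.

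For part~(3), since $\bot \in F_{CLL}$ by Lemma~\ref{L:F_NORMAL}(5), Lemma~\ref{L:F_NORMAL}(3) yields $t \parallel_A \bot \in F_{CLL}$; both sides being inconsistent, the equality holds vacuously because neither admits a $\stackrel{\epsilon}{\Longrightarrow}_F|$ derivative. For part~(4), $0$ is stable (Lemma~\ref{L:Basic_I}(3)) and consistent (Lemma~\ref{L:F_NORMAL}(5)), so by Lemmas~\ref{L:STABILIZATION} and~\ref{L:F_NORMAL}(3) the same holds for $0 \parallel_A 0$; inspecting the premises of $Ra_{11}$--$Ra_{15}$ shows that $0 \parallel_A 0$ has no outgoing transitions at all, so $\mathcal{I}(0\parallel_A 0) = \emptyset = \mathcal{I}(0)$. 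Consequently the relation $\{(0, 0 \parallel_A 0), (0 \parallel_A 0, 0)\} \cup Id_S$ satisfies (RS1)--(RS4) vacuously in both directions, giving $0 \parallel_A 0 =_{RS} 0$. The main obstacle is the case analysis in (RS3) of part~(1), where one must keep track of the interleaving cases ($a \notin A$, via $Ra_{13}$/$Ra_{14}$, requiring the stable non-participant to be preserved) and the synchronising case ($a \in A$, via $Ra_{15}$), while ensuring via Lemma~\ref{L:TAU_I} that all intermediate states along the $\tau$-closure remain outside $F_{CLL}$.
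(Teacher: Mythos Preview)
Your proposal is correct and follows the paper's own approach, which simply says ``Analogous to Prop.~\ref{L:EC2}.'' You have correctly spotted the one place where the analogy is not entirely mechanical: for $\Box$ the candidate relation $\{(t_1\Box t_2,\,t_2\Box t_1)\}\cup Id_S$ suffices because a visible transition dissolves the $\Box$-structure and lands in $Id_S$, whereas $\parallel_A$ is persistent, so the resulting pair after an $a$-step is again of the form $(u_1\parallel_A u_2,\,u_2\parallel_A u_1)$ with $u_1,u_2$ stable. Your choice of $\mathcal R=\{(s_1\parallel_A s_2,\,s_2\parallel_A s_1):s_1,s_2\text{ stable}\}\cup Id_S$ is exactly the right fix, and your use of Lemma~\ref{L:TAU_I}(1)(2) to transport the $\stackrel{\epsilon}{\Longrightarrow}_F|$-tail across the swap is the intended mechanism.
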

\begin{proof}
  Analogous to Prop.~\ref{L:EC2}.
\end{proof}

In the following, we will show that $\sqsubseteq_{RS}$ is a precongruence.
We first prove that the operators $\Box$, $\parallel_A$ and $\wedge$ are monotonic w.r.t $\underset{\thicksim}{\sqsubset}_{RS}$.

\begin{lemma}[Monotonic w.r.t $\underset{\thicksim}{\sqsubset}_{RS}$]\label{L:CONGRUENCE_STABLE}
If $t_1 \underset{\thicksim}{\sqsubset}_{RS} t_2$ and $t_3$ is stable, then
\begin{enumerate}
  \item $t_1 \Box t_3 \underset{\thicksim}{\sqsubset}_{RS} t_2 \Box t_3$,
  \item $t_1 \wedge t_3 \underset{\thicksim}{\sqsubset}_{RS} t_2\wedge t_3$,
  \item $t_1\parallel_A t_3 \underset{\thicksim}{\sqsubset}_{RS} t_2\parallel_A t_3$.
\end{enumerate}
\end{lemma}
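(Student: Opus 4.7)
The plan is to treat the three claims in parallel by exhibiting, for each $\odot\in\{\Box,\parallel_A,\wedge\}$, the candidate relation
$$\mathcal{R}_\odot \triangleq \{(s_1\odot s_3, s_2\odot s_3) : s_1\underset{\thicksim}{\sqsubset}_{RS} s_2,\ s_3\text{ stable}\}\ \cup\ \underset{\thicksim}{\sqsubset}_{RS},$$
and verifying clauses (RS1)--(RS4) of Def.~\ref{D:READYSIMULATION_TERMS} on the new pairs (pairs already in $\underset{\thicksim}{\sqsubset}_{RS}$ are trivial). For $\Box$ an outgoing $a$-derivative of the composition is \emph{not} itself a $\Box$-term (it is a derivative of one summand), so after one step the matched pair lands back in $\underset{\thicksim}{\sqsubset}_{RS}$; for $\parallel_A$ and $\wedge$ the compositional structure is preserved by $a$-transitions and the resulting pair lies again in the non-identity part of $\mathcal{R}_\odot$.

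Clause (RS1) is immediate from Lemma~\ref{L:STABILIZATION} since $t_1,t_2,t_3$ are stable. For $\Box$ and $\parallel_A$, clause (RS2) follows from the ``iff'' in Lemma~\ref{L:F_NORMAL}(3) combined with (RS2) for $t_1\underset{\thicksim}{\sqsubset}_{RS} t_2$. Clause (RS4) is read off Lemma~\ref{L:Basic_II}(1) for $\Box$, Lemma~\ref{L:Basic_II}(2) for $\wedge$, and the three parallel rules $Ra_{13}$--$Ra_{15}$ for $\parallel_A$, each time invoking $\mathcal{I}(t_1)=\mathcal{I}(t_2)$ (from (RS4) for $t_1\underset{\thicksim}{\sqsubset}_{RS} t_2$) and the fact that $t_3\not\stackrel{\tau}{\longrightarrow}_{CLL}$. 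For clause (RS3), an $a$-move $t_1\odot t_3\stackrel{a}{\Longrightarrow}_F|t'$ is first reduced, via stability of the left-hand side, to a single visible step $t_1\odot t_3\stackrel{a}{\longrightarrow}_F t''$ followed by $t''\stackrel{\epsilon}{\Longrightarrow}_F|t'$; Lemma~\ref{L:TAU_I}(2) (or the trivial componentwise split in the $\Box$-case) then splits the trailing $\stackrel{\epsilon}{\Longrightarrow}_F|$, (RS3) of $t_1\underset{\thicksim}{\sqsubset}_{RS} t_2$ produces the matching move from $t_2$, and Lemma~\ref{L:TAU_I}(1) (together with the appropriate action-rule) reassembles the pieces into a $\stackrel{a}{\Longrightarrow}_F|$-move out of $t_2\odot t_3$ that terminates in a pair belonging to $\mathcal{R}_\odot$.

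The main obstacle is the $\wedge$-case of clauses (RS2) and (RS3), because Lemmas~\ref{L:F_NORMAL}(4) and~\ref{L:TAU_I}(1) are asymmetric for $\wedge$: the former only goes one way, and the latter requires consistency of the reassembled term. I will discharge both by applying Lemma~\ref{L:RS_CON} with the auxiliary witness $t_1\wedge t_3$ (for (RS2)), respectively $t_1''\wedge t_3''$ (for (RS3) after one step). From Lemma~\ref{L:CON_ID_I}(1) one has $t_1\wedge t_3\underset{\thicksim}{\sqsubset}_{RS} t_1$ and $t_1\wedge t_3\underset{\thicksim}{\sqsubset}_{RS} t_3$; composing with $t_1\underset{\thicksim}{\sqsubset}_{RS} t_2$ by transitivity of $\underset{\thicksim}{\sqsubset}_{RS}$ supplies the two hypotheses of Lemma~\ref{L:RS_CON}, and $t_1\wedge t_3\notin F_{CLL}$ (our assumption, or $t'\equiv t_1''\wedge t_3''\notin F_{CLL}$ after one step, since $\stackrel{a}{\Longrightarrow}_F|$ guarantees it) is the required consistent witness. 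This yields $t_2\wedge t_3\notin F_{CLL}$ and $t_2''\wedge t_3''\notin F_{CLL}$, which in turn unlock Lemma~\ref{L:TAU_I}(1) for $\wedge$ and complete the reassembly. Every other step is a bookkeeping application of the structural results of Sections~4 and~5.
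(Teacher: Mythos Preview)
Your proposal is correct. For items (1) and (3) it coincides with the paper's proof: the paper also exhibits the relation $\{(t_1\Box t_3,t_2\Box t_3)\}\cup\underset{\thicksim}{\sqsubset}_{RS}$ for $\Box$ (a singleton suffices here since a visible step dissolves $\Box$) and the parametric relation $\{(s\parallel_A t,r\parallel_A t):s\underset{\thicksim}{\sqsubset}_{RS} r,\ t\ \text{stable}\}$ for $\parallel_A$, and checks (RS1)--(RS4) exactly along the lines you sketch.

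For item (2) your route diverges from the paper's. You build a fresh simulation relation $\mathcal{R}_\wedge$ and verify (RS1)--(RS4) by hand, using Lemma~\ref{L:RS_CON} with the witness $t_1\wedge t_3$ (via Lemma~\ref{L:CON_ID_I}(1) and transitivity) to secure consistency of $t_2\wedge t_3$ and of the reassembled targets. The paper instead observes that the very same three facts you derive for (RS2)---namely $t_1\wedge t_3\underset{\thicksim}{\sqsubset}_{RS} t_1$, $t_1\wedge t_3\underset{\thicksim}{\sqsubset}_{RS} t_3$, hence $t_1\wedge t_3\underset{\thicksim}{\sqsubset}_{RS} t_2$---already feed directly into Lemma~\ref{L:CON_ID_I}(3) (rather than Lemma~\ref{L:RS_CON}) to yield $t_1\wedge t_3\underset{\thicksim}{\sqsubset}_{RS} t_2\wedge t_3$ outright, in three lines. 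Your argument is self-contained and makes the coinductive step explicit; the paper's is shorter because Lemma~\ref{L:CON_ID_I}(3) has already packaged precisely this simulation argument.
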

\begin{proof}
\textbf{(1)} Assume that $t_1 \underset{\thicksim}{\sqsubset}_{RS} t_2$ and $t_3$ is stable. Put
\[{\mathcal R}\triangleq \{(t_1 \Box t_3 , t_2 \Box t_3 )\} \cup \underset{\thicksim}{\sqsubset}_{RS}.\]
It suffices to show that $\mathcal R$ is a stable ready simulation relation.
Clearly, by Lemma~\ref{L:STABILIZATION}, both $t_1 \Box t_2$ and $t_2 \Box t_3$ are stable, that is, (RS1) holds. In the following, we show that the pair $(t_1\Box t_3, t_2\Box t_3)$ satisfies (RS2)-(RS4).

\textbf{(RS2)} Suppose  $t_2 \Box t_3 \in F_{{CLL}}$.
    By Lemma~\ref{L:F_NORMAL}(3), $t_2 \in F_{{CLL}}$ or $t_3 \in F_{{CLL}}$.
    For the first alternative, it follows from $t_1 \underset{\thicksim}{\sqsubset}_{RS} t_2$ that $t_1 \in F_{{CLL}}$. So, $t_1 \Box t_3  \in F_{{CLL}}$.
    For the second alternative, by Lemma~\ref{L:F_NORMAL}(3), $t_1 \Box t_3  \in F_{{CLL}}$ immediately follows.

\textbf{(RS3)} Suppose $t_1 \Box t_3 \stackrel{a}{\Longrightarrow}_F|t_{13}$.
    Since $t_1\Box t_3$ is stable, it follows that
    \[t_1 \Box t_3 \stackrel{a}{\longrightarrow}_F t_{13}' \stackrel{\epsilon}{\Longrightarrow}_F|t_{13}\;\text{for some}\; t_{13}'.\tag{\ref{L:CONGRUENCE_STABLE}.1.1}\]
    Since $t_1 \Box t_3 \notin F_{{CLL}}$, by (RS2), we get
    \[t_2 \Box t_3 \notin F_{{CLL}}.\tag{\ref{L:CONGRUENCE_STABLE}.1.2}\]
     The argument now splits into two cases depending on  the form of the last rule applied in the proof tree of $Strip({\mathcal P}_{CLL},M_{{CLL}}) \vdash t_1 \Box t_3 \stackrel{a}{\longrightarrow}t_{13}'$.\\

\noindent  Case 1 $\frac{t_1 \stackrel{a}{\longrightarrow} t_1'}{t_1 \Box t_3 \stackrel{a}{\longrightarrow} t_1'}$ with $M_{{CLL}} \models t_3 \not\stackrel{\tau}{\longrightarrow}$.

        So, $t_{13}' \equiv t_1'$ and $t_1 \stackrel{a}{\longrightarrow} t_1' \in M_{{CLL}}$.
        Then, by Lemma~\ref{L:F_NORMAL}(3) and (\ref{L:CONGRUENCE_STABLE}.1.1), we have $t_1 \stackrel{a}{\longrightarrow}_F t_1' \stackrel{\epsilon}{\Longrightarrow}_F|t_{13}$.
        It follows from $t_1 \underset{\thicksim}{\sqsubset}_{RS} t_2$ that $t_2  \stackrel{a}{\Longrightarrow}_F|t_2'$ and $t_{13} \underset{\thicksim}{\sqsubset}_{RS} t_2'$ for some $t_2'$.
        Moreover, since $t_2$ is stable, we get
        \[t_2 \stackrel{a}{\longrightarrow}_F t_2'' \stackrel{\epsilon}{\Longrightarrow}_F|t_2'\;\text{for some}\; t_2''. \tag{\ref{L:CONGRUENCE_STABLE}.1.3}\]
        Further, it follows from $M_{{CLL}} \models t_3 \not\stackrel{\tau}{\longrightarrow}$ that $\frac{t_2 \stackrel{a}{\longrightarrow} t_2''}{t_2 \Box t_3 \stackrel{a}{\longrightarrow}t_2''} \in Strip(R_{CLL},M_{{CLL}})$.
        Thus,
        \[t_2 \Box t_3 \stackrel{a}{\longrightarrow}t_2'' \in M_{{CLL}}.\tag{\ref{L:CONGRUENCE_STABLE}.1.4}\]
        Hence, by (\ref{L:CONGRUENCE_STABLE}.1.2), (\ref{L:CONGRUENCE_STABLE}.1.3) and (\ref{L:CONGRUENCE_STABLE}.1.4), we have  $t_2 \Box t_3 \stackrel{a}{\longrightarrow}_F t_2'' \stackrel{\epsilon}{\Longrightarrow}_F|t_2'$ and $(t_{13},t_2') \in {\mathcal R}$.\\

\noindent  Case 2 $\frac{t_3 \stackrel{a}{\longrightarrow} t_3'}{t_1 \Box t_3 \stackrel{a}{\longrightarrow} t_3'}$ with $M_{{CLL}} \models t_1 \not\stackrel{\tau}{\longrightarrow}$.

    Hence, $t_{13}' \equiv t_3'$ and $t_3 \stackrel{a}{\longrightarrow} t_3' \in M_{{CLL}}$.
   Since $t_2$ is stable, it follows that $\frac{t_3 \stackrel{a}{\longrightarrow} t_3'}{t_2 \Box t_3 \stackrel{a}{\longrightarrow}t_3'} \in Strip(R_{CLL},M_{{CLL}})$.
   So, $t_2 \Box t_3 \stackrel{a}{\longrightarrow} t_3' \in M_{{CLL}} $.
   Then, by (\ref{L:CONGRUENCE_STABLE}.1.1) and  (\ref{L:CONGRUENCE_STABLE}.1.2), we have $t_2 \Box t_3 \stackrel{a}{\longrightarrow}_F t_3' \stackrel{\epsilon}{\Longrightarrow}_F|t_{13}$ and $(t_{13},t_{13}) \in {\mathcal R}$,  as desired.

\textbf{(RS4)} Suppose $t_1 \Box t_3 \notin F_{{CLL}}$.
 So, by Lemma~\ref{L:F_NORMAL}(3), $t_1 \notin F_{{CLL}}$.
 Hence, ${\mathcal I}(t_1)={\mathcal I}(t_2) $ due to $t_1 \underset{\thicksim}{\sqsubset}_{RS} t_2$.
Further, by Lemma~\ref{L:Basic_II}(1), we get ${\mathcal I}(t_1 \Box t_3)={\mathcal I}(t_2 \Box t_3) $.\\

\noindent \textbf{(2)}
    $  t_1 \wedge t_3 \underset{\sim}{\sqsubset}_{RS} t_1 \;\text{and}\; t_1 \wedge t_3 \underset{\sim}{\sqsubset}_{RS} t_3 \; \;\;\;\;\;  \text{(by Lemma~\ref{L:CON_ID_I}(1))} $

    $  \Rightarrow t_1 \wedge t_3 \underset{\sim}{\sqsubset}_{RS} t_2\;\text{ and}\;  t_1 \wedge t_3 \underset{\sim}{\sqsubset}_{RS} t_3\;  \text{(by}\; t_1 \underset{\sim}{\sqsubset}_{RS} t_2\;\text{ and}\; \underset{\sim}{\sqsubset}_{RS}\;\text{is transitive)} $

    $  \Rightarrow t_1 \wedge t_3 \underset{\sim}{\sqsubset}_{RS} t_2 \wedge t_3\;  \qquad \qquad \qquad\;\; \text{(by Lemma~\ref{L:CON_ID_I}(3))} $\\

\noindent \textbf{(3)}
        Put \[{\mathcal R}\triangleq \{(s \parallel_A t, r \parallel_A t)\;|\; s \underset{\thicksim}{\sqsubset}_{RS} r\;\text{and}\;t\;\text{is stable}\}.\]
We want to show that $\mathcal R$ is a stable ready simulation relation. Let $(t_1 \parallel_A t_3, t_2\parallel_A t_3) \in {\mathcal R}$.

\textbf{(RS1)} Since $t_1$, $t_2$ and $t_3$ are stable, by Lemma~\ref{L:STABILIZATION}, so are $t_1 \parallel_A t_3$ and $t_2\parallel_A t_3$.

\textbf{(RS2)} Suppose $ t_2\parallel_A t_3  \in F_{{CLL}}$.
    So, by Lemma~\ref{L:F_NORMAL}(3), $t_2 \in F_{{CLL}}$ or $t_3\in F_{{CLL}}$.
    Then, by  $t_1 \underset{\thicksim}{\sqsubset}_{RS} t_2$ and Lemma~\ref{L:F_NORMAL}(3), it immediately follows that $t_1 \parallel_A t_3  \in F_{{CLL}}$.

\textbf{(RS3)} Suppose $ t_1 \parallel_A t_3  \stackrel{a}{\Longrightarrow}_F|t_4$.
    Since $t_1 \parallel_A t_3 $ is stable, $t_1 \parallel_A t_3 \stackrel{a}{\longrightarrow}_F t_5 \stackrel{\epsilon}{\Longrightarrow}_F|t_4$ for some $t_5$.
    We consider three cases based on the form of the last rule applied in the proof tree of $Strip({\mathcal P}_{CLL},M_{{CLL}}) \vdash t_1 \parallel_A t_3 \stackrel{a}{\longrightarrow} t_5$.\\

 \noindent   Case 1 $\frac{t_1 \stackrel{a}{\longrightarrow} t_1'}{t_1 \parallel_A t_3 \stackrel{a}{\longrightarrow} t_1'\parallel_A t_3}$ with $a\notin A$ and $M_{{CLL}} \models t_3 \not\stackrel{\tau}{\longrightarrow}$.

    So, $t_5 \equiv t_1'\parallel_A t_3$ and $t_1 \stackrel{a}{\longrightarrow} t_1' \in M_{{CLL}}$.
    By Lemma~\ref{L:TAU_I}(2), it follows from $t_1'\parallel_A t_3 \equiv t_5 \stackrel{\epsilon}{\Longrightarrow}_F|t_4$ that     $t_1' \stackrel{\epsilon}{\Longrightarrow}_F|t_1''$, $t_3 \stackrel{\epsilon}{\Longrightarrow}_F|t_3'$ and $t_4 \equiv t_1''\parallel_A t_3'$ for some $t_1'',t_3'$.
    Since $t_3$ is stable,  $t_3' \equiv t_3$.
    It is easy to see that $t_1,t_3,t_1' \notin F_{{CLL}}$.
    So, $t_1 \stackrel{a}{\longrightarrow}_F t_1' \stackrel{\epsilon}{\Longrightarrow}_F|t_1''$.
    Further, since $t_1 \underset{\thicksim}{\sqsubset}_{RS} t_2$, we obtain $t_2 \notin  F_{{CLL}}$,  $t_2 \stackrel{a}{\Longrightarrow}_F|t_2''$  and $t_1'' \underset{\thicksim}{\sqsubset}_{RS} t_2''$ for some $t_2''$.
    Since $t_2$ is stable, there exists $t_2'$ such that $t_2 \stackrel{a}{\longrightarrow}_F t_2' \stackrel{\epsilon}{\Longrightarrow}_F|t_2''$,
    further, by Lemma~\ref{L:TAU_I}(1) and $t_3 \stackrel{\epsilon}{\Longrightarrow}_F|t_3$, it follows that
    \[t_2' \parallel_A t_3\stackrel{\epsilon}{\Longrightarrow}_F|t_2'' \parallel_A t_3.\tag{\ref{L:CONGRUENCE_STABLE}.3.1}\]
    Moreover, since $t_3$ is stable and $a\notin A$, we obtain
    \[\frac{t_2 \stackrel{a}{\longrightarrow} t_2'}{t_2 \parallel_A t_3 \stackrel{a}{\longrightarrow} t_2'\parallel_A t_3} \in Strip(R_{CLL},M_{{CLL}}).\]
    Hence
    \[t_2 \parallel_A t_3 \stackrel{a}{\longrightarrow} t_2'\parallel_A t_3 \in M_{{CLL}}. \tag{\ref{L:CONGRUENCE_STABLE}.3.2}\]
    Further, since $t_2 \notin  F_{{CLL}}$ and $ t_3 \notin  F_{{CLL}}$, by Lemma~\ref{L:F_NORMAL}(3), we get $t_2 \parallel_A t_3 \notin  F_{{CLL}}$.
    Thus, it follows from (\ref{L:CONGRUENCE_STABLE}.3.1) and (\ref{L:CONGRUENCE_STABLE}.3.2) that
    $t_2 \parallel_A t_3 \stackrel{a}{\Longrightarrow}_F| t_2'' \parallel_A t_3$.
    Moreover, $(t_1 ''\parallel_A t_3,t_2''\parallel_A t_3) \in {\mathcal R}$ due to $t_1'' \underset{\thicksim}{\sqsubset}_{RS} t_2''$.\\

\noindent Case 2 $\frac{t_3 \stackrel{a}{\longrightarrow} t_3'}{t_1 \parallel_A t_3 \stackrel{a}{\longrightarrow} t_1\parallel_A t_3'}$ with $M_{{CLL}} \models t_1 \not\stackrel{\tau}{\longrightarrow}$ and $a\notin A$.

 Similar to Case 1.\\

\noindent Case 3 $\frac{t_1 \stackrel{a}{\longrightarrow} t_1',t_3 \stackrel{a}{\longrightarrow} t_3'}{t_1 \parallel_A t_3 \stackrel{a}{\longrightarrow} t_1'\parallel_A t_3'}$ with $a\in A$.

We invite the reader to check it.\\

\textbf{(RS4)}  Suppose $t_1 \parallel_A t_3 \notin F_{{CLL}}$.
    We will prove ${\mathcal I}(t_1 \parallel_A t_3 ) \subseteq {\mathcal I}(t_2 \parallel_A t_3 ) $.
    Assume that $a \in {\mathcal I}(t_1 \parallel_A t_3 )$. Then $t_1\parallel_A t_3 \stackrel{a}{\longrightarrow}_{{CLL}} t_4$ for some $t_4$. In the following, we consider three cases based on the form of the last rule applied in the proof tree of  $Strip({\mathcal P}_{CLL},M_{{CLL}}) \vdash t_1\parallel_A t_3 \stackrel{a}{\longrightarrow} t_4$.\\

\noindent    Case 1 $\frac{t_1 \stackrel{a}{\longrightarrow} t_1'}{t_1 \parallel_A t_3 \stackrel{a}{\longrightarrow} t_1'\parallel_A t_3}$ with $a\notin A$ and $M_{{CLL}} \models t_3 \not\stackrel{\tau}{\longrightarrow}$.

    So, $t_4 \equiv t_1'\parallel_A t_3$ and $t_1 \stackrel{a}{\longrightarrow} t_1' \in M_{{CLL}}$. Then, $ a \in {\mathcal I}(t_1 )$.
    Since $t_1 \parallel_A t_3 \notin F_{{CLL}}$, by Lemma~\ref{L:F_NORMAL}(3), we have $t_1 \notin F_{{CLL}}$.
    So, ${\mathcal I}(t_1 ) = {\mathcal I}(t_2 ) $ comes from $t_1 \underset{\thicksim}{\sqsubset}_{RS} t_2$.
    Then, $t_2 \stackrel{a}{\longrightarrow}_{{CLL}} t_2'$ for some $t_2'$.
    Since $a\notin A$ and $M_{{CLL}} \models t_3 \not\stackrel{\tau}{\longrightarrow}$, we get $\frac{t_2 \stackrel{a}{\longrightarrow} t_2'}{t_2 \parallel_A t_3 \stackrel{a}{\longrightarrow} t_2'\parallel_A t_3} \in Strip(R_{CLL},M_{{CLL}})$.
    So, $t_2 \parallel_A t_3 \stackrel{a}{\longrightarrow} t_2'\parallel_A t_3 \in M_{{CLL}}$.
    Hence, $a \in {\mathcal I}(t_2 \parallel_A t_3 )$.\\

\noindent    Case 2 $\frac{t_3 \stackrel{a}{\longrightarrow} t_3'}{t_1 \parallel_A t_3 \stackrel{a}{\longrightarrow}t_1 \parallel_A t_3'}$ with $a\notin A$ and $M_{{CLL}} \models t_1 \not\stackrel{\tau}{\longrightarrow}$.

Similar to Case 1.\\

\noindent    Case 3 $\frac{t_1 \stackrel{a}{\longrightarrow} t_1',t_3 \stackrel{a}{\longrightarrow} t_3'}{t_1 \parallel_A t_3 \stackrel{a}{\longrightarrow} t_1' \parallel_A t_3'}$ with $a \in A$.

    Thus, $t_1 \stackrel{a}{\longrightarrow} t_1'\in M_{CLL}$ and $t_3 \stackrel{a}{\longrightarrow} t_3' \in M_{CLL}$.
    Similar to Case 1, we get $t_2 \stackrel{a}{\longrightarrow}_{CLL} t_2'$ for some $t_2'$.
    Further, since $a\in A$, $t_2 \parallel_A t_3 \stackrel{a}{\longrightarrow} t_2'\parallel_A t_3' \in M_{CLL}$ follows from $\frac{t_2 \stackrel{a}{\longrightarrow} t_2',t_3 \stackrel{a}{\longrightarrow} t_3'}{t_2 \parallel_A t_3 \stackrel{a}{\longrightarrow} t_2' \parallel_A t_3'} \in Strip(R_{CLL}, M_{CLL})$. Thus, $a \in {\mathcal I}(t_2 \parallel_A t_3 )$.\\

    Similarly, we can show ${\mathcal I}(t_1 \parallel_A t_3 ) \supseteq {\mathcal I}(t_2 \parallel_A t_3 )$.
\end{proof}

\begin{lemma}[Monotonic w.r.t $\sqsubseteq_{RS}$]\label{L:CONGRUENCE_PRE}
For any $\odot \in \{\Box,\wedge,\parallel_A,\vee\}$, if $t_1 \sqsubseteq_{RS} t_2$ then $t_1 \odot t_3 \sqsubseteq_{RS} t_2 \odot t_3$ for each $t_3 \in T(\Sigma_{CLL})$.
\end{lemma}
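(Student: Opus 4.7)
I would split the argument into two groups of operators, handling $\odot \in \{\Box, \parallel_A, \wedge\}$ via the decomposition lemmas for $\stackrel{\epsilon}{\Longrightarrow}_F|$ and handling $\odot = \vee$ by a direct case split.

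For $\odot \in \{\Box, \parallel_A, \wedge\}$, assume $t_1 \sqsubseteq_{RS} t_2$ and pick any $u$ with $t_1 \odot t_3 \stackrel{\epsilon}{\Longrightarrow}_F| u$. By Lemma~\ref{L:TAU_I}(2) there exist stable $t_1', t_3'$ with $t_1 \stackrel{\epsilon}{\Longrightarrow}_F| t_1'$, $t_3 \stackrel{\epsilon}{\Longrightarrow}_F| t_3'$ and $u \equiv t_1' \odot t_3'$. Since $t_1 \sqsubseteq_{RS} t_2$, we get some $t_2'$ with $t_2 \stackrel{\epsilon}{\Longrightarrow}_F| t_2'$ and $t_1' \underset{\thicksim}{\sqsubset}_{RS} t_2'$. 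For $\odot \in \{\Box, \parallel_A\}$, Lemma~\ref{L:TAU_I}(1) gives $t_2 \odot t_3 \stackrel{\epsilon}{\Longrightarrow}_F| t_2' \odot t_3'$ immediately, and Lemma~\ref{L:CONGRUENCE_STABLE}(1)/(3) (using stability of $t_3'$) yields $t_1' \odot t_3' \underset{\thicksim}{\sqsubset}_{RS} t_2' \odot t_3'$, which is exactly what we need.

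The $\wedge$-case is the main obstacle, because applying Lemma~\ref{L:TAU_I}(1) for $\wedge$ additionally requires $t_2' \wedge t_3' \notin F_{CLL}$. I would obtain this via Lemma~\ref{L:RS_CON} as follows. From Lemma~\ref{L:CON_ID_I}(1) we have $t_1' \wedge t_3' \underset{\thicksim}{\sqsubset}_{RS} t_1'$ and $t_1' \wedge t_3' \underset{\thicksim}{\sqsubset}_{RS} t_3'$; composing with $t_1' \underset{\thicksim}{\sqsubset}_{RS} t_2'$ by transitivity gives $t_1' \wedge t_3' \underset{\thicksim}{\sqsubset}_{RS} t_2'$ and $t_1' \wedge t_3' \underset{\thicksim}{\sqsubset}_{RS} t_3'$. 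Since $u \equiv t_1' \wedge t_3' \notin F_{CLL}$, Lemma~\ref{L:RS_CON} then delivers $t_2' \wedge t_3' \notin F_{CLL}$. Hence Lemma~\ref{L:TAU_I}(1) applies to give $t_2 \wedge t_3 \stackrel{\epsilon}{\Longrightarrow}_F| t_2' \wedge t_3'$, and Lemma~\ref{L:CONGRUENCE_STABLE}(2) closes the case with $t_1' \wedge t_3' \underset{\thicksim}{\sqsubset}_{RS} t_2' \wedge t_3'$.

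For $\odot = \vee$ I would argue directly, since $t_1 \vee t_3$ is never stable. Given $t_1 \vee t_3 \stackrel{\epsilon}{\Longrightarrow}_F| u$, at least one $\tau$-step is required, and by Lemma~\ref{L:Basic_I}(2) the first such step must go to either $t_1$ or $t_3$; let $t_k$ ($k \in \{1,3\}$) be that choice, so $t_k \notin F_{CLL}$ and $t_k \stackrel{\epsilon}{\Longrightarrow}_F| u$. If $k=3$, the matching derivation is $t_2 \vee t_3 \stackrel{\tau}{\longrightarrow}_F t_3 \stackrel{\epsilon}{\Longrightarrow}_F| u$ (consistency of $t_2 \vee t_3$ follows from $t_3 \notin F_{CLL}$ via Lemma~\ref{L:F_NORMAL}(1)) and we witness with $u \underset{\thicksim}{\sqsubset}_{RS} u$. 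If $k=1$, apply $t_1 \sqsubseteq_{RS} t_2$ to obtain $t_2 \stackrel{\epsilon}{\Longrightarrow}_F| v$ with $u \underset{\thicksim}{\sqsubset}_{RS} v$; then $t_2 \notin F_{CLL}$ forces $t_2 \vee t_3 \notin F_{CLL}$ by Lemma~\ref{L:F_NORMAL}(1), and $t_2 \vee t_3 \stackrel{\tau}{\longrightarrow}_F t_2 \stackrel{\epsilon}{\Longrightarrow}_F| v$ is the required matching derivation. The only subtlety throughout is bookkeeping of which process terms stay outside $F_{CLL}$, which is handled uniformly by Lemma~\ref{L:F_NORMAL} and the definition of $\stackrel{\epsilon}{\Longrightarrow}_F|$.
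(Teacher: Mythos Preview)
Your argument is correct throughout. For $\odot \in \{\Box, \parallel_A\}$ and for $\vee$ it matches the paper's approach exactly (the paper spells out only $\parallel_A$ and leaves $\Box$ and $\vee$ to the reader, but your treatment is what is intended).

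The one genuine difference is the $\wedge$-case. You descend to the stable level via Lemma~\ref{L:TAU_I}(2), then use Lemma~\ref{L:CON_ID_I}(1) and Lemma~\ref{L:RS_CON} to secure the consistency condition $t_2' \wedge t_3' \notin F_{CLL}$ needed for Lemma~\ref{L:TAU_I}(1), and finally close with Lemma~\ref{L:CONGRUENCE_STABLE}(2). The paper instead stays at the $\sqsubseteq_{RS}$-level and dispatches the whole case in one line from Lemma~\ref{L:CON_ID_I}(2)(4): from $t_1 \wedge t_3 \sqsubseteq_{RS} t_1 \sqsubseteq_{RS} t_2$ and $t_1 \wedge t_3 \sqsubseteq_{RS} t_3$ one gets $t_1 \wedge t_3 \sqsubseteq_{RS} t_2 \wedge t_3$ directly. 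The paper's route is shorter because items (2) and (4) already package the decomposition, the consistency bookkeeping, and the recomposition; your route re-derives essentially the same content by hand. Both are valid, and your version has the virtue of making explicit where the extra hypothesis in Lemma~\ref{L:TAU_I}(1) for $\wedge$ is discharged.
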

\begin{proof}
  For $\odot = \wedge$, it immediately follows from Lemma~\ref{L:CON_ID_I}(2)(4).
  Next, we consider $\odot = \parallel_A$, the remaining parts raise no significantly different issues and omitted.
    Suppose  $t_1 \parallel_A t_3 \stackrel{\epsilon}{\Longrightarrow}_F|t_{13}$.
    It is enough to find $s$ such that $t_2 \parallel_A t_3 \stackrel{\epsilon}{\Longrightarrow}_F|s$ and $t_{13} \underset{\thicksim}{\sqsubset}_{RS}s$.
    By Lemma~\ref{L:TAU_I}(2), we have $t_1 \stackrel{\epsilon}{\Longrightarrow}_F|t_1'$, $t_3 \stackrel{\epsilon}{\Longrightarrow}_F|t_3'$ and $t_{13}\equiv t_1'\parallel_A t_3'$ for some $t_1',t_3'$.
     Moreover, it follows from $t_1 \sqsubseteq_{RS} t_2$ that $t_2 \stackrel{\epsilon}{\Longrightarrow}_F|t_2'$ and $t_1' \underset{\thicksim}{\sqsubset}_{RS} t_2'$ for some $t_2'$.
    We will check that $t_2' \parallel_A t_3'$ is exactly what we need.
    Clearly, by Lemma~\ref{L:CONGRUENCE_STABLE}(3), $t_1' \parallel_A t_3' \underset{\thicksim}{\sqsubset}_{RS} t_2'\parallel_A t_3'$.
    And, by Lemma~\ref{L:TAU_I}(1), we get $t_2 \parallel_A t_3 \stackrel{\epsilon}{\Longrightarrow}_F|t_2'\parallel_A t_3'$, as desired.
\end{proof}

We can now state an important result about ready simulation relation, which reveals that $\sqsubseteq_{RS}$ is substitutive under all our combinators.

\begin{proposition}[Precongruence]\label{L:CONGRUENCE}
  If $t_1 \sqsubseteq_{RS} t_2$ and $s_1 \sqsubseteq_{RS} s_2$, then
  \begin{enumerate}
    \item   $\alpha.t_1 \sqsubseteq_{RS} \alpha.t_2$ for each $\alpha \in Act_{\tau}$ and
    \item $t_1 \odot s_1 \sqsubseteq_{RS} t_2 \odot s_2$ for each $\odot \in \{\Box, \wedge, \parallel_A, \vee\}$.
  \end{enumerate}
\end{proposition}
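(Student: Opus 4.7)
The plan is to reduce both parts of Prop.~5.11 to material already assembled in the paper: one-sided monotonicity (Lemma~\ref{L:CONGRUENCE_PRE}), the commutativity laws for each binary operator, and the transitivity of $\sqsubseteq_{RS}$ established after Def.~\ref{D:READYSIMULATION_TERMS}.

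For part~(2), given hypotheses $t_1 \sqsubseteq_{RS} t_2$ and $s_1 \sqsubseteq_{RS} s_2$, I would chain two applications of Lemma~\ref{L:CONGRUENCE_PRE} together with the commutativity laws $u \odot v =_{RS} v \odot u$ (which, for the four operators at hand, are Prop.~\ref{S:DISJUNCTION}(1), \ref{L:EC2}(2), \ref{S:CONJUNCTION}(2), and \ref{S:PARALLEL}(2)):
\[
t_1 \odot s_1 \;\sqsubseteq_{RS}\; t_2 \odot s_1 \;=_{RS}\; s_1 \odot t_2 \;\sqsubseteq_{RS}\; s_2 \odot t_2 \;=_{RS}\; t_2 \odot s_2,
\]
and then close with transitivity of $\sqsubseteq_{RS}$. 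This handles all four operators uniformly.

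For part~(1), I would split on $\alpha$. The case $\alpha=\tau$ is immediate from Prop.~\ref{S:PREFIX}(2) and transitivity: $\tau.t_1 =_{RS} t_1 \sqsubseteq_{RS} t_2 =_{RS} \tau.t_2$. For $\alpha = a \in Act$ I would proceed directly by exhibiting the stable ready simulation
\[
\mathcal{R} \;\triangleq\; \{(a.u,\, a.v) : u \sqsubseteq_{RS} v\} \;\cup\; \underset{\thicksim}{\sqsubset}_{RS},
\]
and unfolding the definition of $\sqsubseteq_{RS}$: since $a.t_1$ is stable (Lemma~\ref{L:Basic_I}(1) plus Theorem~\ref{L:TAU_PURE}), every witness $a.t_1 \stackrel{\epsilon}{\Longrightarrow}_F| t'$ forces $t' \equiv a.t_1$, reducing the goal to $(a.t_1, a.t_2) \in \underset{\thicksim}{\sqsubset}_{RS}$ whenever $a.t_1$ is consistent. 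Checking the four clauses for the pair $(a.t_1, a.t_2)$: (RS1) is by Lemma~\ref{L:Basic_I}(1); (RS4) is immediate because $\mathcal{I}(a.t_i) = \{a\}$; (RS2) combines Lemma~\ref{L:F_NORMAL}(2) with (LTS2) (Lemma~\ref{L:LLTS_II}) to produce a stable consistent derivative of $t_1$ and then invokes $t_1 \sqsubseteq_{RS} t_2$ to conclude $t_2 \notin F_{CLL}$, whence $a.t_2 \notin F_{CLL}$; (RS3) uses stability of $a.t_1$ to decompose $a.t_1 \stackrel{a}{\Longrightarrow}_F| t'$ as $a.t_1 \stackrel{a}{\longrightarrow}_F t_1 \stackrel{\epsilon}{\Longrightarrow}_F| t'$, applies $t_1 \sqsubseteq_{RS} t_2$ to obtain $s'$ with $t_2 \stackrel{\epsilon}{\Longrightarrow}_F| s'$ and $t' \underset{\thicksim}{\sqsubset}_{RS} s'$, and reassembles $a.t_2 \stackrel{a}{\longrightarrow}_F t_2 \stackrel{\epsilon}{\Longrightarrow}_F| s'$, with $(t',s') \in \underset{\thicksim}{\sqsubset}_{RS} \subseteq \mathcal{R}$.

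I do not expect any real obstacle: once Lemma~\ref{L:CONGRUENCE_PRE} and the commutativity laws are in place, part~(2) is purely algebraic, and part~(1) is a short ready-simulation verification whose only mildly delicate point is confirming that the consistency bookkeeping along the $\stackrel{\epsilon}{\Longrightarrow}_F$-chains (needed for both (RS2) and (RS3)) is supplied by Lemma~\ref{L:F_NORMAL}(2) and Lemma~\ref{L:LLTS_II}.
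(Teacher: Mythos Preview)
Your proposal is correct and follows essentially the same route as the paper. For part~(2) you reproduce the paper's argument verbatim (commutativity laws plus Lemma~\ref{L:CONGRUENCE_PRE} plus transitivity of $\sqsubseteq_{RS}$); for part~(1) the paper simply says the case split on $\alpha$ is ``straightforward and omitted,'' and your filled-in details---Prop.~\ref{S:PREFIX}(2) for $\alpha=\tau$, and the explicit stable ready simulation $\mathcal{R}$ for $\alpha\in Act$ with the consistency bookkeeping via Lemma~\ref{L:F_NORMAL}(2) and Lemma~\ref{L:LLTS_II}---are exactly what that omission is hiding.
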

\begin{proof}
\noindent \textbf{(1)} The argument splits into two cases depending on the kind of $\alpha$ (visible or invisible). The proof is straightforward and omitted.
\\

\noindent \textbf{(2)} By Prop.~\ref{S:DISJUNCTION}(1), \ref{L:EC2}(2), \ref{S:CONJUNCTION}(2)  and \ref{S:PARALLEL}(2), $\odot$ satisfies commutative law for each $\odot \in \{\Box,\wedge,\parallel_A, \vee\}$.
Then, it immediately follows from the transitivity of $\sqsubseteq_{RS}$ and Lemma~\ref{L:CONGRUENCE_PRE}.
\end{proof}

Hitherto we have only considered (in)equational laws involving one operator alone.
Next, we shall deal with a few of laws which refer to different operators in one (in)equation.
More such laws will be treated in the next section when establishing the soundness of $AX_{CLL}$.

\begin{proposition}[Distributive]\label{S:DISTRIBUTIVE}
       $t_1 \odot( t_2 \vee t_3) =_{RS} (t_1 \odot t_2) \vee (t_1 \odot t_3)$ for each  $\odot \in \{ \Box, \parallel_A,\wedge \}$.
\end{proposition}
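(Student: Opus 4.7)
The plan is to prove both inequalities $t_1\odot(t_2\vee t_3)\sqsubseteq_{RS}(t_1\odot t_2)\vee(t_1\odot t_3)$ and its converse, for each $\odot\in\{\Box,\parallel_A,\wedge\}$. In every case, the matching stable term on the right-hand side will turn out to be \emph{identical} to the term reached on the left, so the ready-simulation obligation collapses to reflexivity ($t\underset{\thicksim}{\sqsubset}_{RS}t$), and the only real work is exhibiting the matching $\stackrel{\epsilon}{\Longrightarrow}_F|$ sequence.

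For the direction $t_1\odot(t_2\vee t_3)\sqsubseteq_{RS}(t_1\odot t_2)\vee(t_1\odot t_3)$, I would start from $t_1\odot(t_2\vee t_3)\stackrel{\epsilon}{\Longrightarrow}_F|t$ and invoke Lemma~\ref{L:DIS} to locate a position in the stabilization sequence where $(t_2\vee t_3)$ is resolved to some $t_k$ ($k\in\{2,3\}$). This yields $t_1'$ with $t_1\stackrel{\epsilon}{\Longrightarrow}_F t_1'$ and $t_1'\odot t_k\stackrel{\epsilon}{\Longrightarrow}_F|t$. Applying Lemma~\ref{L:TAU_I}(2) decomposes this into $t_1'\stackrel{\epsilon}{\Longrightarrow}_F|t_1''$, $t_k\stackrel{\epsilon}{\Longrightarrow}_F|t_k'$ with $t\equiv t_1''\odot t_k'$, hence $t_1\stackrel{\epsilon}{\Longrightarrow}_F|t_1''$ by concatenation. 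Lemma~\ref{L:TAU_I}(1) then gives $t_1\odot t_k\stackrel{\epsilon}{\Longrightarrow}_F|t$ (with the side condition $t\notin F_{CLL}$ supplying what is needed for $\odot=\wedge$). Prepending the $\tau$-step $(t_1\odot t_2)\vee(t_1\odot t_3)\stackrel{\tau}{\longrightarrow}_{CLL}t_1\odot t_k$ and noting that the source is consistent by Lemma~\ref{L:F_NORMAL}(1) (because the target $t_1\odot t_k$ is consistent) produces the required $(t_1\odot t_2)\vee(t_1\odot t_3)\stackrel{\epsilon}{\Longrightarrow}_F|t$.

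For the converse direction, assume $(t_1\odot t_2)\vee(t_1\odot t_3)\stackrel{\epsilon}{\Longrightarrow}_F|t$. By Lemma~\ref{L:Basic_I}(2) the opening $\tau$-step goes to $t_1\odot t_k$ for some $k\in\{2,3\}$, leaving $t_1\odot t_k\stackrel{\epsilon}{\Longrightarrow}_F|t$. Consider the $\tau$-transition $t_1\odot(t_2\vee t_3)\stackrel{\tau}{\longrightarrow}_{CLL}t_1\odot t_k$ induced by $t_2\vee t_3\stackrel{\tau}{\longrightarrow}_{CLL}t_k$. Since $\tau\in\mathcal{I}(t_1\odot(t_2\vee t_3))$ and the $\tau$-derivative $t_1\odot t_k$ is consistent, Lemma~\ref{L:F_TAU_I} immediately gives $t_1\odot(t_2\vee t_3)\notin F_{CLL}$, so prepending this step to the existing sequence produces $t_1\odot(t_2\vee t_3)\stackrel{\epsilon}{\Longrightarrow}_F|t$, as needed.

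The main obstacle is keeping the consistency bookkeeping tidy for $\odot=\wedge$. Lemma~\ref{L:F_NORMAL}(3), which would let one read off $t_1\odot(t_2\vee t_3)\notin F_{CLL}$ directly from consistency of the components, is available only for $\odot\in\{\Box,\parallel_A\}$; for $\wedge$ the converse of Lemma~\ref{L:F_NORMAL}(4) fails. Thus the argument must route through Lemma~\ref{L:F_TAU_I}, using the $\tau$-transition to a known consistent derivative to certify consistency of the whole conjunction. Likewise, the application of Lemma~\ref{L:TAU_I}(1) in the first direction requires its $\wedge$-specific side condition, which is fortunately furnished for free by the stability and consistency of the target $t$. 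Aside from these checks, the proof is routine and, in particular, does not require a fresh stable ready simulation relation.
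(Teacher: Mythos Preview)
Your proposal is correct. For the forward direction you follow essentially the paper's approach: invoke Lemma~\ref{L:DIS} to locate where the disjunction resolves, then build the matching path from the right-hand side. Your reconstruction via Lemma~\ref{L:TAU_I}(1)(2) is a minor variant of the paper's, which instead writes out the path $(t_1\odot t_2)\vee(t_1\odot t_3)\stackrel{\tau}{\longrightarrow}_{CLL}t_1\odot t_k\stackrel{\epsilon}{\Longrightarrow}_{CLL}t_1'\odot t_k\stackrel{\epsilon}{\Longrightarrow}_F|t_4$ directly and then upgrades the prefix to $\stackrel{\epsilon}{\Longrightarrow}_F$ using Lemma~\ref{L:F_NORMAL}(1)(3); both routes are fine, and yours has the virtue that the $\wedge$ case needs no separate consistency bookkeeping for the intermediate segment.

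For the converse direction your argument is genuinely different from the paper's. The paper proceeds algebraically: from $t_k\sqsubseteq_{RS}t_2\vee t_3$ (Prop.~\ref{S:DISJUNCTION}(1)(5)) and precongruence (Prop.~\ref{L:CONGRUENCE}(2)) it obtains $t_1\odot t_k\sqsubseteq_{RS}t_1\odot(t_2\vee t_3)$ for $k=2,3$, and then combines these via Prop.~\ref{L:CONGRUENCE}(2) and Prop.~\ref{S:DISJUNCTION}(3). You instead give a direct operational argument, peeling off the opening $\tau$-step and appealing to Lemma~\ref{L:F_TAU_I} to certify consistency of $t_1\odot(t_2\vee t_3)$ uniformly for all three operators. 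Your route is more elementary---it does not depend on the precongruence results---and makes the $\wedge$ case completely transparent; the paper's route is shorter once that machinery is in hand and illustrates how the algebraic laws compose.
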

\begin{proof}
    We consider the case where $\odot = \Box$, the others are similar.

    Firstly, we prove $t_1 \Box( t_2 \vee t_3) \sqsubseteq_{RS} (t_1 \Box t_2) \vee (t_1 \Box t_3)$.
    Suppose $t_1 \Box( t_2 \vee t_3) \stackrel{\epsilon}{\Longrightarrow}_F| t_4$.
    Then, by Lemma~\ref{L:DIS}, there exists $t_1'$ and a sequence $t_1 \Box ( t_2 \vee t_3) \equiv T_0 \stackrel{\tau}{\longrightarrow}_F,\dots,\stackrel{\tau}{\longrightarrow}_F T_n  \equiv t_4(n > 0)$ such that $t_1\stackrel{\epsilon}{\Longrightarrow}_F t_1'$, $T_j\equiv t_1' \Box( t_2 \vee t_3)$ and either $T_{j+1}\equiv t_1' \Box t_2 $ or $T_{j+1}\equiv t_1' \Box  t_3$ for some $j<n$.
    W.l.o.g, assume that $T_{j+1}\equiv t_1' \Box t_2$.
    So, by Lemma~\ref{L:STABILIZATION} and \ref{L:Basic_I}(2), $(t_1 \Box t_2) \vee (t_1 \Box t_3) \stackrel{\tau}{\longrightarrow}_{{CLL}} t_1 \Box t_2 \stackrel{\epsilon}{\Longrightarrow}_{{CLL}} t_1' \Box t_2 \stackrel{\epsilon}{\Longrightarrow}_F| t_4$.
    Further, by Lemma~\ref{L:F_NORMAL}(1)(3), $(t_1 \Box t_2) \vee (t_1 \Box t_3) \stackrel{\epsilon}{\Longrightarrow}_F| t_4$.
    Then, it follows from $t_4\underset{\thicksim}{\sqsubset}_{RS} t_4$ that $t_1 \Box( t_2 \vee t_3) \sqsubseteq_{RS} (t_1 \Box t_2) \vee (t_1 \Box t_3)$.

    Secondly, we prove $(t_1 \Box t_2) \vee (t_1 \Box t_3) \sqsubseteq_{RS} t_1 \Box(t_2 \vee t_3)$.
    By Prop.~\ref{S:DISJUNCTION}(1)(5), we have  $t_2 \sqsubseteq_{RS} t_2 \vee t_3$ and $t_3 \sqsubseteq_{RS} t_2 \vee t_3$.
    Further, by Prop.~\ref{L:CONGRUENCE}(2), we get $t_1 \Box t_2 \sqsubseteq_{RS} t_1 \Box (t_2 \vee t_3)$ and $t_1 \Box t_3 \sqsubseteq_{RS} t_1 \Box (t_2 \vee t_3)$. Then, $(t_1 \Box t_2 )\vee (t_1 \Box t_3) \sqsubseteq_{RS} t_1 \Box (t_2 \vee t_3)$ comes from Prop.~\ref{L:CONGRUENCE}(2) and \ref{S:DISJUNCTION}(3).
\end{proof}

\begin{proposition}\label{S:SPECIAL_I}
$\alpha.t_1 \Box \alpha.t_2 \sqsubseteq_{RS} \alpha.(t_1 \vee t_2) $ for each $\alpha \in Act_{\tau}$.
\end{proposition}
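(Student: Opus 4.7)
The approach is to reduce the proof to a single stable ready simulation. Consider first the case $\alpha \in Act$: both sides are then stable by Lemma~\ref{L:Basic_I}(1) and Lemma~\ref{L:STABILIZATION}. If $\alpha.t_1 \Box \alpha.t_2 \in F_{CLL}$, then by the definition of $\stackrel{\epsilon}{\Longrightarrow}_F|$ (which forbids $F_{CLL}$-states along the way) no stable consistent derivative exists, and $\sqsubseteq_{RS}$ holds vacuously. Otherwise, Lemma~\ref{L:F_NORMAL}(2)(3) forces $t_1, t_2 \notin F_{CLL}$; Lemma~\ref{L:F_NORMAL}(1)(2) then gives $\alpha.(t_1 \vee t_2) \notin F_{CLL}$. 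Since both sides are stable and consistent, each is its own unique $\stackrel{\epsilon}{\Longrightarrow}_F|$-witness, and proving $\sqsubseteq_{RS}$ reduces to proving $\alpha.t_1 \Box \alpha.t_2 \underset{\thicksim}{\sqsubset}_{RS} \alpha.(t_1 \vee t_2)$.

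For the stable-ready-simulation step, I would introduce the candidate
\[\mathcal{R} \triangleq \{(\beta.s_1 \Box \beta.s_2,\ \beta.(s_1 \vee s_2)) : \beta \in Act,\ s_1, s_2 \in T(\Sigma_{CLL})\} \cup Id_S\]
and verify (RS1)--(RS4) for pairs of the first shape. (RS1), (RS2), (RS4) are straightforward applications of Lemma~\ref{L:STABILIZATION}, Lemma~\ref{L:F_NORMAL}, Lemma~\ref{L:Basic_I}(1) and Lemma~\ref{L:Basic_II}(1); in particular ${\mathcal I}(\beta.s_1 \Box \beta.s_2) = \{\beta\} = {\mathcal I}(\beta.(s_1 \vee s_2))$. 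For (RS3), suppose $\beta.s_1 \Box \beta.s_2 \stackrel{a}{\Longrightarrow}_F| t'$; stability of the source forces a factorisation $\beta.s_1 \Box \beta.s_2 \stackrel{a}{\longrightarrow}_F u \stackrel{\epsilon}{\Longrightarrow}_F| t'$. Since $\beta.s_i \not\stackrel{\tau}{\longrightarrow}_{CLL}$ (Lemma~\ref{L:Basic_I}(1), as $\beta \in Act$), Lemma~\ref{L:Basic_II}(1) combined with Lemma~\ref{L:Basic_I}(1) pins $a = \beta$ and $u \equiv s_i$ for some $i \in \{1,2\}$, with $s_i \stackrel{\epsilon}{\Longrightarrow}_F| t'$. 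The matching derivation on the right is $\beta.(s_1 \vee s_2) \stackrel{\beta}{\longrightarrow}_F s_1 \vee s_2 \stackrel{\tau}{\longrightarrow}_F s_i \stackrel{\epsilon}{\Longrightarrow}_F| t'$ (consistency of each intermediate state follows from $s_i \notin F_{CLL}$ via Lemma~\ref{L:F_NORMAL}(1)(2)), and $(t', t') \in Id_S \subseteq \mathcal{R}$ closes the verification.

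For $\alpha = \tau$ I would apply Prop.~\ref{S:PREFIX}(2) together with precongruence (Prop.~\ref{L:CONGRUENCE}) to re-express the claim in terms of $t_1 \Box t_2$ and $t_1 \vee t_2$, and then match stable derivatives using Lemma~\ref{L:TAU_I}(2) on the left and $Ra_9,Ra_{10}$ on the right. The principal obstacle, in either case of $\alpha$, is clause (RS3): the left-hand side performs the observable in one step, while the right-hand side must first unfold the prefix and then resolve the internal choice before emitting the action, so one has to commit to the correct disjunct precisely at the point where the simulation testifies the left's one-step move; the negative premises of $Ra_2$ and $Ra_3$ are what force the initial left-hand action to be exactly $\beta$, which is what enables the choice of disjunct on the right.
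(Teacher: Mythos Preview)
For visible $\alpha$ your construction is correct, though it differs from the paper's route. The paper argues purely algebraically: $t_i \sqsubseteq_{RS} t_1 \vee t_2$ by Prop.~\ref{S:DISJUNCTION}(1)(5), hence $\alpha.t_i \sqsubseteq_{RS} \alpha.(t_1 \vee t_2)$ by precongruence (Prop.~\ref{L:CONGRUENCE}(1)), whence $\alpha.t_1 \Box \alpha.t_2 \sqsubseteq_{RS} \alpha.(t_1 \vee t_2) \Box \alpha.(t_1 \vee t_2) =_{RS} \alpha.(t_1 \vee t_2)$ via precongruence again and idempotence of $\Box$ (Prop.~\ref{L:EC2}(6)). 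Your explicit witnessing relation is more elementary and self-contained, avoiding the precongruence machinery; the paper's derivation is three lines and needs no case split on $\alpha$.

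For $\alpha = \tau$ your sketch has a genuine gap. After reducing via Prop.~\ref{S:PREFIX}(2) and precongruence the claim becomes $t_1 \Box t_2 \sqsubseteq_{RS} t_1 \vee t_2$, and this fails: take $t_1 \equiv a.0$, $t_2 \equiv b.0$ with $a \neq b$. The only $s$ with $t_1 \Box t_2 \stackrel{\epsilon}{\Longrightarrow}_F| s$ is $a.0 \Box b.0$, while the stable consistent derivatives of $t_1 \vee t_2$ are $a.0$ and $b.0$; since $a.0 \Box b.0 \notin F_{CLL}$ and $\mathcal I(a.0 \Box b.0) = \{a,b\}$ equals neither $\{a\}$ nor $\{b\}$, (RS4) cannot be met. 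The same instance in fact refutes the proposition itself when $\alpha = \tau$, so no repair of your $\tau$-case is possible; the paper's own argument breaks at the same spot, because Prop.~\ref{L:EC2}(6) needs the duplicated term to be stable and $\tau.(t_1 \vee t_2)$ is not. The statement should be read with $\alpha \in Act$, which is all that is used downstream (axiom $DS4$ and Lemma~\ref{L:SP3} are both restricted to visible prefixes).
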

\begin{proof}
   $ t_1 \sqsubseteq_{RS} t_1 \vee t_2\;\text{and}\;t_2 \sqsubseteq_{RS} t_1 \vee t_2$  \qquad \;\;\;\; (by Prop.~\ref{S:DISJUNCTION}(1)(5))\\
 $\Rightarrow \alpha.t_1 \sqsubseteq_{RS} \alpha.(t_1 \vee t_2)\;\text{and}\;\alpha.t_2 \sqsubseteq_{RS} \alpha.(t_1 \vee t_2)$  (by Prop.~\ref{L:CONGRUENCE}(1))\\
 $\Rightarrow \alpha.t_1 \Box \alpha.t_2 \sqsubseteq_{RS} \alpha.(t_1 \vee t_2)$ \qquad \qquad \qquad \qquad(by Prop.~\ref{L:CONGRUENCE}(2) and \ref{L:EC2}(6)).
\end{proof}

A natural problem arises at this point, that is, whether the inequation below holds
\[\alpha.(t_1 \vee t_2) \sqsubseteq_{RS} \alpha.t_1 \Box \alpha.t_2. \tag{DS}\]
The answer is negative.
For instance, consider $t_1 \equiv \bot$ and $t_2 \equiv 0$.
By Lemma~\ref{L:F_NORMAL}, we have $a.(t_1\vee t_2)\equiv a.(\bot \vee 0)\notin F_{CLL}$ and $a.t_1 \Box a.t_2 \equiv a.\bot \Box a.0 \in F_{CLL}$.
Thus, it doesn't hold that $a.(\bot \vee 0) \sqsubseteq_{RS} a.\bot \Box a.0$.

We conclude this section with the next proposition, which establishes a necessary and sufficient condition for the inequation (DS) with $\alpha \in  Act$ to be true.
To this end, we introduce the notion below.

\begin{mydefn}[Uniform w.r.t $F_{CLL}$]
  Two process terms $t$ and $s$ are said to be uniform w.r.t $F_{CLL}$ if $t\in F_{CLL}$ iff $s \in F_{CLL}$.
\end{mydefn}

\begin{proposition}\label{S:SPECIAL}
For each $a \in Act$,
 $a.(t_1 \vee t_2) \sqsubseteq_{RS} a.t_1 \Box a.t_2 $ iff
 $t_1$ and $t_2$ are uniform w.r.t $F_{CLL}$.
\end{proposition}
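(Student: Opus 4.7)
My plan is to treat the two directions separately, using Lemma~\ref{L:F_NORMAL} as the main consistency calculus together with Lemmas~\ref{L:Basic_I}(1)(2) and \ref{L:Basic_II}(1) for the transition shape of prefix, disjunction, and external choice. The key observation that drives everything is that the left-hand side $a.(t_1\vee t_2)$ lies in $F_{CLL}$ exactly when \emph{both} $t_i$ do, whereas the right-hand side $a.t_1\Box a.t_2$ lies in $F_{CLL}$ exactly when \emph{either} $t_i$ does; the biconditional collapses precisely to the gap between these two characterizations.

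For necessity, I would argue by contraposition. If uniformity fails, WLOG $t_1\in F_{CLL}$ and $t_2\notin F_{CLL}$, then Lemma~\ref{L:F_NORMAL}(1)(2) gives $a.(t_1\vee t_2)\notin F_{CLL}$, so the trivial stable observation $a.(t_1\vee t_2)\stackrel{\epsilon}{\Longrightarrow}_F|a.(t_1\vee t_2)$ must be matched on the right. But Lemma~\ref{L:F_NORMAL}(2)(3) yields $a.t_1\Box a.t_2\in F_{CLL}$, so no $s'$ can satisfy $a.t_1\Box a.t_2\stackrel{\epsilon}{\Longrightarrow}_F|s'$ at all, blocking the matching clause in the definition of $\sqsubseteq_{RS}$.

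For sufficiency, assume uniformity. The case where both $t_i\in F_{CLL}$ is vacuous: $a.(t_1\vee t_2)\in F_{CLL}$ by Lemma~\ref{L:F_NORMAL}(1)(2), so no observation of the form $a.(t_1\vee t_2)\stackrel{\epsilon}{\Longrightarrow}_F|t'$ exists. In the substantive case both $t_i\notin F_{CLL}$, and I would witness the inequation by the candidate relation
\[\mathcal{R}\triangleq\{(a.(t_1\vee t_2),\,a.t_1\Box a.t_2)\}\cup Id_S\]
and verify (RS1)--(RS4) for the new pair. Conditions (RS1), (RS2) and (RS4) are immediate: both compounds are stable (Lemma~\ref{L:STABILIZATION}, noting that $a.t_i$ is never $\tau$-enabled by Lemma~\ref{L:Basic_I}(1)), both are consistent (Lemma~\ref{L:F_NORMAL}(2)(3)), and both ready sets equal $\{a\}$ (Lemma~\ref{L:Basic_I}(1) on the left; Lemma~\ref{L:Basic_II}(1) on the right, using stability of each prefix branch).

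The one nontrivial check, and the main obstacle, is (RS3). Given $a.(t_1\vee t_2)\stackrel{a}{\Longrightarrow}_F|t'$, stability of the source together with Lemma~\ref{L:Basic_I}(1) forces the observation to factor through $t_1\vee t_2\stackrel{\epsilon}{\Longrightarrow}_F|t'$; since $t_1\vee t_2$ is not stable, Lemma~\ref{L:Basic_I}(2) dictates that the first $\tau$-step must enter some disjunct $t_i$, yielding $t_i\stackrel{\epsilon}{\Longrightarrow}_F|t'$. By Lemma~\ref{L:Basic_II}(1) the same $t_i$ is reachable as $a.t_1\Box a.t_2\stackrel{a}{\longrightarrow}_{CLL}t_i$, and consistency of the entire right-hand trail is inherited (the source is consistent by the case hypothesis, the pivot $t_i$ is consistent by assumption, and the trailing $\tau$-path is exactly the one just extracted). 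Hence $a.t_1\Box a.t_2\stackrel{a}{\Longrightarrow}_F|t'$ with $(t',t')\in Id_S\subseteq\mathcal{R}$, giving $a.(t_1\vee t_2)\underset{\thicksim}{\sqsubset}_{RS}a.t_1\Box a.t_2$; lifting this to $\sqsubseteq_{RS}$ is immediate since both compounds are already stable and consistent. The whole argument is really bookkeeping on $F_{CLL}$-propagation, with the one delicate point being to audit that consistency survives along the right-hand trail produced by the matching.
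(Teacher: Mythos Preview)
Your proposal is correct and follows essentially the same approach as the paper: contraposition for necessity via the $F_{CLL}$-gap, and for sufficiency the same witness relation $\mathcal{R}=\{(a.(t_1\vee t_2),\,a.t_1\Box a.t_2)\}\cup Id_S$ with the same (RS3) chase through the disjunct. The only organizational difference is that you case-split upfront on whether both $t_i$ lie in $F_{CLL}$ (handling the inconsistent case vacuously and using the consistent-case hypothesis to discharge (RS2) trivially), whereas the paper keeps the cases unified and instead invokes uniformity directly inside the (RS2) check; both routes are equivalent.
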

\begin{proof}
\noindent (Left implies Right) Suppose $t_1$ and $t_2$ are not uniform w.r.t $F_{CLL}$. W.l.o.g, assume that $t_1 \in F_{CLL}$ and $t_2\notin F_{CLL}$. By Lemma~\ref{L:F_NORMAL}, we get $a.(t_1 \vee t_2) \notin F_{CLL}$ and $a.t_1 \Box a.t_2 \in F_{CLL}$. Hence, $a.(t_1 \vee t_2) \not\sqsubseteq_{RS} a.t_1 \Box a.t_2 $.\\

\noindent (Right implies Left)
Since  $a \in Act$, by Lemma~\ref{L:Basic_I}(1) and \ref{L:STABILIZATION}, both $a.t_1 \Box a.t_2 $ and $ a.(t_1 \vee t_2)$ are stable. So, it is enough to prove  $a.(t_1 \vee t_2) \underset{\thicksim}{\sqsubset}_{RS}  a.t_1 \Box a.t_2 $.
Put \[{\mathcal R}\triangleq\{(a.(t_1 \vee t_2), a.t_1 \Box a.t_2)\} \cup Id_S.\]
We need to show that $\mathcal R$ is a stable ready simulation relation.
It is trivial to check that (RS1)-(RS4) hold for each pair in $Id_S$.
In the following, we treat the pair $(a.(t_1 \vee t_2),a.t_1 \Box a.t_2)$.
By Lemma~\ref{L:Basic_I}(1), \ref{L:Basic_II}(1) and \ref{L:STABILIZATION}, this pair satisfies (RS1) and (RS4).

\textbf{(RS2)} Suppose $a.t_1 \Box a.t_2 \in F_{CLL}$. By Lemma~\ref{L:F_NORMAL}, it follows that $t_i \in F_{CLL}$ for some $i\in \{1,2\}$. Further, since $t_1$ and $t_2$ are uniform w.r.t $F_{CLL}$, we obtain $t_1 \in F_{CLL}$ and $t_2 \in F_{CLL}$. Hence, by Lemma~\ref{L:F_NORMAL} again, $a.(t_1 \vee t_2) \in F_{CLL}$.

\textbf{(RS3)} Suppose $a.(t_1 \vee t_2) \stackrel{b}{\Longrightarrow}_F|t'$. Since $(t',t')\in {\mathcal R}$, it suffices to prove that $a.t_1 \Box a.t_2 \stackrel{b}{\Longrightarrow}_F| t'$.
Since $a.(t_1 \vee t_2)$ is stable, $a.(t_1 \vee t_2) \stackrel{b}{\longrightarrow}_F t'' \stackrel{\epsilon}{\Longrightarrow}_F| t'$ for some $t''$.
So, by Lemma~\ref{L:Basic_I}(1), we get $b= a$ and $t'' \equiv t_1 \vee t_2$.
Further, by Lemma~\ref{L:Basic_I}(2), either $t_1 \vee t_2 \stackrel{\tau}{\longrightarrow}_F t_1 \stackrel{\epsilon}{\Longrightarrow}_F| t'$ or $t_1 \vee t_2 \stackrel{\tau}{\longrightarrow}_F t_2 \stackrel{\epsilon}{\Longrightarrow}_F| t'$.
W.l.o.g, we consider the first alternative.
Then, it immediately follows that $a.t_1 \Box a.t_2 \stackrel{a}{\longrightarrow}_{{CLL}} t_1  \stackrel{\epsilon}{\Longrightarrow}_F| t'$.
Moreover, since $a.(t_1 \vee t_2)\notin F_{CLL}$, by (RS2), we get $a.t_1 \Box a.t_2 \notin F_{CLL}$.
Hence, $a.t_1 \Box a.t_2 \stackrel{a}{\Longrightarrow}_F| t'$, as desired.
\end{proof}

Notice that the situation is different if $\alpha = \tau$.
In such case, the inequation (DS) does not always hold even if $t_1$ and $t_2$ are  uniform w.r.t $F_{CLL}$.
As a simple example, consider $t_1\equiv a.0$ and $t_2\equiv b.0$ with $a \neq b$.
Clearly, they are uniform w.r.t $F_{CLL}$.
Moreover, $\tau.(a.0\vee b.0)\stackrel{\epsilon}{\Longrightarrow}_F|a.0$,  and $a.0 \Box b.0$ is the unique process term such that $\tau.a.0\Box \tau.b.0 \stackrel{\epsilon}{\Longrightarrow}_F|a.0\Box b.0$.
Hence, it follows from $a.0 \notin F_{CLL}$ and ${\mathcal I}(a.0)\neq {\mathcal I}(a.0\Box b.0)$ that $a.0 \not \underset{\thicksim}{\sqsubset}_{RS} a.0\Box b.0$.
Thus, $\tau.(a.0\vee b.0) \not\sqsubseteq_{RS} \tau.a.0 \Box \tau.b.0$.

\section{Axiomatic System $AX_{CLL}$}
Section~5 has developed the behavioral theory of CLL on the level of semantics. This section will provide an algebraic and axiomatic approach to reason about behavior.
We will propose an axiomatic system to capture and investigate the operators in CLL through (in)equational laws, and establish its soundness and ground-completeness w.r.t ready simulation.

\subsection{$AX_{CLL}$}
In order to introduce the axiomatic system $AX_{CLL}$, a few preliminary definitions are given below.

\begin{mydefn}[Basic Process Term] \label{D:BPT}
The basic process terms are defined by BNF below
\[t::=0\;|\;(\alpha.t)\;|\;(t\vee t)\;|\;(t \Box t)\;|\;(t\parallel_A t)\]
where $\alpha \in Act_{\tau}$ and $A \subseteq Act$. We denote $T(\Sigma_B)$ as the set of all basic process terms.
\end{mydefn}

\begin{mydefn}\label{D:GEC}
Let $<t_0,t_1,\dots,t_{n-1}>$ be a finite sequence of process terms with $n \geq 0$. We define the general external choice $\underset{i<n}\square t_i$ by recursion:
\begin{enumerate}
  \item $\underset{i<0}\square t_i \triangleq 0$,
  \item $\underset{i<1}\square t_i \triangleq t_0$,
  \item $\underset{i<k+1}\square t_i \triangleq (\underset{i<k}\square t_i) \Box t_k$ for $k \geq 1$.
\end{enumerate}
\end{mydefn}

Moreover, given a finite sequence $<t_0,\dots,t_{n-1}>$ and $S \subseteq \{t_0,\dots,t_{n-1}\}$, the general external choice $\square S$ is defined as $\square S \triangleq \underset{j<|S|}\square t_j'$, where the sequence $<t_0',\dots,t_{|S|-1}'>$ is the restriction of $<t_0,\dots,t_{n-1}>$ to $S$.\\

In fact, up to $=_{RS}$ (or, =, see below), the order and grouping of terms in $\underset{i<n}{\square}t_i$ may be ignored by virtue of Prop.~\ref{L:EC2}(2)(4) (axioms $EC1$ and $EC2$ below, respectively).

\begin{mydefn}[Injective in Prefixes]
  A process term $\underset{i<n}{\square}\alpha_i.t_i$ is said to be injective in prefixes if $\alpha_i \neq \alpha_j$ for any $i \neq j < n$.
\end{mydefn}

We now can present the axiomatic system $AX_{CLL}$. As usual, $AX_{CLL}$ has two parts: axioms and inference rules.\\

\noindent \textbf{Axioms}

 Unless otherwise stated, we shall assume variables in axioms below to be in range of $T(\Sigma_{CLL})$. As usual, $t=t'$ means $t\leqslant t'$ and $t'\leqslant t$.\\

\noindent  \[  EC1   \;\;   x \Box y  = y \Box x \qquad\qquad\qquad\,
   DI1   \;\;     x \vee y  = y \vee x \qquad\qquad\qquad\qquad\qquad\qquad\;\;\]
  \[  EC2   \;\;   (x \Box y)\Box z  = x \Box (y \Box z) \qquad
   DI2   \;\;    x \vee (y \vee z ) = (x\vee y) \vee z \qquad\qquad\qquad\qquad\]
  \[  EC3   \;\;   x \Box x  = x \qquad\qquad\qquad\;\;\;\;\,
   DI3  \;\;     x \vee x  = x \qquad\qquad\qquad\qquad\qquad\qquad\qquad\;\]
  \[  EC4   \;\;   x \Box 0  = x \qquad\qquad\qquad\;\;\;\;\;
   DI4   \;\;     x \vee \bot  = x \qquad\qquad\qquad\qquad\qquad\qquad\qquad\]
  \[  EC5   \;\;   x \Box  \bot = \bot \qquad\qquad\qquad\;\;\;\mspace{1mu}
   DI5  \;\;     x  \leqslant x \vee y    \qquad\qquad\qquad\qquad\qquad\qquad\qquad\;\;\]
  \[  CO1   \;\;   x\wedge y  = y \wedge x \qquad\qquad\;\;\;\;\,
   DS1  \;\;     x \Box (y\vee z)  \leqslant (x \Box y) \vee (x \Box z)   \qquad\qquad\qquad\; \]
  \[  CO2   \;\;   (x\wedge y)\wedge z  = x \wedge (y \wedge z) \;\;\;\;
   DS2  \;\;     x \wedge (y\vee z)  \leqslant (x \wedge y) \vee (x \wedge z)  \qquad\qquad\qquad   \]
  \[  CO3   \;\;   x\wedge x  = x \qquad\qquad\qquad\;\;\;\;\;
   DS3  \;\;     x \parallel_A (y\vee z)  \leqslant (x \parallel_A y) \vee (x \parallel_A z)  \qquad\qquad\qquad    \]
  \[  CO4   \;\;   x\wedge \bot  = \bot \qquad\qquad\qquad\;\;
   DS4  \;\;     a.(x \vee y)  \leqslant a.x\Box a.y, \;\text{where}\;  x,y \in T(\Sigma_B).\;  \]
  \[ PR1  \;\;    a.\bot  = \bot  \qquad\qquad\qquad\;\;\;\;\;
   PA1   \;\;    x \parallel_A y  =y \parallel_A x \qquad\qquad\qquad\qquad\qquad\qquad\]
  \[ PR2  \;\;    \tau.x  = x  \qquad\qquad\qquad\;\;\;\;\;\;\:\:\:
   PA2  \;\;     x \parallel_A \bot  = \bot \qquad\qquad\qquad\qquad\qquad\qquad\;\;\;\;\;\]
  \noindent $ECC1$
   \[\underset{i< n}{\square}a_i.x_i\wedge \underset{j< m}{\square}b_j.y_j  = \bot \;\text{ if}\; \{a_i|i< n\}\neq \{b_j|j < m\}.\]

   \noindent $ECC2$
      \[\underset{i<n}{\square}a_i.(x_i \wedge y_i)  \leqslant \underset{i< n}{\square}a_i.x_i\wedge \underset{i< n}{\square}a_i.y_i\]

\noindent  $ECC3$
        \[\underset{i< n}{\square}a_i.x_i\wedge \underset{i< n}{\square}a_i.y_i  \leqslant
                \underset{i<n}{\square}a_i.(x_i \wedge y_i),
          \text{where}\;\underset{i<n}{\square}a_i.x_i\;\text{is injective in prefixes}.\]

\noindent $EXP1  $
    \begin{multline*}
      \underset{i< n}{\square}a_i.x_i \parallel_A \underset{j< m}{\square}b_j.y_j  \leqslant\\
      \left(\underset {\begin{subarray}
                   \;i< n,\\
                   a_i \notin A
                \end{subarray}}
                \square a_i.(x_i \parallel_A \underset{j< m}{\square}b_j.y_j) \Box
                \underset {\begin{subarray}
                   \;j< m,\\
                   b_j \notin A
                \end{subarray}}
                \square
                b_j.(\underset{i< n}{\square}a_i.x_i \parallel_A y_j)\right) \Box
        \underset {\begin{subarray}
                   \;i< n,j<m\\
                  a_i= b_j\in A
                \end{subarray}}
                \square
                a_i.(x_i \parallel_A y_j)
    \end{multline*}
     $EXP2  \qquad$
      \begin{multline*}
       \underset{i< n}{\square}a_i.x_i \parallel_A \underset{j< m}{\square}b_j.y_j \geqslant\\
      \left(\underset {\begin{subarray}
                   \;i< n,\\
                   a_i \notin A
                \end{subarray}}
                \square a_i.(x_i \parallel_A \underset{j< m}{\square}b_j.y_j) \Box
                \underset {\begin{subarray}
                   \;j< m,\\
                   b_j \notin A
                \end{subarray}}
                \square
                b_j.(\underset{i< n}{\square}a_i.x_i \parallel_A y_j)\right) \Box
       \underset {\begin{subarray}
                   \;i< n,j<m\\
                  a_i= b_j\in A
                \end{subarray}}
                \square
                a_i.(x_i \parallel_A y_j)
    \end{multline*}
    $\text{where}\;  x_i,y_j \in T(\Sigma_B)\; \text{for each}\; i< n\; \text{and}\; j< m$.\\

\noindent \textbf{Inference rules}
    \begin{align*}\hfill
    &\text{REF}  &       &\frac{-}{t\leqslant t}  \\
    &\text{TRANS}  &    &\frac{t\leqslant t',t'\leqslant t''}{t\leqslant t''}  \\
    &\text{CONTEXT}  &     & \text{for each n-ary function symbol } \;f \in \Sigma_{CLL}\\
    & & & \frac{t_1 \leqslant t_1',\dots, t_n \leqslant t_n'}{f(t_1, \dots, t_n) \leqslant f(t_1',\dots,t_n')}
    \end{align*}

Given the axioms and rules of inference, we assume that the resulting notions of proof, length of proof and theorem are already familiar to the reader. Following standard usage, $\vdash t \leqslant t'$ means that $t \leqslant t'$ is a theorem of $AX_{CLL}$.

\subsection{Soundness}
This subsection will establish the soundness of $AX_{CLL}$. To this end, a number of properties of general external choice $\underset{i<n}\square t_i$ are needed. First, a simpler result is given below:

\begin{lemma}\label{L:BIG_SQUARE}
Let $n\geq 0$ and $\{a_i | i<n \}\subseteq Act$.
\begin{enumerate}
  \item $ \underset{i<n}\square t_i \in F_{CLL}$ iff $t_k\in F_{CLL}$ for some $k<n$.
  \item  $\underset{i<n}\square a_i.t_i \stackrel{a_i}{\longrightarrow}_{CLL} t_i$ for each $i<n$.
  \item If $\underset{i<n}\square a_i.t_i \stackrel{\alpha}{\longrightarrow}_{CLL} s$ then $\alpha = a_k$ and $s \equiv t_k$ for some $k<n$.
\end{enumerate}
\end{lemma}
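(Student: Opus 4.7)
The plan is to prove all three items by induction on $n$, exploiting the recursive definition of $\underset{i<n}\square t_i$ in Def.~\ref{D:GEC}. Items (1) and (2) can be handled directly from the structural lemmas already proved in Section~4, but item (2) and item (3) both rely on an auxiliary observation that $\underset{i<n}\square a_i.t_i$ is stable whenever $\{a_i \mid i<n\}\subseteq Act$; I would establish this up front.

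For the auxiliary stability claim I would induct on $n$: the cases $n=0$ and $n=1$ reduce to $0\not\stackrel{\tau}{\longrightarrow}_{CLL}$ (Lemma~\ref{L:Basic_I}(3)) and $a_0.t_0\not\stackrel{\tau}{\longrightarrow}_{CLL}$ (Lemma~\ref{L:Basic_I}(1), since $a_0\in Act$); for $n=k+1\geq 2$ we have $\underset{i<k+1}\square a_i.t_i \equiv (\underset{i<k}\square a_i.t_i)\Box a_k.t_k$, and by Lemma~\ref{L:STABILIZATION} this inherits stability from its two components, which are stable by IH and by $a_k\in Act$ respectively.

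For item (1), induct on $n$. The base case $n=0$ is trivial because $\underset{i<0}\square t_i \equiv 0 \notin F_{CLL}$ (Lemma~\ref{L:F_NORMAL}(5)) and there is no $k<0$; the case $n=1$ is immediate. For $n=k+1\geq 2$, apply Lemma~\ref{L:F_NORMAL}(3) to $(\underset{i<k}\square t_i)\Box t_k$ to get that the whole is in $F_{CLL}$ iff one of the two summands is, then invoke IH. For item (2), again induct on $n$: the case $n=1$ uses Lemma~\ref{L:Basic_I}(1); for $n=k+1\geq 2$ with $i<k$, combine the IH $\underset{i<k}\square a_i.t_i \stackrel{a_i}{\longrightarrow}_{CLL} t_i$ with stability of $a_k.t_k$ via Lemma~\ref{L:Basic_II}(1) to conclude that $(\underset{i<k}\square a_i.t_i)\Box a_k.t_k \stackrel{a_i}{\longrightarrow}_{CLL} t_i$; the remaining subcase $i=k$ similarly uses the stability of $\underset{i<k}\square a_i.t_i$ furnished by the auxiliary claim.

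For item (3), induct on $n$; the case $n\le 1$ is immediate from Lemma~\ref{L:Basic_I}(1)(3). In the inductive step for $n=k+1$, suppose $(\underset{i<k}\square a_i.t_i)\Box a_k.t_k \stackrel{\alpha}{\longrightarrow}_{CLL} s$. The case $\alpha=\tau$ is ruled out by Lemma~\ref{L:STABILIZATION} together with the stability of both $\underset{i<k}\square a_i.t_i$ and $a_k.t_k$; for $\alpha\in Act$, Lemma~\ref{L:Basic_II}(1) splits the transition into two subcases, one of which gives the conclusion directly via Lemma~\ref{L:Basic_I}(1) ($\alpha=a_k$, $s\equiv t_k$), while the other yields it through the IH applied to $\underset{i<k}\square a_i.t_i$. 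The main (and essentially only) obstacle is remembering to prove the stability claim first, since Lemma~\ref{L:Basic_II}(1) is phrased with a negative hypothesis $\not\stackrel{\tau}{\longrightarrow}_{CLL}$ on the "other" side of the $\Box$; without that observation the two directions of Lemma~\ref{L:Basic_II}(1) cannot be applied in the inductive steps of (2) and (3).
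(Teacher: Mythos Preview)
Your proposal is correct and follows exactly the approach the paper indicates: the paper's proof reads in full ``Proceed by induction on $n$, omitted,'' and your write-up supplies precisely those omitted details, including the stability observation needed to invoke Lemma~\ref{L:Basic_II}(1) in the inductive steps.
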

\begin{proof}
  Proceed by induction on $n$, omitted.
\end{proof}

\begin{proposition}\label{L:MULTIPLE_VI}
Let $a_i, b_j \in Act$ for each $i<n$ and $j<m$.
  \begin{enumerate}
    \item If $\{a_i|i< n\}\neq \{b_j|j< m\}$ then $\underset{i< n}{\square}a_i.t_i \wedge \underset{j< m }{\square} b_j.s_j =_{RS} \bot$.
    \item If $\{a_i|i<n\}=\{b_j|j<m\}$ then
     $\underset{\begin{subarray}
                   \;\;\;\;a_i= b_j,\\
                   i< n,j<m
                \end{subarray}} {\square}a_i.(t_i \wedge s_j) \sqsubseteq_{RS} \underset{i< n}{\square}a_i.t_i \wedge \underset{j< m}{\square}b_j.s_j$.
    \item $\underset{i< n} {\square}a_i.(t_i \wedge s_i) \sqsubseteq_{RS} \underset{i< n}{\square}a_i.t_i \wedge \underset{i< n}{\square}a_i.s_i$.
  \end{enumerate}
\end{proposition}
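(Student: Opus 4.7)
I would treat the three parts separately. Part~(1) reduces to showing that the left-hand conjunction already lies in $F_{CLL}$ via rule $Rp_{10}$ (or its symmetric counterpart $Rp_{11}$); parts~(2) and~(3) are established by exhibiting an explicit stable ready simulation of the form $\{(L,R)\}\cup Id_S$, using the trivial observation that when both stable terms involved are consistent such a witness already delivers $L\sqsubseteq_{RS}R$, and that when $L\in F_{CLL}$ the relation $\sqsubseteq_{RS}$ holds vacuously.

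\textbf{Part~(1).} Write $L_1\equiv\underset{i<n}{\square}a_i.t_i$ and $L_2\equiv\underset{j<m}{\square}b_j.s_j$, and suppose without loss of generality that some $a_k\in\{a_i:i<n\}\setminus\{b_j:j<m\}$ (the symmetric alternative is handled by $Rp_{11}$). Lemma~\ref{L:BIG_SQUARE}(2)(3) yields $L_1\stackrel{a_k}{\longrightarrow}_{CLL}t_k$ and $L_2\not\stackrel{a_k}{\longrightarrow}_{CLL}$. Since every prefix is visible, $L_1$, $L_2$, and hence $L_1\wedge L_2$ are stable (Lemma~\ref{L:STABILIZATION}), so rule $Rp_{10}$ forces $L_1\wedge L_2\in F_{CLL}$. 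Because $\bot\in F_{CLL}$ by Lemma~\ref{L:F_NORMAL}(5), both directions of $=_{RS}$ between $L_1\wedge L_2$ and $\bot$ hold vacuously.

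\textbf{Part~(2).} Let $L$ and $R$ denote the left- and right-hand terms; both are stable. If $L\in F_{CLL}$ the inequality is vacuous, so assume $L\notin F_{CLL}$. Lemma~\ref{L:BIG_SQUARE}(1) then gives $t_i\wedge s_j\notin F_{CLL}$ for every pair $(i,j)$ with $a_i=b_j$, and Lemma~\ref{L:F_NORMAL}(2)(4) together with the hypothesis $\{a_i\}=\{b_j\}$ yields $t_i\notin F_{CLL}$ and $s_j\notin F_{CLL}$ for every $i<n$ and $j<m$. Inspecting the predicate rules of Table~\ref{Ta:PREDICATIVE_RULES} gives $R\notin F_{CLL}$: rules $Rp_8$--$Rp_9$ are blocked because both conjuncts of $R$ are consistent (Lemma~\ref{L:BIG_SQUARE}(1), Lemma~\ref{L:F_NORMAL}(2)); $Rp_{10}$--$Rp_{11}$ are blocked because those two conjuncts share the common visible ready set; and $Rp_{13}$ is blocked because, for every $a$ in that set, the consistent $a$-derivative $t_i\wedge s_j$ of $R$ witnesses $R\overline{F}_a$ through $Rp_{12}$. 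Setting
\[
\mathcal{R}\triangleq\{(L,R)\}\cup Id_S,
\]
conditions (RS1), (RS2) and (RS4) follow at once, using Lemma~\ref{L:Basic_II}(2) and Lemma~\ref{L:BIG_SQUARE} to verify that both $L$ and $R$ have visible ready set $\{a_i\}=\{b_j\}$. For (RS3), the same two lemmas identify the $c$-derivatives of $R$ with those of $L$, so any $L\stackrel{c}{\Longrightarrow}_F|L'$ factors as $L\stackrel{c}{\longrightarrow}_F t_i\wedge s_j\stackrel{\epsilon}{\Longrightarrow}_F|L'$ and is matched by $R\stackrel{c}{\longrightarrow}_F t_i\wedge s_j\stackrel{\epsilon}{\Longrightarrow}_F|L'$, closing the pair under $(L',L')\in Id_S$.

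\textbf{Part~(3) and main obstacle.} Part~(3) is a specialisation of the argument in~(2) with $b_j\triangleq a_j$, using $\mathcal{R}\triangleq\{(L_3,R_3)\}\cup Id_S$ where $L_3\equiv\underset{i<n}{\square}a_i.(t_i\wedge s_i)$ and $R_3\equiv\underset{i<n}{\square}a_i.t_i\wedge\underset{i<n}{\square}a_i.s_i$; the point is that every $c$-derivative $t_i\wedge s_i$ of $L_3$ is still a $c$-derivative of $R_3$ via the diagonal choice $k=l=i$ in Lemma~\ref{L:Basic_II}(2), even though the ``off-diagonal'' derivatives $t_i\wedge s_j$ with $i\neq j$ and $a_i=a_j$ live only inside $R_3$. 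The main technical obstacle sits in the consistency check $R\notin F_{CLL}$ in part~(2): unlike the transparent rules $Rp_8$--$Rp_{11}$, the pair $Rp_{12}$--$Rp_{13}$ encodes a universally quantified premise over $\alpha$-derivatives, and it is precisely here that the hypothesis $\{a_i\}=\{b_j\}$ is used to produce, for every common action $a$, at least one consistent $a$-derivative of $R$ witnessing $R\overline{F}_a$.
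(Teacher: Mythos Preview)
Your proof is correct, but parts~(2) and~(3) take a genuinely different route from the paper's.

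For part~(1) you and the paper agree.

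For part~(2), the paper argues purely algebraically via the meet property of~$\wedge$ (Lemma~\ref{L:CON_ID_I}): from $a_i.(t_i\wedge s_j)\sqsubseteq_{RS}a_i.t_i$ and precongruence one gets $L\sqsubseteq_{RS}\underset{i<n}{\square}a_i.t_i$ (after collapsing duplicates with Prop.~\ref{L:EC2}(2)(4)(6), which is where the hypothesis $\{a_i\}=\{b_j\}$ enters), and symmetrically $L\sqsubseteq_{RS}\underset{j<m}{\square}b_j.s_j$; then Lemma~\ref{L:CON_ID_I}(4) combines the two bounds into $L\sqsubseteq_{RS}R$. Part~(3) in the paper is then the same calculation with only diagonal summands, so no separate simulation is needed.

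Your approach instead exhibits the relation $\{(L,R)\}\cup Id_S$ directly. The decisive observation, which you state correctly, is that for every visible $c$ the $c$-derivatives of $L$ and of $R$ are \emph{literally the same terms} $t_i\wedge s_j$ with $a_i=b_j=c$; this makes (RS3) trivial. The price you pay is the hand-check of $R\notin F_{CLL}$ against the predicate rules $Rp_8$--$Rp_{13}$, which the paper avoids entirely because Lemma~\ref{L:RS_CON} (hidden inside Lemma~\ref{L:CON_ID_I}(4)) already packages that consistency argument. In exchange, your argument is self-contained and does not appeal to the lattice machinery of Section~5.
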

\begin{proof}
  \noindent \textbf{(1)} Since $\{a_i|i< n\}\neq \{b_j|j< m\}$, w.l.o.g, we may assume that $a_k \notin \{b_j|j< m\}$ for some $k< n$.
  Since $a_i,b_j\in Act$ for each $i< n$ and $j< m$, by Lemma~\ref{L:BIG_SQUARE}(2)(3), we have $\underset{i< n}{\square}a_i.t_i \stackrel{a_k}{\longrightarrow}_{{CLL}} t_k $, $\underset{i< n}{\square}a_i.t_i \not\stackrel{\tau}{\longrightarrow}_{{CLL}}$, $\underset{j< m}{\square}b_j.s_j \not\stackrel{a_k}{\longrightarrow}_{{CLL}}$ and $\underset{j< m}{\square}b_j.s_j \not\stackrel{\tau}{\longrightarrow}_{{CLL}}$.
  Further, by Lemma~\ref{L:STABILIZATION}, we have $\underset{i< n} {\square}a_i.t_i \wedge \underset{j< m}{\square}b_j.s_j \not\stackrel{\tau}{\longrightarrow}_{{CLL}}$.
    So, $\frac{\underset{i< n}{\square}a_i.t_i \stackrel{a_k}{\longrightarrow} t_k }{\underset{i< n}{\square}a_i.t_i \wedge \underset{j< m}{\square}b_j.s_jF }\in Strip({\mathcal R}_{CLL},M_{{CLL}})$.
    Hence,  we have $\underset{i< n}{\square}a_i.t_i \wedge \underset{j< m}{\square}b_j.s_jF \in M_{{CLL}}$.
Then, $\underset{i< n}{\square}a_i.t_i \wedge \underset{j< m}{\square}b_j.s_j =_{RS} \bot$ holds trivially.\\

\noindent \textbf{(2)}
If $n=0$ then $m=0$ and it is trivial to check $ 0 \sqsubseteq_{RS} 0 \wedge 0$.
In the following, we consider the case where $n>0$.
For each $i<n$, by Lemma~\ref{L:CON_ID_I}(2) and Prop.~\ref{L:CONGRUENCE}(1), we have
\[a_i.(t_i \wedge s_j) \sqsubseteq_{RS} a_i.t_i\;\text{for each}\; j<m\;\text{such that}\;a_i=b_j.\]
Therefore, by Prop.~\ref{L:EC2}(2)(4)(6) and \ref{L:CONGRUENCE}(2), it follows that
\[\underset{\begin{subarray}
                   \;\;\;\;a_i= b_j,\\
                   i< n,j<m
                \end{subarray}} {\square}a_i.(t_i \wedge s_j) \sqsubseteq_{RS} \underset{i< n}{\square}a_i.t_i. \tag{\ref{L:MULTIPLE_VI}.1}\]
Similarly, we have
\[\underset{\begin{subarray}
                   \;\;\;\;a_i= b_j,\\
                   i< n,j<m
                \end{subarray}} {\square}b_j.(t_i \wedge s_j) \sqsubseteq_{RS} \underset{j< m}{\square}b_j.s_j. \tag{\ref{L:MULTIPLE_VI}.2}\]
Since $\{a_i|i<n\}=\{b_j|j<m\}$, by Lemma~\ref{L:EC2}(2)(4), we have
\[\underset{\begin{subarray}
                   \;\;\;\;a_i= b_j,\\
                   i< n,j<m
                \end{subarray}} {\square}a_i.(t_i \wedge s_j) =_{RS}
                \underset{\begin{subarray}
                   \;\;\;\;a_i= b_j,\\
                   i< n,j<m
                \end{subarray}} {\square}b_j.(t_i \wedge s_j). \tag{\ref{L:MULTIPLE_VI}.3}\]
Thus, by Lemma~\ref{L:CON_ID_I}(4), it follows from (\ref{L:MULTIPLE_VI}.1), (\ref{L:MULTIPLE_VI}.2) and (\ref{L:MULTIPLE_VI}.3) that
 \[\underset{\begin{subarray}
                   \;\;\;\;a_i= b_j,\\
                   i< n,j<m
                \end{subarray}} {\square}a_i.(t_i \wedge s_j) \sqsubseteq_{RS} \underset{i< n}{\square}a_i.t_i \wedge \underset{j< m}{\square}b_j.s_j. \]

\noindent \textbf{(3)} Immediately follows from (2).
\end{proof}

In the following, we provide an example to illustrate that it does not always hold that $\underset{i< n}{\square}a_i.t_i\wedge \underset{i< n}{\square}a_i.s_i \sqsubseteq_{RS} \underset{i< n} {\square}a_i.(t_i \wedge s_i)$.

\begin{example}
Consider process terms
$a_0.t_0 \triangleq a.b.0$, $a_1.t_1 \triangleq a.c.0$, $a_0.s_0 \triangleq a.b.0$ and $a_1.s_1 \triangleq a.b.0$ where $c \neq b$.
Then, $\underset{i<2}\square a_i.t_i \equiv a.b.0 \Box a.c.0$,
$\underset{i<2}\square a_i.s_i \equiv a.b.0 \Box a.b.0$
and $\underset{i<2}\square a_i.(t_i \wedge s_i) \equiv a.(b.0 \wedge b.0) \Box a.(c.0 \wedge b.0)$.
Assume that $\underset{i<2}\square a_i.t_i \wedge \underset{i<2}\square a_i.s_i \sqsubseteq_{RS} \underset{i<2}\square a_i.(t_i \wedge s_i) $.
Since these process terms are stable, we have $\underset{i<2}\square a_i.t_i \wedge \underset{i<2}\square a_i.s_i \underset{\thicksim}{\sqsubset}_{RS} \underset{i<2}\square a_i.(t_i \wedge s_i) $.
It follows from $c.0 \wedge b.0 \not\stackrel{\tau}{\longrightarrow}_{CLL}$ and $b.0 \not\stackrel{c}{\longrightarrow}_{CLL}$ that $\frac{c.0 \stackrel{c}{\longrightarrow}0}{c.0 \wedge b.0F} \in Strip(R_{CLL},M_{CLL})$.
So, $c.0 \wedge b.0 \in F_{CLL}$ follows from $c.0 \stackrel{c}{\longrightarrow}_{CLL}0$.
Further, by Lemma~\ref{L:F_NORMAL}, we get $\underset{i<2}\square a_i.(t_i \wedge s_i)  \in F_{CLL}$.
Thus, it follows from $\underset{i<2}\square a_i.t_i \wedge \underset{i<2}\square a_i.s_i \underset{\thicksim}{\sqsubset}_{RS} \underset{i<2}\square a_i.(t_i \wedge s_i) $ that $\underset{i<2}\square a_i.t_i \wedge \underset{i<2}\square a_i.s_i \in F_{CLL}$.
Since $\underset{i<2}\square a_i.t_i \notin F_{CLL}$, $ \underset{i<2}\square a_i.s_i \notin F_{CLL}$ and ${\mathcal I}(\underset{i<2}\square a_i.t_i) = {\mathcal I}(\underset{i<2}\square a_i.s_i)$, the last rule applied in the proof tree of $Strip({\mathcal P}_{CLL},M_{CLL})\vdash \underset{i<2}\square a_i.t_i \wedge \underset{i<2}\square a_i.s_iF$ is of the form \[\frac{\underset{i<2}\square a_i.t_i \wedge \underset{i<2}\square a_i.s_i \stackrel{a}{\longrightarrow}s}{\underset{i<2}\square a_i.t_i \wedge \underset{i<2}\square a_i.s_iF}\;\text{with}\; M_{CLL}\models \underset{i<2}\square a_i.t_i \wedge \underset{i<2}\square a_i.s_i \neg \overline{F}_{a}.\]
Then, by Lemma~\ref{L:CON_LLTS}, $t \in F_{CLL}$ for each $t$ such that $\underset{i<2}\square a_i.t_i \wedge \underset{i<2}\square a_i.s_i \stackrel{a}{\longrightarrow}_{CLL}t$.
But, it is easy to see that $\underset{i<2}\square a_i.t_i \wedge \underset{i<2}\square a_i.s_i \stackrel{a}{\longrightarrow}_{CLL}b.0 \wedge b.0$ and $b.0 \wedge b.0 \notin F_{CLL}$, a contradiction.
$\qquad\qquad\qquad\qquad\qquad\qquad\qquad\qquad\qquad\qquad\qquad\qquad\qquad\;\;\square$
\end{example}

However, for any general external choice $\underset{i<n}\square a_i.t_i$ with distinct prefixes, we have

\begin{proposition}\label{L:MULTIPLE_II}
Let $a_i \in Act$ for each $i<n$.
  If $\underset{i< n}{\square}a_i.t_i$ is injective in prefixes then $\underset{i< n}{\square}a_i.t_i\wedge \underset{i< n}{\square}a_i.s_i \sqsubseteq_{RS} \underset{i< n} {\square}a_i.(t_i \wedge s_i)$.
\end{proposition}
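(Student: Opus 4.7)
The plan is to exhibit a stable ready simulation witnessing the refinement and then lift it to $\sqsubseteq_{RS}$ using stability on both sides. Write $L \equiv \underset{i<n}{\square}a_i.t_i \wedge \underset{i<n}{\square}a_i.s_i$ and $R \equiv \underset{i<n}{\square}a_i.(t_i \wedge s_i)$. When $n = 0$, $L \equiv 0\wedge 0$ and $R \equiv 0$, and Lemma~\ref{L:CON_ID_I}(2) gives $L \sqsubseteq_{RS} R$ at once. For $n > 0$, since the $a_i$ are visible, each $a_i.t_i$ and $a_i.s_i$ is stable by Lemma~\ref{L:Basic_I}(1), and Lemma~\ref{L:STABILIZATION} propagates stability through $\Box$ and $\wedge$, so both $L$ and $R$ are stable. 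It therefore suffices to show $L \underset{\thicksim}{\sqsubset}_{RS} R$: the lift to $\sqsubseteq_{RS}$ follows because, with $L$ stable, $L \stackrel{\epsilon}{\Longrightarrow}_F|L'$ forces $L' \equiv L$, and we can match with $R \stackrel{\epsilon}{\Longrightarrow}_F|R$ (using the (RS2) bookkeeping established below).

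Take $\mathcal{R} \triangleq \{(L,R)\} \cup Id_S$ and verify the four clauses of Def.~\ref{D:READYSIMULATION_TERMS} for the pair $(L,R)$; the pairs in $Id_S$ are trivial. (RS1) has just been noted. For (RS2), assume $L \notin F_{CLL}$. By Lemma~\ref{L:F_NORMAL}(4) both conjuncts of $L$ avoid $F_{CLL}$, and by Lemma~\ref{L:BIG_SQUARE}(2)(3) each has ready set $\{a_i : i<n\}$. Injectivity of the left prefix list forces the right conjunct $\underset{i<n}{\square}a_i.s_i$ to have the same distinct prefix list, so each conjunct possesses a unique $a_k$-derivative, namely $t_k$ and $s_k$; hence Lemma~\ref{L:Basic_II}(2) yields that the sole $a_k$-derivative of $L$ is $t_k \wedge s_k$. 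If some $t_k \wedge s_k$ were in $F_{CLL}$, Lemma~\ref{L:LLTS_I} would force $L \in F_{CLL}$, a contradiction. So every $t_k \wedge s_k \notin F_{CLL}$, and then Lemma~\ref{L:F_NORMAL}(2) combined with Lemma~\ref{L:BIG_SQUARE}(1) gives $R \notin F_{CLL}$. Condition (RS4) is immediate from the above analysis since both $L$ and $R$ have ready set $\{a_i:i<n\}$. For (RS3), given $L \stackrel{a}{\Longrightarrow}_F|t'$ with $L$ stable, write $L \stackrel{a}{\longrightarrow}_F L' \stackrel{\epsilon}{\Longrightarrow}_F|t'$; by the uniqueness just established, $a = a_k$ and $L' \equiv t_k \wedge s_k$ for some $k<n$. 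Lemma~\ref{L:BIG_SQUARE}(2) yields $R \stackrel{a_k}{\longrightarrow}_{CLL} t_k \wedge s_k$, and since $R, t_k\wedge s_k \notin F_{CLL}$ by (RS2), we obtain $R \stackrel{a}{\Longrightarrow}_F|t'$ with $(t',t') \in Id_S \subseteq \mathcal{R}$.

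The main obstacle is clause (RS2): we must know that whenever $L$ is consistent, every individual $t_k \wedge s_k$ is consistent, and this is exactly where injectivity of the prefixes is consumed. The uniqueness of the $a_k$-derivative of $L$ lets the backward-propagation rule (LTS1) be inverted into the implication ``some $t_k\wedge s_k$ inconsistent $\Rightarrow$ $L$ inconsistent''. Without injectivity, $L$ would admit multiple $a_k$-derivatives of the form $t_i \wedge s_j$ with $a_i = a_j = a_k$ and $i \neq j$, which neither match a derivative of $R$ (Lemma~\ref{L:BIG_SQUARE}(2) only produces $t_i \wedge s_i$) nor can be ruled out via Lemma~\ref{L:LLTS_I}; the counterexample preceding the statement confirms that the inclusion genuinely fails in that greater generality.
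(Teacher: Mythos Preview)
Your proof is correct and follows essentially the same route as the paper: both define $\mathcal{R} = \{(L,R)\}\cup Id_S$, use stability of $L$ and $R$ to reduce $\sqsubseteq_{RS}$ to $\underset{\thicksim}{\sqsubset}_{RS}$, and the heart of (RS2) in both is that injectivity makes $t_k\wedge s_k$ the \emph{unique} $a_k$-derivative of $L$, so Lemma~\ref{L:LLTS_I} transfers inconsistency of any $t_k\wedge s_k$ back to $L$. The only cosmetic differences are that the paper argues (RS2) contrapositively (assume $R\in F_{CLL}$, conclude $L\in F_{CLL}$) whereas you argue directly, and you spell out the $n=0$ case and the lifting step where the paper leaves them implicit.
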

\begin{proof}
We consider the non-trivial case where $n>0$.
Since $a_i \in Act$ for each $i< n$, by Lemma~\ref{L:BIG_SQUARE}(3) and \ref{L:STABILIZATION}, it is easy to see that $\underset{i< n}{\square}a_i.t_i \wedge \underset{i < n}{\square}a_i.s_i$ and $\underset{i< n}{\square}a_i.(t_i \wedge s_i)$ are stable.
Thus, it suffices to prove $ \underset{i< n}{\square}a_i.t_i \wedge \underset{i< n}{\square}a_i.s_i \underset{\thicksim}{\sqsubset}_{RS} \underset{i< n} {\square}a_i.(t_i \wedge s_i)$. Put
\[ {\mathcal R}=\{ (\underset{i< n}{\square}a_i.t_i \wedge \underset{i< n}{\square}a_i.s_i , \underset{i< n} {\square}a_i.(t_i \wedge s_i)) \} \cup Id_S.\]
We need to show that the pair $(\underset{i< n}{\square}a_i.t_i \wedge \underset{i< n}{\square}a_i.s_i , \underset{i< n} {\square}a_i.(t_i \wedge s_i))$ satisfies (RS1)-(RS4).
It is easy to check that (RS1) and (RS4) hold. We deal with (RS2) and (RS3) as follows.

\textbf{(RS2)} Suppose $\underset{i< n} {\square}a_i.(t_i \wedge s_i) \in F_{CLL}$.
By Lemma~\ref{L:BIG_SQUARE}(1) and \ref{L:F_NORMAL}(2), there exists $k<n$ such that $t_k \wedge s_k \in F_{CLL}$.
On the other hand, by Lemma~\ref{L:BIG_SQUARE}(2), we obtain $ \underset{i< n}{\square}a_i.t_i \stackrel{a_k}{\longrightarrow}_{CLL} t_k$ and $ \underset{i< n}{\square}a_i.s_i \stackrel{a_k}{\longrightarrow}_{CLL} s_k$.
Further, it follows from Lemma~\ref{L:Basic_II}(2) that $\underset{i< n}{\square}a_i.t_i \wedge \underset{i< n}{\square}a_i.s_i \stackrel{a_k}{\longrightarrow}_{CLL} t_k \wedge s_k$.
Moreover, since both $\underset{i<n}{\square}a_i.t_i$ and $\underset{i<n}{\square}a_i.s_i$ are injective in prefixes, by Lemma~\ref{L:Basic_II}(2) and \ref{L:BIG_SQUARE}(3), it follows that $t_k \wedge s_k$ is the unique $a_k$-derivative of $ \underset{i< n}{\square}a_i.t_i\wedge \underset{i< n}{\square}a_i.s_i$. Therefore, by Lemma~\ref{L:LLTS_I}, $ \underset{i< n}{\square}a_i.t_i\wedge \underset{i< n}{\square}a_i.s_i \in F_{CLL}$ comes from $t_k \wedge s_k \in F_{CLL}$, as desired.

\textbf{(RS3)} Suppose $\underset{i< n}{\square}a_i.t_i \wedge \underset{i< n}{\square}a_i.s_i \stackrel{a}{\Longrightarrow}_F |t'$.
Since $\underset{i< n}{\square}a_i.t_i \wedge \underset{i< n}{\square}a_i.s_i$ is stable, we have $\underset{i< n}{\square}a_i.t_i \wedge \underset{i< n}{\square}a_i.s_i \stackrel{a}{\longrightarrow}_F t''  \stackrel{\epsilon}{\Longrightarrow}_F |t'$ for some $t''$.
Since $\underset{i<n}\square a_i.t_i$ and $\underset{i<n} \square a_i.s_i$ are injective in prefixes, by Lemma~\ref{L:Basic_II}(2) and \ref{L:BIG_SQUARE}(3), we get $\underset{i< n}{\square}a_i.t_i \stackrel{a_k}{\longrightarrow}_{CLL} t_k$, $\underset{i< n}{\square}a_i.s_i \stackrel{a_k}{\longrightarrow}_{CLL} s_k$, $a = a_k$ and $t'' \equiv t_k \wedge s_k$ for some $k<n$.
On the other hand, by Lemma~\ref{L:BIG_SQUARE}(2), we get $\underset{i< n} {\square}a_i.(t_i \wedge s_i) \stackrel{a_k}{\longrightarrow}_{CLL} t_k \wedge s_k$.
Moreover, since $\underset{i< n}{\square}a_i.t_i \wedge \underset{i< n}{\square}a_i.s_i \notin F_{CLL}$, by (RS2), $\underset{i< n} {\square}a_i.(t_i \wedge s_i) \notin F_{CLL}$.
Hence, we obtain $\underset{i< n} {\square}a_i.(t_i \wedge s_i) \stackrel{a}{\longrightarrow}_F t_k\wedge s_k \equiv t''  \stackrel{\epsilon}{\Longrightarrow}_F |t' $ and $(t',t')\in {\mathcal R}$.
\end{proof}

The next two propositions state the properties of the interaction of general external choice and parallel operator, which are analogous to the expansion law in usual process calculus.

\begin{proposition}\label{L:MULTIPLE_I}
Let $n \geq 0$, $m \geq 0$, $A \subseteq Act$ and
$a_i,b_j\in Act$ for each $i<n$ and $j<m$.Then
\[\underset{i< n}{\square}a_i.t_i \parallel_A \underset{j< m}{\square}b_j.s_j
  \sqsubseteq_{RS} ((\square \Omega_1) \Box (\square \Omega_2)) \Box (\square \Omega_3),\]
where
$\Omega_1 = \{a_i.(t_i \parallel_A \underset{j< m}{\square}b_j.s_j)|i<n\;\text{and}\;a_i \notin A \}$,
$\Omega_2 =  \{b_j.(\underset{i< n}{\square}a_i.t_i \parallel_A s_j)|j<m\;\text{and}\;b_j \notin A\}$ and
$\Omega_3 =  \{a_i.(t_i \parallel_A s_j)|a_i = b_j \in A,i<n\;\text{and}\;j<m\}$.
\end{proposition}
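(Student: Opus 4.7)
The plan is to reduce $\sqsubseteq_{RS}$ to $\underset{\thicksim}{\sqsubset}_{RS}$ by exhibiting a single witnessing stable ready simulation. The first observation is that every relevant term is stable. Since each $a_i$, $b_j$ lies in $Act$, Lemma~\ref{L:Basic_I}(1) makes every prefix $a_i.t_i$ and $b_j.s_j$ stable, and Lemma~\ref{L:STABILIZATION} then yields stability of $T_L \triangleq \underset{i<n}{\square}a_i.t_i$, $T_R \triangleq \underset{j<m}{\square}b_j.s_j$, of $T_L \parallel_A T_R$, of every member of $\Omega_1, \Omega_2, \Omega_3$, of each $\square \Omega_k$, and finally of the whole RHS. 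Hence it suffices to prove $T_L\parallel_A T_R \underset{\thicksim}{\sqsubset}_{RS} \text{RHS}$, and I take the witness
\[
\mathcal{R}\;\triangleq\;\{(T_L\parallel_A T_R,\;\text{RHS})\}\cup Id_S,
\]
checking (RS1)--(RS4) for the distinguished pair only (for $Id_S$ everything is trivial).

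Properties (RS1) and (RS4) are immediate: stability is already recorded, and a routine case analysis using Lemma~\ref{L:BIG_SQUARE}(2)(3), the rules $Ra_{13},Ra_{14},Ra_{15}$, and iterated applications of Lemma~\ref{L:Basic_II}(1) on the outer $\Box$'s shows that both ready sets coincide with
\[
\{a_i\mid i<n,\;a_i\notin A\}\cup\{b_j\mid j<m,\;b_j\notin A\}\cup\{a_i\mid a_i=b_j\in A,\,i<n,\,j<m\}.
\]
For (RS2) I prove the contrapositive. If $\text{RHS}\in F_{CLL}$, two applications of Lemma~\ref{L:F_NORMAL}(3) produce some $\square\Omega_k\in F_{CLL}$, and Lemma~\ref{L:BIG_SQUARE}(1) yields an inconsistent member of $\Omega_k$; unfolding this member with Lemma~\ref{L:F_NORMAL}(2)(3) gives that either some $t_i\in F_{CLL}$, some $s_j\in F_{CLL}$, or one of $T_L,T_R$ is already in $F_{CLL}$. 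Re-applying Lemma~\ref{L:BIG_SQUARE}(1) to lift $t_i$ or $s_j$ into $T_L$ or $T_R$, and Lemma~\ref{L:F_NORMAL}(3) to lift to the parallel composition, places $T_L\parallel_A T_R$ in $F_{CLL}$.

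Property (RS3) is the substance of the proof. Suppose $T_L\parallel_A T_R\stackrel{a}{\Longrightarrow}_F| t'$. Since the LHS is stable, this factors as $T_L\parallel_A T_R\stackrel{a}{\longrightarrow}_F t''\stackrel{\epsilon}{\Longrightarrow}_F| t'$, and the last rule used must be $Ra_{13}$, $Ra_{14}$, or $Ra_{15}$ (the $\tau$-rules $Ra_{11},Ra_{12}$ are excluded because $T_L,T_R$ are stable). In each of the three cases, Lemma~\ref{L:BIG_SQUARE}(3) identifies an index $i$ (and possibly $j$) so that $t''$ equals $t_i\parallel_A T_R$, $T_L\parallel_A s_j$, or $t_i\parallel_A s_j$ respectively, and the element $a_i.(t_i\parallel_A T_R)$, $b_j.(T_L\parallel_A s_j)$, or $a_i.(t_i\parallel_A s_j)$ then lies in $\Omega_1$, $\Omega_2$, or $\Omega_3$ accordingly. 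Lemma~\ref{L:BIG_SQUARE}(2) gives the matching move inside the appropriate $\square\Omega_k$, and lifting through the two outer $\Box$'s via Lemma~\ref{L:Basic_II}(1) (which applies because every $\square\Omega_\ell$ is stable) produces $\text{RHS}\stackrel{a}{\longrightarrow}_{CLL}t''$. The contrapositive of (RS2) already established gives $\text{RHS}\notin F_{CLL}$, so $\text{RHS}\stackrel{a}{\longrightarrow}_F t''\stackrel{\epsilon}{\Longrightarrow}_F| t'$, and $(t',t')\in Id_S\subseteq\mathcal{R}$ closes the case.

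The main obstacle, and the only delicate point, is the $\Box$-bookkeeping: one must verify the negative premises of $Ra_{13}$ and $Ra_{14}$ (which reduces to the stability observations from the first paragraph) and handle degenerate $\Omega_k=\emptyset$, where $\square\Omega_k=0$ is absorbed harmlessly by Prop.~\ref{L:EC2}(8). Everything else is a routine orchestration of Lemmas~\ref{L:Basic_II}, \ref{L:STABILIZATION}, \ref{L:F_NORMAL}, and \ref{L:BIG_SQUARE}.
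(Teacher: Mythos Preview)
Your proposal is correct and follows essentially the same route as the paper: both reduce $\sqsubseteq_{RS}$ to $\underset{\thicksim}{\sqsubset}_{RS}$ via stability of all terms, take the witness $\mathcal{R}=\{(\text{LHS},\text{RHS})\}\cup Id_S$, and verify (RS1)--(RS4) using Lemmas~\ref{L:Basic_II}, \ref{L:STABILIZATION}, \ref{L:F_NORMAL}, and \ref{L:BIG_SQUARE} with a three-way case split on the parallel rule in (RS3). The only cosmetic difference is that the paper handles the empty $\Omega_k$ cases directly through the transition lemmas rather than invoking Prop.~\ref{L:EC2}(8); either way works, since $\square\emptyset=0$ is stable with no transitions and so poses no obstacle to Lemma~\ref{L:Basic_II}(1).
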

\begin{proof}
  Set $N \triangleq \underset{i< n}{\square}a_i.t_i \parallel_A \underset{j< m}{\square}b_j.s_j$ and $M\triangleq  ((\square \Omega_1) \Box (\square \Omega_2)) \Box (\square \Omega_3)$.
  Since $a_i,b_j\in Act$ for each $i< n$ and $j< m$, by Lemma~\ref{L:BIG_SQUARE}(3) and \ref{L:STABILIZATION}, both $N$ and $M$ are stable.
  It is enough to prove $N\underset{\thicksim}{\sqsubset}_{RS} M$. Put
  \[{\mathcal R}\triangleq\{(N,M)\}\cup Id_S.\]
  It suffices to show that $\mathcal R$ is a stable ready simulation relation. We will check that the pair $(N, M)$ satisfies (RS2), (RS3) and (RS4) one by one.

  \textbf{(RS2)} Suppose $M\in F_{CLL}$.
  Then, by Lemma~\ref{L:BIG_SQUARE}(1) and \ref{L:F_NORMAL}(3), we have $t \in F_{CLL}$ for some $t \in \Omega_1 \cup \Omega_2 \cup \Omega_3$.
  We shall consider the case where $t \in \Omega_1$, the others may be treated similarly and omitted.
  In such case, we may assume that $t \equiv a_{i_0}.(t_{i_0} \parallel_A \underset{j< m}{\square}b_j.s_j)$ with $i_0<n$ and $a_{i_0}\notin A$.
  So, by Lemma~\ref{L:F_NORMAL}(2)(3), either $t_{i_0} \in F_{CLL}$ or $\underset{j< m}{\square}b_j.s_j  \in F_{CLL}$.
  Then, by Lemma~\ref{L:BIG_SQUARE}(1) and \ref{L:F_NORMAL}(2)(3), it is easy to see that each of them implies   $\underset{i< n}{\square}a_i.t_i \parallel_A \underset{j< m}{\square}b_j.s_j \in F_{CLL}$, as desired.

  \textbf{(RS3)} Suppose $N \stackrel{a}{\Longrightarrow}_F|t'$.
  So, $N \notin F_{CLL}$, further, by (RS2), we get $ M \notin F_{CLL}$.
  Since $N$ is stable, $N \stackrel{a}{\longrightarrow}_F t'' \stackrel{\epsilon}{\Longrightarrow}_F|t'$ for some $t''$.
  We consider three cases based on the form of the last rule applied in the proof tree of  $Strip({\mathcal P}_{CLL},M_{CLL}) \vdash \underset{i< n}{\square}a_i.t_i \parallel_A \underset{j< m}{\square}b_j.s_j \stackrel{a}{\longrightarrow}t''$.\\

\noindent  Case 1 $\frac{\underset{i< n}{\square}a_i.t_i \stackrel{a}{\longrightarrow} r}{\underset{i< n}{\square}a_i.t_i \parallel_A \underset{j< m}{\square}b_j.s_j \stackrel{a}{\longrightarrow}r\parallel_A \underset{j< m}{\square}b_j.s_j}$ with $M_{CLL}\models \underset{j< m}{\square}b_j.s_j \not\stackrel{\tau}{\longrightarrow}$ and $a\notin A$.

  So, $\underset{i< n}{\square}a_i.t_i \stackrel{a}{\longrightarrow} r \in M_{CLL}$ and $t'' \equiv r\parallel_A \underset{j< m}{\square}b_j.s_j$.
  By Lemma~\ref{L:BIG_SQUARE}(3), we have $a= a_{i_0}$ and $r \equiv t_{i_0}$ for some $i_0 < n$.
  Since $a_{i_0}=a \notin A$, $a_{i_0}.(t_{i_0} \parallel_A \underset{j< m}{\square}b_j.s_j) \in \Omega_1$.
  So, by Lemma~\ref{L:BIG_SQUARE}(2), $\square \Omega_1 \stackrel{a_{i_0}}{\longrightarrow}_{CLL}t_{i_0} \parallel_A \underset{j< m}{\square}b_j.s_j$.
  Further, by Lemma~\ref{L:BIG_SQUARE}(3), it follows from $\{a_i,b_j|i<n\;\text{and}\;j<m\}\subseteq Act$ that
  \[\square \Omega_2 \not\stackrel{\tau}{\longrightarrow}_{CLL}\;\text{and}\;\square \Omega_3 \not\stackrel{\tau}{\longrightarrow}_{CLL}.\]
  Then, by Lemma~\ref{L:Basic_II}(1), we get  $M \stackrel{a_{i_0}}{\longrightarrow}_{CLL}t_{i_0} \parallel_A \underset{j< m}{\square}b_j.s_j$.
  Hence, $ M \stackrel{a}{\Longrightarrow}_F|t'$ and $(t',t')\in {\mathcal R}$.\\

\noindent  Case 2 $\frac{\underset{j< m}{\square}b_j.s_j \stackrel{a}{\longrightarrow} r}{\underset{i< n}{\square}a_i.t_i \parallel_A \underset{j< m}{\square}b_j.s_j \stackrel{a}{\longrightarrow} \underset{i< n}{\square}a_i.t_i \parallel_A r}$ with $M_{CLL}\models \underset{i< n}{\square}a_i.t_i \not\stackrel{\tau}{\longrightarrow}$ and $a\notin A$.

  Similar to Case 1.\\

\noindent  Case 3 $\frac{\underset{i< n}{\square}a_i.t_i \stackrel{a}{\longrightarrow} r,\underset{j< m}{\square}b_j.s_j \stackrel{a}{\longrightarrow} s}{\underset{i< n}{\square}a_i.t_i \parallel_A \underset{j< m}{\square}b_j.s_j \stackrel{a}{\longrightarrow}r\parallel_A s}$ with $a\in A$.

  So, $\underset{i< n}{\square}a_i.t_i \stackrel{a}{\longrightarrow} r \in M_{CLL}$, $ \underset{j< m}{\square}b_j.s_j \stackrel{a}{\longrightarrow} s \in M_{CLL}$ and $t'' \equiv r\parallel_A s$.
  By Lemma~\ref{L:BIG_SQUARE}(3), we have $a= a_{i_0}$, $r \equiv t_{i_0}$ for some $i_0 < n$ and $a = b_{j_0}$, $s \equiv s_{j_0}$ for some $j_0 < m$.
  Since $a_{i_0} = b_{j_0}=a \in A$, $a_{i_0}.(t_{i_0} \parallel_A s_{j_0}) \in \Omega_3$.
  So, by Lemma~\ref{L:BIG_SQUARE}(2), $\square \Omega_3 \stackrel{a_{i_0}}{\longrightarrow}_{CLL}t_{i_0} \parallel_A s_{j_0}$.
  Moreover, by Lemma~\ref{L:BIG_SQUARE}(3), it follows from $\{a_i,b_j|i<n\;\text{and}\;j<m\}\subseteq Act$ that $\square \Omega_1 \not\stackrel{\tau}{\longrightarrow}_{CLL}$ and $\square \Omega_2 \not\stackrel{\tau}{\longrightarrow}_{CLL}$. Then, by Lemma~\ref{L:Basic_II}(1), we get  $M \stackrel{a_{i_0}}{\longrightarrow}_{CLL}t_{i_0} \parallel_A s_{j_0}$.
  Hence, $ M \stackrel{a}{\Longrightarrow}_F|t'$ and $(t',t')\in {\mathcal R}$.

  \textbf{(RS4)} We just prove ${\mathcal I}( M) \subseteq{\mathcal I}(N)$, the proof of ${\mathcal I}(M) \supseteq{\mathcal I}(N)$ is similar and omitted.
   Assume that $a \in {\mathcal I}(M)$.
   So, $ M \stackrel{a}{\longrightarrow}_{CLL} t'$ for some $t'$.
   By Lemma~\ref{L:Basic_II}(1), $s \stackrel{a}{\longrightarrow}_{CLL}t'$ for some $s \in \underset{1 \leq i \leq 3}\cup \Omega_i$.
     We shall consider the case where $s\in \Omega_1$, the others are similar and omitted.
     In such case, we may assume $s \equiv a_{i_0}.(t_{i_0}\parallel_A \underset{j<m}{\square}b_j.s_j)$ with $i_0<n$ and $a_{i_0}\notin A$.
     Then, we get $a = a_{i_0}$ and $t' \equiv t_{i_0}\parallel_A \underset{j<m}{\square}b_j.s_j$.
   Since $\{b_j|j<m\}\subseteq Act$, by Lemma~\ref{L:BIG_SQUARE}(3), $M_{CLL}\models \underset{j< m}{\square}b_j.s_j \not\stackrel{\tau}{\longrightarrow}$.
   Further, it follows from $a_{i_0}\notin A$ that
   \[ \frac{\underset{i< n}{\square}a_i.t_i \stackrel{a_{i_0}}{\longrightarrow} t_{i_0}}{\underset{i< n}{\square}a_i.t_i \parallel_A \underset{j< m}{\square}b_j.s_j \stackrel{a_{i_0}}{\longrightarrow} t_{i_0}\parallel_A \underset{j< m}{\square}b_j.s_j }\in Strip(R_{CLL},M_{CLL}).\tag{\ref{L:MULTIPLE_I}.1.1}\]
   On the other hand, by Lemma~\ref{L:BIG_SQUARE}(2), we have $\underset{i\leq n}{\square}a_i.t_i \stackrel{a_{i_0}}{\longrightarrow}_{CLL} t_{i_0}$.
    Therefore, it follows from (\ref{L:MULTIPLE_I}.1.1) that $\underset{i< n}{\square}a_i.t_i \parallel_A \underset{j< m}{\square}b_j.s_j \stackrel{a_{i_0}}{\longrightarrow} t_{i_0}\parallel_A \underset{j< m}{\square}b_j.s_j \in M_{CLL}$.
    So, $a = a_{i_0} \in {\mathcal I}(N)$, as desired.
\end{proof}

Compare to usual expansion law in process calculus, e.g., Prop. 3.3.5 in \cite{Milner89}, we expect that the inequation below holds, where $\Omega_i$ ($1\leq i\leq 3$) is same as ones in Prop.~\ref{L:MULTIPLE_I}.
\[((\square \Omega_1) \Box (\square \Omega_2)) \Box (\square \Omega_3) \sqsubseteq_{RS}    \underset{i< n}{\square}a_i.t_i \parallel_A \underset{j< m}{\square}b_j.s_j. \tag{EXP}\]
Unfortunately, it isn't valid.
For instance, consider the process terms $a_0.t_0 \triangleq a.\bot$, $a_1.t_1 \triangleq c.0$ and $b_0.s_0 \triangleq b.0$ with $a\neq b$, $c\neq a$ and $c \neq b$.
Let $A = \{a,b\}$.
Clearly, the set $\Omega_i(1 \leq i \leq 3)$ corresponding to ones in the above proposition are: $\Omega_1 =\{c.(0\parallel_{\{a,b\}}b.0)\}$ and $\Omega_2 = \Omega_3 = \emptyset$.
Then, \[(( \square \Omega_1) \Box (\square \Omega_2)) \Box ( \square \Omega_3 ) \equiv (c.(0 \parallel_{\{a,b\}} b.0)\Box 0)\Box 0.\]
By Lemma~\ref{L:F_NORMAL}, $(a.\bot \Box c.0) \parallel_{\{a,b\}} b.0 \in F_{CLL}$ and  $(c.(0\parallel_{\{a,b\}}b.0) \Box 0)\Box 0 \notin F_{CLL}$.
Then, it is easy to see that $(c.(0\parallel_{\{a,b\}}b.0) \Box 0)\Box 0 \not \sqsubseteq_{RS} (a.\bot \Box c.0) \parallel_{\{a,b\}} b.0 $.

However, the inequation (EXP) holds for the process terms satisfying a moderate condition. Formally, we have the result below.

\begin{proposition}\label{L:MULTIPLE_IV}
Let $n,m\geq0$, $A \subseteq Act$, and let $t_i,s_j \in T(\Sigma_{CLL})$ and
$a_i,b_j\in Act$ for each $i<n$ and $j<m$.
Assume that $(\{t_i|a_i\in A\;\text{and}\; a_i\neq b_j\;\text{for each}\; j<m\}\cup \{s_j|b_j\in A\;\text{and}\; b_j\neq a_i\;\text{for each}\;i<n\}) \cap F_{CLL} =\emptyset$,
 then \[((\square \Omega_1) \Box (\square \Omega_2)) \Box (\square \Omega_3) \sqsubseteq_{RS}    \underset{i< n}{\square}a_i.t_i \parallel_A \underset{j< m}{\square}b_j.s_j,\]
where $\Omega_i$ ($1\leq i\leq 3$) is same as ones in Prop.~\ref{L:MULTIPLE_I}.
\end{proposition}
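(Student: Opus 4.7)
The plan is to mirror the proof of Prop.~\ref{L:MULTIPLE_I} with the direction of simulation reversed. Write $N\triangleq \underset{i< n}{\square}a_i.t_i \parallel_A \underset{j< m}{\square}b_j.s_j$ and $M\triangleq((\square\Omega_1)\Box(\square\Omega_2))\Box(\square\Omega_3)$. Since every prefix $a_i,b_j$ lies in $Act$, Lemma~\ref{L:BIG_SQUARE}(3) shows that no general external choice appearing in $M$ or $N$ performs a $\tau$-transition, and Lemma~\ref{L:STABILIZATION} then yields that both $M$ and $N$ are stable. Hence it suffices to establish $M\underset{\thicksim}{\sqsubset}_{RS} N$, which I do by verifying that
\[\mathcal{R}\triangleq \{(M,N)\}\cup Id_S\]
is a stable ready simulation relation; the pairs in $Id_S$ are trivial, so attention focuses on $(M,N)$.

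The only non-routine clause is (RS2), namely $M\notin F_{CLL}\Rightarrow N\notin F_{CLL}$, which I prove in contrapositive form. Assume $N\in F_{CLL}$. Lemma~\ref{L:F_NORMAL}(3) (as an \emph{iff}) gives $\underset{i< n}{\square}a_i.t_i\in F_{CLL}$ or $\underset{j< m}{\square}b_j.s_j\in F_{CLL}$; Lemma~\ref{L:BIG_SQUARE}(1) together with Lemma~\ref{L:F_NORMAL}(2) then produces either some $t_k\in F_{CLL}$ or some $s_\ell\in F_{CLL}$. Treating the first case and splitting on $a_k$: if $a_k\notin A$, then $a_k.(t_k\parallel_A \underset{j< m}{\square}b_j.s_j)\in\Omega_1$ lies in $F_{CLL}$ by Lemma~\ref{L:F_NORMAL}(2)(3), and successively applying Lemma~\ref{L:BIG_SQUARE}(1) to $\square\Omega_1$ and Lemma~\ref{L:F_NORMAL}(3) to the two outer $\Box$'s delivers $M\in F_{CLL}$; if $a_k\in A$ with some $b_{j_0}=a_k$, the same chain works via $a_k.(t_k\parallel_A s_{j_0})\in\Omega_3$. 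The sole remaining subcase---$a_k\in A$ yet $a_k\neq b_j$ for every $j< m$---is precisely the situation that the hypothesis forbids, since it would place $t_k$ inside a set declared disjoint from $F_{CLL}$. A symmetric argument handles $s_\ell\in F_{CLL}$.

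Clauses (RS3) and (RS4) are essentially transposes of the corresponding parts of Prop.~\ref{L:MULTIPLE_I}. For (RS3), given $M\stackrel{a}{\Longrightarrow}_F|t'$, stability yields $M\stackrel{a}{\longrightarrow}_F t''\stackrel{\epsilon}{\Longrightarrow}_F|t'$; Lemma~\ref{L:Basic_II}(1) and Lemma~\ref{L:BIG_SQUARE}(2)(3) locate the unique $\Omega_i$ responsible for the transition, and the matching rule $Ra_{13}$, $Ra_{14}$ or $Ra_{15}$ (noting that $\underset{i< n}{\square}a_i.t_i$ and $\underset{j< m}{\square}b_j.s_j$ cannot $\tau$-transition) supplies $N\stackrel{a}{\longrightarrow} t''$; the consistency $N\notin F_{CLL}$ follows from the contrapositive of (RS2), and $(t',t')\in Id_S$ closes the case. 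For (RS4), the ready set of either side decomposes into the union of $\{a_i:i< n,\,a_i\notin A\}$, $\{b_j:j< m,\,b_j\notin A\}$ and $\{a_i:i< n,\,j< m,\,a_i=b_j\in A\}$, computed on $M$ via Lemma~\ref{L:Basic_II}(1) and Lemma~\ref{L:BIG_SQUARE}(2)(3) and on $N$ via the parallel rules $Ra_{13}$--$Ra_{15}$. The main obstacle is the bookkeeping in (RS2): one must chain the bi-implications in Lemma~\ref{L:F_NORMAL} and Lemma~\ref{L:BIG_SQUARE}(1) carefully and recognise that exactly the subcase excluded by the hypothesis corresponds to the counter-example preceding the proposition, in which a blocked $a.\bot$ branch contaminates $\parallel_A$ without contaminating the expansion.
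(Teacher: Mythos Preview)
Your proof is correct and follows essentially the same route as the paper: define a candidate relation $\{(M,N)\}\cup Id_S$, observe that (RS1), (RS3), (RS4) go through exactly as in Prop.~\ref{L:MULTIPLE_I} with the roles swapped, and focus on (RS2), where the hypothesis is used to rule out precisely the blocked-synchronisation subcase. The only cosmetic difference is that the paper swaps the names $M$ and $N$ relative to Prop.~\ref{L:MULTIPLE_I} (so that it again proves ``$N\underset{\thicksim}{\sqsubset}_{RS} M$''), whereas you keep the old naming and reverse the simulation direction; the content is identical.
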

\begin{proof}
  Set $M \triangleq \underset{i< n}{\square}a_i.t_i \parallel_A \underset{j< m}{\square}b_j.s_j$ and $N \triangleq  ((\square \Omega_1) \Box (\square \Omega_2)) \Box (\square \Omega_3)$.
  Similar to Prop.~\ref{L:MULTIPLE_I}, we shall prove $N\underset{\thicksim}{\sqsubset}_{RS} M$. Put
  \[{\mathcal R}\triangleq\{(N,M)\}\cup Id_S.\]
  It suffices to show that $\mathcal R$ is a stable ready simulation relation. We will check that the pair $(N, M)$ satisfies (RS2), the remainder is similar to Prop.~\ref{L:MULTIPLE_I}.

  \textbf{(RS2)} Suppose $M \in F_{CLL}$. It follows from Lemma~\ref{L:F_NORMAL} and \ref{L:BIG_SQUARE}(1) that
  \[\text{either}\;t_{i_0} \in F_{CLL}\;\text{for some}\;i_0<n\;\text{or}\; s_{j_0} \in F_{CLL}\;\text{for some}\; j_0<m. \]
  W.l.o.g, we consider the first alternative.
  Then, by the assumption, we get either $a_{i_0} \notin A$ or $a_{i_0} = b_{j_0}$ for some $j_0 < m$.
  Consequently, either $a_{i_0}.(t_{i_0} \parallel_A \underset{j< m}{\square}b_j.s_j) \in \Omega_1$ or $a_{i_0}.(t_{i_0} \parallel_A s_{j_0}) \in \Omega_3$.
  So, by Lemma~\ref{L:F_NORMAL} and \ref{L:BIG_SQUARE}(1), $N\in F_{CLL}$ follows, as desired.
\end{proof}



By Lemma~\ref{L:F_NORMAL}, it is easy to see that the operators $\alpha.()$, $\vee$, $\Box$ and $\parallel_A$ preserve consistency. Thus, an immediate consequence of Lemma~\ref{L:F_NORMAL} is

\begin{lemma}\label{L:BPT}
  $T(\Sigma_B)\cap F_{CLL}=\emptyset$.
\end{lemma}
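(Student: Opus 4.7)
The plan is to proceed by structural induction on basic process terms, exploiting the fact that the BNF for $T(\Sigma_B)$ in Definition~\ref{D:BPT} excludes both $\bot$ and the conjunction operator $\wedge$—precisely the two constructs that can introduce inconsistency from the outside (via $Rp_1$) or from meta-level conditions on ready sets (via $Rp_{10}$, $Rp_{11}$, $Rp_{13}$). Every remaining constructor either preserves consistency or reflects it from its components, as already recorded in Lemma~\ref{L:F_NORMAL}.

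Concretely, I would show by induction on the structure of $t \in T(\Sigma_B)$ that $t \notin F_{CLL}$. For the base case $t \equiv 0$, the conclusion is immediate from Lemma~\ref{L:F_NORMAL}(5). For the inductive cases: if $t \equiv \alpha.t_1$ with $t_1 \in T(\Sigma_B)$, then by the induction hypothesis $t_1 \notin F_{CLL}$, hence by Lemma~\ref{L:F_NORMAL}(2) also $\alpha.t_1 \notin F_{CLL}$. If $t \equiv t_1 \vee t_2$ with $t_1, t_2 \in T(\Sigma_B)$, the induction hypothesis gives $t_1 \notin F_{CLL}$ (in particular), and Lemma~\ref{L:F_NORMAL}(1) then forces $t_1 \vee t_2 \notin F_{CLL}$. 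Finally, if $t \equiv t_1 \odot t_2$ for $\odot \in \{\Box, \parallel_A\}$ with $t_1, t_2 \in T(\Sigma_B)$, then both $t_1, t_2 \notin F_{CLL}$ by the induction hypothesis, and Lemma~\ref{L:F_NORMAL}(3) yields $t_1 \odot t_2 \notin F_{CLL}$.

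There is no real obstacle here: the proof is almost a direct rereading of Lemma~\ref{L:F_NORMAL} along the grammar of $T(\Sigma_B)$. The only subtlety worth emphasising (and the reason the lemma is true at all) is that the grammar rules out the two troublesome constructs. Were $\bot$ admitted, the base case would fail; were $\wedge$ admitted, the inductive step would fail because Lemma~\ref{L:F_NORMAL}(4) is only a one-way implication and the converse genuinely does not hold—for instance, $a.0 \wedge b.0 \in F_{CLL}$ even though neither conjunct is, as illustrated immediately after Lemma~\ref{L:F_NORMAL} in the text. The formal argument is therefore short and the induction is entirely routine.
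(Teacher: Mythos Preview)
Your proposal is correct and follows exactly the approach the paper takes: the paper's proof simply says ``Induction on the structure of basic process terms,'' and the sentence preceding the lemma already points to Lemma~\ref{L:F_NORMAL} as the tool that shows each admissible constructor preserves consistency. You have merely spelled out the cases that the paper leaves implicit.
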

\begin{proof}
  Induction on the structure of basic process terms (see, Def.~\ref{D:BPT}).
\end{proof}

Therefore, as a corollary of Prop.~\ref{S:SPECIAL} and \ref{L:MULTIPLE_IV}, we have

\begin{proposition}\label{L:MULTIPLE_V}
Let $r_1,r_2\in T(\Sigma_B)$ and $t_i,s_j \in T(\Sigma_B)$ for each $i<n$ and $j<m$. Then
  \begin{enumerate}
    \item $a.(r_1 \vee r_2) \sqsubseteq_{RS} a.r_1 \Box a.r_2 $ for each $a \in Act$.
    \item $((\square \Omega_1) \Box (\square \Omega_2)) \Box (\square \Omega_3) \sqsubseteq_{RS}    \underset{i< n}{\square}a_i.t_i \parallel_A \underset{j< m}{\square}b_j.s_j,$
        where $\Omega_i(1\leq i \leq 3)$ is same as ones in Prop.~\ref{L:MULTIPLE_I}.
  \end{enumerate}
\end{proposition}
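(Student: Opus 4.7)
The proof is essentially a packaging of two earlier results together with the observation that basic process terms are always consistent (Lemma~\ref{L:BPT}), so the side conditions appearing in Prop.~\ref{S:SPECIAL} and Prop.~\ref{L:MULTIPLE_IV} are automatic.

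For part~(1), the plan is to invoke Prop.~\ref{S:SPECIAL}, which characterizes when $a.(t_1\vee t_2)\sqsubseteq_{RS} a.t_1\Box a.t_2$ holds in terms of uniformity of $t_1,t_2$ with respect to $F_{CLL}$. Since $r_1,r_2\in T(\Sigma_B)$, Lemma~\ref{L:BPT} yields $r_1,r_2\notin F_{CLL}$, so $r_1$ and $r_2$ are trivially uniform w.r.t.\ $F_{CLL}$ (both consistent). Applying Prop.~\ref{S:SPECIAL} then gives the desired inequation.

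For part~(2), the plan is to apply Prop.~\ref{L:MULTIPLE_IV} directly. The only side condition there is
\[
\bigl(\{t_i\mid a_i\in A\ \text{and}\ a_i\neq b_j\ \text{for each}\ j<m\}\cup\{s_j\mid b_j\in A\ \text{and}\ b_j\neq a_i\ \text{for each}\ i<n\}\bigr)\cap F_{CLL}=\emptyset.
\]
Since each $t_i$ and $s_j$ is a basic process term, Lemma~\ref{L:BPT} gives $t_i,s_j\notin F_{CLL}$, so the above intersection is empty vacuously. Thus the hypothesis of Prop.~\ref{L:MULTIPLE_IV} is satisfied and the inequation follows.

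There is no real obstacle here: the two parts are essentially immediate specializations of Prop.~\ref{S:SPECIAL} and Prop.~\ref{L:MULTIPLE_IV} to the basic-term case, with Lemma~\ref{L:BPT} removing the $F_{CLL}$-related side conditions. The whole argument therefore reduces to citing these three facts in sequence.
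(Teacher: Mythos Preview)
Your proposal is correct and matches the paper's own proof, which simply states that the result ``immediately follows from Lemma~\ref{L:BPT} and Prop.~\ref{S:SPECIAL} and~\ref{L:MULTIPLE_IV}.'' You have spelled out precisely how Lemma~\ref{L:BPT} discharges the side conditions of those two propositions, which is all that is needed.
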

\begin{proof}
    Immediately follows from Lemma~\ref{L:BPT} and Prop.~\ref{S:SPECIAL} and \ref{L:MULTIPLE_IV}.
\end{proof}

We now have all of the properties we require to prove the soundness of the axiomatic system $AX_{CLL}$.

\begin{mydefn}
  For any $t,s\in T(\Sigma_{CLL})$, the inequation $t\leqslant s$ is said to be valid in $M_{CLL}$, in symbols $M_{CLL}\models t\leqslant s$, if and only if $t\sqsubseteq_{RS}s$.
\end{mydefn}

\begin{theorem}[Soundness]\label{T:SOUNDNESS}
If $\vdash t\leqslant s$ then $M_{CLL}\models t\leqslant s$ for any $t,s \in T(\Sigma_{CLL})$.
\end{theorem}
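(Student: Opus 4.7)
The plan is to prove the soundness theorem by a routine induction on the length (equivalently, depth) of the $AX_{CLL}$-proof of $t\leqslant s$. The induction reduces the task to two subtasks: (i) show that every axiom of $AX_{CLL}$ is valid in $M_{CLL}$, and (ii) show that each of the three inference rules \textbf{REF}, \textbf{TRANS}, \textbf{CONTEXT} preserves validity. Fortunately, all the semantic work has already been done in Section~5, so what remains is mostly a book-keeping exercise that matches each axiom to the appropriate previously-proved (in)equational law.

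For the inference rules, (i) \textbf{REF} holds because $\sqsubseteq_{RS}$ is reflexive (noted right after Def.~\ref{D:READYSIMULATION_TERMS}); (ii) \textbf{TRANS} holds because $\sqsubseteq_{RS}$ is transitive (same place); and (iii) \textbf{CONTEXT} is exactly the precongruence result of Prop.~\ref{L:CONGRUENCE}, applied one symbol at a time and then closed under composition. For the axioms, the correspondence is: $EC1$--$EC5$ by Prop.~\ref{L:EC2}; $DI1$--$DI5$ by Prop.~\ref{S:DISJUNCTION}; $CO1$--$CO4$ by Prop.~\ref{S:CONJUNCTION}; $PR1$--$PR2$ by Prop.~\ref{S:PREFIX}; $PA1$--$PA2$ by Prop.~\ref{S:PARALLEL}; $DS1$--$DS3$ by Prop.~\ref{S:DISTRIBUTIVE}; $ECC1$ by Prop.~\ref{L:MULTIPLE_VI}(1); $ECC2$ by Prop.~\ref{L:MULTIPLE_VI}(3); $ECC3$ by Prop.~\ref{L:MULTIPLE_II} (which requires the injective-in-prefixes side condition); and $EXP1$ by Prop.~\ref{L:MULTIPLE_I}.

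The delicate cases are the axioms that carry a $T(\Sigma_B)$ side condition, namely $DS4$ and $EXP2$. For these I would apply Prop.~\ref{L:MULTIPLE_V}(1) and Prop.~\ref{L:MULTIPLE_V}(2), which were tailored precisely to this purpose by virtue of Lemma~\ref{L:BPT} ensuring $T(\Sigma_B)\cap F_{CLL}=\emptyset$, so that the uniformity hypothesis in Prop.~\ref{S:SPECIAL} and the consistency hypothesis in Prop.~\ref{L:MULTIPLE_IV} are automatically met. A small subtlety here is that the axioms are stated for variables ranging over arbitrary terms, whereas $DS4$ and $EXP2$ explicitly restrict to basic process terms; soundness is therefore only required under closed substitutions respecting those restrictions, and Prop.~\ref{L:MULTIPLE_V} supplies exactly what is needed.

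The main obstacle I anticipate is essentially cosmetic rather than mathematical: the axioms $ECC1$--$ECC3$ and $EXP1$--$EXP2$ use the general external choice notation $\underset{i<n}{\square}$ whose semantic content is only determined up to $=_{RS}$ by commutativity/associativity ($EC1,EC2$). So, strictly speaking, before invoking Prop.~\ref{L:MULTIPLE_VI}, \ref{L:MULTIPLE_II}, \ref{L:MULTIPLE_I}, \ref{L:MULTIPLE_V} on an instance of the form $\underset{i<n}\square a_i.x_i$ one must observe that any two bracketings and orderings of the same sequence are $=_{RS}$-equivalent (by iterated application of Prop.~\ref{L:EC2}(2)(4)), and likewise for $\square S$ on a subset. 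Once this is recorded once and for all, the verification of each axiom is a direct one-line appeal to the corresponding proposition, and the induction goes through.
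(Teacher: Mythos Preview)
Your proposal is correct and follows essentially the same route as the paper: reduce soundness to (i) validity of each axiom and (ii) preservation of validity by the inference rules, then discharge (i) by matching each axiom to the corresponding proposition in Section~5 and Section~6 (Prop.~\ref{S:PREFIX}, \ref{S:DISJUNCTION}, \ref{L:EC2}, \ref{S:CONJUNCTION}, \ref{S:PARALLEL}, \ref{S:DISTRIBUTIVE}, \ref{L:MULTIPLE_VI}, \ref{L:MULTIPLE_II}, \ref{L:MULTIPLE_I}, \ref{L:MULTIPLE_V}) and (ii) by reflexivity, transitivity, and the precongruence result Prop.~\ref{L:CONGRUENCE}. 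Your axiom-by-axiom bookkeeping and your remarks on the $T(\Sigma_B)$ side conditions for $DS4$ and $EXP2$ are exactly right and in fact spell out more detail than the paper's own proof, which simply lists the propositions and leaves the matching implicit.
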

\begin{proof}
As usual, it is enough to show that
\begin{enumerate}
  \item all ground instances of axioms are valid in $M_{CLL}$, and
  \item all inference rules preserve validity.
\end{enumerate}
Item (1) is implied by Prop.~\ref{S:PREFIX}, \ref{S:DISJUNCTION}, \ref{L:EC2}, \ref{S:CONJUNCTION}, \ref{S:PARALLEL}, \ref{S:DISTRIBUTIVE}, \ref{L:MULTIPLE_VI}, \ref{L:MULTIPLE_II}, \ref{L:MULTIPLE_I} and \ref{L:MULTIPLE_V}.
Item (2) is implied by Prop.~\ref{L:CONGRUENCE} and the fact that $\sqsubseteq_{RS}$ is reflexive and transitive.
\end{proof}

\subsection{Normal Form and Ground-Completeness}
This subsection will establish the ground-completeness of $AX_{CLL}$. We begin by giving two useful notations.\\

\noindent \textbf{Notation}
\begin{enumerate}
  \item $Prefix(\underset{i<n}{\square}a_i.t_i)\triangleq \{a_i|i<n\}$.
  \item Let $<t_0,t_1,\dots,t_{n-1}>$ be a finite sequence of process terms with $n > 0$. We define the general disjunction $\underset{i<n}\bigvee t_i$ by recursion:
    \begin{enumerate}
    \item $\underset{i<1}\bigvee t_i \triangleq t_0$,
    \item $\underset{i<k+1}\bigvee t_i \triangleq (\underset{i<k}\bigvee t_i) \vee t_k$ for $k\geq 1$.
    \end{enumerate}
\end{enumerate}

Similar to general external choice, up to =, the order and grouping of terms in $\underset{i<n}\bigvee t_i$  may be ignored by virtue of axioms $DI1$ and $DI2$.

To prove the ground-completeness of $AX_{CLL}$, we use a general technique involving normal forms.
The idea is to isolate a particular subclass of terms, called normal forms, such that the proof of the completeness is straightforward for it.
The completeness for arbitrary terms will follow if we can show that each term can be reduced to normal form using the equations in $AX_{CLL}$.
We define normal form as follows.

\begin{mydefn}[Normal Form]\label{D:NORMAL_FORM}
    The set $NF_B$ is the least subset of $T(\Sigma_{CLL})$ such that $\underset{i< n}{\bigvee}t_i\in NF_B$ if $n > 0$ and for each $i<n$, $t_i$ has the format $\underset{j< m_i}{\square}a_{ij}.t_{ij}$ with $m_i \geq 0$ such that

        \noindent (N)\;\;\;  $t_{ij}\in NF_B$ for each $j<m_i$,

        \noindent (D) \;\; $\underset{j< m_i}{\square}a_{ij}.t_{ij}$ is injective in prefixes, and

        \noindent (N-$\tau$)  $a_{ij} \in Act$ for each $j<m_i$.

We put \[NF \triangleq\{ \bot \} \cup NF_B.\]
Each process term in $NF$ is said to be  in normal form.
\end{mydefn}

Notice that $NF_B \subseteq T(\Sigma_B)$, and $0\in NF_B$ by taking $n=1$ and $m_0 = 0$ in $ \underset{i<n}{\bigvee}\underset{j<m_i}{\square}t_{ij}$.
In the following, we will show that each process term can be transformed using the equations into a normal form, which is a crucial step in establishing the ground-completeness of $AX_{CLL}$.
To this end, the next five lemmas are firstly proved.

\begin{lemma}\label{L:DIS_INEQUATION}
  $\vdash (x\odot y)\vee(x\odot z)\leqslant x \odot (y \vee z)$ for each $\odot \in \{\Box, \wedge, \parallel_A\}$.
\end{lemma}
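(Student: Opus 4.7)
The plan is to derive the inequation purely from the semilattice-like axioms for $\vee$ together with the congruence (CONTEXT) rule, without any appeal to the distributive axioms $DS1$--$DS3$. The key insight is that, even though $DS1$--$DS3$ give only one direction of distributivity, the reverse direction $(x\odot y)\vee(x\odot z)\leqslant x\odot(y\vee z)$ is already forced by the fact that $\vee$ behaves as a join with respect to $\leqslant$: axiom $DI5$ together with $DI1$ and $DI3$ turn $y\vee z$ into an upper bound of both $y$ and $z$ and an absorber of a doubled copy of itself.

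First I would derive the two auxiliary inequations $y\leqslant y\vee z$ and $z\leqslant y\vee z$. The first is a direct instance of $DI5$. For the second I instantiate $DI5$ as $z\leqslant z\vee y$, note that the equation $DI1$ gives $z\vee y \leqslant y\vee z$ (since any axiomatic equation $t=s$ is short-hand for $t\leqslant s$ and $s\leqslant t$), and conclude by TRANS.

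Next I would use REF on $x$ together with the CONTEXT rule applied to the binary operator $\odot$ to lift the auxiliary inequations to
\[
x\odot y \;\leqslant\; x\odot (y\vee z) \qquad\text{and}\qquad x\odot z \;\leqslant\; x\odot (y\vee z).
\]
This is where the uniform treatment of the three cases $\odot\in\{\Box,\wedge,\parallel_A\}$ happens: CONTEXT is stated for every function symbol of $\Sigma_{CLL}$, so the same two-line derivation works for all three operators simultaneously.

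Finally I would apply the CONTEXT rule once more, this time to the operator $\vee$, using the two inequations just obtained, to get
\[
(x\odot y)\vee(x\odot z) \;\leqslant\; (x\odot (y\vee z))\vee(x\odot (y\vee z)).
\]
The right-hand side collapses to $x\odot(y\vee z)$ by axiom $DI3$ (read as an inequation), and TRANS delivers the desired conclusion. The proof is short and essentially mechanical; I do not anticipate any real obstacle, the only subtlety being the implicit convention that each equational axiom such as $DI1$ or $DI3$ is to be unfolded into its two constituent inequations before it can be chained with the inference rules REF, TRANS and CONTEXT.
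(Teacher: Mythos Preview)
Your proposal is correct and follows essentially the same route as the paper's own proof: derive $y\leqslant y\vee z$ and $z\leqslant y\vee z$ from $DI5$ (and $DI1$, TRANS), lift via REF and CONTEXT to $x\odot y\leqslant x\odot(y\vee z)$ and $x\odot z\leqslant x\odot(y\vee z)$, then combine with CONTEXT on $\vee$ followed by $DI3$ and TRANS. The paper's write-up is just a terser rendering of the same derivation.
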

\begin{proof}
  $\vdash y \leqslant y \vee z$ and $\vdash z \leqslant y \vee z$  \qquad\qquad\qquad(by $DI1$, $DI5$ and TRANS)

\noindent   $\Rightarrow \vdash x \odot y \leqslant x \odot (y \vee z)$ and $\vdash x\odot z \leqslant x\odot (y \vee z)$ (by CONTEXT and REF)

\noindent  $\Rightarrow \vdash (x \odot y) \vee (x\odot z) \leqslant x \odot (y \vee z)$ \qquad \qquad (by $DI3$, CONTEXT and TRANS)
\end{proof}

\begin{lemma}\label{L:COMP_CONJ}
  If $t,s\in NF_B$ then $\vdash t \wedge s = r$ for some $r \in NF$.
\end{lemma}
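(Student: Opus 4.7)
The plan is to proceed by strong induction on $|t|+|s|$, reducing the problem to two simpler subtasks: handling the outer $\bigvee$-layer via distributivity, and normalising a conjunction of two single $\square$-headed blocks. Writing $t\equiv\underset{i<n}{\bigvee}T_i$ and $s\equiv\underset{k<n'}{\bigvee}S_k$ with $T_i\equiv\underset{j<m_i}{\square}a_{ij}.t_{ij}$ and $S_k\equiv\underset{l<m'_k}{\square}b_{kl}.s_{kl}$ satisfying the conditions of Def.~\ref{D:NORMAL_FORM}, I would first use $DS2$ together with Lemma~\ref{L:DIS_INEQUATION} (and $CO1$ to swap arguments of $\wedge$) to derive, by iterated distributivity, the equation
\[t\wedge s \;=\; \underset{i<n,\,k<n'}{\bigvee}(T_i\wedge S_k).\]
Thus it suffices to normalise each conjunct $T_i\wedge S_k$ and then reassemble the outer $\bigvee$.

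For a fixed pair, if $\{a_{ij}\mid j<m_i\}\neq\{b_{kl}\mid l<m'_k\}$, axiom $ECC1$ immediately gives $T_i\wedge S_k=\bot$. Otherwise the two prefix sets coincide, and because both $\square$'s are injective in prefixes, $EC1$ and $EC2$ let me reindex so that $T_i\equiv\underset{q<p}{\square}c_q.x_q$ and $S_k\equiv\underset{q<p}{\square}c_q.y_q$ share a common prefix sequence $c_0,\dots,c_{p-1}$. Axioms $ECC2$ and $ECC3$, whose injectivity hypothesis is now satisfied, together yield
\[T_i\wedge S_k \;=\; \underset{q<p}{\square}c_q.(x_q\wedge y_q).\]

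Now I would invoke the induction hypothesis: each $x_q,y_q\in NF_B$, and $|x_q|+|y_q|<|t|+|s|$, so $\vdash x_q\wedge y_q = r_q$ for some $r_q\in NF$. If some $r_q\equiv\bot$, then $PR1$ gives $c_q.r_q=\bot$, and $EC1$, $EC2$, $EC5$ collapse the surrounding $\square$ to $\bot$. Otherwise every $r_q\in NF_B$, the prefixes $c_q$ are visible and pairwise distinct, so $\underset{q<p}{\square}c_q.r_q$ lies in $NF_B$ as a single-disjunct $\bigvee$ satisfying (N), (D) and (N-$\tau$) of Def.~\ref{D:NORMAL_FORM}. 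Either way each $T_i\wedge S_k$ is provably equal to some element of $NF$.

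Finally I would reassemble the outer disjunction. Using $DI1$, $DI2$, $DI3$, $DI4$ I can drop all $\bot$-summands from $\underset{i,k}{\bigvee}(T_i\wedge S_k)$: if every summand equals $\bot$, the whole term equals $\bot\in NF$; otherwise the surviving disjuncts are each of the form $\underset{q}{\square}c_q.r_q$ with distinct $c_q\in Act$ and $r_q\in NF_B$, so their $\bigvee$ lies in $NF_B\subseteq NF$. The main obstacle I expect is the syntactic bookkeeping in the middle step: rewriting the two $\square$-structures so that $ECC2$/$ECC3$ apply \emph{on the nose} (which crucially uses injectivity in prefixes to obtain a bijection between the two prefix lists), and carefully tracking how $\bot$ propagates both inward via $PR1$ and outward via $EC5$ and $DI4$ without disturbing the outer normal-form shape.
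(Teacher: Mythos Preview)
Your proposal is correct and follows essentially the same approach as the paper's proof: induction on $|t|+|s|$, distributing $\wedge$ over the outer $\bigvee$ via $DS2$, $CO1$ and Lemma~\ref{L:DIS_INEQUATION}, then handling each $T_i\wedge S_k$ via $ECC1$ (unequal prefix sets) or $ECC2$/$ECC3$ (equal prefix sets, using injectivity), invoking the induction hypothesis on the inner conjuncts, and propagating $\bot$ via $PR1$, $EC5$, $DI4$. The only cosmetic difference is that the paper treats the degenerate case $p=0$ (both blocks equal to $0$) separately via $CO3$, whereas your use of $ECC2$/$ECC3$ with an empty $\square$ covers it implicitly.
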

\begin{proof}
We prove it by induction on the number $|t|+|s|$.
  Since $t,s \in NF_B$, we may assume that $t \equiv \underset{i< n}{\bigvee}t_{i}$ and $s \equiv \underset{i'< n'}{\bigvee}s_{i'}$.
   By $DI1$, $DI2$, $CO1$, $DS2$ and Lemma~\ref{L:DIS_INEQUATION}, we get
   \[ \vdash t \wedge s
    = \bigvee_{i< n,i'< n'}(t_{i}\wedge s_{i'}) \tag{\ref{L:COMP_CONJ}.1}.\]
    Let $i< n$ and $i'< n'$. We will show that $\vdash t_{i} \wedge s_{i'}=r_{ii'}$ for some $r_{ii'} \in NF$. Clearly, we may assume that $t_{i}\equiv \underset{j< m_{i}}{\square}a_{ij}.t_{ij}$ and $s_{i'}\equiv \underset{j'< m_{i'}'}{\square}b_{i'j'}.s_{i'j'}$ satisfying (N), (D) and (N-$\tau$) in Def.~\ref{D:NORMAL_FORM}. We consider two cases below.\\

    \noindent Case 1 $Prefix(t_{i})\neq Prefix(s_{i'})$.

          By  $ECC1$, we have $\vdash t_{i} \wedge s_{i'}  = \bot $.\\

    \noindent Case 2  $Prefix(t_{i})= Prefix(s_{i'})$.

        Thus, by the item (D) in Def.~\ref{D:NORMAL_FORM}, we have $m_i = m_i'$.
        If $m_i=0$ then, by the definition of general external choice, we get $t_i\equiv s_{i'} \equiv 0$.
        Moreover, $\vdash t_i \wedge s_{i'}=0$ follows from $CO3$.
        In the following, we consider the nontrivial case where $m_i>0$.
        By $EC1$, $EC2$, $ECC2$ and $ECC3$, it follows that
        \[\vdash t_{i} \wedge s_{i'}  = \underset{\begin{subarray}
                   \; j,j'<m_i,\\
                   a_{ij}= b_{i'j'}
                \end{subarray}}\square a_{ij}.(t_{ij} \wedge s_{i'j'}). \]
         For each pair $j,j'< m_i$  with $a_{ij}= b_{i'j'}$,
         since $t_{ij},s_{i'j'} \in NF_B$ and $|t|+|s|>|t_{ij}|+|s_{i'j'}|$, by IH, we have $\vdash t_{ij} \wedge s_{i'j'} = t_{iji'j'}$ for some  $t_{iji'j'} \in NF$.
         Set \[S \triangleq \underset{\begin{subarray}
                   \; j,j'<m_i,\\
                   a_{ij}= b_{i'j'}
                \end{subarray}}\square a_{ij}.t_{iji'j'}.\]
        Consequently, by  CONTEXT and TRANS, we have
            \[\vdash t_{i} \wedge s_{i'} = S.\]
        Clearly, if $t_{iji'j'}\in NF_{B}$ for each pair $j,j'<m_i$ with  $a_{ij}= b_{i'j'}$, then $S \in NF_{B}$.
        Otherwise, we have $t_{ij_0i'j_0'} \equiv \bot$ for some $j_0,j_0'<m_i$, then it follows from $PR1$ that \[\vdash  a_{ij_0}.t_{ij_0i'j_0'}  =\bot.\]
        Further, by $EC5$, CONTEXT and TRANS, we get $\vdash S =\bot$.

     In summary, it follows from the discussion above that, for each $i<n$ and $i'<n'$,
     \[\text{either} \vdash t_{i} \wedge s_{i'}= r_{ii'}\;\text{for some}\;r_{ii'}\in NF_{B}\;\text{or}\; \vdash t_{i} \wedge s_{i'}= \bot.\]
     Then, by $DI1$, $DI4$ and (\ref{L:COMP_CONJ}.1), we get either $\vdash t \wedge s = r$ for some $r \in NF_{B}$ or $\vdash t \wedge s= \bot$.
\end{proof}

In the above proof, we do not  explicitly show the proof for the induction basis where $t \equiv s \equiv 0$, as it is an instance of the proof of the induction step.

\begin{lemma}\label{L:SP3}
      $\vdash a.x \Box a.y  \leqslant a.(x \vee y) $.
\end{lemma}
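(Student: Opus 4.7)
The plan is to mirror, on the syntactic side, the semantic argument given earlier in Proposition \ref{S:SPECIAL_I}, which showed $\alpha.t_1 \Box \alpha.t_2 \sqsubseteq_{RS} \alpha.(t_1 \vee t_2)$ by chaining monotonicity with $t \Box t =_{RS} t$. Every semantic step in that proof has a matching axiom or inference rule in $AX_{CLL}$, so a straightforward derivation should go through; in particular, this lemma is unrestricted (unlike the reverse inequation $DS4$, which demands basic process terms), because $EC3$ and CONTEXT impose no side conditions.

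Concretely, I would proceed in four short derivations. First, derive $\vdash x \leqslant x \vee y$ directly from $DI5$, and derive $\vdash y \leqslant x \vee y$ by combining $DI1$ (commutativity $y \vee x = x \vee y$) with $DI5$ applied to $y$ and $x$, using REF and TRANS. Second, apply the CONTEXT rule with the unary function symbol $a.()$ to get $\vdash a.x \leqslant a.(x \vee y)$ and $\vdash a.y \leqslant a.(x \vee y)$. Third, apply CONTEXT once more with the binary symbol $\Box$ (using REF on no arguments is unnecessary; both inequations feed into the two premises simultaneously) to obtain
\[\vdash a.x \Box a.y \;\leqslant\; a.(x \vee y) \Box a.(x \vee y).\]
Fourth, use $EC3$ instantiated with $a.(x\vee y)$, giving $\vdash a.(x \vee y) \Box a.(x \vee y) = a.(x \vee y)$, and close with TRANS.

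There is no genuine obstacle here: the lemma is essentially a direct transcription of $\alpha.t_1 \Box \alpha.t_2 \sqsubseteq_{RS} \alpha.(t_1 \vee t_2)$ into the proof system, and the only thing to be careful about is that the CONTEXT rule is formulated for every $n$-ary function symbol in $\Sigma_{CLL}$, so both the unary prefix $a.()$ and the binary operator $\Box$ are legitimate contexts. Unlike the companion axiom $DS4$, there is no restriction to $T(\Sigma_B)$, because the derivation never appeals to consistency of $x$ or $y$; it only uses idempotence of $\Box$ and monotonicity of $a.()$ and $\Box$ under $\leqslant$. Hence the proof will consist of a single display showing the four steps above, concluded by TRANS.
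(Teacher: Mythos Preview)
Your proposal is correct and follows essentially the same route as the paper's own proof: derive $x \leqslant x \vee y$ and $y \leqslant x \vee y$ from $DI5$ (with $DI1$ and TRANS for the second), lift both through the prefix via CONTEXT, combine under $\Box$ via CONTEXT, and collapse using $EC3$ and TRANS. The paper's proof is exactly this three-line derivation.
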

\begin{proof}
$ \vdash x  \leqslant x \vee y$ and $\vdash y  \leqslant x \vee y$   \qquad\;\;\;\;\; (by  $DI1$ ,  $DI5$ and TRANS )

\noindent $\Rightarrow   \vdash a.x  \leqslant a.(x \vee y)$ and $\vdash a.y  \leqslant a.(x \vee y)$  \;\qquad\qquad\qquad\;  (by CONTEXT)

\noindent $\Rightarrow   \vdash a.x \Box a.y \leqslant a.(x \vee y)$ \qquad\qquad\qquad (by CONTEXT,  $EC3$ and TRANS)
\end{proof}

\begin{lemma}\label{L:BIG_SQUARE_EC}
  If $t \equiv \underset{i<n}{\square}a_i.t_i \in NF_B$  and $s \equiv \underset{j<m}{\square}b_j.s_j \in NF_B$, then $\vdash t \Box s = \underset{i<k}\square c_i.r_i$ for some $\underset{i<k}\square c_i.r_i \in NF_B$.
\end{lemma}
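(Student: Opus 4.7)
The strategy is to flatten $t \Box s$ into a single general external choice of all summands, then to regroup so that summands sharing the same prefix are merged into one prefixed disjunction, and finally to verify that the result satisfies clauses (N), (D), (N-$\tau$) of Def.~\ref{D:NORMAL_FORM}. The degenerate case $n = m = 0$, i.e.\ $t \equiv s \equiv 0$, is handled immediately by $EC4$ giving $\vdash 0 \Box 0 = 0 \in NF_B$; in what follows assume $n + m \geq 1$.

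First, by repeated use of $EC1$ (commutativity) and $EC2$ (associativity) of $\Box$, together with CONTEXT and TRANS, we obtain
$$\vdash t \Box s = \underset{l < n+m}{\square}\, d_l.u_l,$$
where $(d_l, u_l)_{l < n+m}$ enumerates the multiset $\{(a_i, t_i) : i < n\} \cup \{(b_j, s_j) : j < m\}$ in some fixed order. Let $C \triangleq \{d_l : l < n+m\} \subseteq Act$ (every prefix lies in $Act$ by clause (N-$\tau$) applied to $t, s \in NF_B$). For each $c \in C$ let $\langle u^c_0, \ldots, u^c_{k_c - 1}\rangle$ be the list of those $u_l$'s with $d_l = c$, so $k_c \geq 1$. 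Using $EC1$, $EC2$ again, we rearrange the flattened sum so that summands sharing a prefix sit adjacent, obtaining the equivalent form
$$\underset{c \in C}{\square}\, (c.u^c_0 \Box c.u^c_1 \Box \cdots \Box c.u^c_{k_c - 1}).$$
Within each inner group we apply iteratively the derivable equality $\vdash a.x \Box a.y = a.(x \vee y)$ valid for $x, y \in T(\Sigma_B)$: the $\leqslant$ direction is Lemma~\ref{L:SP3} and the $\geqslant$ direction is axiom $DS4$. This is applicable because every $u^c_l$ is some $t_i$ or $s_j$, hence lies in $NF_B \subseteq T(\Sigma_B)$, and at each step the running disjunction $u^c_0 \vee \cdots \vee u^c_l$ stays within $T(\Sigma_B)$ (since $T(\Sigma_B)$ is closed under $\vee$). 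By CONTEXT and TRANS this yields
$$\vdash t \Box s = \underset{c \in C}{\square}\, c.r_c, \qquad r_c \triangleq u^c_0 \vee u^c_1 \vee \cdots \vee u^c_{k_c - 1}.$$

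It remains to check $\underset{c \in C}\square c.r_c \in NF_B$: (D) holds because the prefixes $c \in C$ are pairwise distinct by construction; (N-$\tau$) holds because $C \subseteq Act$; and (N) holds because each $u^c_l \in NF_B$, and $NF_B$ is (up to $DI1$, $DI2$) closed under $\vee$, so each $r_c \in NF_B$. The main obstacle is the bookkeeping in the second step: the merging equality $a.x \Box a.y = a.(x \vee y)$ is obtained by combining two separate facts (Lemma~\ref{L:SP3} and the \emph{restricted} axiom $DS4$), so one must carefully propagate the $T(\Sigma_B)$ side-condition through the iterated merging. This is precisely why the normal-form definition confines prefix arguments to $NF_B$; had we allowed arbitrary conjunctions or $\bot$ under prefixes, the grouping step would fail at the very point where $DS4$ is invoked.
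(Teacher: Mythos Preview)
Your proof is correct and follows essentially the same approach as the paper's: rearrange via $EC1$, $EC2$, then collapse summands sharing a prefix using the derivable equality $a.x \Box a.y = a.(x\vee y)$ obtained from Lemma~\ref{L:SP3} together with $DS4$, and finally check (N), (D), (N-$\tau$). Two small points: first, your flattening step should also invoke $EC4$ to absorb the $0$ when exactly one of $n,m$ vanishes (you only isolate the case $n=m=0$); second, since both $t$ and $s$ are injective in prefixes by (D), every prefix occurs at most twice in the flattened sum, so your ``iterated'' merging is in fact a single application --- the paper exploits this by splitting into the two cases $Prefix(t)\cap Prefix(s)=\emptyset$ and $\neq\emptyset$, but your uniform treatment is equally valid.
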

\begin{proof}
If $n=0$ or $m=0$ then it immediately follows from $EC1$, $EC4$ and Def.~\ref{D:GEC}.
In the following, we consider the non-trivial case where $n>0$ and $m>0$.
We distinguish two cases below.\\

\noindent Case 1 $Prefix(t) \cap Prefix(s)=\emptyset$.

          Set
            \[p_k \triangleq \begin{cases}
                a_k.t_k  & k<n,\\
                b_{k-n}.s_{k-n} &  n \leq k < m+n.\\
                \end{cases}\]
        Then, it is trivial to check that $\underset{k<m+n}\square p_k $ satisfies (N), (D) and (N-$\tau$) in Def.~\ref{D:NORMAL_FORM}, that is, $\underset{k<m+n}\square p_k \in NF_B$.
        Moreover, by $EC2$ and TRANS, it immediately follows that $\vdash t \Box s = \underset{k<m+n}\square p_k$.\\

\noindent Case 2 $Prefix(t) \cap Prefix(s) \neq \emptyset $.

  Let $i_0<n$ and $j_0<m$ with $a_{i_0} = b_{j_0}$, since $NF_B \subseteq T(\Sigma_B)$, by Lemma~\ref{L:SP3} and $DS4$, we get $\vdash a_{i_0}.t_{i_0} \Box b_{j_0}.s_{j_0} = a_{i_0}.(t_{i_0} \vee s_{j_0})$.
  Further, by Def.~\ref{D:NORMAL_FORM}, $DI1$, $DI2$, CONTEXT and TRANS, it follows from $t_{i_0},s_{j_0}\in NF_B$ that
  \[\vdash a_{i_0}.t_{i_0} \Box b_{j_0}.s_{j_0} = a_{i_0}.p\;\text{for some}\;p\in NF_B.\]
  Thus, for each $i<n$ and $j<m$ with $a_i = b_j$, we can fix a process term $p_{ij}\in NF_B$ such that
  \[\vdash a_i.t_i \Box b_j.s_j = a_i.p_{ij}.\]
  Put
    \begin{enumerate}
      \item  $S_1 \triangleq \underset{ \begin{subarray} \;a_i \notin Prefix(s),\\\;\;\;\;\;\;\;i<n\end{subarray}}{\square}a_i.t_i$,
      \item  $ S_2 \triangleq \underset{ \begin{subarray} \;b_j \notin Prefix(t),\\\;\;\;\;\;\;\;j<m\end{subarray}}{\square}b_j.s_j,$
      \item $S_3 \triangleq \underset{\begin{subarray} \;a_i \in Prefix(t)\cap Prefix(s),\\\;\;\;\;\;a_i = b_j,i<n,j<m\end{subarray}
       }{\square}a_i.p_{ij}.$
    \end{enumerate}
    Then, by $EC1$, $EC2$, TRANS and CONTEXT, we obtain $\vdash t \Box s = (S_1 \Box S_2)\Box S_3$.
    Clearly, both $S_1$ and $S_2$ are in $NF_B$.
    Moreover, since $t$ and $s$ are injective in prefixes, so is $S_3$.
    Hence, $S_3$ is also in $NF_B$.
    Further, since $Prefix(S_i)\cap Prefix(S_j) = \emptyset$ for $1 \leq i \neq j \leq 3$, similar to Case 1, we have
    $\vdash (S_1 \Box S_2)\Box S_3 = \underset{i<k}\square c_i.r_i$ for some $\underset{i<k}\square c_i.r_i \in NF_B$.
\end{proof}

\begin{lemma}\label{L:COMP_PARALLEL}
  If $t,s \in NF_B$ then $\vdash  t\parallel_A s  = r$ for some $r \in NF_B$.
\end{lemma}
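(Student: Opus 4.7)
The plan is to prove this by induction on $|t|+|s|$, reducing in three stages: distribute $\parallel_A$ over the outer disjunctions, expand each resulting parallel composition of prefix-sums via $EXP1$ and $EXP2$, and then reassemble the result into a normal form using Lemma~\ref{L:BIG_SQUARE_EC}.

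First, I would write $t \equiv \bigvee_{i<n} u_i$ and $s \equiv \bigvee_{j<m} v_j$ as in Def.~\ref{D:NORMAL_FORM}, so each $u_i$ is a basic general external choice $\square_{k<m_i} a_{ik}.u_{ik}$ (and similarly $v_j \equiv \square_{\ell<m_j'} b_{j\ell}.v_{j\ell}$) that is injective in prefixes, with all prefixes in $Act$ and all continuations in $NF_B$. Using $DS3$ together with $PA1$ to distribute on both sides, the converse inequation from Lemma~\ref{L:DIS_INEQUATION} (instantiated for $\odot=\parallel_A$), and $DI1, DI2$ to reassociate, I can derive $\vdash t \parallel_A s = \bigvee_{i<n,\,j<m}(u_i \parallel_A v_j)$. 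It therefore suffices to show that each $u_i \parallel_A v_j$ provably equals some $r_{ij} \in NF_B$; the outer disjunction of the $r_{ij}$ then lies in $NF_B$ after flattening nested $\vee$'s via $DI1, DI2$.

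Fix such a pair $u_i, v_j$. Since $u_i, v_j \in T(\Sigma_B)$, the side condition of axioms $EXP1$ and $EXP2$ is met, and they give $\vdash u_i \parallel_A v_j = ((\square \Omega_1)\Box(\square \Omega_2))\Box(\square \Omega_3)$, where every summand is of the form $c.w$ with $c \in Act$ and $w$ one of $u_{ik} \parallel_A v_j$, $u_i \parallel_A v_{j\ell}$, or $u_{ik} \parallel_A v_{j\ell}$. In each case $|u_{ik}|<|u_i|\leq|t|$ or $|v_{j\ell}|<|v_j|\leq|s|$ (with the other component no larger), so the continuation $w$ satisfies $|w_1|+|w_2|<|t|+|s|$ where $w \equiv w_1\parallel_A w_2$. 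The induction hypothesis then furnishes some $r_w \in NF_B$ with $\vdash w = r_w$, and substituting via CONTEXT and TRANS yields $\vdash u_i \parallel_A v_j = E_{ij}$, where $E_{ij}$ is a chained general external choice whose every summand is an $Act$-prefixed $NF_B$-continuation.

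The final step collapses $E_{ij}$ into a single external choice injective in prefixes. This is done by iterated application of Lemma~\ref{L:BIG_SQUARE_EC} (with $EC1, EC2, EC4$ to regroup), since any singleton $c.r$ with $c\in Act$ and $r\in NF_B$ is itself a one-summand $NF_B$-style basic external choice. Thus $\vdash E_{ij} = r_{ij}$ for some $r_{ij} \in NF_B$, which together with Step 1 gives the desired normal form for $t \parallel_A s$. The main obstacle will be the bookkeeping in this last step: Lemma~\ref{L:BIG_SQUARE_EC} merges only two summations at a time, and $E_{ij}$ has three tiers that may share action labels (notably between $\Omega_1$ and $\Omega_2$); one has to iterate the merging carefully, leveraging the fact that prefix-injectivity of $u_i$ and $v_j$ — together with the disjointness of the conditions $a_{ik}\notin A$ vs.\ $a_{ik}\in A$ separating $\Omega_1$ from $\Omega_3$ — restricts possible clashes to a single pairwise interaction and keeps the combinatorics manageable.
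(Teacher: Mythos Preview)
Your proposal is correct and follows essentially the same approach as the paper: induction on $|t|+|s|$, distribution of $\parallel_A$ over the outer disjunctions via $DS3$, $PA1$, and Lemma~\ref{L:DIS_INEQUATION}, then application of $EXP1$/$EXP2$ to each pair, the induction hypothesis on the continuations, and finally Lemma~\ref{L:BIG_SQUARE_EC} to assemble the resulting external choice into $NF_B$. The paper streamlines your last step slightly by observing that, after substituting the IH results, each of the three tiers $S_1,S_2,S_3$ is already in $NF_B$ (prefix-injectivity of $u_i$ and $v_j$ transfers directly), so two applications of Lemma~\ref{L:BIG_SQUARE_EC} suffice without the granular summand-by-summand merging you anticipate; your worry about label clashes between $\Omega_1$ and $\Omega_2$ is handled precisely by Case~2 of that lemma.
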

\begin{proof}
We prove it by induction on the number $|t|+|s|$.
Since $t,s \in NF_B$, we may assume that $t\equiv \underset{i< n}{\bigvee}t_{i}$ and $s\equiv \underset{i'< n'}{\bigvee}s_{i'}$.
 By axioms $DI1$, $DI2$, $PA1$, $DS3$ and Lemma~\ref{L:DIS_INEQUATION}, we get
    \[\vdash  t \parallel_A s
    = \underset{i< n,i'< n'}{\bigvee}(t_{i}\parallel_A s_{i'}). \tag{\ref{L:COMP_PARALLEL}.1}\]
    We shall show that for each $i< n$ and $i'< n'$,
    \[\vdash t_{i}\parallel_A s_{i'} =r_{ii'}\;\text{for some}\;r_{ii'} \in NF_B.\]
     Let $i< n$ and $i'< n'$.
     We may assume that $t_{i}\equiv \underset{j< m_{i}}{\square}a_{ij}.t_{ij}$ and $s_{i'}\equiv \underset{j'< m_{i'}'}{\square}b_{i'j'}.s_{i'j'}$ satisfying (N), (D) and (N-$\tau$) in Def.~\ref{D:NORMAL_FORM}.
     By $EXP1$ and $EXP2$, we have
            \begin{multline*}
                \vdash t_{i}\parallel_A s_{i'} =\\
                (\underset {\begin{subarray}
                   \;j< m_i,\\
                   a_{ij} \notin A
                \end{subarray}}
                \square a_{ij}.(t_{ij} \parallel_A s_{i'}) \Box
                \underset {\begin{subarray}
                   \;j'< m_{i'}',\\
                   b_{i'j'} \notin A
                \end{subarray}}
                \square
                b_{i'j'}.(t_{i} \parallel_A s_{i'j'})) \Box
             \underset {\begin{subarray}
                   \;j< m_i,j'<m_{i'}',\\
                  a_{ij}= b_{i'j'}\in A
                \end{subarray}}
                \square
                a_{ij}.(t_{ij} \parallel_A s_{i'j'}). \tag{\ref{L:COMP_PARALLEL}.2}
            \end{multline*}
    We consider two cases.\\

\noindent    Case 1 $m_i=0$ or $m_{i'}'=0$.

    W.l.o.g, assume that $m_i=0$. Then, by (\ref{L:COMP_PARALLEL}.2), $EC1$, $EC4$, CONTEXT and TRANS, we get
    \[ \vdash t_i \parallel_A s_{i'} = \underset {\begin{subarray}
                   \;j'< m_{i'}',\\
                   b_{i'j'} \notin A
                \end{subarray}}
                \square
                b_{i'j'}.( 0 \parallel_A s_{i'j'}). \tag{\ref{L:COMP_PARALLEL}.3}\]
    If $\{b_{i'j'} \notin A|j'<m_{i'}'\}=\emptyset$ then $\vdash t_i \parallel_A s_{i'} = 0$.
    Next, we consider the case where $\{b_{i'j'}\notin A|j'<m_{i'}'\} \not= \emptyset$.
    For each $j' < m_{i'}'$ with $b_{i'j'}\notin A$, we have $s_{i'j'}\in NF_B$, moreover, $|t|+|s|>|0|+|s_{i'j'}|$.
    Then, by IH, we get $\vdash 0\parallel_A s_{i'j'} = p_{j'}$ for some $p_{j'} \in NF_B$.
    Therefore, by CONTEXT, TRANS and (\ref{L:COMP_PARALLEL}.3), it is easy to see that $\vdash t_i \parallel_A s_{i'} =  r_{ii'}$ for some $r_{ii'} \in NF_B$.\\

\noindent    Case 2 $m_i > 0$ and  $m_{i'}' > 0$.

    In such case, for each $j<m_i$ and $j'<m_{i'}'$, we have $|t|+|s|>|t_{ij}|+|s_{i'}|$, $|t|+|s|>|t_{i}|+|s_{i'j'}|$ and $|t|+|s|>|t_{ij}|+|s_{i'j'}|$.
      Moreover, $t_{ij},s_{i'},t_{i},s_{i'j'}\in NF_B$.
      Then, by IH, there exist $t_{iji'},t_{ii'j'},t_{iji'j'}\in NF_B$ such that $ \vdash  t_{ij} \parallel_A s_{i'} = t_{iji'}$, $\vdash t_{i} \parallel_A s_{i'j'} = t_{ii'j'}$  and $ \vdash  t_{ij} \parallel_A s_{i'j'} =t_{iji'j'}$.
            Set
    \begin{enumerate}
      \item  $S_1 \triangleq \underset{ \begin{subarray} \;a_{ij} \notin A,\\j<m_i\end{subarray}}{\square}a_{ij}.t_{iji'}$,
      \item  $ S_2 \triangleq \underset{ \begin{subarray} \;b_{i'j'} \notin A,\\j'<m_{i'}'\end{subarray}}{\square}b_{i'j'}.t_{ii'j'}$,
      \item $S_3 \triangleq \underset{ \begin{subarray} \;a_{ij} = b_{i'j'}\in A,\\j'<m_{i'}',j<m_i\end{subarray}}{\square}a_{ij}.t_{iji'j'}$.
    \end{enumerate}
    Clearly, $S_1,S_2,S_3\in NF_B$ and $\vdash t_{i}\parallel_A s_{i'}=(S_1 \Box S_2) \Box S_3$.
    Further, by Lemma~\ref{L:BIG_SQUARE_EC}, we get $\vdash t_{i}\parallel_A s_{i'}=r_{ii'}$ for some $r_{ii'} \in NF_B$, as desired.

     In summary, by the discussion above, we conclude that, for each $i<n$ and $i'<n'$, $\vdash t_{i}\parallel_A s_{i'} =r_{ii'}$ for some $r_{ii'}\in NF_B$.
     Then, by Def.~\ref{D:NORMAL_FORM} and (\ref{L:COMP_PARALLEL}.1), it immediately follows that $\vdash t \parallel_A s = r $ for some $r \in NF_B$, as desired.
\end{proof}

Now, we can prove that each process term is normalizable. That is,  we have the result below.

\begin{theorem}[Normal Form Theorem]\label{T:NORMALFORM}
  For each $t \in T(\Sigma_{CLL})$, $\vdash  t = s$   for some $s \in NF$.
\end{theorem}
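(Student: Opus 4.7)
The plan is to proceed by structural induction on $t$, letting the heavy lifting be done by Lemmas \ref{L:COMP_CONJ}, \ref{L:BIG_SQUARE_EC}, and \ref{L:COMP_PARALLEL}, which already cover the delicate interactions between $\wedge$, $\Box$, and $\parallel_A$ on normal forms. The base cases are immediate: $0 \in NF_B$ by taking $n=1$ and $m_0=0$ in Def.~\ref{D:NORMAL_FORM}, and $\bot \in NF$ by definition.

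For the inductive step, I would assume the result for all proper subterms and dispatch on the outermost constructor. If $t \equiv \alpha.t_1$, IH gives $\vdash t_1 = s_1$ with $s_1 \in NF$; if $s_1 \equiv \bot$, then $PR1$ (when $\alpha = a$) or $PR2$ (when $\alpha = \tau$) yields $\vdash t = \bot$; if $s_1 \in NF_B$ and $\alpha \in Act$, the term $a.s_1$ is itself in $NF_B$ by taking $n=1,m_0=1$; if $\alpha = \tau$, axiom $PR2$ gives $\vdash t = s_1 \in NF_B$. If $t \equiv t_1 \vee t_2$, apply IH and eliminate any $\bot$-summand using $DI1,DI4$; otherwise both $s_i \in NF_B$ are $\bigvee$'s of external choices, and concatenating them (rearranged by $DI1,DI2$) is plainly in $NF_B$. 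If $t \equiv t_1 \Box t_2$, eliminate $\bot$ using $EC1,EC5$ if present; otherwise write $s_i \equiv \underset{j<n_i}{\bigvee}q_{ij}$ and use $DS1$ together with Lemma \ref{L:DIS_INEQUATION} to prove $\vdash s_1 \Box s_2 = \underset{j_1<n_1,j_2<n_2}{\bigvee}(q_{1j_1}\Box q_{2j_2})$, then apply Lemma \ref{L:BIG_SQUARE_EC} to each pairwise $\Box$ to obtain a term in $NF_B$, so that the overall disjunction lies in $NF_B$.

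The two logical/parallel cases reduce immediately to the lemmas already established. For $t \equiv t_1 \wedge t_2$, apply IH; if some $s_i\equiv\bot$, axioms $CO1,CO4$ give $\vdash t = \bot$; otherwise both $s_i \in NF_B$ and Lemma \ref{L:COMP_CONJ} yields $\vdash s_1 \wedge s_2 = r$ for some $r \in NF$. Analogously, for $t \equiv t_1 \parallel_A t_2$, axioms $PA1, PA2$ absorb any $\bot$, and otherwise Lemma \ref{L:COMP_PARALLEL} delivers $\vdash s_1 \parallel_A s_2 = r$ for some $r \in NF_B$. In each case, combining with CONTEXT and TRANS (to lift the IH-equalities through the outer operator) produces the required equation $\vdash t = r$ with $r \in NF$.

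Since the substantive algebraic content has been packaged into the preceding lemmas, no case of the induction poses real obstacles; the only mild subtlety is bookkeeping in the $\Box$-case, where one must confirm that after pushing the $\bigvee$'s outward via $DS1$, each residual $q_{1j_1}\Box q_{2j_2}$ satisfies the hypothesis of Lemma \ref{L:BIG_SQUARE_EC} (namely that both operands are external choices in $NF_B$ built from visible prefixes with injectivity), which is precisely what the structure of $NF_B$ guarantees.
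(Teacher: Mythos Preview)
Your proposal is correct and follows essentially the same approach as the paper's proof: structural induction on $t$, with the $\wedge$, $\Box$, and $\parallel_A$ cases delegated to Lemmas~\ref{L:COMP_CONJ}, \ref{L:BIG_SQUARE_EC}, and \ref{L:COMP_PARALLEL} respectively, and the distribution of $\vee$ through $\Box$ handled via $DS1$ together with Lemma~\ref{L:DIS_INEQUATION}. The paper's version additionally invokes $EC1$ explicitly when distributing on both sides in the $\Box$-case, but otherwise your outline matches the paper's argument step for step.
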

\begin{proof}
  We prove it by induction on the structure of  $t$.

\noindent $\bullet$ $t \equiv 0$ or $t \equiv \bot$.

    Trivially.

\noindent $\bullet$ $t \equiv \alpha.t_1$.

        By IH and CONTEXT, we get $\vdash t = \alpha.t_1'$ for some $t_1' \in NF$.
        If $t_1'\not\equiv \bot$ and $\alpha\in Act$, then $\alpha.t_1'\in NF_B$.
        If $t_1' \equiv \bot$, by  $PR1$, $PR2$ and TRANS, we obtain $\vdash t = \bot$.
        If $\alpha = \tau$, by $PR2$ and TRANS, we have  $\vdash t = t_1'$.

\noindent $\bullet$ $t \equiv t_1 \odot t_2$ with $\odot \in \{\vee,\Box,\wedge,\parallel_A\}$.

        For $i=1,2$, by IH, we have $\vdash t_i = t_i'$ for some $t_i'\in NF$. We distinguish four cases based on $\odot$.\\

\noindent Case 1 $\odot = \vee$.

        If  $t_1' \not\equiv \bot$  and  $t_2' \not\equiv \bot$ (i.e., $t_1',t_2'\in NF_B$), then it immediately follows from $DI1$, $DI2$, CONTEXT and TRANS that $\vdash t = s$ for some $s \in NF_B$.
        Otherwise, w.l.o.g, assume that $t_1' \equiv \bot$.
        Then, by $DI1$, $DI4$ and TRANS, we get $\vdash  t = t_2'$.\\

\noindent Case 2 $\odot = \Box$.

        If either $t_1' \equiv \bot$ or $t_2' \equiv \bot$,  then it follows from $EC1$ and $EC5$ that $\vdash  t = \bot$.
        In the following, we consider the case where $t_1' \not\equiv \bot$ and $t_2' \not\equiv \bot$.
        In such case, we get $t_1',t_2'\in NF_B$.
        So, we may assume that $t_1' \equiv \underset{i< n}{\bigvee}\underset{j<m_i}\square a_{ij}.s_{ij}$ and $t_2' \equiv \underset{i'< n'}{\bigvee}\underset{j'<m_{i'}'}\square b_{i'j'}.r_{i'j'}$ with $\underset{j<m_i}\square a_{ij}.s_{ij},\underset{j'<m_{i'}'}\square b_{i'j'}.r_{i'j'}\in NF_B$ for each $i<n$ and $i'<n'$.
        Thus, by $DI1$, $DI2$, CONTEXT, TRANS, $DS1$ and Lemma~\ref{L:DIS_INEQUATION},  we obtain
        \[\vdash t_1 \Box t_2 =
        \underset{i < n, i' < n'}{\bigvee}(\underset{j<m_i}\square a_{ij}.s_{ij} \Box \underset{j'<m_{i'}'}\square b_{i'j'}.r_{i'j'}).\]
        Further, by Lemma~\ref{L:BIG_SQUARE_EC} and Def.~\ref{D:NORMAL_FORM}, it immediately follows that $\vdash t_1 \Box t_2 = t_3$ for some $t_3\in NF_B$.\\

\noindent Case 3 $\odot = \wedge$.

         If $t_i'\in NF_B$ for $i=1,2$ then, by Lemma~\ref{L:COMP_CONJ}, we have $\vdash t = t_3$ for some $t_3 \in NF$, otherwise, by $CO1$ and $CO4$, we get $\vdash t = \bot$.\\

\noindent Case 4 $\odot = \parallel_A$.

        If either $t_1' \equiv \bot$ or $t_2' \equiv \bot$ then, by $PA1$ and $PA2$, we get $\vdash t = \bot$.
        Otherwise,  we have $t_1',t_2' \in NF_B$, so, by Lemma~\ref{L:COMP_PARALLEL}, we obtain $\vdash  t= s$ for some $s \in NF_B$.
\end{proof}

We now turn our attention to the ground-completeness of $AX_{CLL}$.
First, we state a trivial result about general disjunction.

\begin{lemma}\label{L:SUBSTRUCTURE}
  Let $n>0$ and $t_i$ be stable for each $i< n$.
  \begin{enumerate}
    \item If $\underset{i< n}{\bigvee}t_i \notin F_{{CLL}}$ then $\underset{i< n}{\bigvee}t_i \stackrel{\epsilon}{\Longrightarrow}_F| t_{i}$ for each $i < n$.
    \item If $\underset{i< n}{\bigvee}t_i \stackrel{\epsilon}{\Longrightarrow}_{CLL}| t'$ then $t'\equiv t_{i_0}$ for some $i_0 < n$.
  \end{enumerate}
\end{lemma}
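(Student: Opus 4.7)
The plan is to prove both items by induction on $n$, using Lemma~\ref{L:Basic_I}(2) to control the shape of $\tau$-derivatives of a disjunction and Lemma~\ref{L:F_NORMAL}(1) to propagate consistency. The structural recursion of $\underset{i<n}{\bigvee}t_i$ supplied in the Notation (left-associated, with $\underset{i<1}{\bigvee}t_i\equiv t_0$ and $\underset{i<k+1}{\bigvee}t_i\equiv(\underset{i<k}{\bigvee}t_i)\vee t_k$) is exactly what drives the induction.

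For item (1), the base case $n=1$ is immediate: $\underset{i<1}{\bigvee}t_i\equiv t_0$, which is stable and, by assumption, not in $F_{CLL}$, so $t_0\stackrel{\epsilon}{\Longrightarrow}_F|\,t_0$ holds by the very definition of $\stackrel{\epsilon}{\Longrightarrow}_F|$. For the inductive step $n=k+1\geq 2$, write $\underset{i<k+1}{\bigvee}t_i\equiv(\underset{i<k}{\bigvee}t_i)\vee t_k$. From $\underset{i<k+1}{\bigvee}t_i\notin F_{CLL}$ and Lemma~\ref{L:F_NORMAL}(1) I obtain $\underset{i<k}{\bigvee}t_i\notin F_{CLL}$ and $t_k\notin F_{CLL}$. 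Lemma~\ref{L:Basic_I}(2) yields two consistency-preserving $\tau$-transitions $\underset{i<k+1}{\bigvee}t_i\stackrel{\tau}{\longrightarrow}_F\underset{i<k}{\bigvee}t_i$ and $\underset{i<k+1}{\bigvee}t_i\stackrel{\tau}{\longrightarrow}_F t_k$. Since $t_k$ is already stable and consistent, the second transition gives $\underset{i<k+1}{\bigvee}t_i\stackrel{\epsilon}{\Longrightarrow}_F|\,t_k$ directly. For each $i<k$, the induction hypothesis applied to $\underset{i<k}{\bigvee}t_i$ gives $\underset{i<k}{\bigvee}t_i\stackrel{\epsilon}{\Longrightarrow}_F|\,t_i$; prepending the first $\tau$-transition produces $\underset{i<k+1}{\bigvee}t_i\stackrel{\epsilon}{\Longrightarrow}_F|\,t_i$, covering every index $i<k+1$.

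For item (2), proceed by induction on $n$ again. If $n=1$, then $\underset{i<1}{\bigvee}t_i\equiv t_0$ is stable, so the only $\tau$-sequence to a stable term is the empty one, forcing $t'\equiv t_0$. If $n=k+1\geq 2$, any $\stackrel{\epsilon}{\Longrightarrow}_{CLL}|$-sequence starting in $(\underset{i<k}{\bigvee}t_i)\vee t_k$ must contain at least one step because, by Lemma~\ref{L:Basic_I}(2), this term is not stable. Apply Lemma~\ref{L:Basic_I}(2) to the very first $\tau$-step: it must go to either $\underset{i<k}{\bigvee}t_i$ or $t_k$. In the latter case $t_k$ is already stable, so the sequence ends there and $t'\equiv t_k$. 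In the former case the remaining suffix is a $\stackrel{\epsilon}{\Longrightarrow}_{CLL}|$-sequence from $\underset{i<k}{\bigvee}t_i$ to $t'$, and the induction hypothesis supplies $i_0<k<n$ with $t'\equiv t_{i_0}$.

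The argument is essentially mechanical, and I do not foresee a real obstacle: every non-trivial detail is dispatched by Lemma~\ref{L:Basic_I}(2) (which says that a disjunction has no visible transitions and only the two ``projection'' $\tau$-transitions) and by Lemma~\ref{L:F_NORMAL}(1) for the consistency bookkeeping in item (1). The only point deserving a line of care is to observe that once a $\tau$-step out of a disjunction is taken, the resulting state is no longer of disjunction form on the outermost constructor in the branch that lands on $t_k$, which is why the sub-case ``first step reaches $t_k$'' cannot be iterated and must terminate immediately by stability of $t_k$.
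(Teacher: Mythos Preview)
Your overall plan---induction on $n$ using Lemma~\ref{L:Basic_I}(2) and Lemma~\ref{L:F_NORMAL}(1)---is exactly what the paper indicates (its proof reads ``Proceed by induction on $n$, omitted''), and your argument for item~(2) is correct.

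There is, however, a genuine gap in your treatment of item~(1). In the inductive step you assert: ``From $\underset{i<k+1}{\bigvee}t_i\notin F_{CLL}$ and Lemma~\ref{L:F_NORMAL}(1) I obtain $\underset{i<k}{\bigvee}t_i\notin F_{CLL}$ and $t_k\notin F_{CLL}$.'' But Lemma~\ref{L:F_NORMAL}(1) states that $t_1\vee t_2\in F_{CLL}$ iff \emph{both} $t_1\in F_{CLL}$ and $t_2\in F_{CLL}$; the contrapositive only gives that \emph{at least one} of $\underset{i<k}{\bigvee}t_i$, $t_k$ lies outside $F_{CLL}$, not both. In fact item~(1) is false as literally stated: take $n=2$, $t_0\equiv 0$, $t_1\equiv\bot$. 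Both are stable, and $0\vee\bot\notin F_{CLL}$ since $0\notin F_{CLL}$, yet there is no $F$-avoiding $\tau$-path from $0\vee\bot$ to $\bot$ because the only step landing on $\bot$ has an inconsistent target. So no proof can close this gap without strengthening the hypothesis.

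The fix is to assume additionally that each $t_i\notin F_{CLL}$. This is harmless for the paper, which only ever invokes the lemma on disjuncts $t_i\in NF_B\subseteq T(\Sigma_B)$, all consistent by Lemma~\ref{L:BPT}. Under that extra hypothesis your argument goes through verbatim: $t_k\notin F_{CLL}$ is now assumed, and $\underset{i<k}{\bigvee}t_i\notin F_{CLL}$ follows from the (forward direction of the) same Lemma~\ref{L:F_NORMAL}(1) by a trivial inner induction, since at least one disjunct is consistent.
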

\begin{proof}
Proceed by induction on $n$, omitted.
%
%
%
\end{proof}

An important step in proving the ground-completeness is:

\begin{lemma}\label{L:COMPLETENESS}
  If $t_1,t_2 \in NF$ and $t_1 \underset{\thicksim}{\sqsubset}_{RS} t_2$, then $\vdash  t_1 \leqslant t_2$
\end{lemma}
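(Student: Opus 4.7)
The plan is to strengthen the assertion to: for all $t_1, t_2 \in NF$, if $t_1 \sqsubseteq_{RS} t_2$ then $\vdash t_1 \leqslant t_2$. This subsumes the lemma, since under the stability premise of $\underset{\thicksim}{\sqsubset}_{RS}$ one has $\underset{\thicksim}{\sqsubset}_{RS}\ \subseteq\ \sqsubseteq_{RS}$: when $t_1 \in F_{CLL}$ the $\sqsubseteq_{RS}$ obligation is vacuous, and otherwise $t_1 \stackrel{\epsilon}{\Longrightarrow}_F|t_1$ is matched by $t_2 \stackrel{\epsilon}{\Longrightarrow}_F|t_2$ (legitimate since $t_2 \notin F_{CLL}$ by RS2). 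I will proceed by strong induction on $|t_1| + |t_2|$.

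The boundary cases are handled directly. If $t_1 \equiv \bot$, then $DI5$, $DI1$, $DI4$ and TRANS yield $\vdash \bot \leqslant \bot \vee t_2 = t_2$. If $t_1 \in NF_B$ and $t_2 \equiv \bot$, the hypothesis $t_1 \sqsubseteq_{RS} \bot$ cannot hold: Lemma~\ref{L:BPT} gives $t_1 \notin F_{CLL}$, Lemma~\ref{L:LLTS_II} supplies a stable consistent descendant $t_1 \stackrel{\epsilon}{\Longrightarrow}_F|t'$, and matching it would require a stable consistent $\stackrel{\epsilon}{\Longrightarrow}_F|$-descendant of $\bot$, which cannot exist since $\bot \in F_{CLL}$.

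The main case has $t_1, t_2 \in NF_B$. Write $t_1 \equiv \bigvee_{i<n} u_i$ and $t_2 \equiv \bigvee_{i'<n'} u'_{i'}$, where each summand has the external-choice form prescribed by Def.~\ref{D:NORMAL_FORM} and is therefore stable by Lemma~\ref{L:STABILIZATION}. Since $t_1 \notin F_{CLL}$ (Lemma~\ref{L:BPT}), Lemma~\ref{L:SUBSTRUCTURE}(1) gives $t_1 \stackrel{\epsilon}{\Longrightarrow}_F|u_i$ for every $i$; the hypothesis $t_1 \sqsubseteq_{RS} t_2$ then produces, via Lemma~\ref{L:SUBSTRUCTURE}(2), an index $\ell(i) < n'$ with $u_i \underset{\thicksim}{\sqsubset}_{RS} u'_{\ell(i)}$. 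When $n > 1$ or $n' > 1$ we have $|u_i| + |u'_{\ell(i)}| < |t_1| + |t_2|$, so the induction hypothesis yields $\vdash u_i \leqslant u'_{\ell(i)}$; $DI1$, $DI2$, $DI5$ and TRANS extend this to $\vdash u_i \leqslant t_2$ for every $i$, and CONTEXT over $\vee$ combined with iterated $DI3$ then gives $\vdash t_1 \leqslant t_2$.

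The delicate subcase is $n = n' = 1$, where no progress arises from the disjunction layer and one must peel off the external choice instead. Writing $t_1 = \underset{j<m}\square a_j.s_j$ and $t_2 = \underset{k<m'}\square b_k.s'_k$, both are stable and, by Lemma~\ref{L:BPT} together with RS2, neither lies in $F_{CLL}$; RS4 then forces $\mathcal{I}(t_1) = \mathcal{I}(t_2)$, so by injectivity in prefixes $m = m'$ and there is a bijection $\pi$ with $a_j = b_{\pi(j)}$. The unique $a_j$-derivatives of $t_1$ and $t_2$ are $s_j$ and $s'_{\pi(j)}$, so chasing arbitrary stable consistent $\stackrel{\epsilon}{\Longrightarrow}_F|$-descendants through RS3 yields $s_j \sqsubseteq_{RS} s'_{\pi(j)}$. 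Since $|s_j| + |s'_{\pi(j)}| < |t_1| + |t_2|$ strictly, the induction hypothesis provides $\vdash s_j \leqslant s'_{\pi(j)}$; two applications of CONTEXT, followed by $EC1$ and $EC2$ to reorder the external choice according to $\pi$, conclude $\vdash t_1 \leqslant t_2$. The principal obstacle is exactly this interleaving of the stable and weak forms of ready simulation: the lemma is stated for $\underset{\thicksim}{\sqsubset}_{RS}$, but the natural recursive data for subprocesses is $\sqsubseteq_{RS}$ between possibly non-stable subterms, which forces the strengthened formulation and a careful verification that every recursive call either strictly decreases $|t_1| + |t_2|$ or transitions to a strictly smaller stable pair via the external-choice peeling step.
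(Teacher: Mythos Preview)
Your proof is correct, but it takes a somewhat different route from the paper's.

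The paper inducts on $|t_1|$ alone and works directly with the stable relation $\underset{\thicksim}{\sqsubset}_{RS}$. The key observation there is that stability, together with $t_1,t_2 \in NF$, forces each of $t_1,t_2$ to be $\bot$, $0$, or a single external choice $\underset{j<n}{\square}a_j.r_j$ (a proper disjunction is never stable). So the paper never sees a top-level disjunction; the disjunctive structure appears only \emph{inside} the derivatives $t_{1i_0}, t_{2i_0'}$, and the matching of disjuncts happens there via RS3 before the inductive call on the smaller stable summands. In particular there is no $n=n'=1$ case split: that is the only case.

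Your version strengthens the statement to $\sqsubseteq_{RS}$, inducts on $|t_1|+|t_2|$, and peels off the disjunction layer first, then the external-choice layer. This is a legitimate reorganisation, and it buys something: your strengthened lemma is exactly what the ground-completeness theorem needs, whereas the paper has to redo the disjunction-matching argument once more in that theorem before invoking Lemma~\ref{L:COMPLETENESS}. The cost is the extra case analysis ($n>1$ or $n'>1$ versus $n=n'=1$) and the need to shuttle between $\underset{\thicksim}{\sqsubset}_{RS}$ and $\sqsubseteq_{RS}$, which you handle correctly but which the paper avoids entirely by keeping the induction at the stable level.
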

\begin{proof}
    We prove it by induction on the degree of $t_1$.
    Since $t_1 \underset{\thicksim}{\sqsubset}_{RS} t_2$, both $t_1$ and $t_2$ are stable. Further, since $t_1,t_2\in NF$, we get, for $i=1,2$

    \[t_i \equiv 0\;\text{or}\; t_i \equiv \bot\;\text{or}\; t_i \equiv \underset{j<n_i}{\square}a_{ij}.t_{ij}\in NF_B\;\text{with}\; n_i>0.  \tag{\ref{L:COMPLETENESS}.1}\]

    Therefore, the argument splits into three cases below.\\

\noindent Case 1 $t_1 \equiv \bot$.

        Then, by $DI1$, $DI4$, $DI5$ and TRANS, we have $\vdash  t_1 \leqslant t_2$.\\

\noindent Case 2 $t_1 \equiv 0$.

       By Lemma~\ref{L:F_NORMAL}(5) and \ref{L:Basic_I}(3), $t_1 \notin F_{{CLL}}$ and ${\mathcal I}(t_1)=\emptyset$. Further, since $t_1 \underset{\thicksim}{\sqsubset}_{RS} t_2$, we get $t_2 \notin F_{{CLL}}$ and ${\mathcal I}(t_1)={\mathcal I}(t_2)$. Thus, by (\ref{L:COMPLETENESS}.1), we have $t_2 \equiv 0$. Then, $\vdash  t_1 \leqslant t_2$ follows from REF.\\

\noindent Case 3 $t_1 \equiv \underset{i< n}{\square}a_i.t_{1i} $ with $n>0$.

        Since $t_1 \in NF_B\subseteq T(\Sigma_B)$, by Lemma~\ref{L:BPT}, we have $t_1 \notin F_{{CLL}}$.
        Hence, by $t_1 \underset{\thicksim}{\sqsubset}_{RS} t_2$,  we get $t_2 \notin F_{{CLL}}$ and ${\mathcal I}(t_2)={\mathcal I}(t_1)=\{a_i|i < n\} \not= \emptyset$.
        Further, it follows from (\ref{L:COMPLETENESS}.1) and the condition (D) in Def.~\ref{D:NORMAL_FORM} that
        there exist $t_{2i}\in NF_B$ and $a_i'\in Act$($i<n$) such that
         \[t_2 \equiv \underset{i < n}{\square}a_i'.t_{2i}\in NF_B\;\text{and}\; \{a_i|i<n\}=\{a_i'|i<n\}.\]
       By CONTEXT, it is easy to know that, in order to complete the proof, it is enough to show that
       \[\forall i<n \exists i'<n(\vdash a_i.t_{1i}\leqslant a_{i'}'.t_{2i'}).\]
    Let $i_0< n$. We have $a_{i_0}=a_{i_0'}'$ for some $i_0'<n$.
    Since $t_{1i_0},t_{2i_0'}\in NF_B$, by Def.~\ref{D:NORMAL_FORM}, there exist $m,m'>0$, $s_j(j<m)$ and $s_{j'}'(j'<m')$ such that
    \begin{enumerate}
      \item $t_{1i_0} \equiv  \underset{j< m}{\bigvee}s_{j}$ and $t_{2i_0'}\equiv  \underset{j'< m'}{\bigvee} s_{j'}'$,
      \item $s_j$ and $s_{j'}'$ are stable for each $j<m$ and $j'<m'$,
      \item $s_j,s_{j'}' \in NF_B$ for each $j<m$ and $j'<m'$.
    \end{enumerate}
    In the following, we want to show that $\vdash  s_{j}\leqslant t_{2i_0'}$ for each $j< m$.
     Let $j_0< m$.
     Since $NF_B \subseteq T(\Sigma_B)$, by Lemma~\ref{L:BPT} and \ref{L:SUBSTRUCTURE}(1), it immediately follows that $t_{1i_0}\stackrel{\epsilon}{\Longrightarrow}_F|s_{j_0}$.
    Thus, $t_1 \stackrel{a_{i_0}}{\longrightarrow}_F t_{1i_0} \stackrel{\epsilon}{\Longrightarrow}_F| s_{j_0}$.
    Then, it follows from  $t_1 \underset{\thicksim}{\sqsubset}_{RS} t_2$ that
    \[t_2 \stackrel{a_{i_0}}{\Longrightarrow}_F|t_2' \;\text{and}\; s_{j_0}\underset{\thicksim}{\sqsubset}_{RS} t_2'\;\text{for some}\; t_2'.\tag{\ref{L:COMPLETENESS}.2}\]
    Further, since $t_2$ is injective in prefixes and $t_2$ is stable, we get $t_2 \stackrel{a_{i_0}}{\longrightarrow}_F t_{2i_0'}\stackrel{\epsilon}{\Longrightarrow}_F|t_2'$.
    Then, by Lemma~\ref{L:SUBSTRUCTURE}(2), we obtain
    \[t_2'\equiv s_{j_0'}' \;\text{for some}\; j_0'< m'.\tag{\ref{L:COMPLETENESS}.3}\]
    Since $|t_1|>|s_{j_0}|$, by (\ref{L:COMPLETENESS}.2), (\ref{L:COMPLETENESS}.3) and IH, we get $\vdash  s_{j_0}\leqslant s_{j_0'}'$.
    Further, by $DI1$, $DI2$, $DI5$ and TRANS, we have $\vdash  s_{j_0}\leqslant t_{2i_0'}$, as desired.

    So far, we have obtained
    \[\vdash s_{j} \leqslant t_{2i_0'}\;\text{for each}\;j<m.\]
    Then, by $DI1$, $DI2$, $DI3$, CONTEXT and TRANS, we get $\vdash \underset{j<m}{\bigvee}s_{j} \leqslant t_{2i_0'}$, that is, $\vdash t_{1i_0}\leqslant t_{2i_0'}$.
    So, by CONTEXT, it follows that $\vdash  a_{i_0}.t_{1i_0} \leqslant a_{i_0'}'.t_{2i_0'}$.
\end{proof}

We are now ready to prove the following, the main result of this section.

\begin{theorem}[Ground-Completeness]
  For any $t_1,t_2 \in T(\Sigma_{CLL})$, $M_{CLL} \models t_1 \leqslant t_2$ implies $ \vdash t_1 \leqslant t_2$.
\end{theorem}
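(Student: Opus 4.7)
The plan is to reduce the completeness for arbitrary ground terms to the already-proved completeness for terms in normal form (Lemma~\ref{L:COMPLETENESS}), using Theorem~\ref{T:NORMALFORM} as the bridge. First, apply Theorem~\ref{T:NORMALFORM} to fix normal forms $n_1,n_2 \in NF$ with $\vdash t_1 = n_1$ and $\vdash t_2 = n_2$. By Theorem~\ref{T:SOUNDNESS} both $\vdash t_1 = n_1$ and $\vdash t_2 = n_2$ are valid, so $n_1 =_{RS} t_1 \sqsubseteq_{RS} t_2 =_{RS} n_2$, i.e.\ $n_1 \sqsubseteq_{RS} n_2$. By TRANS it therefore suffices to prove $\vdash n_1 \leqslant n_2$.

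Split on whether $n_1\equiv\bot$. If $n_1\equiv\bot$ then $\vdash n_1 \leqslant n_2$ is immediate from $DI1$, $DI4$ and $DI5$ (namely, $\vdash \bot = n_2\vee\bot = \bot\vee n_2 \geqslant \bot$ is replaced by $\vdash\bot\leqslant n_2\vee\bot = n_2$). Otherwise $n_1\in NF_B$, and by Def.~\ref{D:NORMAL_FORM} we may write $n_1\equiv\underset{i<n}{\bigvee}t_i$ with each summand $t_i$ stable and lying in $NF_B$; by Lemma~\ref{L:BPT} no $t_i$ is in $F_{CLL}$, so Lemma~\ref{L:SUBSTRUCTURE}(1) yields $n_1 \stackrel{\epsilon}{\Longrightarrow}_F\!| t_i$ for every $i<n$. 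Using $n_1 \sqsubseteq_{RS} n_2$, pick $s_i$ with $n_2 \stackrel{\epsilon}{\Longrightarrow}_F\!| s_i$ and $t_i \underset{\thicksim}{\sqsubset}_{RS} s_i$. Since $n_2$ now has a stable consistent derivative, $n_2\not\equiv\bot$, hence $n_2\equiv\underset{j<m}{\bigvee}s'_j$ with each $s'_j$ stable and in $NF_B$; Lemma~\ref{L:SUBSTRUCTURE}(2) then forces $s_i\equiv s'_{j_i}$ for some $j_i<m$.

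For each fixed $i$, the pair $(t_i,s'_{j_i})$ satisfies the hypothesis of Lemma~\ref{L:COMPLETENESS} (both are in $NF$ and related by $\underset{\thicksim}{\sqsubset}_{RS}$), so $\vdash t_i \leqslant s'_{j_i}$. Iterated use of $DI1$, $DI2$ and $DI5$ gives $\vdash s'_{j_i} \leqslant \underset{j<m}{\bigvee}s'_j \equiv n_2$, whence $\vdash t_i \leqslant n_2$ by TRANS. Finally, combine these $n$ inequations through CONTEXT applied to $\vee$: from $\vdash t_i\leqslant n_2$ for all $i<n$ one obtains $\vdash \underset{i<n}{\bigvee}t_i \leqslant \underset{i<n}{\bigvee}n_2$, and $n$-fold application of $DI3$ (together with $DI1$, $DI2$) collapses the right-hand side to $n_2$, yielding $\vdash n_1 \leqslant n_2$. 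One more TRANS step with $\vdash t_1 = n_1$ and $\vdash t_2 = n_2$ concludes $\vdash t_1 \leqslant t_2$. The only delicate point is step (9)--the passage through $\underset{\thicksim}{\sqsubset}_{RS}$ on summands--but this is supplied entirely by Lemmas~\ref{L:SUBSTRUCTURE} and~\ref{L:COMPLETENESS}; the remainder is bookkeeping in the equational calculus.
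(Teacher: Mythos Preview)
Your proposal is correct and follows essentially the same route as the paper's proof: normalize via Theorem~\ref{T:NORMALFORM}, transfer $\sqsubseteq_{RS}$ to the normal forms by soundness, handle the $\bot$ case with $DI1$, $DI4$, $DI5$, and in the $NF_B$ case use Lemmas~\ref{L:BPT} and~\ref{L:SUBSTRUCTURE} to match each disjunct of $n_1$ to a disjunct of $n_2$ via $\underset{\thicksim}{\sqsubset}_{RS}$, then invoke Lemma~\ref{L:COMPLETENESS} on the summands and recombine with the $DI$-axioms and CONTEXT. The only cosmetic difference is that you spell out the final recombination step ($\vdash \underset{i<n}\bigvee t_i \leqslant n_2$ from $\vdash t_i \leqslant n_2$ for all $i$) a bit more explicitly than the paper does.
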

\begin{proof}
  Assume that $M_{CLL} \models t_1 \leqslant t_2$. Thus, $t_1 \sqsubseteq_{RS} t_2$.
  By Theorem~\ref{T:NORMALFORM}, $\vdash  t_1 = t_1^*$ and $\vdash  t_2 = t_2^*$ for some $t_1^*,t_2^* \in NF$.
  It suffices to prove that $\vdash  t_1^* \leqslant t_2^*$.
  By Theorem~\ref{T:SOUNDNESS}, we have $t_1 =_{RS} t_1^*$ and $t_2 =_{RS} t_2^*$. So, $t_1^* \sqsubseteq_{RS} t_2^*$.

   If $t_1^* \equiv \bot$ then it follows from $DI1$, $DI4$, $DI5$ and TRANS that $\vdash  t_1^* \leqslant t_2^*$.
   Next, we consider the case $t_1^* \not\equiv \bot$. Then, $t_1^* \in NF_B$.
   We may assume $t_1^* \equiv \underset{i< n}{\bigvee}t_{1i}$ with $n>0$ and for each $i<n$, $t_{1i} \equiv \underset{j<m_i}\square a_{ij}.r_{ij}\in NF_B$ with $m_i \geq 0$.
    In order to complete the proof, it is enough to show that
    \[\vdash t_{1i}\leqslant t_2^*\;\text{for each}\;i< n.\]
    Let $i_0 < n$.
    Since $NF_B \subseteq T(\Sigma_B)$, by Lemma~\ref{L:BPT} and \ref{L:SUBSTRUCTURE}(1), we have $t_1^* \stackrel{\epsilon}{\Longrightarrow}_F|t_{1i_0}$.
    Then, it follows from $t_1^* \sqsubseteq_{RS} t_2^*$ that $t_2^* \stackrel{\epsilon}{\Longrightarrow}_F| t_2'$ and $t_{1i_0}\underset{\thicksim}{\sqsubset}_{RS} t_2'$ for some $t_2'$.
    So, $t_2^* \notin F_{CLL}$, that is, $t_2^* \not\equiv \bot$.
    Thus, $t_2^* \in NF_B$ and we may assume that  $t_2^* \equiv \underset{i< k}{\bigvee}t_{2i}$ with $k>0$ and for each $i<k$, $t_{2i} \equiv \underset{j<m_i'}\square b_{ij}.s_{ij}\in NF_B$ for some $m_i' \geq 0$.
    Thus, $t_{2i}$ is stable for each $i<k$.
    Then, by Lemma~\ref{L:SUBSTRUCTURE}(2), it follows from $t_2^* \stackrel{\epsilon}{\Longrightarrow}_F| t_2'$ that $t_2' \equiv t_{2i_0'}$ for some $i_0'< k$.
    Further, by Lemma~\ref{L:COMPLETENESS}, $\vdash t_{1i_0}\leqslant t_{2i_0'}$ follows from $t_{1i_0}\underset{\thicksim}{\sqsubset}_{RS} t_2' \equiv t_{2i_0'}$.
    Finally, by $DI1$, $DI2$, $DI5$ and TRANS, we obtain $\vdash t_{1i_0}\leqslant t_2^*$, as desired.
\end{proof}

\section{Conclusions and Further Work}

Inspired by L{\"u}ttgen and Vogler's work in \cite{Luttgen10}, this paper considers a process calculus with logical operators. Two different views of the language CLL are explored in detail:

\noindent --- a behavioral view, ready simulation,

\noindent --- a proof-theoretic view, the axiomatic system $AX_{CLL}$.

The soundness and completeness of $AX_{CLL}$ reveal that the above two views are equivalent, that is,
\[t\sqsubseteq_{RS} s\;\text{iff}\;\vdash t\leqslant s\;\text{for any}\;t,s \in T(\Sigma_{CLL}).\]

CLL is designed as a process calculus for Logic LTSs, which rephrases L\"{u}ttgen and Vogler's setting in a process-algebraic style.
The constructors prefix, conjunction, disjunction, external choice and parallel over Logic LTSs are captured by corresponding operators in CLL.
Moreover, a number of properties concerning these constructors are re-established in behavioral theory of CLL by very different method.
Similar to usual process algebras, this paper develops behavioral theory based on the SOS rules which specify the behavior of process terms, while L{\"u}ttgen and Vogler establish these properties depending on the constructions of Logic LTSs, and do not refer to any syntactical element.
Compared with their work, the main contribution of this paper is to present a sound and ground-complete axiomatic system of ready simulation in the presence of logic operators.

It is well known that, in addition to behavior and proof-theoretic views, the language of process algebra may be interpreted in denotational view (see, e.g. \cite{Hennessy88}).
The denotational method aims at defining a denotational function which associates semantic objects to process terms.
Such function is often given recursively by induction on the structure of process terms.
It is easy to see that constructors explored by L{\"u}ttgen and Vogler in \cite{Luttgen10} is useful when considering denotational semantics of CLL.
In fact, we can show the result below\\

\noindent \textbf{Observation} $(G(t) \odot^* G(s))\downarrow (t\odot^* s) =_{RS} G(t\odot s)$ for $t,s \in T(\Sigma_{CLL})$ and $\odot \in \{\wedge,\vee,\Box,\parallel_A\}$.\\

Here, $G(t)$ denotes the sub-LTS of $LTS(CLL)$ generated by the process term $t$, $\odot^*$ is the constructor over Logic LTSs corresponding to $\odot$, and $(G(t) \odot^* G(s))\downarrow (t\odot^* s)$ is the sub-LTS of $G(t) \odot^* G(s)$ generated by the state $t\odot^* s$\footnote{Since $t$ and $s$ are states in $G(t)$ and $G(s)$ respectively, according to the construction in \cite{Luttgen10}, $G(t) \odot^* G(s)$ contains the state labelled by $t\odot^* s$.}.
This result suggests to us that, based on the constructors in \cite{Luttgen10}, it seems not difficult to provide a denotational semantics for CLL, which is fully abstract with respect to operational semantics presented in this paper.
We leave it as further work.

This paper adopts the predicate $F$ to describe unimplementable processes.
Follows \cite{Luttgen10}, this predicate is involved in the notion of ready simulation.
In this sense, the value of $F$ is regarded as an observable signal of processes.
The process algebra with observations of propositional formulae have been considered by Baeten and Bergstra in \cite{Baeten97}.
The motivation behind their work lies in providing a framework to deal with conditional process expression (e.g., $x\triangleleft \phi \triangleright y$) based on the view that the visible part of the state of a process is a proposition.
Clearly, this is another method to incorporate logical components with process algebras.
An interesting research is to compare it with the framework adopted in \cite{Luttgen10} and this paper, in which logical operators over processes are introduced directly.

This paper focuses on exploring logical constructors of Logic LTSs in process algebraic style, some useful operators occurring in usual process algebras, such as hiding, recursion et.al, are not involved in  CLL.
Extending CLL by incorporating these operators is worth further investigation.\\

\noindent \textbf{References}

\end{document}